\def\BibTeX{{\rm B\kern-.05em{\sc i\kern-.025em b}\kern-.08em
  T\kern-.1667em\lower.7ex\hbox{E}\kern-.125em}}
    \renewcommand*{\bm}[1]{\textbf{#1}}%
    \renewcommand*{\Vec}[1]{\textbf{#1}}%
    \renewcommand{\THEOREM}{Theorem\ }
    \renewcommand{\LEMMA}{Lemma\ }
    \renewcommand{\LEMMAS}{Lemmas\ }
    \renewcommand{\STEP}{Step\ }
    \renewcommand{\RAIC}{RAIC\ }
    \renewcommand{\RAICadv}{RAIC under adversarial noise\ }
\let\originalleft\left
\let\originalright\right
\renewcommand{\left}{\mathopen{}\mathclose\bgroup\originalleft}
\renewcommand{\right}{\aftergroup\egroup\originalright}
\definecolor{todocolor}{rgb}{1,0,0}
\definecolor{cmtcolor}{rgb}{0.1,0.1,0.9}
\definecolor{cmtqcolor}{HTML}{0959aa}
\def\ldelim{(}
\def\rdelim{)}
\newcommand{\IfNE}[2]{\ifthenelse{\isempty{#1}}{}{#2}}
\newcommand{\IfENE}[3]{\ifthenelse{\isempty{#1}}{#2}{#3}}
\NewDocumentCommand{\Sub}{s m O{} O{}}{\IfBooleanTF{#1}{\IfENE{#2}{_{#3#4}}{_{#3#2#4}}}{\IfNE{#2}{_{#3#2#4}}}}
\NewDocumentCommand{\Sup}{s m O{} O{}}{\IfBooleanTF{#1}{\IfENE{#2}{^{#3#4}}{^{#3#2#4}}}{\IfNE{#2}{^{#3#2#4}}}}
\newcommand{\negphantom}[1]{\ifmmode\settowidth{\dimen0}{$#1$}\else\settowidth{\dimen0}{#1}\fi\hspace*{-\dimen0}}
\protected\def\<{%
  \ifmmode
    \mskip0.5\thinmuskip
  \else
    \ifhmode
      \kern0.08334em
    \fi
  \fi
}
\renewcommand\paragraph{\@startsection{paragraph}{8}{\z@}%
  {1ex \@plus1ex \@minus.2ex}%
  {-.1em}%
  {\normalfont\normalsize\itshape}}
\renewcommand{\paragraph}{\@startsection{paragraph}{8}{\z@}%
  {1ex \@plus1ex \@minus.2ex}%
  {-1em}%
  {\normalfont\normalsize\bfseries}}
\titleformat{\paragraph}[runin]{\normalfont\itshape}{}{}{\theparagraph #1.}
\renewenvironment{proof}[1]{\par
  \pushQED{\qed}%
  \normalfont \topsep6\p@\@plus6\p@\relax
  \trivlist
  \item\relax
  {\itshape
  \proofname\ifthenelse{\equal{#1}{}}{\@addpunct{.}}{ (#1)\@addpunct{.}}}\hspace\labelsep\ignorespaces
}{%
  \popQED\endtrivlist\@endpefalse \vskip8\p@\@plus6\p@\relax
}
\newenvironment{subproof}[1]{\par
  \pushQED{\qed}
  \normalfont \topsep6\p@\@plus6\p@\relax
  \trivlist
  \item\relax
  {\itshape
  \proofname\ifthenelse{\equal{#1}{}}{\@addpunct{.}}{ (#1)\@addpunct{.}}}\hspace\labelsep\ignorespaces
}{%
  \popQED\endtrivlist\@endpefalse \vskip8\p@\@plus6\p@\relax
}
\newtheoremstyle{claim}
{3pt}
{3pt}
{\itshape}
{}
{\bfseries}
{.}
{.5em}
{}
\newtheoremstyle{property}
{3pt}
{3pt}
{\itshape}
{}
{}
{}
{.5em}
{}
\newtheoremstyle{LemmaRepeat}
{3pt}
{3pt}
{\itshape}
{}
{\bfseries}
{.}
{.5em}
{\thmname{#1}\thmnote{ #3}}
\newtheoremstyle{mystyle}
  {}
  {}
  {\itshape}
  {}
  {\bfseries}
  {.}
  { }
  {\thmname{#1}\thmnumber{ #2}\thmnote{ #3}}
\theoremstyle{mystyle}
\newtheorem{theorem}{Theorem}[section]
\newtheorem{lemma}[theorem]{Lemma}
\newtheorem{fact}{Fact}[section]
\newtheorem{remark}{Remark}[section]
\newtheorem*{theorem*}{Restatement of Theorem}
\newtheorem*{lemma*}{Restatement of Lemma}
\newtheorem*{claim*}{Restatement of Claim}
\newtheorem*{corollary*}{Restatement of Corollary}
\newtheorem*{fact*}{Restatement of Fact}
\newtheorem{definition}{Definition}[section]
\theoremstyle{claim}
\newtheorem{claim}[theorem]{Claim}
\theoremstyle{property}
\theoremstyle{LemmaRepeat}
\newcommand{\LemmaLabel}[1]{(#1)}
\renewenvironment{remark}[1][]{\par
  \pushQED{\qed}%
  \normalfont \topsep6\p@\@plus6\p@\relax
  \trivlist
  \item\relax
  \refstepcounter{remark}
  {{\bfseries Remark \theremark}\ifthenelse{\equal{#1}{}}{\@addpunct{.}}{ (#1)\@addpunct{.}}}\hspace\labelsep\ignorespaces
}{%
  \popQED\endtrivlist\@endpefalse \vskip8\p@\@plus6\p@\relax
}
\newcommand{\smallsquare}{%
  \text{\fboxsep=-.2pt\raisebox{0.75mm}{\fbox{\rule{0pt}{0.8ex}\rule{0.8ex}{0pt}}}}%
}
\newlist{itemizetriangle}{itemize}{10}
\setlist[itemizetriangle,1]{label=$\textcolor{black}{\blacktriangleright}$}
\setlist[itemizetriangle,2]{label=\textbullet}
\setlist[itemizetriangle,3]{label=\textendash}
\setlist[itemizetriangle,4]{label=$\textcolor{black}{\triangleright}$}
\newlist{Itemize}{itemize}{10}
\setlist[Itemize,1]{label=$\textcolor{black}{\triangleright}$,nosep}
\setlist[Itemize,2]{label=\textendash}
\setlist[Itemize,3]{label=$\textcolor{black}{\diamond}$}
\setlist[Itemize,4]{label=$\textcolor{black}{\smallsquare}$}
\setlist[Itemize,5]{label=$\textcolor{black}{\circ}$}
\newlist{Enumerate}{enumerate}{5}
\setlist[Enumerate,1]{label={(\roman*)}}
\setlist[Enumerate,2]{label={(alph*)}}
\setlist[Enumerate,3]{label={(\arabic*)}}
\newlist{EnumerateText}{enumerate}{5}
\setlist[EnumerateText,1]{label={(\alph*)}}
\setlist[EnumerateText,2]{label={(\arabic*)}}
\setlist[EnumerateText,3]{label={(\roman*)}}
\newlist{EnumerateThm}{enumerate}{5}
\setlist[EnumerateThm,1]{label={\textup{\thetheorem{} (\arabic*)}}, leftmargin=0.08\linewidth}
\setlist[EnumerateThm,2]{label={(\roman*)}}
\setlist[EnumerateThm,3]{label={(\alph*)}}
\newlist{EnumeratePropRef}{enumerate}{5}
\setlist[EnumeratePropRef,1]{label={\textup{\Prop \thetheorem{} (\arabic*) (HD)}}, leftmargin=0.18\linewidth,first=\itshape}
\setlist[EnumeratePropRef,2]{label={(\roman*)}}
\setlist[EnumeratePropRef,3]{label={(\alph*)}}
\newlist{EnumerateSub}{enumerate}{5}
\setlist[EnumerateSub,1]{label={(\roman*)}}
\setlist[EnumerateSub,2]{label={(\arabic*)}} 
\setlist[EnumerateSub,3]{label={(\alph*)}}
\newlist{EnumerateAlg}{enumerate}{5}
\setlist[EnumerateAlg,1]{label={\bfseries\small\arabic*:},left=0pt,itemindent=0pt}
\setlist[EnumerateAlg,2]{label={(\roman*)}}
\setlist[EnumerateAlg,3]{label={(\alph*)}}
\newlist{EnumerateInline}{enumerate*}{3}
\setlist[EnumerateInline,1]{label={(\roman*)}}
\setlist[EnumerateInline,2]{label={(\alph*)}}
\setlist[EnumerateInline,3]{label={(arabic*)}}
\newcommand{\Enum}[2][]{\IfENE{#1}{(#2)}{\Label{#1}{(#2)}}\xspace}
\newcommand{\TagEqn}{\stepcounter{equation}\tag{\theequation}}
\NewDocumentCommand{\Label}{s m m}{%
  \@bsphack
  \csname phantomsection\endcsname 
  \IfBooleanF{#1}{#3}%
  \def\@currentlabel{#3}{#2}
  \@esphack
}
\newcommand{\subalign}[1]{%
  \vcenter{%
    \Let@ \restore@math@cr \default@tag
    \baselineskip\fontdimen10 \scriptfont\tw@
    \advance\baselineskip\fontdimen12 \scriptfont\tw@
    \lineskip\thr@@\fontdimen8 \scriptfont\thr@@
    \lineskiplimit\lineskip
    \ialign{\hfil$\m@th\scriptstyle##$&$\m@th\scriptstyle{}##$\hfil\crcr
      #1\crcr
    }%
  }%
}
\DeclareMathAlphabet{\mathsfit}{T1}{\sfdefault}{\mddefault}{\sldefault}
\SetMathAlphabet{\mathsfit}{bold}{T1}{\sfdefault}{\bfdefault}{\sldefault}
\newcommand{\R}{\mathbb{R}}
\newcommand{\Z}{\mathbb{Z}}
\newcommand{\ZminusO}[1][]{\IfENE{#1}{\Z \setminus \{0\}}{( \Z \setminus \{0\} )^{#1}}}
\newcommand{\RminusO}[1][]{\IfENE{#1}{\R \setminus \{0\}}{( \R \setminus \{0\} )^{#1}}}
\newcommand{\OperatorName}[1]{\operatorname{\mathsf{#1}}}
\NewDocumentCommand{\RV}{+m O{}}{#1\Sub{#2}}
\newcommand{\Distr}[1]{\mathcal{#1}}
\newcommand{\Set}[1]{\mathcal{#1}}
\newcommand{\N}{\mathcal{N}}
\newcommand\niton{\mathrel{\m@th\mathpalette\canc@l\owns}}
\newcommand\canc@l[2]{{\ooalign{$\hfil#1/\mkern1mu\hfil$\crcr$#1#2$}}}
\newcommand{\T}{\mathsf{T}}
\newcommand{\IOp}{\mathbb{I}}
\newcommand{\IpmOp}{\mathbb{I}_{\pm}}
\NewDocumentCommand{\Log}{s O{} D(){}}{{%
  \ifthenelse{\isempty{#3}}{\log}{
    {
      \log\IfNE{#2}{_{#2}}
      \mathchoice
        {\left( #3 \right)} 
        {\IfBooleanTF{#1}{\left( #3 \right)}{(#3)}} 
        {\left( #3 \right)}
        {\left( #3 \right)}
}}}}
\NewDocumentCommand{\Sign}{t. s O{} D(){}}{{%
  \def\dfmt{\left( #4 \right)}
  \def\tfmt{( #4 )}
  \IfBooleanF{#1}{\,}
  \SignOp\Sub{#3}
  \IfNE{#4}{
    \IfBooleanTF{#2}{\tfmt}{
      \mathchoice
        {\dfmt} 
        {\tfmt} 
        {\tfmt} 
        {\tfmt} 
}}}}
\NewDocumentCommand{\SignO}{s D(){}}{{%
  \ifthenelse{\isempty{#2}}{}{
    \IfBooleanTF{#1}{\SignOOp(#2)}{
      \mathchoice
        {\SignOOp\left( #2 \right)} 
        {\SignOOp(#2)} 
        {\SignOOp(#2)} 
        {\SignOOp(#2)} 
}}}}
\NewDocumentCommand{\Supp}{s D(){}}{{%
  \def\dfmt{\left( #2 \right)}
  \def\tfmt{( #2 )}
  \SuppOp%
  \IfBooleanTF{#1}{\tfmt}{
    \mathchoice
      {\tfmt} 
      {\tfmt} 
      {\tfmt} 
      {\tfmt} 
}}}
\newcommand{\E}{\operatornamewithlimits{\mathbb{E}}}
\NewDocumentCommand{\I}{s D(){}}{{%
  \IOp%
  \ifthenelse{\isempty{#2}}{}{
    \IfBooleanTF{#1}{(#2)}{
      \mathchoice
        {\left( #2 \right)} 
        {(#2)} 
        {(#2)} 
        {(#2)} 
}}}}
\NewDocumentCommand{\Ipm}{s D(){}}{{%
  \IpmOp%
  \ifthenelse{\isempty{#2}}{}{
    \IfBooleanTF{#1}{(#2)}{
      \mathchoice
        {\left( #2 \right)} 
        {(#2)} 
        {(#2)} 
        {(#2)} 
}}}}
\NewDocumentCommand{\BigO}{s t' D(){}}{{%
  \def\dfmt{\IfBooleanTF{#1}{(#3)}{\left( #3 \right)}}
  \def\tfmt{\IfBooleanTF{#1}{\left( #3 \right)}{(#3)}}
  \IfBooleanTF{#2}{\BigOOp'}{\BigOOp}
  \IfBooleanTF{#1}{\tfmt}{
    \mathchoice
      {\dfmt} 
      {\tfmt} 
      {\tfmt} 
      {\tfmt} 
}}}
\NewDocumentCommand{\BigOmega}{s t' D(){}}{{%
  \def\dfmt{\IfBooleanTF{#1}{(#3)}{\left( #3 \right)}}
  \def\tfmt{\IfBooleanTF{#1}{\left( #3 \right)}{(#3)}}
  \IfBooleanTF{#2}{\BigOmegaOp'}{\BigOmegaOp}
  \IfBooleanTF{#1}{\tfmt}{
    \mathchoice
      {\dfmt} 
      {\tfmt} 
      {\tfmt} 
      {\tfmt} 
}}}
\NewDocumentCommand{\BigTheta}{s t' D(){}}{{%
  \def\dfmt{\IfBooleanTF{#1}{(#3)}{\left( #3 \right)}}
  \def\tfmt{\IfBooleanTF{#1}{\left( #3 \right)}{(#3)}}
  \IfBooleanTF{#2}{\BigThetaOp'}{\BigThetaOp}
  \IfBooleanTF{#1}{\tfmt}{
    \mathchoice
      {\dfmt} 
      {\tfmt} 
      {\tfmt} 
      {\tfmt} 
}}}
\NewDocumentCommand{\LittleO}{s t' D(){}}{{%
  \IfBooleanTF{#2}{\LittleOOp'}{\LittleOOp}
  \ifthenelse{\isempty{#3}}{}{
    \IfBooleanTF{#1}{(#3)}{
      \mathchoice
        {\left( #3 \right)} 
        {(#3)} 
        {(#3)} 
        {(#3)} 
}}}}
\NewDocumentCommand{\LittleOmega}{s t' D(){}}{{%
  \IfBooleanTF{#2}{\LittleOmegaOp'}{\LittleOmegaOp}
  \ifthenelse{\isempty{#3}}{}{
    \IfBooleanTF{#1}{(#3)}{
      \mathchoice
        {\left( #3 \right)} 
        {(#3)} 
        {(#3)} 
        {(#3)} 
}}}}
\NewDocumentCommand{\Paren}{s O{} m}{{%
  #2%
  \ifthenelse{\isempty{#3}}{}{
    \IfBooleanTF{#1}{(#3)}{
      \mathchoice
        {\left( #3 \right)} 
        {(#3)} 
        {(#3)} 
        {(#3)} 
}}}}
\NewDocumentCommand{\BigOOp}{t'}{\IfBooleanTF{#1}{\tilde{O}}{O}}
\NewDocumentCommand{\BigThetaOp}{t'}{\IfBooleanTF{#1}{\tilde{\Theta}}{\Theta}}
\NewDocumentCommand{\BigOmegaOp}{t'}{\IfBooleanTF{#1}{\tilde{\Omega}}{\Omega}}
\newcommand{\LittleOOp}{o}
\newcommand{\LittleOmegaOp}{\omega}
\newcommand{\SignOp}{\OperatorName{sign}}
\newcommand{\SignOOp}{\OperatorName{sign}_{0}}
\newcommand{\SuppOp}{\OperatorName{supp}}
\NewDocumentCommand{\Ker}{s O{}}{\IfBooleanTF{#1}{\widehat{\OperatorName{ker}}\Sub{#2}}{\OperatorName{ker}}}
\NewDocumentCommand{\Proj}{s m D(){}}{{%
  \def\dfmt{\IfNE{#3}{\left( #3 \right)}}
  \def\tfmt{\IfNE{#3}{( #3 )}}
  \mathsf{proj}_{#2}
  \IfBooleanTF{#1}{\tfmt}{
    \mathchoice
      {\dfmt} 
      {\tfmt} 
      {\tfmt} 
      {\tfmt} 
}}}
\newcommand{\lnorm}[1]{\ell_{#1}}
\RenewDocumentCommand{\Vec}{s +m}{\IfBooleanTF{#1}{#2}{{\bm{{\mathrm{#2}}}}}}
\NewDocumentCommand{\SVec}{s +m O{} +m}{\IfBooleanTF{#1}{#2^{\IfNE{#3}{(#3);}#4}}{\bm{\mathrm{#2}}^{\IfNE{#3}{(#3);}#4}}}
\NewDocumentCommand{\BVec}{s O{} +m O{}}{\IfNE{#4}{\,(}\IfBooleanTF{#1}{(\mathbf{1}^{#3})\S{#2}}{\bm{\mathrm{1}}^{\IfNE{#2}{(#2);}#3}}\IfNE{#4}{)_{#4}}}
\NewDocumentCommand{\SMat}{s +m O{} +m}{\IfBooleanTF{#1}{#2^{\IfNE{#3}{(#3);}#4}}{\bm{\mathrm{#2}}^{\IfNE{#3}{(#3);}#4}}}
\NewDocumentCommand{\PMat}{s +m O{} +m}{\IfBooleanTF{#1}{#2_{\IfNE{#3}{(#3);}(#4)}}{\bm{\mathrm{#2}}_{\IfNE{#3}{(#3);}(#4)}}}
\NewDocumentCommand{\Mat}{s +m}{\IfBooleanTF{#1}{#2}{{\bm{{\mathrm{#2}}}}}}
\NewDocumentCommand{\MatRow}{s +m +m O{}}{\IfBooleanTF{#1}{#2}{\bm{\mathrm{#2}}}\IfENE{#4}{_{#3}}{_{#3,#4}}}
\NewDocumentCommand{\MatCol}{s +m +m O{}}{\IfBooleanTF{#1}{#2}{\bm{\mathrm{\underline{#2}}}}_{#3}\IfNE{#4}{^{#4}}}
\NewDocumentCommand{\Submat}{s +m +m}{{\bm{\mathrm{#2}}}^{(#3)}}
\NewDocumentCommand{\SubmatRow}{s +m +m +m O{}}{\IfBooleanTF{#1}{#2}{\bm{\mathrm{#2}}}\IfENE{#5}{_{#4}}{_{#4,#5}}^{(#3)}}
\newcommand{\Var}{\OperatorName{Var}}
\newcommand{\Cov}{\OperatorName{Cov}}
\newcommand{\iid}[1][~]{i.i.d.#1}
\newcommand{\MGF}[1][~]{mgf\xspace}
\newcommand{\SphereSym}{S}
\newcommand{\Sphere}[2][]{\ifthenelse{\isempty{#1}}{\SphereSym^{#2-1}}{\mathcal{S}^{#2}}}
\newcommand{\SparseSphereSubspace}[2]{{\SphereSym^{#2-1} \cap \SparseSubspace{#1}{#2}}}
\newcommand{\SparseSubspace}[2]{\Sigma_{#1}^{#2}}
\NewDocumentCommand{\Th}{t'}{\IfBooleanTF{#1}{^{\prime\mathrm{th}}}{\!\,^{\mathrm{th}}}}
\newcommand{\defeq}{\triangleq}
\NewDocumentCommand{\dCmt}{s O{\qquad} m}{#2\blacktriangleright \IfBooleanTF{#1}{\text{#3}}{\IfNE{#3}{\text{#3} \ }}}
\NewDocumentCommand{\dCmtx}{s O{} m}{#2\dCmtIndent \IfBooleanTF{#1}{\text{#3}}{\IfNE{#3}{\text{#3} \ }}}
\newcommand{\dCmtIndent}{\phantom{\qquad \blacktriangleright \ \ }}
\newcommand{\cIf}{\text{if}\ }
\newcommand{\cOtherwise}{\text{otherwise} }
\NewDocumentCommand{\THEOREM}{s O{~}}{\IfBooleanTF{#1}{Theorem}{Theorem}#2\ignorespaces}
\NewDocumentCommand{\COROLLARY}{s O{~}}{\IfBooleanTF{#1}{Corollary}{Corollary}#2\ignorespaces}
\NewDocumentCommand{\LEMMA}{s O{~}}{\IfBooleanTF{#1}{Lemma}{Lemma}#2\ignorespaces}
\NewDocumentCommand{\PROP}{s O{~}}{\IfBooleanTF{#1}{Prop}{Prop}#2\ignorespaces}
\NewDocumentCommand{\PROPS}{s O{~}}{\IfBooleanTF{#1}{Propositions}{Propositions}#2\ignorespaces}
\NewDocumentCommand{\FACT}{s O{~}}{\IfBooleanTF{#1}{Fact}{Fact}#2\ignorespaces}
\NewDocumentCommand{\FACTS}{s O{~}}{\IfBooleanTF{#1}{Facts}{Facts}#2\ignorespaces}
\NewDocumentCommand{\CLAIM}{s O{~}}{\IfBooleanTF{#1}{Claim}{Claim}#2\ignorespaces}
\NewDocumentCommand{\REMARK}{s O{~}}{\IfBooleanTF{#1}{Remark}{Remark}#2}
\NewDocumentCommand{\EQUATION}{s O{~}}{\IfBooleanTF{#1}{Equation}{Equation}#2\ignorespaces}
\NewDocumentCommand{\DEFINITION}{s O{~}}{\IfBooleanTF{#1}{Definition}{Definition}#2\ignorespaces}
\NewDocumentCommand{\ALGORITHM}{s O{~}}{\IfBooleanTF{#1}{Algorithm}{Algorithm}#2\ignorespaces}
\NewDocumentCommand{\LINE}{s O{~}}{\IfBooleanTF{#1}{Line}{Line}#2\ignorespaces}
\NewDocumentCommand{\REGIME}{s O{~}}{\IfBooleanTF{#1}{Regime}{Regime}#2\ignorespaces}
\NewDocumentCommand{\REGIMES}{s O{~}}{\IfBooleanTF{#1}{Regimes}{Regimes}#2\ignorespaces}
\NewDocumentCommand{\PROPERTY}{s O{~}}{\IfBooleanTF{#1}{Property}{Property}#2\ignorespaces}
\NewDocumentCommand{\PROPERTIES}{s O{~}}{\IfBooleanTF{#1}{Properties}{Properties}#2\ignorespaces}
\NewDocumentCommand{\ASSUMPTION}{s O{~}}{\IfBooleanTF{#1}{Assumption}{Assumption}#2\ignorespaces}
\NewDocumentCommand{\ASSUMPTIONS}{s O{~}}{\IfBooleanTF{#1}{Assumptions}{Assumptions}#2\ignorespaces}
\NewDocumentCommand{\STEP}{s O{~}}{\IfBooleanTF{#1}{Step}{Step}#2\ignorespaces}
\NewDocumentCommand{\CASE}{s O{~}}{\IfBooleanTF{#1}{Case}{Case}#2\ignorespaces}
\NewDocumentCommand{\THEOREMS}{s O{~}}{\IfBooleanTF{#1}{Theorems}{Theorems}#2\ignorespaces}
\NewDocumentCommand{\COROLLARIES}{s O{~}}{\IfBooleanTF{#1}{Corollaries}{Corollaries}#2\ignorespaces}
\NewDocumentCommand{\LEMMAS}{s O{~}}{\IfBooleanTF{#1}{Lemmas}{Lemmas}#2\ignorespaces}
\NewDocumentCommand{\CLAIMS}{s O{~}}{\IfBooleanTF{#1}{Claims}{Claims}#2\ignorespaces}
\NewDocumentCommand{\REMARKS}{s O{~}}{\IfBooleanTF{#1}{Remarks}{Remarks}#2\ignorespaces}
\NewDocumentCommand{\EQUATIONS}{s O{~}}{\IfBooleanTF{#1}{Equations}{Equations}#2\ignorespaces}
\NewDocumentCommand{\DEFINITIONS}{s O{~}}{\IfBooleanTF{#1}{Definitions}{Definitions}#2\ignorespaces}
\NewDocumentCommand{\ALGORITHMS}{s O{~}}{\IfBooleanTF{#1}{Algorithms}{Algorithms}#2\ignorespaces}
\NewDocumentCommand{\LINES}{s O{~}}{\IfBooleanTF{#1}{Lines}{Lines}#2\ignorespaces}
\NewDocumentCommand{\SECTION}{s O{~}}{\IfBooleanTF{#1}{Section}{Section}#2\ignorespaces}
\NewDocumentCommand{\SECTIONS}{s O{~}}{\IfBooleanTF{#1}{Sections}{Sections}#2\ignorespaces}
\NewDocumentCommand{\APPENDIX}{s O{~}}{\IfBooleanTF{#1}{Appendix}{Appendix}#2\ignorespaces}
\NewDocumentCommand{\APPENDICES}{s O{~}}{\IfBooleanTF{#1}{Appendices}{Appendices}#2\ignorespaces}
\NewDocumentCommand{\STEPS}{s O{~}}{\IfBooleanTF{#1}{Steps}{Steps}#2\ignorespaces}
\NewDocumentCommand{\CASES}{s O{~}}{\IfBooleanTF{#1}{Cases}{Cases}#2\ignorespaces}
\newcommand{\cf}[1][~]{cf.#1\ignorespaces}
\newcommand{\see}{see,~\ignorespaces}
\newcommand{\tab}{\ensuremath{~~~~}}
\newcommand{\Tab}[1][1]{%
    \ifthenelse{#1>0}{\tab}{}%
    \ifthenelse{#1>1}{\tab}{}%
    \ifthenelse{#1>2}{\tab}{}%
    \ifthenelse{#1>3}{\tab}{}%
    \ifthenelse{#1>4}{\tab}{}%
    \ifthenelse{#1>5}{\tab}{}%
    \ifthenelse{#1>6}{\tab}{}%
    \ifthenelse{#1>7}{\tab}{}%
    \ifthenelse{#1>8}{\tab}{}%
    \ifthenelse{#1>9}{\tab}{}%
    \ifthenelse{#1>10}{\tab}{}%
    \ifthenelse{#1>11}{\tab}{}%
    \ifthenelse{#1>12}{\tab}{}%
    \ifthenelse{#1>13}{\tab}{}%
    \ifthenelse{#1>14}{\tab}{}%
    \ifthenelse{#1>15}{\tab}{}%
    \ifthenelse{#1>16}{\tab}{}%
    \ifthenelse{#1>17}{\tab}{}%
    \ifthenelse{#1>18}{\tab}{}%
    \ifthenelse{#1>19}{\tab}{}%
    \ifthenelse{#1>20}{\tab}{}%
}
\newcommand{\AlignSp}{\tab}
\NewDocumentCommand{\SBinom}{s +m +m}{\IfBooleanTF{#1}{\binom{#2}{#3}}{\binom{[#2]}{#3}}}
\newcommand{\+}{\phantom{+}}
\newcommand{\Ell}{\ell}
\newcommand{\Id}[1]{\Mat{I}_{#1}}
\newcommand{\DistSOp}[1]{d_{\Sphere{#1}}}
\NewDocumentCommand{\Dist}{s +m +m}{d\IfBooleanTF{#1}{( #2, #3 )}{\left( #2, #3 \right)}}
\NewDocumentCommand{\DistH}{s O{} +m +m}{{%
 \def\dfmt{\bigl( #3, #4 \bigr)}
 \def\tfmt{( #3, #4 )}
 d_{H}#2
 \IfBooleanTF{#1}{\dfmt}{
    \mathchoice
     {\tfmt} 
     {\tfmt} 
     {\tfmt} 
     {\tfmt} 
}}}
\NewDocumentCommand{\DistS}{s O{n} O{} +m +m}{{%
 \def\dfmt{\bigl( #4, #5 \bigr)}
 \def\tfmt{( #4, #5 )}
 \DistSOp{#2}#3
 \IfBooleanTF{#1}{\dfmt}{
    \mathchoice
     {\tfmt} 
     {\tfmt} 
     {\tfmt} 
     {\tfmt} 
}}}
\newcommand{\DistAngular}[2]{\theta_{#1,#2}}
\NewDocumentCommand{\LatestUpdate}{s +m}{\IfBooleanTF{#1}{(Latest update on: #2)}{(\emph{Latest update on: #2}.)}}
\newcommand{\Mid}[1]{\,#1\,}
\newcommand{\BIHT}{BIHT\xspace}
\NewDocumentCommand{\NBIHT}{s}{\IfBooleanTF{#1}{(normalized)}{normalized} \BIHT}
\newcommand{\Restriction}[2]{
 \def\dfmt{\left. #1 \right|_{#2}}
 \def\tfmt{#1|_{#2}}
  \mathchoice
   {\dfmt} 
   {\tfmt} 
   {\tfmt} 
   {\tfmt} 
}
\NewDocumentCommand{\VL}{+m O{}}{\Sub{#1\IfNE{#2}{,#2}}}
\newcommand{\IL}[2][]{^{(#2)#1}}
\NewDocumentCommand{\RVL}{+m O{}}{\Sub{#1\IfNE{#2}{,#2}}}
\newcommand{\Net}[2][]{\mathcal{C}_{#2\IfNE{#1}{;#1}}}
\newcommand{\BallNet}[2][]{\mathcal{D}_{#2\IfNE{#1}{;#1}}}
\newcommand{\Ball}[2][]{\mathcal{B}_{#2}\IfNE{#1}{^{(#1)}}}
\newcommand{\BallAngular}[2][]{\mathcal{B}_{\theta\leq #2}\IfNE{#1}{^{(#1)}}}
\NewDocumentCommand{\BallSparseSphere}{O{} +m D(){}}{\mathcal{B}_{#2}\IfNE{#1}{^{(#1)}}( #3 ) \cap \Sphere{n} \cap \SparseSubspace{k}{n}}
\newcommand{\ThresholdOp}{T\!}
\newcommand{\ThresholdMat}{\Mat{T}\!}
\newcommand{\ThresholdMatEntry}{\Mat*{T}\!}
\NewDocumentCommand{\Threshold}{s +m D(){}}{%
  \ThresholdOp_{#2}%
  \ifthenelse{\isempty{#3}}
  {}
  {\IfBooleanTF{#1}{( #3 )}{\left( #3 \right)}}
}
\NewDocumentCommand{\ThresholdSetMat}{s +m O{}}{%
  \IfBooleanTF{#1}{\ThresholdMatEntry_{\IfNE{#3}{#3;}#2}}{\ThresholdMat_{#2}}%
}
\NewDocumentCommand{\ThresholdSet}{s t' O{\Supp( \Vec{\uV} ) \cup } +m D(){}}{%
  \ThresholdOp_{\IfBooleanF{#2}{#3}#4}%
  \ifthenelse{\isempty{#5}}
  {}
  {\IfBooleanTF{#1}{( #5 )}{\left( #5 \right)}}
}
\NewDocumentCommand{\TS}{s +m D(){}}{%
  \ThresholdOp_{#2}%
  \ifthenelse{\isempty{#3}}
  {}
  {\IfBooleanTF{#1}{( #3 )}{\left( #3 \right)}}
}
\newcommand{\pointwise}{point-wise\xspace}
\newcommand{\SACap}[2][]{\mathsf{A}\Sub{#2\IfNE{#1}{,#1}}}
\newcommand{\RegularizedIncompleteBetaFunctionOp}[1]{I_{#1}}
\NewDocumentCommand{\RegularizedIncompleteBetaFunction}{s m m m}{{%
  \def\dfmt{\left( #3, #4 \right)}
  \def\tfmt{( #3, #4 )}
  \RegularizedIncompleteBetaFunctionOp{#2}
  \IfBooleanTF{#1}{\tfmt}{
    \mathchoice
      {\dfmt} 
      {\tfmt} 
      {\tfmt} 
      {\tfmt} 
}}}
\NewDocumentCommand{\RHS}{s}{\IfBooleanTF{#1}{RHS}{right-hand-side}\xspace}
\NewDocumentCommand{\LHS}{s}{\IfBooleanTF{#1}{LHS}{left-hand-side}\xspace}
\newcommand{\BinarySet}{\{ -1,1 \}}
\renewcommand{\k}{k}
\newcommand{\m}{m}
\newcommand{\mO}[1][\epsilonRAICadv]{m_{0}\IfNE{#1}{( #1 )}}
\newcommand{\n}{n}
\newcommand{\Iter}{t}
\newcommand{\IterX}{t'}
\newcommand{\Err}[1]{\varepsilon\IfNE{#1}{( #1 )}}
\newcommand{\epsilonX}{\epsilon}
\newcommand{\epsilonO}{\epsilon_{0}}
\newcommand{\tauX}{\tau}
\newcommand{\Eta}{\eta}
\newcommand{\epsilonRAICadv}{\delta}
\newcommand{\epsilonRAIC}{\epsilon'}
\newcommand{\kX}{k}
\newcommand{\qX}[1][]{q\Sub{#1}}
\newcommand{\rhoX}[1][]{\rho\Sub{#1}}
\newcommand{\rhoXX}[1][]{\rho'}
\newcommand{\rhoRAIC}{\rho'}
\newcommand{\aC}{a}
\newcommand{\bC}{b}
\NewDocumentCommand{\cC}{s t' t' O{}}{c\IfBooleanT{#2}{\IfBooleanTF{#3}{''}{'}}\Sub{#4}\IfBooleanF{#1}{\<}}
\newcommand{\cX}[1]{c\Sub{#1}}
\newcommand{\CC}{c}
\newcommand{\iIx}{i}
\newcommand{\jIx}{j}
\newcommand{\IX}{I}
\newcommand{\JX}{J}
\newcommand{\Jx}{J_{0}}
\newcommand{\lX}{\Ell}
\newcommand{\dX}{d}
\newcommand{\betaX}{\beta}
\newcommand{\phiX}[2]{\phi\Sub{#1} \IfNE{#2}{( #2 )}}
\newcommand{\varphiX}[1]{\Vec{\uV}}
\NewDocumentCommand{\xvarphiX}{s +m}{{%
  \def\dfmt{\IfNE{#2}{\left( \phantom{\bigl|} \negphantom{\bigl|} #2 \right)}}
  \def\tfmt{\IfNE{#2}{( #2 )}}
  \varphi
  \IfBooleanTF{#1}{\tfmt}{
    \mathchoice
      {\dfmt} 
      {\tfmt} 
      {\tfmt} 
      {\tfmt} 
}}}
\NewDocumentCommand{\QuI}{O{\Vec{\uV}} O{\IX}}{Q\Sub{#1\IfNE{#2}{,#2}}}
\NewDocumentCommand{\barZI}{s O{\IX}}{\IfBooleanTF{#1}{\bar{\ZX}}{\mathbf{\bar{\ZX}}}\Sub{#2}}
\NewDocumentCommand{\sI}{s s O{} O{\IX}}{\IfBooleanTF{#1}{s}{\mathbf{s}}\Sub{\IfBooleanTF{#2}{\Vec{\uV*}}{\Vec{\uV}},#4 \IfNE{#3}{;#3}}}
\NewDocumentCommand{\barZsI}{s O{} O{} O{\IX}}{\IfBooleanTF{#1}{\bar{\ZX}}{\mathbf{\bar{\ZX}}}\Sub{\IfNE{#3}{#3,}\Vec{\uV},#4\IfNE{#2}{;#2}}}
\NewDocumentCommand{\barZsIx}{s O{} O{} O{\IX'}}{\IfBooleanTF{#1}{\bar{\ZX}}{\mathbf{\bar{\ZX}}}\Sub{\IfNE{#3}{#3,}\Vec{\uV},#4\IfNE{#2}{;#2}}}
\NewDocumentCommand{\barZsIX}{s}{\frac{1}{\m} \sum_{\iIx \in \IX} \ThresholdSet{\JX}( \Vec{\ZX}\VL{\iIx} ) \IfBooleanTF{#1}{\sI**[\iIx]}{\sI*[\iIx]}}
\NewDocumentCommand{\barZsIXX}{s}{\frac{1}{\m} \sum_{\iIx \in \IX'} \ThresholdSet{\JX}( \Vec{\ZX}\VL{\iIx} ) \IfBooleanTF{#1}{\sI**[\iIx]}{\sI*[\iIx]}}
\NewDocumentCommand{\zuI}{s O{} O{\Vec{\uV}} O{\IX}}{\IfBooleanTF{#1}{s}{\mathbf{s}}\Sub{#3,#4\IfNE{#2}{;#2}}}
\newcommand{\rX}{r}
\newcommand{\uX}{u}
\newcommand{\uXx}{\hat{u}}
\newcommand{\uO}{u}
\newcommand{\vX}{v}
\newcommand{\vXx}{\hat{v}}
\newcommand{\wX}{w}
\newcommand{\zX}{z}
\newcommand{\TJZ}{\ThresholdSet{\JX}( \Vec{\ZX} )}
\newcommand{\TJZi}[1][\iIx]{\ThresholdSet{\JX}( \Vec{\ZX}\VL{#1} )}
\NewDocumentCommand{\uV}{s}{u\IfBooleanT{#1}{_{\ast}}}
\newcommand{\vV}{v}
\newcommand{\wV}{w}
\newcommand{\zV}{z}
\newcommand{\WV}{W}
\newcommand{\HV}{H}
\NewDocumentCommand{\HVx}{s O{} O{} O{\f,\Set{\ZX}}}{\IfBooleanTF{#1}{\HV}{\mathbf{\HV}}_{#4\IfNE{#2}{;#2}\IfNE{#3}{;#3}}}
\NewDocumentCommand{\WVx}{s O{} O{} O{1}}{\IfBooleanTF{#1}{\WV}{\mathbf{\WV}}_{#4\IfNE{#2}{;#2}\IfNE{#3}{;#3}}}
\NewDocumentCommand{\WVxx}{s O{} O{} O{2}}{\IfBooleanTF{#1}{\WV}{\mathbf{\WV}}_{#4\IfNE{#2}{;#2}\IfNE{#3}{;#3}}}
\newcommand{\YRV}[1][]{U\Sub{#1}}
\newcommand{\barYRV}[1][]{\bar{U}\Sub{#1}}
\newcommand{\WRV}{W}
\newcommand{\DRV}{D}
\newcommand{\Dx}{\DRV\RVL{1}}
\newcommand{\Dxx}{\DRV\RVL{2}}
\newcommand{\Dxu}[1][\JX]{\DRV'_{1;#1}( \Vec{\uV},\Vec{\uV} )}
\newcommand{\Dxxu}[1][\JX]{\DRV'_{2;#1}( \Vec{\uV},\Vec{\uV} )}
\newcommand{\DX}[3][\JX]{\DRV_{1;#1}( #2,#3 )}
\newcommand{\DXX}[3][\JX]{\DRV_{2;#1}( #2,#3 )}
\newcommand{\DXXu}[1][\JX]{\DRV_{2;#1}( \Vec{\uV},\Vec{\uV} )}
\newcommand{\US}[1][]{\Set{U}\IfNE{#1}{_{#1}}}
\newcommand{\HS}[1][]{\Set{H}_{\f;\Set{\ZX}\IfNE{#1}{;#1}}}
\newcommand{\ZX}{A}
\newcommand{\WX}{W}
\newcommand{\alphaX}[1][]{\alpha\Sub{#1}}
\NewDocumentCommand{\UX}{O{\betaX} O{\Vec{\uX}}}{\Set{U}_{#1}( #2 )}
\NewDocumentCommand{\VX}{O{\betaX} O{\Vec{\uX}}}{\Set{V}_{#1}( #2 )}
\newcommand{\AV}[1][]{A}
\newcommand{\AM}{A}
\NewDocumentCommand{\Resp}{t'}{\IfBooleanTF{#1}{\hat{y}}{y}}
\newcommand{\f}[1][]{f\Sub{#1}}
\newcommand{\gammaX}{\gamma}
\newcommand{\x}{x}
\newcommand{\xApprox}[1][]{\hat{x}}
\newcommand{\xApproxX}[1][]{\tilde{x}}
\newcommand{\yV}{y}
\newcommand{\xV}{x}
\newcommand{\MismatchSet}[1][]{\Set{M}}
\newcommand{\MismatchSetf}[1][]{\Set{M}_{\f}}
\newcommand{\h}[1][]{h_{\Mat{\AM}\IfNE{#1}{;#1}}}
\newcommand{\hf}[3][]{h_{\f;\Mat{\AM}\IfNE{#1}{;#1}}\IfNE{#2#3}{( #2, #3 )}}
\newcommand{\hff}[1][]{h_{\f,\f;\Mat{\AM}\IfNE{#1}{;#1}}}
\newcommand{\hZ}[1][]{h_{\Mat{\AM}\IfNE{#1}{;#1}}}
\newcommand{\hZf}[3][]{h_{\f;\Mat{\AM}\IfNE{#1}{;#1}}\IfNE{#2#3}{( #2, #3 )}}
\newcommand{\hZff}[1][]{h_{\f,\f;\Set{\ZX}\IfNE{#1}{;#1}}}
\newcommand{\fAx}{\f( \Vec{\x} )}
\newcommand{\fZix}{\f( \Vec{\x} )_{\iIx}}
\newcommand{\fZiu}{\f( \Vec{\uV} )_{\iIx}}
\NewDocumentCommand{\WI}{s O{\IX}}{\IfBooleanTF{#1}{W}{\mathbf{W}}\Sub{\!#2}}
\NewDocumentCommand{\Ru}{s O{} O{} O{\Vec{\uX}}}{\IfBooleanTF{#1}{R}{\mathbf{R}}_{#4\IfNE{#2}{;#2}\IfNE{#3}{;#3}}}
\NewDocumentCommand{\Rux}{s O{} O{\Vec{\uX}}}{\IfBooleanTF{#1}{\hat{R}}{\mathbf{\hat{R}}}_{#3\IfNE{#2}{;#2}}}
\NewDocumentCommand{\Xu}{s t' O{}}{\IfBooleanTF{#2}{X'_{#3}}{X_{#3}}}
\NewDocumentCommand{\barXu}{s t' O{} D(){}}{\IfBooleanTF{#2}{\bar{X}'_{#3}}{\bar{X}_{#3\IfNE{#4}{,#4}}}}
\NewDocumentCommand{\Yu}{s t' O{\Vec{\uX}}}{\IfBooleanTF{#2}{Y'_{#3}}{Y_{#3}}}
\NewDocumentCommand{\barYu}{s t' t' O{} O{\Vec{\uX}}}{\IfBooleanTF{#1}{\IfBooleanTF{#2}{\IfBooleanTF{#3}{\bar{Y}''_{#5\IfNE{#4}{;#4}}}{\bar{Y}'_{#5\IfNE{#4}{;#4}}}}{\bar{Y}_{#5\IfNE{#4}{;#4}}}}{\IfBooleanTF{#2}{\IfBooleanTF{#3}{\Vec{\bar{Y}}''_{#5\IfNE{#4}{;#4}}}{\Vec{\bar{Y}}'_{#5\IfNE{#4}{;#4}}}}{\Vec{\bar{Y}}_{#5\IfNE{#4}{;#4}}}}}
\NewDocumentCommand{\Yuv}{s t' t' O{} O{\Vec{\uX},\Vec{\vX}}}{\IfBooleanTF{#2}{\IfBooleanTF{#3}{Y''_{#5\IfNE{#4}{;#4}}}{Y'_{#5\IfNE{#4}{;#4}}}}{Y_{#5\IfNE{#4}{;#4}}}}
\NewDocumentCommand{\barYuv}{s t' t' O{} O{\Vec{\uX},\Vec{\vX}}}{\IfBooleanTF{#2}{\IfBooleanTF{#3}{\bar{Y}''_{#5\IfNE{#4}{;#4}}}{\bar{Y}'_{#5\IfNE{#4}{;#4}}}}{\bar{Y}_{#5\IfNE{#4}{;#4}}}}
\NewDocumentCommand{\barUu}{s t' O{\Vec{\uX}}}{\IfBooleanTF{#2}{\bar{U}'_{#3}}{\bar{U}_{#3}}}
\NewDocumentCommand{\Uuv}{s t' t' O{\jIx} O{\Vec{\uX},\Vec{\vX}}}{\IfBooleanTF{#2}{\IfBooleanTF{#3}{U''_{#5;#4}}{U'_{#5;#4}}}{U_{#5;#4}}}
\NewDocumentCommand{\barUuv}{s t' t' O{\Vec{\uX},\Vec{\vX}}}{\IfBooleanTF{#2}{\IfBooleanTF{#3}{\bar{U}''_{#4}}{\bar{U}'_{#4}}}{\bar{U}_{#4}}}
\NewDocumentCommand{\Xux}{s O{\Vec{\uX}}}{\hat{X}_{#2}}
\NewDocumentCommand{\Xuv}{s O{\Vec{\uX},\Vec{\vX}}}{X_{#2}}
\NewDocumentCommand{\Xuvx}{s O{\Vec{\uXx},\Vec{\vXx}}}{\hat{X}_{#2}}
\NewDocumentCommand{\Wu}{s O{} O{\Vec{\uX}}}{\IfBooleanTF{#1}{W}{\mathbf{W}}_{#3\IfNE{#2}{;#2}}}
\NewDocumentCommand{\ru}{s O{} O{\Vec{\uX}}}{\IfBooleanTF{#1}{r}{\mathbf{r}\Sub{#2}}}
\NewDocumentCommand{\yu}{s O{} O{\Vec{\uX}}}{\IfBooleanTF{#1}{y}{\mathbf{y}\Sub{#2}}}
\newcommand{\lu}[1][\Vec{\uX}]{\lX}
\newcommand{\tu}[1][\Vec{\uX}]{t}
\NewDocumentCommand{\Yx}{s O{} O{\Vec{\x}}}{\IfBooleanTF{#1}{Y}{\mathbf{Y}}_{#3\IfNE{#2}{;#2}}}
\NewDocumentCommand{\Wx}{s O{} O{\Vec{\x}}}{\IfBooleanTF{#1}{W}{\mathbf{W}}_{#3\IfNE{#2}{;#2}}}
\NewDocumentCommand{\YxO}{s O{} O{\Vec{\x}}}{\IfBooleanTF{#1}{Y}{\mathbf{Y}}_{\ast#3\IfNE{#2}{;#2}}}
\NewDocumentCommand{\Rx}{s O{} O{\Vec{\x}}}{\IfBooleanTF{#1}{R}{\mathbf{R}}_{#3\IfNE{#2}{;#2}}}
\newcommand{\ksparserealunit}{\( k \)-sparse, real-valued unit\xspace}
\newcommand{\topk}[1][\(  \k  \)]{top-#1\xspace}
\newcommand{\Topk}[1][\(  \k  \)]{Top-#1\xspace}
\newcommand{\ksubset}[1][\(  \k  \)]{subset\xspace}
\newcommand{\kSubset}[1][\(  \k  \)]{Subset\xspace}
\newcommand{\hardthresholding}{hard thresholding\xspace}
\newcommand{\topkhardthresholding}{top-\( k \) hard thresholding\xspace}
\newcommand{\pdf}[1][\Xu]{f_{#1}}
\newcommand{\Basis}{\Set{V}}
\newcommand{\basis}{v}
\newcommand{\errorO}{\epsilonX_{0}}
\newcommand{\URV}{U}
\newcommand{\ux}{u}
\newcommand{\vx}{v}
\newcommand{\wx}{w}
\newcommand{\wOx}{w_{0}}
\newcommand{\fx}[1]{f\Sub{#1}}
\NewDocumentCommand{\bO}{s m}{b_{0,#2}\IfBooleanF{#1}{\,}}
\NewDocumentCommand{\bX}{s t' t' O{} m}{b\IfBooleanT{#2}{\IfBooleanTF{#3}{''}{'}}_{\IfNE{#4}{#4,}#5}\IfBooleanF{#1}{\<}}
\NewDocumentCommand{\bI}{s t' t' m}{b\IfBooleanT{#2}{\IfBooleanTF{#3}{''}{'}}_{1,#4}\IfBooleanF{#1}{\,}}
\NewDocumentCommand{\bII}{s m}{b_{2,#2}\IfBooleanF{#1}{\,}}
\NewDocumentCommand{\bIII}{s m}{b_{3,#2}\IfBooleanF{#1}{\,}}
\NewDocumentCommand{\aX}{s t' t' O{} m}{a\IfBooleanT{#2}{\IfBooleanTF{#3}{''}{'}}_{\IfNE{#4}{#4,}#5}\IfBooleanF{#1}{\<}}
\NewDocumentCommand{\RAIC}{s s}{\IfBooleanTF{#1}{\IfBooleanTF{#2}{R}{r}estricted approximate invertibiity condition}{RAIC}\xspace}
\NewDocumentCommand{\RAICadv}{s s}{\IfBooleanTF{#1}{\IfBooleanTF{#2}{R}{r}estricted approximate invertibiity condition under adversarial noise}{RAIC under adversarial noise}\xspace}
\newcommand{\Tfrac}[2]{\frac{#1}{#2}}
\newcommand{\aValue}{16}
\newcommand{\bCValue}{379.1038}
\newcommand{\cXOneValue}{\sqrt{\frac{3\pi}{\bC}}\left( 1 + \frac{16\sqrt{2}}{3} \right)}
\newcommand{\cXOneValueX}{1.3469}
\newcommand{\cXOneValueXX}{1.3470}
\newcommand{\cXTwoValue}{\frac{3}{\bC}\left( 1 + \frac{4\pi}{3} + \frac{8\sqrt{3\pi}}{3} + 8\sqrt{6\pi} \right)}
\newcommand{\cXTwoValueX}{0.3806}
\newcommand{\cXTwoValueXX}{0.3807}
\newcommand{\cXThreeValue}{\frac{13\sqrt{\pi}}{\sqrt{\bC}}}
\newcommand{\cXThreeValueX}{1.1834}
\newcommand{\cXThreeValueXX}{1.1835}
\newcommand{\cXFourValue}{2+4\sqrt{\pi}}
\newcommand{\cXFourValueX}{9.0898}
\newcommand{\cXFourValueXX}{9.0899}
\newcommand{\cValue}{31.9999}
\newcommand{\cValueXX}{32}
\def\PREIMG{( \SparseSphereSubspace{\k}{\n} ) \times ( \SparseSphereSubspace{\k}{\n} )}
\newcommand{\DXXuUB}{%
  13\sqrt{\frac{\pi \epsilonRAICadv \tauX}{\bC}} + \left( 2 + 4\sqrt{\pi} \right) \tauX \sqrt{\Log( \Tfrac{2e}{\tauX} )}
}
\newcommand{\DXXuUBconst}{%
  \cX{3} \sqrt{\epsilonRAICadv \tauX} + \cX{4} \tauX \sqrt{\Log( \Tfrac{2e}{\tauX} )}
}
\newcommand{\DXXuUBconstX}{%
  \cX{3} \sqrt{\epsilonX \tauX} + \cX{4} \tauX \sqrt{\Log( \Tfrac{2e}{\tauX} )}
}
\newcommand{\rXValueX}{%
  \cX{3} \sqrt{\epsilonX \tauX} + \cX{4} \tauX \sqrt{\Log( \Tfrac{2e}{\tauX} )}
}
\newcommand{\rXValue}{%
  \frac{\cX{}}{\cX{2}} \left( \rXValueX \right)
}
\definecolor{LinkColor}{HTML}{667699}
\definecolor{CiteColor}{HTML}{507395}
\definecolor{URLColor}{rgb}{0.3,0.5,0.7}
\definecolor{MathInlineColor}{HTML}{000000}
\def\lightgraycolor{0.9}
\def\mediumgraycolor{0.55}
\def\darkgraycolor{0.4}
\definecolor{light-gray}{gray}{\lightgraycolor}
\definecolor{medium-gray}{gray}{\mediumgraycolor}
\definecolor{dark-gray}{gray}{0.4}
\definecolor{cmtc}{rgb}{0.1,0.1,0.9}
\NewDocumentCommand{\StartComment}{s}{%
\IfBooleanTF{#1}{%
\color{light-gray}\everymath{\color{light-gray}}%
\def\graycolor{\lightgraycolor}
\definecolor{todocolor}{gray}{\graycolor}%
\definecolor{cmtcolor}{gray}{\graycolor}%
\definecolor{cmtqcolor}{gray}{\graycolor}%
\definecolor{MathInlineColor}{gray}{\graycolor}%
\definecolor{LinkColor}{gray}{\graycolor}%
\definecolor{CiteColor}{gray}{\graycolor}%
\definecolor{URLColor}{gray}{\graycolor}%
}{%
\color{medium-gray}\everymath{\color{medium-gray}}%
\def\graycolor{\mediumgraycolor}%
\definecolor{todocolor}{gray}{\graycolor}%
\definecolor{cmtcolor}{gray}{\graycolor}%
\definecolor{cmtqcolor}{gray}{\graycolor}%
\definecolor{MathInlineColor}{gray}{\graycolor}%
\definecolor{LinkColor}{gray}{\graycolor}%
\definecolor{CiteColor}{gray}{\graycolor}%
\definecolor{URLColor}{gray}{\graycolor}%
}}
\newcommand{\EndComment}{\color{black}%
\definecolor{todocolor}{rgb}{1,0,0}%
\definecolor{cmtcolor}{rgb}{0.1,0.1,0.9}%
\definecolor{cmtqcolor}{HTML}{0959aa}%
\definecolor{MathInlineColor}{HTML}{000000}%
\everymath{\color{MathInlineColor}}%
\definecolor{LinkColor}{rgb}{0.30,0.30,0.30}%
\definecolor{CiteColor}{rgb}{0.25,0.40,0.6}%
\definecolor{URLColor}{rgb}{0.3,0.5,0.7}%
}
\NewDocumentCommand{\StartRemark}{s}{%
\IfBooleanTF{#1}{%
\color{dark-gray}\everymath{\color{dark-gray}}%
\def\graycolor{\darkgraycolor}
\definecolor{todocolor}{gray}{\graycolor}%
\definecolor{cmtcolor}{gray}{\graycolor}%
\definecolor{cmtqcolor}{gray}{\graycolor}%
\definecolor{MathInlineColor}{gray}{\graycolor}%
\definecolor{LinkColor}{gray}{\graycolor}%
\definecolor{CiteColor}{gray}{\graycolor}%
\definecolor{URLColor}{gray}{\graycolor}%
}{%
\color{dark-gray}\everymath{\color{dark-gray}}%
\def\graycolor{\darkgraycolor}
\definecolor{todocolor}{gray}{\graycolor}%
\definecolor{cmtcolor}{gray}{\graycolor}%
\definecolor{cmtqcolor}{gray}{\graycolor}%
\definecolor{MathInlineColor}{gray}{\graycolor}%
\definecolor{LinkColor}{gray}{\graycolor}%
\definecolor{CiteColor}{gray}{\graycolor}%
\definecolor{URLColor}{gray}{\graycolor}%
}}
\newcommand{\xToDo}[1]{\normalfont\textcolor{todocolor}{$\textcolor{todocolor}{\ldelim}$\textbf{To\,do}\IfENE{#1}{\textbf{:}\;Fill this in.}{\textbf{:}\;{#1}}$\textcolor{todocolor}{\rdelim}$}}
\newcommand{\ToDo}[1]{%
  \colorlet{origcolor}{.}\everymath{\color{todocolor}}%
  \ifthenelse{\boolean{showcomments}}%
  {\relax\ifmmode\text{\xToDo{#1}}\else\xToDo{#1}\fi}%
  {}%
  \everymath{\color{MathInlineColor}}%
}%
\newcommand{\xCOMMENT}[1]{\normalfont\textcolor{cmtcolor}{$\textcolor{cmtcolor}{\ldelim}$\textbf{Note}\IfNE{#1}{\textbf{:}\;{#1}}$\textcolor{cmtcolor}{\rdelim}$}}
\newcommand{\COMMENT}[1]{%
  \colorlet{origcolor}{.}\everymath{\color{cmtcolor}}%
  \ifthenelse{\boolean{showcomments}}%
  {\relax\ifmmode\text{\xCOMMENT{#1}}\else\xCOMMENT{#1}\fi}%
  {}%
  \everymath{\color{MathInlineColor}}%
}%
\NewDocumentCommand{\?}{s +m}{%
  \colorlet{origcolor}{.}\everymath{\color{cmtqcolor}}%
  \ifthenelse{\boolean{showcomments}}%
  {\normalfont\textcolor{cmtqcolor}{{\IfBooleanF{#1}{$\textcolor{cmtqcolor}{\ldelim}$}#2$\textcolor{cmtqcolor}{\rdelim}$}}}%
  {}%
  \everymath{\color{MathInlineColor}}%
}%
\newcommand\blfootnote[1]{%
  \begingroup
  \renewcommand\thefootnote{}\footnote{#1}%
  \addtocounter{footnote}{-1}%
  \endgroup
}
\title{Robust 1-bit Compressed Sensing with  Iterative Hard Thresholding}
\date{}
 \author{Namiko Matsumoto \\
 \and Arya Mazumdar\blfootnote{The authors are with the University of California San Diego. This work is supported in part by NSF awards 2133484 and 2217058. Emails: \texttt{\{nmatsumoto,arya\}@ucsd.edu}.}}
\begin{document}
\maketitle

\begin{abstract}
In 1-bit compressed sensing, the aim is to estimate a $k$-sparse unit vector $x\in \Sphere{n}$ within an $\epsilon$ error (in $\ell_2$) from minimal number of linear measurements that are quantized to just their signs, i.e., from measurements of the form $y = \Sign(\langle a, x\rangle).$ In this paper, we study a noisy version where a fraction  of the measurements can be flipped, potentially by an adversary. In particular, we analyze the Binary Iterative Hard Thresholding (BIHT) algorithm, a proximal gradient descent on a properly defined loss function used for 1-bit  compressed sensing, in this noisy setting.   It is known from recent results that, with $\tilde{O}(\frac{k}{\epsilon})$ noiseless measurements, BIHT  provides an estimate within $\epsilon$ error. This result is optimal and universal, meaning one set of measurements work for all sparse vectors. In this paper, we show that BIHT also provides better results than all known methods for the noisy setting. We show that when up to  $\tau$-fraction of the sign measurements are incorrect (adversarial error), with the same number of measurements as before, BIHT agnostically provides an estimate of $x$ within an $\tilde{O}(\epsilon+\tau)$ error, maintaining the universality of measurements. This establishes stability of iterative hard thresholding in the presence of measurement error. To obtain the result, we use the restricted approximate invertibility of Gaussian matrices, as well as a tight analysis of the high-dimensional geometry of the adversarially corrupted measurements.
\end{abstract}
\section{Introduction}

Compressed sensing is a  framework in signal processing that exploits the inherent sparsity or compressibility of signals to efficiently acquire and reconstruct them with a sampling rate that is significantly lower than the dimensionality  of the signal~\cite{candes2006robust,donoho2006compressed}.  By using a small number of non-adaptive measurements, often obtained through random projections, compressed sensing enables the recovery of the original signal with high accuracy.

In real-world signal acquisitions and storage, signals are often digitized. This led to introduction to 1-bit compressed sensing (1bCS) by~\cite{DBLP:conf/ciss/BoufounosB08}. In this model, a unit-norm sparse signal $\Vec{x} \in S^{n-1}, \|\Vec{x}\|_0 \le k,$ is acquired through the  operation $\Vec{y} = \Sign(\Mat{\AM}\Vec{x}),$ where $\Mat{\AM}$ is an $m \times n$ real matrix and $\Vec{y}\in \{1,-1\}^m$ is a binary vector containing the coordinate-wise signs of $\Mat{\AM}\Vec{x}$. The primary objective is to design a measurement matrix $\Mat{\AM}$ with minimal number of rows $m,$ such that for any $\Vec{x} \in S^{n-1}, \|\Vec{x}\|_0 \le k$, an estimate $\hat{\Vec{x}}$ from $\Vec{y}$ and $\Mat{\AM}$ via an efficient algorithm can be provided such that $\|\Vec{x}-\hat{\Vec{x}}\|_2 \le \epsilon$,  for a given $0 \le \epsilon \le 1.$ We will refer to $\epsilon$ as the {\em parameter error}.

It is known that $m = \Omega(\frac{k}{\epsilon})$ measurements are necessary~\cite{jacques2013robust} for this. Also, if the entries of the matrix $\Mat{\AM}$ is chosen to be standard normal random random variables then recovery is possible with high probability for all $k$-sparse unit norm vectors with $m = O(\frac{k}{\epsilon} \log\frac{n}{\epsilon})$ measurements~\cite{jacques2013robust}. Hereafter, there has been a series of work that tries to achieve this baseline number of measurements $\tilde{O}(\frac{k}{\epsilon})$ with a computationally tractable algorithm, such as convex relaxations~\cite{plan2013one,boufounos2015quantization, plan2017high}. In particular, the {\em linear estimator} of \cite{plan2017high} shows that $\tilde{O}(\frac{k}{\epsilon^2})$ measurements are sufficient, which is suboptimal in its dependency on the parameter error. 

In this paper, we study a very natural iterative estimation method proposed in \cite{jacques2013robust}, called {\em binary iterative hard thresholding} (BIHT). Iterative hard thresholding is a well-known algorithm for compressed sensing, where estimations of $\Vec{x}$ are projected back to the ``top-$k$'' coordinates in each step to maintain sparsity of the solution~\cite{blumensath2009iterative}.
The description of BIHT is provided in Algorithm~\ref{alg:biht:normalized}, and will be formally discussed  later. In short, it is a proximal gradient descent algorithm where an estimate of  $\Vec{x}$ is updated iteratively followed by the aforementioned projection. BIHT was empirically observed to have excellent performance which was analyzed in several papers such as \cite{jacques2013quantized,boufounos2015quantization,liu2019one,friedlander2021nbiht}. Ultimately, in \cite{matsumoto2022binary}, it was shown that $\tilde{O}(\frac{k}{\epsilon})$ measurements are sufficient for BIHT to produce an estimate with at most $\epsilon$ error, i.e., the optimal dependence on sparsity and error.

In this paper, we show that iterative hard thresholding is in fact even more powerful: it is robust to adversarial noise. Noisy one-bit compressed sensing has also been quite well-studied in the last few years~\cite{awasthi2016learning,dai2016noisy,awasthi2017power,huang2018robust,chinot2022adaboost}. In particular, we assume that any up to $\tau m, 0 \le \tau \le 1,$ coordinates of the measurement vector $\Sign(\Mat{\AM}\Vec{x})$ are flipped by an adversary. In this model, \cite{plan2012robust} showed that their {\em linear estimator} can provide  an estimate $\hat{\Vec{x}}$ of $\Vec{x}$ such that $\|\hat{\Vec{x}}- \Vec{x}\|_2 \le \epsilon$ with $O(\frac{k}{\epsilon^4} (1/2-\tau)^{-2}\log\frac{2n}{k})$ measurements.
In the same model, \cite{awasthi2016learning} provided an algorithm that returns an estimate $\hat{\Vec{x}}$ of $\Vec{x}$ such that $\|\hat{\Vec{x}}- \Vec{x}\|_2 \le \epsilon+ c\tau,$ $c>0$ being a constant, with $\tilde{O}(\frac{k}{\epsilon^3})$ measurements.

To mitigate such sign-flips in measurements, an algorithm called adaptive outlier pursuit was proposed in \cite{yan2012robust} that shows superior performance over BIHT empirically. However the algorithm requires precise knowledge of $\tau$, and  performance deteriorates rapidly without this knowledge. On the other hand, BIHT is agnostic to the number of sign-flips. Another algorithm based on MAP estimation was proposed in \cite{dai2016noisy},  relying on a stable embedding property of the measurement matrix which is known to take at least $\Omega(\frac{k}{\epsilon^2})$ rows. With $\Omega(\frac{k}{\epsilon^2})$ measurements a least-square decoding algorithm was also shown to be effective in~\cite{huang2018robust} in the presence of sign-flips. More recently, the noisy 1-bit compressed sensing problem was also studied from both the perspective of parameter error, and prediction error; in particular the performance of the AdaBoost~(\cite{freund1997decision}) algorithm was analyzed in~\cite{chinot2022adaboost}. The number of required measurements here scales as $\tilde{O}(\frac{k}{\epsilon^6}).$ We have omitted the dependence on $\tau$ in the last few results for the sake of clarity, and also to point out suboptimal dependence on parameter error even in the absence of adversarial sign-flips.

\subsection{Our Contributions}
Under the adversarial sign-flip model described above, we show that BIHT still produces a good estimate of the sparse vector $\Vec{x}$ with the same number of $\tilde{O}(\frac{k}{\epsilon})$ measurements. BIHT is also agnostic to the number of sign-flips: indeed, as long as there is sufficient number of measurements, a good estimate with small parameter error is produced. To be precise, we show that with $m= O(\frac{k}{\epsilon}\log\frac{n}{k\epsilon})$ measurements of which up to $\tau m$ can be corrupted, BIHT converges to an estimate $\hat{\Vec{x}}$ of $\Vec{x}$, such that $$\|\hat{\Vec{x}}-\Vec{x}\|\le \epsilon + O(\sqrt{\epsilon \tau} + \tau \sqrt{\log\frac1\tau}) \asymp \max\{\epsilon, \tau \sqrt{\log \frac1\tau}\}. $$

With only $\tilde{O}(\frac{k}{\epsilon})$ measurements, this result provides the best sample complexity guarantee, i.e., a number of measurements with better dependence on parameter error than \cite{plan2012robust,awasthi2016learning,dai2016noisy,huang2018robust,chinot2022adaboost} mentioned above.

While our work builds on \cite{matsumoto2022binary}, our analysis requires  new insights as well as new technical tools. One of the key steps in \cite{matsumoto2022binary} is to establish a property of Gaussian matrices called {\em restricted approximate invertibility condition} (\RAIC). 
This condition ensures that the estimation error remains controlled throughout the iterations of BIHT by approximately preserving the discrepancy between two vectors and the average of the measurements (rows of matrix $\Mat{\AM}$) that yield distinct outcomes for those vectors.


In this paper, we aim to prove a similar condition but account for the possibility of flipped measurements. To achieve this, we introduce a new definition of RAIC with measurement error. Our main technical achievement is demonstrating that Gaussian matrices possess this property. For this, in addition to validating the results obtained by \cite{matsumoto2022binary} for Gaussian measurements, we also need to establish a (roughly) linear relationship between the expected norm of the sum of up to \(  \tauX \m  \)-many measurements and the expected error resulting from adversarial corruption of up to \(  \tauX \m  \)-many responses.
Consequently, given that the norm of the sum of any set of up to \(  \tauX \m  \)-many measurements can be consistently bounded and not exceed a certain threshold with a high probability, we can  establish an upper bound on the error introduced by adversarial noise with a high probability.
%
%
With the goal of upper bounding the norm of the sum of the up to \(  \tauX \m  \)-many measurements, the vector sum is orthogonally decomposed into two components: (a) its projection onto a particular vector \(  \Vec{\uV}  \) (determined later), and (b) its projection into the kernel of \(  \Vec{\uV}  \), each of which will be bounded separately.
The norm of the two components can be recombined via triangle inequality.
Repeating this over a collection of vectors, \(  \Vec{\uV}  \), leads to a uniform result.
Crucially, it turns out that the number of vectors, \(  \Vec{\uV}  \), which need to be considered in this collection is finite and quantifiable: it does not exceed then number of ways to choose up to \( 
 \tauX \m  \)-many responses to corrupt.
This is related to the tracking of ``mismatches,'' which was a key element in 
previous analysis.



\subsection{Other Related Works}

Without the sparsity constraint, the problem we consider is closely related to the noisy half-space learning problem. 
However, most of the time the focus of such works is to provide guarantee on prediction error, rather than parameter error~\cite{frei2021agnostic,ji2022agnostic}. The objective of this line of work is to come up with distribution-agnostic efficient algorithms, and then to provide guarantees on their zero-one loss (probability of mismatch). 
This problem is also studied with different noise models, for example, Massart noise~\cite{diakonikolas2019distribution}, instance-dependent noise~\cite{menon2018learning,cheng2020learning}, random sign-flip noise~\cite{diakonikolas2022learning}. In particular, the later work shows that with standard Gaussian covariates, and with probability of sign-flip being $\eta,$ one can come up with a classifier to guarantee probability of mismatch $O(\eta)$, with $\frac{n}{\eta^2}$ samples. Since for Gaussian covariates, parameter error and prediction error could be related - this will lead to a suboptimal sample complexity with respect to the error rate, if a ``sparse'' version could be made available. Active learning under this model was also considered in \cite{yan2017revisiting}.

Interestingly, learning $k$-sparse half-spaces where labels can be corrupted has been considered in \cite{zhang2018efficient,shen2021attribute}, with guarantee on the prediction error. Note that these papers study the problem in an ``active PAC learning'' setting, which is different from even the adaptive version of 1-bit compressed sensing. We point the reader to \cite{zhang2018efficient} for a detailed discussion on this difference. Furthermore, if the covariates/measurements were Gaussian, prediction error could be related to parameter error, but that is not the case in general. While in the active learning set up the number of label queries are small, the total sample complexity in \cite{shen2021attribute} scales quadratically with $k$, which is suboptimal in 1-bit compressed sensing.

In \cite{plan2012robust}  a more general sparse signal recovery problem was studied where the binary observations  $y_i \in \{+1,-1\}$ are random: i.e., $y_i=1$ with probability $f(\langle a_i, x \rangle), i =1, \dots, m$, where $f$ is a potentially nonlinear function, such as the logistic function. 



The support recovery problem in 1-bit compressed sensing and constructions of structured measurement matrices are well-studied, though not directly related to this work, e.g.,~\cite{GNJN13,ABK17,flodin2019superset,mazumdar2021support}. Other generalizations of 1-bit compressed sensing has also been recently studied, for example, with generative priors~\cite{liu2019one}.


\paragraph{Organization.} The rest of the paper is organized as follows. In  \SECTION \ref{section:|>prelim} we introduced the notations used in the paper, and also the BIHT algorithm. \SECTION \ref{section:main-results} contains the main result (Theorem~\ref{thm:adversarial}) and a technical overview of the proofs. Subsequently, proofs of the main results appear in \SECTION \ref{section:|>pf-main-thm} and \ref{section:|>pf-main-technical}. Intermediate, and longer proofs are delegated to the appendix.
%
\section{Preliminaries}  
\label{section:|>prelim} 

\subsection{Notations}              
\label{section:|>prelim|>notations} 



Throughout this work, the parameters \(  \k, \n \in \Z_{+}  \) are taken to satisfy
\(  \n \geq 2\k  \),
where \(  \k  \) denotes sparsity (i.e., the maximum number of nonzero entries in a vector),
and where \(  \n  \) is the dimension of the signal vectors and measurements.
The number of measurements (and rows in the measurement matrix) is denoted by \(  \m \in \Z_{+}  \).
For notational simplicity, the parameter \(  \tauX \in (0,1]  \)%
---the fraction of responses that can be corrupted---%
is assumed to satisfy
\(  \tauX \m \in \Z_{+}  \).
%
This does not forgo generality since \(  \tauX \m  \) can be replaced by
\(  \lceil \tauX \m \rceil  \) throughout the analysis in this manuscript.
Note that this work does not consider \(  \tauX = 0  \) since \cite{matsumoto2022binary}
already established the result under noiseless conditions.
%
\par 
%
For the purposes of this discussion, let \(  \Ell, \dX \in \Z_{+}  \),
where \(  \dX \in \Z_{+}  \) specifies an arbitrary dimension.
%
\par 
%
Let \(  \Distr{D}  \) be an arbitrary distribution.
Then, \(  \RV{X} \sim \Distr{D}  \) denotes a random variable
which follows the distribution \(  \Distr{D}  \).
Similarly, let \(  \Set{S}  \) be an arbitrary set.
Then, \(  \RV{X} \sim \Set{S}  \) denotes a random variable
which follows the uniform distribution over \(  \Set{S}  \).
The univariate normal distribution with
mean \(  \mu \in \R  \) and variance \(  \sigma^{2} \in \R_{\geq 0}  \)
is denoted by \(  \N( \mu, \sigma^{2} )  \),
while the \(  \dX  \)-variate normal distribution with
mean vector \(  \Vec{\mu} \in \R^{\dX}  \) and
covariance matrix \(  \Mat{\Sigma} \in \R^{\dX \times \dX}  \)
is denoted by \(  \N( \Vec{\mu}, \Mat{\Sigma} )  \).
%
\par 
%
The \(  \dX  \)-dimensional identity matrix is denoted by \(  \Id{\dX} \in \R^{\dX \times \dX}  \).
More generally, matrices are written as capital letters in boldface, upright typeface, e.g.,
\(  \Mat{A} \in \R^{\Ell \times \dX}  \),
with the \(  i\Th  \) rows denoted by, e.g.,
\(  \Mat{A}\VL{i} \in \R^{\dX}  \), \(  i \in [\Ell]  \), such that
\(  \Mat{A} = ( \Mat{A}\VL{1} \cdots \Mat{A}\VL{\Ell} )^{\T}  \),
and with the \(  (i,j)  \)-entries written in italic typeface, e.g.,
\(  \Mat*{A}_{i,j} \in \R  \).
%
Nonrandom vectors are written as lowercase letters in boldface, upright typeface, e.g.,
\(  \Vec{u} \in \R^{\dX}  \)
with the \(  j\Th  \) entries, \(  j \in [\dX]  \), written in italic typeface, e.g.,
\(  \Vec*{u}_{j} \in \R  \), such that
\(  \Vec{u} = ( \Vec*{u}_{1}, \dots, \Vec*{u}_{\dX} )  \).
%
Random vectors follow the same convention as nonrandom vectors but with uppercase letters, e.g.,
\(  \Vec{Z} = ( \Vec*{Z}_{1}, \dots, \Vec*{Z}_{\dX} ) \sim \N( \Vec{0}, \Id{\dX} )  \).
%
For \(  \JX \subseteq [\dX]  \), the restriction of \(  \Vec{u} \in \R^{\dX}  \) to the entries
indexed by \(  \JX  \) is denoted by
\(  \Restriction{\Vec{u}}{\JX} \in \R^{| \JX |}  \).
%
The support of a vector, \(  \Vec{u} \in \R^{\dX}  \), is denoted by
\(  \Supp( \Vec{u} ) \defeq \{ j \in [\dX] : \Vec*{u}_{j} \neq 0 \} \subseteq [\dX]  \),
and the number of nonzero entries in \(  \Vec{u}  \)%
---the \(  \lnorm{0}  \)-``norm'' of \(  \Vec{u}  \)---%
is written as
\(  \| \Vec{u} \|_{0} \defeq | \Supp( \Vec{u} ) |  \).
%
\par 
%
The \(  \lnorm{2}  \)-unit sphere in \(  \R^{\dX}  \) is denoted by
\(  \Sphere{\dX} \defeq \{ \Vec{u} \in \R^{\dX} : \| \Vec{u} \|_{2} = 1 \}  \),
and the set of \(  \k  \)-sparse \(  \dX  \)-dimensional vectors is written as
\(  \SparseSubspace{\k}{\dX} \defeq \{ \Vec{u} \in \R^{\dX} : \| \Vec{u} \|_{0} \leq \k \}  \).
%
Hence, the set of all \(  \dX  \)-dimensional, \ksparserealunit vectors is denote by
\(  \SparseSphereSubspace{\k}{\dX} \defeq
    \{ \Vec{u} \in \Sphere{\dX} : \| \Vec{u} \|_{0} \leq \k \}  \).
%
The distance between two points projected onto the \(  \lnorm{2}  \)-unit sphere is specified by
the function
\(  \DistSOp{\dX} : \R^{\dX} \times \R^{\dX} \to \R_{\geq 0}  \),
where
\begin{gather*}
  \DistS[\dX]{\Vec{u}}{\Vec{v}}
  =
  \begin{cases}
  \left\| \frac{\Vec{u}}{\| \Vec{u} \|_{2}} - \frac{\Vec{v}}{\| \Vec{v} \|_{2}} \right\|_{2},
     &\cIf \Vec{u}, \Vec{v} \neq \Vec{0}, \\
  0, &\cIf \Vec{u} = \Vec{v} = \Vec{0}, \\
  1, &\cOtherwise,
  \end{cases}
\end{gather*}
for \(  \Vec{u}, \Vec{v} \in \R^{\dX}  \).
The sign function, \(  \Sign. : \R \to \BinarySet  \), follows the convention:
\begin{gather*}
  \Sign( a )
  =
  \begin{cases}
   -1, &\cIf a    < 0, \\
  \+1, &\cIf a \geq 0,
  \end{cases}
\end{gather*}
where \(  a \in \R  \).
This notation extends to vectors as \(  \Sign. : \R^{\dX} \to \BinarySet^{\dX}  \)
by taking the \(  \pm  \)-signs of each entry of a \(  \dX  \)-dimensional vector.
%
\par 
%

\subsection{Hard Thresholding and the BIHT Algorithm}

This work considers two notions of hard thresholding as means to project points into
the subspace of \(  \Ell  \)-sparse vectors, \(  \SparseSubspace{\Ell}{\dX}  \):
\topk[\(  \Ell  \)] and \ksubset[\(  \Ell  \)] \hardthresholding.
These are formalized in the following definitions.
%
\begin{definition}[{\Topk[\(  \Ell  \)] \hardthresholding}]
\label{def:top-k-hard-thresholding}
%
%
The \emph{\topk[\(  \Ell  \)] \hardthresholding operation}, denoted by
\(  \Threshold{\Ell} : \R^{\dX} \to \R^{\dX}  \),
projects a vector \(  \Vec{u} \in \R^{\dX}  \) into \(  \SparseSubspace{\Ell}{\dX}  \)
by retaining only the \(  \Ell  \) largest (in absolute value) entries in \(  \Vec{u}  \)
and setting all other entries to \(  0  \).
Note that ``ties'' can be broken arbitrarily.
More formally, writing
\(  \Set{U}_{\Ell} = \{ \Vec{u'} \in \R^{\dX} : \| \Vec{u'} \|_{0} = \Ell, \ \Vec*{u'}_{j} \in \{ \Vec*{u}_{j}, 0 \} \ \forall j \in [\dX] \}  \),
the \topk[\(  \Ell  \)] \hardthresholding operation maps:
\(
  \Vec{u} \mapsto
  \Threshold{\Ell}( \Vec{u} )
  \in \arg\max_{\Vec{u'} \in \Set{U}_{\Ell}} \| \Vec{u'} \|_{1}
\).
\end{definition}
%
\begin{definition}[{\kSubset[\(  \Ell  \)] \hardthresholding}]
\label{def:k-subset-hard-thresholding}
%
%
The \emph{\ksubset \hardthresholding operation}
associated with a coordinate subset \(  J \subseteq [\dX]  \),
denoted by
\(  \ThresholdSet'{J} : \R^{\dX} \to \R^{\dX}  \),
takes a vector, \(  \Vec{u} \in \R^{\dX}  \), into \(  \SparseSubspace{|J|}{\dX}  \)
by setting all entries in \(  \Vec{u}  \) indexed by \(  [\dX] \setminus J  \) to \(  0  \).
More formally, \(  \ThresholdSet'{J}( \Vec{u} )  \) is the vector whose \(  j\Th  \) entries,
\(  j \in [\dX]  \), are given by
\(  \ThresholdSet*'{J}( \Vec{u} )_{j} = \Vec*{u}_{j} \cdot \I( j \in J )  \).
\end{definition}
%
\par 
%
The measurement matrix is denoted by
\(  \Mat{\AM} \in \R^{\m \times \n}  \),
with the measurements, i.e., its rows, written as
\(  \Vec{\AV}\VL{1}, \dots, \Vec{\AV}\VL{\m} \in \R^{\n}  \), such that
\(  \Mat{\AM} = ( \Vec{\AV}\VL{1} \;\cdots\; \Vec{\AV}\VL{\m} )^{\T}  \).
%
Suppose, 
\(  \Vec{\yV} \in \BinarySet^{\m}  \)
denotes an arbitrary vector that satisfies
\(  \DistH{\Vec{\yV} }{\Sign( \Mat{\AM} \Vec{\Vec{x}} )} \leq \tauX \m  \).
%
This vector, \(  \Vec{\yV}\), can be viewed as introducing adversarial noise into the true responses.
%
At this point we can formally define the normalized BIHT algorithm. It is given as Algorithm~\ref{alg:biht:normalized} below.

\begin{algorithm}
\label{alg:biht:normalized}
\caption{Binary iterative hard thresholding (BIHT) algorithm: Input $\Vec{y},\Mat{\AM}$}
Set \( \Eta = \sqrt{2\pi} \)\;
%
%
%
\(
  \Vec{\hat{x}}^{(0)}
  \sim
  \SparseSphereSubspace{k}{n}
\)\;
\For
{\( \Iter = 1, 2, 3, \dots \)}
{
  \(
    \Vec{\tilde{x}}^{(\Iter)}
    \gets
    \Vec{\hat{x}}^{(\Iter-1)}
    +
    \frac{\eta}{m}
    \Mat{\AM}^{\T}
    \cdot
    \frac{1}{2}
    \left(
      \Vec{y} - \Sign \big( \Mat{\AM} \Vec{\hat{x}}^{(\Iter-1)} \big)
    \right)
  \)\;
  \(
    \Vec{\hat{x}}^{(\Iter)}
    \gets
    \frac
    {\Threshold*{k}(\Vec{\tilde{x}}^{(\Iter)})}
    {\| \Threshold*{k}(\Vec{\tilde{x}}^{(\Iter)}) \|_{2}}
  \)\;
}
\end{algorithm}

%
\subsection{Some Universal Constants}
None of the universal constants appearing in this work are very large.
These constants,
\(  \aC, \bC, \cX{1}, \cX{2}, \cX{3}, \cX{4}, \cX{} > 0  \),
appear throughout the results and analysis in this work.
These universal constants are fixed as follows:
\begin{subequations}
\label{eqn:universal-constants}
\begin{align}
  &
  \aC = \aValue,
  &&
  \bC \lesssim \bCValue,
  \\
  &
  \cX{1} = \cXOneValue \in ( \cXOneValueX, \cXOneValueXX ),
  &&
  \cX{2} = \cXTwoValue \in ( \cXTwoValueX, \cXTwoValueXX ),
  \\
  &
  \cX{3} = \cXThreeValue \in ( \cXThreeValueX, \cXThreeValueXX ),
  &&
  \cX{4} = \cXFourValue \in ( \cXFourValueX, \cXFourValueXX ),
  \\
  &
  \cX{} = 4 \left( \cX{1} + \sqrt{\cX{1}^{2} + \cX{2}} \right)^{2}
        \in ( \cValue, \cValueXX ).
\end{align}
\end{subequations}


\section{Main Result and Technical Overview}        
\label{section:main-results} 

\THEOREM \ref{thm:adversarial}, below,
states the main result of this work, which establishes the convergence of \BIHT
when an arbitrary but bounded number of responses are corrupted.
Note that it is a universal result in the sense that the measurement matrix, \(  \Mat{\AM}  \),
is fixed across the recovery of all \ksparserealunit vectors.

\begin{theorem}
\label{thm:adversarial}
Let
\(  \epsilonX, \errorO, \tauX, \rhoX \in ( 0,1 ]  \),
\(  \rX > 0  \),
\(  \k, \m, \n \in \Z_{+}  \),
where
\begin{gather}
  \label{eqn:thm:adversarial:r}
  \rX \defeq
  \rXValue
  ,\\ \label{eqn:thm:adversarial:epsilon_0}
  \epsilonO \defeq
  \epsilonX  + \rX
,\end{gather}
and where
\begin{gather}
  \label{eqn:thm:adversarial:m_0}
  \m
  \geq
  \frac{4\bC\cX{}}{\epsilonX}
  \Log(
    \binom{\n}{\k}^{2}
    \binom{\n}{2\k}
    \left( \frac{12 \bC}{\epsilonX} \right)^{2\k}
    \left( \frac{3 \aC}{\rhoX} \right)
  )
  =
  \BigO(
    \frac{\k}{\epsilonX} \Log( \frac{\n}{\epsilonX \k} )
    +
    \frac{1}{\epsilonX} \Log( \frac{1}{\rhoX} )
  )
.\end{gather}
%
Fix an \(  \m \times \n  \) measurement matrix,
\(  \Mat{\AM} \in \R^{\m \times \n}  \),
whose rows,
\(  \Vec{\AV}\VL{1}, \dots, \Vec{\AV}\VL{\m} \sim \N( \Vec{0}, \Id{\n} )  \),
are \iid Gaussian random vectors. 
%
Uniformly with probability at least
\(  1 - \rhoX  \),
for all \ksparserealunit vectors,
\(  \Vec{\x} \in \SparseSphereSubspace{\k}{\n}  \),
when given \(  \m  \) noisy responses,
\(  \Vec{\yV}  \in \BinarySet^{\m}  \)
(i.e., with any choice of up to \(  \tauX \m  \) corrupted),
\begin{gather}
\label{eqn:thm:adversarial:3}
  \DistH{\Vec{\yV}}{\Sign( \Mat{\AM} \Vec{\x} )} \leq \tauX \m,
\end{gather}
the sequence of approximations,
\(  \{ \Vec{\xApprox}\IL{\Iter} \in \SparseSphereSubspace{\k}{\n} \}_{\Iter \in \Z_{\geq 0}}  \),
produced by the \NBIHT algorithm converges as
\begin{subequations}
\label{eqn:thm:adversarial:4}
\begin{gather}
\label{eqn:thm:adversarial:4:1}
  \DistS{\Vec{\x}}{\Vec{\xApprox}\IL{\Iter}}
  \leq
  2^{2^{-\Iter}}
  \epsilonO^{1-2^{-\Iter}}
\end{gather}
with an approximation error asymptotically bounded from above by
\begin{gather}
\label{eqn:thm:adversarial:4:2}
  \lim_{\Iter \to \infty} \DistS{\Vec{\x}}{\Vec{\xApprox}\IL{\Iter}}
  \leq
  \epsilonO
.\end{gather}
\end{subequations}
\end{theorem}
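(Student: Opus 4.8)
{}

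\textbf{Reduction to a one-step contraction.} Since any two points of $\Sphere{\n}$ lie at $d_{\Sphere{\n}}$-distance at most $2 = 2^{2^{0}}\epsilonO^{1-2^{0}}$, the bound \eqref{eqn:thm:adversarial:4:1} holds trivially for $\Iter=0$, and \eqref{eqn:thm:adversarial:4:2} follows from it on letting $\Iter\to\infty$ (so that $2^{2^{-\Iter}}\to1$ and $\epsilonO^{1-2^{-\Iter}}\to\epsilonO$). It therefore suffices to prove, on the event of the theorem, the one-step estimate
\begin{gather*}
  \DistS{\Vec{\x}}{\Vec{\xApprox}\IL{\Iter}}
  \;\leq\;
  \sqrt{\epsilonO\cdot\DistS{\Vec{\x}}{\Vec{\xApprox}\IL{\Iter-1}}}
  \qquad (\Iter\geq1),
\end{gather*}
from which the closed form $\DistS{\Vec{\x}}{\Vec{\xApprox}\IL{\Iter}}\leq 2^{2^{-\Iter}}\epsilonO^{1-2^{-\Iter}}$ follows by induction on $\Iter$ started at $\Iter=0$.

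\textbf{A restricted approximate invertibility condition under adversarial noise.} The one-step estimate is deduced from a deterministic property of $\Mat{\AM}$: on a universal event of probability at least $1-\rhoX$ over the Gaussian choice of $\Mat{\AM}$ and under \eqref{eqn:thm:adversarial:m_0}, for all $\Vec{u},\Vec{v}\in\SparseSphereSubspace{\k}{\n}$ --- writing $\JX:=\Supp(\Vec{u})\cup\Supp(\Vec{v})$ for their joint support --- and all $\Vec{\yV}\in\BinarySet^{\m}$ with $\DistH{\Vec{\yV}}{\Sign(\Mat{\AM}\Vec{u})}\leq\tauX\m$,
\begin{gather*}
  \Bigl\|
    \ThresholdSet'{\JX}\bigl(
      \tfrac{\eta}{\m}\Mat{\AM}^{\T}\cdot\tfrac{1}{2}\bigl(\Vec{\yV}-\Sign(\Mat{\AM}\Vec{v})\bigr)
    \bigr) - (\Vec{u}-\Vec{v})
  \Bigr\|_{2}
  \;\leq\;
  \cX{1}\sqrt{\epsilonX\,\DistS{\Vec{u}}{\Vec{v}}}
  + \cX{2}\,\DistS{\Vec{u}}{\Vec{v}}
  + \cX{3}\sqrt{\epsilonX\tauX}
  + \cX{4}\tauX\sqrt{\Log(\Tfrac{2e}{\tauX})}.
\end{gather*}
I would split the bracketed vector as the noiseless update $\tfrac{\eta}{\m}\Mat{\AM}^{\T}\tfrac{1}{2}\bigl(\Sign(\Mat{\AM}\Vec{u})-\Sign(\Mat{\AM}\Vec{v})\bigr)$ plus the corruption vector $\tfrac{\eta}{\m}\Mat{\AM}^{\T}\tfrac{1}{2}\bigl(\Vec{\yV}-\Sign(\Mat{\AM}\Vec{u})\bigr)$, apply the projection and the triangle inequality, and bound the two contributions separately. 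The noiseless one is precisely the \RAIC of \cite{matsumoto2022binary}, which I re-verify for standard Gaussian rows; it produces the $\cX{1}\sqrt{\epsilonX\,\DistS{\Vec{u}}{\Vec{v}}}+\cX{2}\,\DistS{\Vec{u}}{\Vec{v}}$ terms. The corruption contribution equals $-\tfrac{\eta}{\m}\sum_{\iIx\in S}\Sign(\langle\Vec{\AV}\VL{\iIx},\Vec{u}\rangle)\Vec{\AV}\VL{\iIx}$, where $S\subseteq[\m]$ with $|S|\leq\tauX\m$ is the set of corrupted coordinates, and bounding this sum uniformly is where the new work lies.

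\textbf{Bounding the adversarial sum (the main obstacle).} The crux is the uniform estimate
\begin{gather*}
  \sup_{\Vec{u}\in\SparseSphereSubspace{\k}{\n}}\;\;
  \sup_{S\subseteq[\m],\,|S|\leq\tauX\m}\;\;
  \Bigl\| \ThresholdSet'{\JX}\bigl( \tfrac{\eta}{\m}\sum\nolimits_{\iIx\in S}\Sign(\langle\Vec{\AV}\VL{\iIx},\Vec{u}\rangle)\Vec{\AV}\VL{\iIx} \bigr) \Bigr\|_{2}
  \;\leq\;
  \cX{3}\sqrt{\epsilonX\tauX} + \cX{4}\tauX\sqrt{\Log(\Tfrac{2e}{\tauX})},
\end{gather*}
with $\JX$ now allowed to be any subset of $[\n]$ of size at most $2\k$ (which subsumes every joint support). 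Writing $\Vec{v}_{S}:=\tfrac{\eta}{\m}\sum_{\iIx\in S}\Sign(\langle\Vec{\AV}\VL{\iIx},\Vec{u}\rangle)\Vec{\AV}\VL{\iIx}$, I would decompose it orthogonally along the reference direction $\Vec{u}$ and into $\Ker(\Vec{u})$ and use $\|\ThresholdSet'{\JX}(\Vec{v}_{S})\|_{2}\leq|\langle\Vec{v}_{S},\Vec{u}\rangle|+\|\ThresholdSet'{\JX}(\Vec{v}_{S}-\langle\Vec{v}_{S},\Vec{u}\rangle\Vec{u})\|_{2}$. The first term is $\langle\Vec{v}_{S},\Vec{u}\rangle=\tfrac{\eta}{\m}\sum_{\iIx\in S}|\langle\Vec{\AV}\VL{\iIx},\Vec{u}\rangle|$, whose worst case over $|S|\leq\tauX\m$ is $\tfrac{\eta}{\m}$ times the sum of the $\tauX\m$ largest of $\{|\langle\Vec{\AV}\VL{\iIx},\Vec{u}\rangle|\}_{\iIx\in[\m]}$ --- a deterministic functional of the data, needing no union bound over $S$ --- which Gaussian order-statistic estimates control by $\asymp\tauX\sqrt{\Log(1/\tauX)}$. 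For the second term, conditionally on $\{\langle\Vec{\AV}\VL{\iIx},\Vec{u}\rangle\}_{\iIx}$ the vectors $\Sign(\langle\Vec{\AV}\VL{\iIx},\Vec{u}\rangle)\cdot(\text{component of }\Vec{\AV}\VL{\iIx}\text{ orthogonal to }\Vec{u})$ are \iid centered Gaussians in $\Ker(\Vec{u})$; taking a supremum over unit directions in the at-most-$2\k$-dimensional relevant subspace, its worst case over $S$ is, by the same order-statistic argument combined with the Gaussian width of the $2\k$-dimensional sphere scaled by $\eta/\m$ with $\m$ as in \eqref{eqn:thm:adversarial:m_0}, at most $\asymp\sqrt{\epsilonX\tauX}+\tauX\sqrt{\Log(1/\tauX)}$. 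Recombining by the triangle inequality gives the bound at one configuration, and it is made uniform by a union bound over the supports (the $\binom{\n}{\k}^{2}\binom{\n}{2\k}$ choices) and over an $\asymp\epsilonX/\bC$-net of the $2\k$-dimensional sphere of directions (the $(12\bC/\epsilonX)^{2\k}$ points), at the cost of the logarithmic factor in \eqref{eqn:thm:adversarial:m_0}; the passage from this net to all of $\SparseSphereSubspace{\k}{\n}$ uses the ``mismatch'' bookkeeping of \cite{matsumoto2022binary}, since $\Sign(\Mat{\AM}\Vec{u})$ is only piecewise constant in $\Vec{u}$, the number of reference configurations to be tracked not exceeding the number of ways of choosing which $\leq\tauX\m$ responses to corrupt. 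This is the step that needs the tight high-dimensional geometry, and it fixes $\cX{3},\cX{4}$.

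\textbf{Closing the induction.} Given the corruption-robust \RAIC, the one-step estimate follows as in \cite{matsumoto2022binary}: apply the \RAIC with $\Vec{u}=\Vec{\x}$ and $\Vec{v}=\Vec{\xApprox}\IL{\Iter-1}$, so that $\Vec{\tilde{x}}\IL{\Iter}$ restricted to $\Supp(\Vec{\x})\cup\Supp(\Vec{\xApprox}\IL{\Iter-1})$ equals $\Vec{\x}$ up to an additive perturbation of $\lnorm{2}$-norm at most $\cX{1}\sqrt{\epsilonX d}+\cX{2}d+\cX{3}\sqrt{\epsilonX\tauX}+\cX{4}\tauX\sqrt{\Log(\Tfrac{2e}{\tauX})}$, where $d:=\DistS{\Vec{\x}}{\Vec{\xApprox}\IL{\Iter-1}}$; then pass this through the \topkhardthresholding step (which, as $\Vec{\x}$ is $\k$-sparse, alters the distance to $\Vec{\x}$ by at most a bounded factor, by the standard near-best-$\k$-term estimate) and the $\lnorm{2}$-normalization step (non-expansive away from the origin, the relevant vectors having norm bounded below). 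Carrying out the algebra with \eqref{eqn:thm:adversarial:r}, \eqref{eqn:thm:adversarial:epsilon_0}, and the value $\cX{}=4(\cX{1}+\sqrt{\cX{1}^{2}+\cX{2}})^{2}$ of \eqref{eqn:universal-constants} collapses the resulting quadratic-in-$\sqrt{d}$ inequality to $\DistS{\Vec{\x}}{\Vec{\xApprox}\IL{\Iter}}\leq\sqrt{\epsilonO\,d}$, a separate crude contraction handling the first iteration (where $d$ may be as large as $2$) exactly as in \cite{matsumoto2022binary}. Inductively this gives \eqref{eqn:thm:adversarial:4:1}, hence \eqref{eqn:thm:adversarial:4:2}. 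I expect the third step --- the tight control of an adversarially chosen sum of $\leq\tauX\m$ Gaussian measurements --- to be by far the hardest part; the first is immediate, the noiseless half of the second is a re-verification of \cite{matsumoto2022binary} for Gaussian matrices, and the hard-thresholding and normalization estimates of the fourth are unchanged from the noiseless case.
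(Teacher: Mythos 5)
Your overall architecture is the paper's: the error at iteration $\Iter$ is bounded deterministically by an \RAIC-type quantity, the noisy quantity is split by the triangle inequality into the noiseless \RAIC term of \cite{matsumoto2022binary} plus the adversarial sum $-\frac{\eta}{\m}\sum_{\iIx\in S}\Sign(\langle\Vec{\AV}\VL{\iIx},\Vec{u}\rangle)\ThresholdSet'{\JX}(\Vec{\AV}\VL{\iIx})$, that sum is decomposed orthogonally along $\Vec{u}$ and into $\Ker(\Vec{u})$ and controlled by Gaussian concentration plus a union bound, and a quadratic recurrence yields the closed form. The one genuine divergence is the uniformity mechanism: you propose nets over directions plus order-statistic bounds, while the paper observes that $\hf[\JX]{\Vec{u}}{\Vec{u}}$ has a \emph{finite image} of size at most $\sum_{\Ell\le\tauX\m}\binom{\m}{\Ell}2^{\Ell}$ and therefore only finitely many representatives $\Vec{\uV}\in\US[\JX]$ need to be controlled (\CLAIM \ref{claim:raic:adv:2}), with no net over $\Vec{u}$ at all. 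Your top-$\tauX\m$ order-statistic trick does avoid a union bound over $S$ for the component along $\Vec{u}$, but for the orthogonal component the worst case over subsets $S$ of a \emph{vector} subset-sum is not monotone in any single scalar statistic, so you still need a union bound over the $\le 2^{\Ell}\binom{\m}{\Ell}$ corruption patterns (or a net over directions of $\Ker(\Vec{u})$ restricted to the $2\k$ relevant coordinates); this is exactly where the paper's counting enters, and it is the thinnest part of your sketch, though repairable along the lines you gesture at.

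There is also a small but real error in your reduction to the pure contraction $\DistS{\Vec{\x}}{\Vec{\xApprox}\IL{\Iter}}\le\sqrt{\epsilonO\,\DistS{\Vec{\x}}{\Vec{\xApprox}\IL{\Iter-1}}}$. With the paper's constants the \RAIC-based bound is $d_{\Iter}\le 4\cX{1}\sqrt{(\epsilonO/\cX{})\,d_{\Iter-1}}+4\cX{2}\epsilonO/\cX{}$, and since $4\cX{1}/\sqrt{\cX{}}+4\cX{2}/\cX{}=1$ exactly, this is $\le\sqrt{\epsilonO d_{\Iter-1}}$ only when $d_{\Iter-1}\ge\epsilonO$; the additive term does not shrink with $d$, so the exceptional regime is \emph{small} $d$, not the first iteration (at $d=2$ the contraction is easiest), so your "crude contraction for the first iteration" caveat points at the wrong end. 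The fix is trivial — if $d_{\Iter-1}<\epsilonO$ the same bound gives $d_{\Iter}\le\epsilonO\le 2^{2^{-\Iter}}\epsilonO^{1-2^{-\Iter}}$ directly — but as written the induction has a hole; the paper sidesteps it by majorizing the error sequence with the recurrence $\Err{\Iter}$ and invoking \FACT \ref{fact:recurrence} (via \LEMMA \ref{lemma:error:recurrence}) rather than asserting a pure square-root contraction.
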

\subsection{The Restricted Approximate Invertibility Condition (\RAIC) under Adversarial Noise} 
\label{section:|>prelim|>raic}                                          
As we have discussed in the introduction, the key step of proving our result is to establish a property of Gaussian matrices called restricted approximate invertibility in the presence of adversarial sign-filps. Before we give a technical overview of our proof, here we define the notion of \RAIC and formally present the result regarding Gaussian matrices.

Fixing the measurement matrix,
\(  \Mat{\AM} \in \R^{\m \times \n}  \),
let
\(  \f : \R^{\n} \to \BinarySet^{\m}  \)
denote an arbitrary map that satisfies
\(  \DistH{\f( \Vec{u} )}{\Sign( \Mat{\AM} \Vec{u} )} \leq \tauX \m  \)
for all \(  \Vec{u} \in \R^{\n}  \).
This map, \(  \f  \), can be viewed as introducing one particular adversarial error pattern into the true responses,
\(  \Sign( \Mat{\AM} \Vec{u} ) \mapsto \f( \Vec{u} )  \).
The set of all such functions, \(  \f  \), is denoted by \(  \Set{F}_{\Mat{\AM}} \defeq \{ \f : \R^{\n} \to \BinarySet^{\m} ~:~ \DistH{\f( \Vec{u} )}{\Sign( \Mat{\AM} \Vec{u} )} \leq \tauX \m \ \forall \Vec{u} \in \R^{\n} \}  \).
This is essentially the set of all possible ways to adversarially corrupt the true responses.

Additionally, define the functions
\(  \h, \hf{}{} : \R^{\n} \times \R^{\n} \to \R^{\n}  \),
at arbitrary ordered pairs points,
\(  ( \Vec{u}, \Vec{v} ) \in \R^{\n} \times \R^{\n}  \),
by
\begin{subequations}
\label{eqn:h:def}
\begin{gather}
  \label{eqn:h:def:h_A}
  \h( \Vec{u}, \Vec{v} )
  =
  \frac{\sqrt{2\pi}}{\m}
  \Mat{\AM}^{\T}
  \cdot \frac{1}{2}
  \bigl( \Sign( \Mat{\AM} \Vec{u} ) - \Sign( \Mat{\AM} \Vec{v} ) \bigr)
  =
  \frac{\sqrt{2\pi}}{\m}
  \sum_{\iIx=1}^{\m}
  \Vec{\AV}\VL{\iIx}
  \cdot \frac{1}{2}
  \bigl( \Sign( \langle \Vec{\AV}\VL{\iIx}, \Vec{u} \rangle )
         -
         \Sign( \langle \Vec{\AV}\VL{\iIx}, \Vec{v} \rangle )
  \bigr)
  ,\\ \label{eqn:h:def:h_fA}
  \hf{\Vec{u}}{\Vec{v}}
  =
  \frac{\sqrt{2\pi}}{\m}
  \Mat{\AM}^{\T}
  \cdot \frac{1}{2}
  \bigl( \f( \Vec{u} ) - \Sign( \Mat{\AM} \Vec{v} ) \bigr)
  =
  \frac{\sqrt{2\pi}}{\m}
  \sum_{\iIx=1}^{\m}
  \Vec{\AV}\VL{\iIx}
  \cdot \frac{1}{2}
  \bigl( \f( \Vec{u} )_{i}
         -
         \Sign( \langle \Vec{\AV}\VL{\iIx}, \Vec{v} \rangle )
  \bigr)
,\end{gather}
\end{subequations}
and for \(  \JX \subseteq [\n]  \), let
\(  \h[\JX], \hf[\JX]{}{} : \R^{\n} \times \R^{\n} \to \R^{\n}  \)
denote the functions given at
\(  ( \Vec{u}, \Vec{v} ) \in \R^{\n} \times \R^{\n}  \)
by
\begin{subequations}
\label{eqn:h_J:def}
\begin{gather}
  \label{eqn:h:def:h_AJ}
  \h[\JX]( \Vec{u}, \Vec{v} ) =
  \ThresholdSet'{\Supp( \Vec{u} ) \cup \Supp( \Vec{v} ) \cup \JX}( \h( \Vec{u}, \Vec{v} ) )
  ,\\ \label{eqn:h:def:h_fAJ}
  \hf[\JX]{\Vec{u}}{\Vec{v}} =
  \ThresholdSet'{\Supp( \Vec{u} ) \cup \Supp( \Vec{v} ) \cup \JX}( \hf{\Vec{u}}{\Vec{v}} )
.\end{gather}
\end{subequations}
%
Note that
\begin{gather*}
  \frac{1}{2}
  \bigl( \f( \Vec{u} )_{i}
         -
         \Sign( \langle \Vec{\AV}\VL{\iIx}, \Vec{v} \rangle )
  \bigr)
  =
  -\Sign( \langle \Vec{\AV}\VL{\iIx}, \Vec{v} \rangle )
  \cdot
  \I( \f( \Vec{u} )_{i} \neq \Sign( \langle \Vec{\AV}\VL{\iIx}, \Vec{v} \rangle ) )
.\end{gather*}
and hence, \(  \hf{}{}, \hf[\JX]{}{}  \) are equivalently given by
\begin{subequations}
\label{eqn:notation:h:def-alt}
\begin{gather}
  \label{eqn:notation:hfA:def-alt}
  \hf{\Vec{u}}{\Vec{v}}
  =
  -\frac{\sqrt{2\pi}}{\m}
  \sum_{\iIx=1}^{\m}
  \Vec{\AV}\VL{\iIx}
  \Sign( \langle \Vec{\AV}\VL{\iIx}, \Vec{v} \rangle )
  \cdot \I( \f( \Vec{u} )_{i} \neq \Sign( \langle \Vec{\AV}\VL{\iIx}, \Vec{v} \rangle ) )
  ,\\ \label{eqn:notation:hfAJ:def-alt}
  \hf[\JX]{\Vec{u}}{\Vec{v}}
  =
  -\frac{\sqrt{2\pi}}{\m}
  \sum_{\iIx=1}^{\m}
  \ThresholdSet'{\Supp( \Vec{u} ) \cup \Supp( \Vec{v} ) \cup \JX}( \Vec{\AV}\VL{\iIx} )
  \Sign( \langle \Vec{\AV}\VL{\iIx}, \Vec{v} \rangle )
  \cdot \I( \f( \Vec{u} )_{i} \neq \Sign( \langle \Vec{\AV}\VL{\iIx}, \Vec{v} \rangle ) )
.\end{gather}
\end{subequations}
%
The main technical theorem is stated next.
Its proof is deferred to \SECTION \ref{section:|>pf-main-technical}.

\begin{theorem}[\RAICadv for Gaussian measurements]
\label{thm:raic:adv}
Fix
\(  \rhoX \in (0,1]  \),
\(  \epsilonRAICadv, \tauX \in (0,1]  \),
\(  \k,\m,\n \in \Z_{+}  \),
where 
\begin{gather}
  m
  \geq
  \frac{\bC}{\epsilonRAICadv}
  \Log(
    \binom{\n}{\k}^{2}
    \binom{\n}{2\k}
    \left( \frac{12 \bC}{\epsilonRAICadv} \right)^{2\k}
    \left( \frac{3 \aC}{\rhoX} \right)
  )
  =
  \BigO(
    \frac{\k}{\epsilonRAICadv} \Log( \frac{\n}{\epsilonRAICadv \k} )
    +
    \frac{1}{\epsilonRAICadv} \Log( \frac{1}{\rhoX} )
  )
.\end{gather}
%
Let
\(  \Set{\ZX} = \{ \Vec{\ZX}\VL{1}, \dots, \Vec{\ZX}\VL{\m} \sim \N( \Vec{0}, \Id{\n} ) \}  \)
be a set of \(  \m  \) \iid standard multivariate normal random vectors,
and define the matrix, \(  \Mat{\ZX} \in \R^{\m \times \n}  \), which stacks them up,
\(  \Mat{\ZX} = ( \Vec{\ZX}\VL{1} \;\cdots\; \Vec{\ZX}\VL{\m} )^{\T}  \).
%
%
Then, with probability at least
\(  1 - \rhoX  \),
uniformly for all
\(  \f \in \Set{F}_{\Mat{\AM}}  \),
\(  \Vec{\xV}, \Vec{\yV} \in \SparseSphereSubspace{\k}{\n}  \),
\(  \JX \subseteq [\n]  \), \(  | \JX | \leq \k  \),
\begin{align}
\label{eqn:thm:raic:adv:1}
  \bigl\| ( \Vec{\xV} - \Vec{\yV} ) - \hf[\JX]{\Vec{\xV}}{\Vec{\yV}} \bigr\|_{2}
  &\leq
  \cX{1} \sqrt{\epsilonRAICadv \DistS{\Vec{x}}{\Vec{y}}}
  +
  \cX{2} \epsilonRAICadv
  +
  \DXXuUBconst
.\end{align}
%
\end{theorem}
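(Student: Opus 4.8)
The plan is to peel the adversarial corruption off of $\hf[\JX]{\Vec{\xV}}{\Vec{\yV}}$ and treat it as an additive perturbation of the noiseless quantity $\h[\JX]( \Vec{\xV}, \Vec{\yV} )$, which is the object already understood in the sign‑flip‑free model. Set $\Set{S} \defeq \Supp( \Vec{\xV} ) \cup \Supp( \Vec{\yV} ) \cup \JX$, so $| \Set{S} | \leq 3\k$. By the triangle inequality it suffices to prove, each with probability at least $1 - \rhoX/2$ and uniformly over the relevant $\f, \Vec{\xV}, \Vec{\yV}, \JX$, that
\begin{gather*}
  \bigl\| ( \Vec{\xV} - \Vec{\yV} ) - \h[\JX]( \Vec{\xV}, \Vec{\yV} ) \bigr\|_{2}
  \leq
  \cX{1} \sqrt{ \epsilonRAICadv \DistS{\Vec{\xV}}{\Vec{\yV}} } + \cX{2} \epsilonRAICadv ,
  \\
  \bigl\| \hf[\JX]{\Vec{\xV}}{\Vec{\yV}} - \h[\JX]( \Vec{\xV}, \Vec{\yV} ) \bigr\|_{2}
  \leq
  \DXXuUBconst ,
\end{gather*}
and then allocate $\rhoX$ between the two events.

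For the first inequality --- the noiseless \RAIC for \standardnormal rows --- I would reprove the argument of \cite{matsumoto2022binary} specialized to Gaussian measurements, keeping track of the explicit constants $\cX{1}, \cX{2}$. The mechanism is standard: $\EOp[ \h( \Vec{\xV}, \Vec{\yV} ) ] = \Vec{\xV} - \Vec{\yV}$ for unit vectors, each coordinate of $\h( \Vec{\xV}, \Vec{\yV} )$ is an average of $\m$ bounded i.i.d.\ increments so $\h( \Vec{\xV}, \Vec{\yV} )$ concentrates about its mean subgaussianly, a union bound over an $\epsilonRAICadv$‑net of the $2\k$‑dimensional sparse sphere and over the $\binom{\n}{\k}^{2}\binom{\n}{2\k}$ choices of the two supports and of $\JX$ makes it uniform (this is exactly where the combinatorial and $( 12\bC/\epsilonRAICadv )^{2\k}$ factors in the hypothesis on $\m$ originate), and the $\sqrt{ \epsilonRAICadv \DistS{\Vec{\xV}}{\Vec{\yV}} }$ term comes from controlling the contribution of the few rows whose two signs disagree --- the ``mismatch'' part --- whose expected squared norm is $\asymp \DistS{\Vec{\xV}}{\Vec{\yV}}$.

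For the second inequality, the identity recorded just before the statement gives, with $\IX \defeq \{ \iIx \in [\m] : \f( \Vec{\xV} )_{\iIx} \neq \Sign( \langle \Vec{\AV}\VL{\iIx}, \Vec{\xV} \rangle ) \}$ (so $| \IX | \leq \tauX \m$ since $\f \in \Set{F}_{\Mat{\AM}}$),
\[
  \hf[\JX]{\Vec{\xV}}{\Vec{\yV}} - \h[\JX]( \Vec{\xV}, \Vec{\yV} )
  =
  -\frac{\sqrt{2\pi}}{\m} \sum_{\iIx \in \IX} \ThresholdSet'{\Set{S}}( \Vec{\AV}\VL{\iIx} ) \, \Sign( \langle \Vec{\AV}\VL{\iIx}, \Vec{\xV} \rangle ) ,
\]
so everything reduces to a uniform bound on the norm of a signed sum of at most $\tauX \m$ of the rows, restricted to $\Set{S}$. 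The crucial point is that, as $\f, \Vec{\xV}, \Vec{\yV}, \JX$ vary, this vector depends on the data only through $\Set{S}$, the corrupted set $\IX$ with $| \IX | \leq \tauX \m$, and the signs $( \Sign( \langle \Vec{\AV}\VL{\iIx}, \Vec{\xV} \rangle ) )_{\iIx \in \IX}$; so, over a fixed $\Set{S}$ and a net point $\Vec{\xV} \in \SparseSphereSubspace{\k}{\n}$, it is enough to union‑bound over the $\binom{\m}{\leq \tauX \m}$ subsets $\IX$ --- ``the number of ways to choose up to $\tauX \m$ responses to corrupt.'' For a single such sum I would split it along $\Vec{\xV}$ and into $\Vec{\xV}^{\perp}$: the $\Vec{\xV}$‑directed part is $\bigl( \sum_{\iIx \in \IX} | \langle \Vec{\AV}\VL{\iIx}, \Vec{\xV} \rangle | \bigr) \Vec{\xV}$, and since the adversary may take the $\tauX \m$ \emph{largest} values $| \langle \Vec{\AV}\VL{\iIx}, \Vec{\xV} \rangle |$, I would bound it by a truncation estimate $\sum_{\text{top } \tauX \m} | \langle \Vec{\AV}\VL{\iIx}, \Vec{\xV} \rangle | \leq \tauX \m \, \theta + \sum_{\iIx=1}^{\m} ( | \langle \Vec{\AV}\VL{\iIx}, \Vec{\xV} \rangle | - \theta )_{+}$ with $\theta \asymp \sqrt{ \Log( 2e/\tauX ) }$, for which $\EOp[ ( | \langle \Vec{\AV}\VL{\iIx}, \Vec{\xV} \rangle | - \theta )_{+} ] \lesssim \tauX$, so that after concentration and a net this contributes $O\bigl( \tauX \m \sqrt{ \Log( 2e/\tauX ) } \bigr)$. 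The orthogonal part $\sum_{\iIx \in \IX} \Sign( \langle \Vec{\AV}\VL{\iIx}, \Vec{\xV} \rangle ) \Vec{\AV}\VL{\iIx}^{\perp}$ is, for each fixed $\IX$, a $\Normal( \Vec{0}, | \IX | \Id{| \Set{S} |} )$ vector up to a rank‑one correction that only helps (the signs are measurable with respect to $\langle \Vec{\AV}\VL{\iIx}, \Vec{\xV} \rangle$, hence independent of the orthogonal components, and \standardnormal vectors are symmetric), so it concentrates at $\leq \sqrt{ 3\k \tauX \m }$ with subgaussian fluctuations that the union bound over $\Set{S}$, over $\IX$, and over the net turns into an $O\bigl( \sqrt{ \tauX \m ( \k \Log \n + \tauX \m \Log( 1/\tauX ) + \Log( 1/\rhoX ) ) } \bigr)$ term; recombine the two by the triangle inequality.

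Finally, dividing by $\m$ and using that the hypothesis on $\m$ forces $\k \Log( \n / ( \epsilonRAICadv \k ) )/\m \lesssim \epsilonRAICadv/\bC$ and $\Log( 1/\rhoX )/\m \lesssim \epsilonRAICadv/\bC$, every contribution still carrying a $1/\m$ becomes at most $\cX{3} \sqrt{ \epsilonRAICadv \tauX }$, whereas the one $\m$‑independent contribution --- the truncation threshold $\tauX \theta$, equivalently the $\tauX \m \Log( 1/\tauX )$ budget coming from the $\Vec{\xV}$‑directed part --- is precisely of the form $\cX{4} \tauX \sqrt{ \Log( 2e/\tauX ) }$; adding the noiseless bound yields \eqref{eqn:thm:raic:adv:1}. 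I expect the main obstacle to be this adversarial‑error step, and within it the $\Vec{\xV}$‑directed component: it is the term that saturates the $\tauX \sqrt{ \Log( 1/\tauX ) }$ rate, so a naive union bound over the $\binom{\m}{\tauX \m}$ corrupted sets is lossy by a $\sqrt{ \Log( 1/\tauX ) }$ factor, and the bookkeeping must be arranged so that this $\approx \tauX \m \Log( 1/\tauX )$ overhead lands in the final error term rather than inflating the required number of measurements.
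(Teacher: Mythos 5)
Your proposal is correct and follows the same skeleton as the paper's proof: split $\|(\Vec{\xV}-\Vec{\yV})-\hf[\JX]{\Vec{\xV}}{\Vec{\yV}}\|_{2}$ by the triangle inequality into the noiseless \RAIC term (the paper simply invokes \cite[\THEOREM 3.3]{matsumoto2022binary} rather than re-proving it) plus a pure corruption term, identify the corruption term with a signed sum of at most $\tauX\m$ thresholded rows over the mismatch set, decompose that sum into its component along $\Vec{\xV}$ and its orthogonal complement, and control each by Gaussian concentration plus a combinatorial union bound, finally absorbing the $\m$-dependent pieces into $\cX{3}\sqrt{\epsilonRAICadv\tauX}$ via the hypothesis on $\m$ and leaving the $\cX{4}\tauX\sqrt{\Log(2e/\tauX)}$ term. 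Two internal differences are worth noting. First, for the along-$\Vec{\xV}$ component you use a truncation/order-statistics bound on the top $\tauX\m$ half-normals, which is uniform over the corrupted set $\IX$ by construction; the paper instead applies $\lnorm{1}$-norm Lipschitz concentration for each $\lX=|\IX|$ and pays for uniformity with an explicit union over $\lX$, $\IX$, and the $2^{\lX}$ sign patterns --- both routes land on the same $\tauX\sqrt{\Log(1/\tauX)}$ rate, and yours arguably isolates more cleanly why that rate is forced. Second, your uniformity-over-$\Vec{\xV}$ bookkeeping is phrased as a net over $\SparseSphereSubspace{\k}{\n}$ plus a union over $\IX$; taken literally this has a gap, because the sign pattern $(\Sign(\langle\Vec{\AV}\VL{\iIx},\Vec{\xV}\rangle))_{\iIx\in\IX}$ is not stable under moving $\Vec{\xV}$ to a nearby net point, so the bound at net points does not transfer by continuity. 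Your own observation --- that the corruption vector depends on $(\f,\Vec{\xV},\Vec{\yV},\JX)$ only through the support set, the corrupted index set, and the sign pattern --- is exactly what repairs this: the paper turns it into a reduction to finitely many representative vectors (at most $\sum_{\lX\leq\tauX\m}\binom{\m}{\lX}2^{\lX}$ per support choice, its Claim 5.3), so no net or continuity argument is needed for the corruption term; you should make that discrete reduction the formal basis of the union bound rather than the net.
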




\subsection{Technical Overview}    
\label{section:|>prelim|>overview} 


The proof of the main theorem, \THEOREM \ref{thm:adversarial},
is broadly divided into three steps, each considered under (bounded) adversarial noise:
\Enum[\label{enum:stochastic}]{\thesection.I} establish a stochastic result for Gaussian measurements,
\Enum[\label{enum:deterministic}]{\thesection.II} establish a deterministic result for the iterative approximation errors of \BIHT with arbitrary measurements, and
\Enum[\label{enum:combine}]{\thesection.III} combine \ref{enum:stochastic} and \ref{enum:deterministic} to characterize the convergence of \BIHT under adversarial noise.
The result obtained in \STEP \ref{enum:stochastic} establishes
the \RAIC for Gaussian measurements under adversarial noise
(\see \THEOREM \ref{thm:raic:adv}) by upper bounding:
\begin{gather}
\label{eqn:enum:stochastic}
  \bigl\| ( \Vec{\xV} - \Vec{\yV} ) - \hf[\JX]{\Vec{\xV}}{\Vec{\yV}} \bigr\|_{2}
  \leq
  \BigO'(
    \sqrt{\epsilonRAICadv \DistS{\Vec{x}}{\Vec{y}}}
    +
    \epsilonRAICadv
    +
    \tauX 
  )
\end{gather}
uniformly with high probability
for all \(  \Vec{\xV}, \Vec{\yV} \in \SparseSphereSubspace{\k}{\n}  \).
The result derived in \STEP \ref{enum:deterministic} upper bounds
the error of the \(  \Iter\Th  \) \BIHT approximations deterministically
by an expression similar to that in the definition of the \RAIC
(\see \LEMMA \ref{lemma:error:deterministic}):
\begin{gather}
\label{eqn:enum:deterministic}
  \DistS{\Vec{\x}}{\Vec{\xApprox}\IL{\Iter}}
  \leq
  \BigO(
    \bigl\| ( \Vec{\x} - \Vec{\xApprox}\IL{\Iter-1} ) - \hf[\JX]{\Vec{\x}}{\Vec{\xApprox}\IL{\Iter-1}} \bigr\|_{2}
  )
.\end{gather}
%
Lastly, for \STEP \ref{enum:combine}, upon the establishment of
\EQUATIONS \eqref{eqn:enum:stochastic} and \eqref{eqn:enum:deterministic},
the two equations taken together will bound the \(  \Iter\Th  \) approximation errors by:
\begin{gather}
\label{eqn:enum:combine}
  \DistS{\Vec{\x}}{\Vec{\xApprox}\IL{\Iter}}
  \leq
  \begin{cases}
  2, &\cIf \Iter = 0, \\
  \BigO*'(
    \sqrt{\epsilonRAICadv \DistS{\Vec{\x}}{\Vec{\xApprox}\IL{\Iter-1}}}
    +
    \epsilonRAICadv
    +
    \tauX 
  ),
  &\cIf \Iter > 0.
  \end{cases}
\end{gather}
%
Note, however, that the above expression states the upper bound on the approximation error
as a recurrence relation, rather than a closed-form result.
Hence, \STEP \ref{enum:combine} will also derive a closed-form expression for
\EQUATION \eqref{eqn:enum:combine} (\see \LEMMA \ref{lemma:error:recurrence}),
where much of the technical work here
has already been accomplished by \cite{matsumoto2022binary}.
%
\par 
%
The majority of the analysis focuses on the stochastic result in \STEP \ref{enum:stochastic},
which is the main technical contribution of this work,
while the analyses for the deterministic bound in \STEP \ref{enum:deterministic}
and the final step, \STEP \ref{enum:combine},
are less involved but allow the \RAIC established in \STEP \ref{enum:stochastic}
to be related to the error of the approximations produced by the \BIHT algorithm
with corrupted responses.
The arguments for \STEP \ref{enum:stochastic} are briefly outlined below.
%
On the other hand, \STEPS \ref{enum:deterministic} and \ref{enum:combine}
are less technically demanding and hence omitted from this overview
(\see \LEMMAS \ref{lemma:error:deterministic} and \ref{lemma:error:recurrence} and their proofs).
%
\paragraph{Overview of the Argument for \STEP \ref{enum:stochastic}.} 
%
The idea behind the approach to \THEOREM \ref{thm:raic:adv} is the following.
There is a (roughly) linear relationship between the expected norm of the sum of up to \(  \tauX \m  \)-many measurements and the expected error from adversarially corrupting up to \(  \tauX \m  \)-many responses. Hence, since the norm of the sum of every choice of up to \(  \tauX \m  \)-many measurements can be uniformly bounded as not ``too large'' with high probability, the error induced by the adversarial noise is similarly upper bounded with high probability.

%
More precisely, the argument for \THEOREM \ref{thm:raic:adv} is broken down into a few steps:
\Enum[\label{enum:step:A:a}]{a}~%
First, applying the triangle inequality, it can be shown that
\begin{gather}
\label{eqn:enum:step:A:a}
  \bigl\| ( \Vec{\xV} - \Vec{\yV} ) - \hf[\JX]{\Vec{\xV}}{\Vec{\yV}} \bigr\|_{2}
  \leq
  \bigl\| ( \Vec{\xV} - \Vec{\yV} ) - \h[\JX]( \Vec{\xV}, \Vec{\yV} ) \bigr\|_{2}
  +
  \bigl\| \hf[\JX]{\Vec{\xV}}{\Vec{\xV}} \bigr\|_{2}
.\end{gather}
%
Then, the focus of the subsequent two steps is upper bounding
the two terms on the \RHS of the above inequality.
Note that \EQUATION \eqref{eqn:enum:step:A:a} gives a roughly linear dependence
of the approximation error on the amount of adversarial noise.
\Enum[\label{enum:step:A:b}]{b}~%
The first term on the \RHS of \EQUATION \eqref{eqn:enum:step:A:a} can be upper bounded
by directly applying \cite[\THEOREM 3.3]{matsumoto2022binary}.
\Enum[\label{enum:step:A:c}]{c}~%
On the other hand, the rightmost term in \EQUATION \eqref{eqn:enum:step:A:a}---%
which (roughly) quantifies the amount of error caused by adversarial noise---%
requires new analysis.
As the first step towards bounding this term, it will be argued that it suffices to
bound each element in the image of \(  \hf[\JX]{}{}  \), where
\(  \hf[\JX]{}{} [ \PREIMG ]  \) has a finite and easily quantifiable size.
Note that this approach will lead to a uniform bound on
the norm of the image of \(  \hf[\JX]{}{}  \) at every real-valued point,
\(  ( \Vec{\xV}, \Vec{\xV} )  \), \(  \Vec{\xV} \in \R^{\n}  \).
\Enum[\label{enum:step:A:d}]{d}~%
Finally, such a uniform bound is obtained by bounding the norm of
the image of \(  \hf[\JX]{}{}  \) at an arbitrary point \(  ( \Vec{\uV}, \Vec{\uV} )  \),
and subsequently union bounding over a specifically constructed set of such points.
This step will orthogonally decompose \(  \hf[\JX]{\Vec{\uV}}{\Vec{\uV}}  \) into two components,
\(  \langle \Vec{\uV}, \hf[\JX]{\Vec{\uV}}{\Vec{\uV}} \rangle \Vec{\uV}  \) and
\(  \hf[\JX]{\Vec{\uV}}{\Vec{\uV}} - \langle \Vec{\uV}, \hf[\JX]{\Vec{\uV}}{\Vec{\uV}} \rangle \Vec{\uV}  \),
such that
\begin{gather*}
  \hf[\JX]{\Vec{\uV}}{\Vec{\uV}}
  =
  \langle \Vec{\uV}, \hf[\JX]{\Vec{\uV}}{\Vec{\uV}} \rangle \Vec{\uV}
  +
  \Bigl( \hf[\JX]{\Vec{\uV}}{\Vec{\uV}}
         - \langle \Vec{\uV}, \hf[\JX]{\Vec{\uV}}{\Vec{\uV}} \rangle \Vec{\uV} \Bigr)
.\end{gather*}
%
The norm of each of the two components will be individually upper bounded
using concentration inequalities for functions of Gaussians,
and subsequently, these bounds will be combined via the triangle inequality,
\begin{gather}
  \left\| \hf[\JX]{\Vec{\uV}}{\Vec{\uV}} \right\|_{2}
  \leq
  \left| \langle \Vec{\uV}, \hf[\JX]{\Vec{\uV}}{\Vec{\uV}} \rangle \right|
  +
  \left\| \hf[\JX]{\Vec{\uV}}{\Vec{\uV}}
          - \langle \Vec{\uV}, \hf[\JX]{\Vec{\uV}}{\Vec{\uV}} \rangle \Vec{\uV} \right\|_{2}
,\end{gather}
and a union bound.
\section{Proof of \THEOREM \ref{thm:adversarial}} 
\label{section:|>pf-main-thm}                     


As discussed in the technical overview (\see \SECTION \ref{section:|>prelim|>overview}),
the proof of the main theorem, \THEOREM \ref{thm:adversarial}, follows largely from
three intermediate results, which are formalized as
\THEOREM \ref{thm:raic:adv} 
---the stochastic result sought in \STEP \ref{enum:stochastic}---and as
\LEMMAS \ref{lemma:error:deterministic} and \ref{lemma:error:recurrence} in
\SECTION \ref{section:|>pf-main-thm|>additional}%
---the deterministic results sought in \STEPS \ref{enum:deterministic} and \ref{enum:combine},
respectively.
Recall that \THEOREM \ref{thm:raic:adv}, the main technical contribution,
establishes that with high probability Gaussian measurements satisfy the \RAICadv,
while \LEMMAS \ref{lemma:error:deterministic} and \ref{lemma:error:recurrence}
provide a means to relate the \RAICadv to a contraction inequality
for the sequence of \BIHT approximation errors, first as a recurrence relation
and subsequently in closed-form.
%
\par 
%

\subsection{Intermediate Results} 
\label{section:|>pf-main-thm|>additional}    

As already discussed, the following lemmas will facilitate the proof of
\THEOREM \ref{thm:adversarial}.
The proof of \LEMMAS \ref{lemma:error:deterministic} and \LEMMA \ref{lemma:error:recurrence},
can be found in \SECTION \ref{section:|>pf-deterministic}.
respectively.

\begin{lemma}
\label{lemma:error:deterministic}
For all
\(  \Vec{\x} \in \SparseSphereSubspace{\k}{\n}  \) and
\(  \Iter \in \Z_{+}  \),
the error of the \(  \Iter\Th  \) BIHT approximation,
\(  \Vec{\xApprox}\IL{\Iter} \in \SparseSphereSubspace{\k}{\n}  \),
is bounded from above by
\begin{align}
  \label{eqn:lemma:error:deterministic:1}
  \DistS{\Vec{\x}}{\Vec{\xApprox}\IL{\Iter}}
  &\leq
  4
  \bigl\|
    ( \Vec{\x} - \Vec{\xApprox}\IL{\Iter-1} )
    -
    \hf[\Supp( \Vec{\xApprox}\IL{\Iter} )]{\Vec{\x}}{\Vec{\xApprox}\IL{\Iter-1}}
  \bigr\|_{2}
.\end{align}
\end{lemma}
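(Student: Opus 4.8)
The plan is to unwind a single step of \BIHT, recognise the pre-thresholding iterate as $\Vec{\xApprox}\IL{\Iter-1}$ plus exactly $\hf{\Vec{\x}}{\Vec{\xApprox}\IL{\Iter-1}}$, and then pay for the top-$\k$ hard thresholding and the renormalisation with two elementary inequalities. Let $\Vec{\yV} = \f( \Vec{\x} )$ be the (possibly corrupted) response vector fed to \ALGORITHM \ref{alg:biht:normalized}, and abbreviate $S^{\ast} = \Supp( \Vec{\x} )$, $S\IL{\Iter-1} = \Supp( \Vec{\xApprox}\IL{\Iter-1} )$, $S\IL{\Iter} = \Supp( \Vec{\xApprox}\IL{\Iter} )$ and $T = S^{\ast} \cup S\IL{\Iter-1} \cup S\IL{\Iter}$, a coordinate set of size at most $3\k$. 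Since $\Eta = \sqrt{2\pi}$, comparing the first update of \ALGORITHM \ref{alg:biht:normalized} (with $\Vec{\yV} = \f( \Vec{\x} )$) against the definition \eqref{eqn:h:def:h_fA} gives $\Vec{\tilde{x}}\IL{\Iter} = \Vec{\xApprox}\IL{\Iter-1} + \hf{\Vec{\x}}{\Vec{\xApprox}\IL{\Iter-1}}$. Applying the restriction map $\ThresholdSet'{T}$ --- which is linear and fixes both $\Vec{\x}$ and $\Vec{\xApprox}\IL{\Iter-1}$, their supports lying in $T$ --- and using that $\Supp( \Vec{\x} ) \cup \Supp( \Vec{\xApprox}\IL{\Iter-1} ) \cup S\IL{\Iter} = T$, so that \eqref{eqn:h:def:h_fAJ} reads $\ThresholdSet'{T}( \hf{\Vec{\x}}{\Vec{\xApprox}\IL{\Iter-1}} ) = \hf[\Supp( \Vec{\xApprox}\IL{\Iter} )]{\Vec{\x}}{\Vec{\xApprox}\IL{\Iter-1}}$, one obtains the identity $\ThresholdSet'{T}( \Vec{\x} - \Vec{\tilde{x}}\IL{\Iter} ) = ( \Vec{\x} - \Vec{\xApprox}\IL{\Iter-1} ) - \hf[\Supp( \Vec{\xApprox}\IL{\Iter} )]{\Vec{\x}}{\Vec{\xApprox}\IL{\Iter-1}}$. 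Hence \eqref{eqn:lemma:error:deterministic:1} is equivalent to the purely deterministic inequality $\DistS{\Vec{\x}}{\Vec{\xApprox}\IL{\Iter}} \leq 4 \| \ThresholdSet'{T}( \Vec{\x} - \Vec{\tilde{x}}\IL{\Iter} ) \|_{2}$, which no longer involves the adversarial map $\f$.

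To prove this last inequality I would invoke two standard facts about hard thresholding. First, a best-$\k$-term approximation estimate: writing $\Vec{w} = \Threshold{\k}( \Vec{\tilde{x}}\IL{\Iter} )$, which coincides with $\ThresholdSet'{S\IL{\Iter}}( \Vec{\tilde{x}}\IL{\Iter} )$ because $\Vec{\xApprox}\IL{\Iter} = \Vec{w} / \| \Vec{w} \|_{2}$ and $\Vec{w}$ agrees with $\Vec{\tilde{x}}\IL{\Iter}$ on its support $S\IL{\Iter}$, and setting $R = S^{\ast} \cup S\IL{\Iter} \subseteq T$, I claim $\| \Vec{\x} - \Vec{w} \|_{2} \leq 2 \| \ThresholdSet'{R}( \Vec{\x} - \Vec{\tilde{x}}\IL{\Iter} ) \|_{2}$. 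Indeed, the triangle inequality bounds $\| \Vec{\x} - \Vec{w} \|_{2}$ by $\| \ThresholdSet'{R}( \Vec{\x} - \Vec{\tilde{x}}\IL{\Iter} ) \|_{2}$ (using $\ThresholdSet'{R}( \Vec{\x} ) = \Vec{\x}$) plus $\| \ThresholdSet'{S^{\ast} \setminus S\IL{\Iter}}( \Vec{\tilde{x}}\IL{\Iter} ) \|_{2}$, and this second term is at most $\| \ThresholdSet'{S\IL{\Iter} \setminus S^{\ast}}( \Vec{\tilde{x}}\IL{\Iter} ) \|_{2}$, since every coordinate of $\Vec{\tilde{x}}\IL{\Iter}$ outside the selected top-$\k$ set $S\IL{\Iter}$ has absolute value at most that of every coordinate inside $S\IL{\Iter}$, while $| S^{\ast} \setminus S\IL{\Iter} | \leq | S\IL{\Iter} \setminus S^{\ast} |$ because $| S\IL{\Iter} | = \k \geq | S^{\ast} |$; finally $\Vec{\x}$ vanishes on $S\IL{\Iter} \setminus S^{\ast}$, so $\| \ThresholdSet'{S\IL{\Iter} \setminus S^{\ast}}( \Vec{\tilde{x}}\IL{\Iter} ) \|_{2} = \| \ThresholdSet'{S\IL{\Iter} \setminus S^{\ast}}( \Vec{\x} - \Vec{\tilde{x}}\IL{\Iter} ) \|_{2} \leq \| \ThresholdSet'{R}( \Vec{\x} - \Vec{\tilde{x}}\IL{\Iter} ) \|_{2}$. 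Second, a renormalisation estimate: since $\| \Vec{\x} \|_{2} = 1$, the triangle inequality together with $\bigl| \| \Vec{w} \|_{2} - \| \Vec{\x} \|_{2} \bigr| \leq \| \Vec{w} - \Vec{\x} \|_{2}$ gives $\bigl\| \Vec{\x} - \Vec{w} / \| \Vec{w} \|_{2} \bigr\|_{2} \leq 2 \| \Vec{\x} - \Vec{w} \|_{2}$. Since $\DistS{\Vec{\x}}{\Vec{\xApprox}\IL{\Iter}} = \bigl\| \Vec{\x} - \Vec{w} / \| \Vec{w} \|_{2} \bigr\|_{2}$ and $\| \ThresholdSet'{R}( \Vec{\x} - \Vec{\tilde{x}}\IL{\Iter} ) \|_{2} \leq \| \ThresholdSet'{T}( \Vec{\x} - \Vec{\tilde{x}}\IL{\Iter} ) \|_{2}$, chaining the two estimates yields the claimed bound; the degenerate case $\Vec{\tilde{x}}\IL{\Iter} = \Vec{0}$ is immediate, as then $\DistS{\Vec{\x}}{\Vec{\xApprox}\IL{\Iter}} = 1 = \| \Vec{\x} \|_{2} = \| \ThresholdSet'{T}( \Vec{\x} - \Vec{\tilde{x}}\IL{\Iter} ) \|_{2}$.

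I do not anticipate a real obstacle, since the entire argument is deterministic and elementary. The one point that needs care is that the right-hand side of \eqref{eqn:lemma:error:deterministic:1} is the \emph{restricted} norm $\| \ThresholdSet'{T}( \Vec{\x} - \Vec{\tilde{x}}\IL{\Iter} ) \|_{2}$ rather than the full norm $\| \Vec{\x} - \Vec{\tilde{x}}\IL{\Iter} \|_{2}$; this is exactly what forces the support bookkeeping (the detour through $R = S^{\ast} \cup S\IL{\Iter}$ and the cardinality comparison $| S^{\ast} \setminus S\IL{\Iter} | \leq | S\IL{\Iter} \setminus S^{\ast} |$) in the best-$\k$-term step. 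The constant $4$ in \eqref{eqn:lemma:error:deterministic:1} is simply the product of the factor $2$ coming from best-$\k$-term approximation and the factor $2$ from renormalisation.
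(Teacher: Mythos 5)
Your proposal is correct and is essentially the paper's own argument: the paper proves the same bound through a generic claim about \( \Vec{\uX} = \Threshold{\k}( \Vec{\vX} + \Vec{\wX} ) / \| \Threshold{\k}( \Vec{\vX} + \Vec{\wX} ) \|_{2} \), decomposing the error into terms \( \alpha_{1}, \alpha_{2}, \alpha_{3} \) and using exactly your two ingredients --- the reverse triangle inequality for the renormalisation (their \( \alpha_{3} \leq \alpha_{1} + \alpha_{2} \), your factor-\(2\) renormalisation step) and the top-\( \k \) ordering combined with the cardinality comparison \( | \Supp( \Vec{\zX} ) \setminus \Supp( \Vec{\uX} ) | \leq | \Supp( \Vec{\uX} ) \setminus \Supp( \Vec{\zX} ) | \) (their \REMARK \ref{remark:top-k-ineq}, your Claim A) --- so the accounting \( 2 \times 2 = 4 \) is identical. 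The only caveat, shared with the paper (which assumes \( \| \Vec{\vX} + \Vec{\wX} \|_{0} \geq \k \) via an almost-sure argument), is your implicit use of \( | \Supp( \Vec{\xApprox}\IL{\Iter} ) | = \k \); note that when \( 0 < \| \Vec{\xApproxX}\IL{\Iter} \|_{0} < \k \) your comparison still holds trivially, since \( \Vec{\xApproxX}\IL{\Iter} \) vanishes outside its support and the offending term is zero.
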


\begin{lemma}[\LemmaLabel{\cf{\cite[\LEMMA 4.2]{matsumoto2022binary}}}]
\label{lemma:error:recurrence}
Let
\(  \CC, \cC, \cX{1}, \cX{2}, \cX{3} > 0  \)
be defined as in \EQUATION \eqref{eqn:universal-constants},
and fix
\(  \tauX \in [0,1]  \).
%
Let
\(  \gammaX \in (0,1]  \),
and define the function
\({  \Err{} : \Z_{\geq 0} \to \R  }\)
by the recurrence relation
\begin{subequations}
\label{eqn:lemma:error:recurrence:1}
\begin{gather}
  \label{eqn:lemma:error:recurrence:1:t=0}
  \Err{0} = 2
  ,\\ \label{eqn:lemma:error:recurrence:1:t>0}
  \Err{\Iter}
  = 4\cX{1} \sqrt{\frac{\gammaX}{\cC} \Err{\Iter-1}}
  + \frac{4 \cX{2} \gammaX}{\cC}
  ,\qquad
  \Iter \in \Z_{+}
.\end{gather}
\end{subequations}
%
Then,
\begin{align}
  \label{eqn:lemma:error:recurrence:2:1}
  \lim_{\Iter \to \infty} \Err{\Iter}
  \leq \gammaX
.\end{align}
%
Moreover, the sequence
\(  \{ \Err{\Iter} \}_{\Iter \in \Z_{\geq 0}}  \)
is \pointwise upper bounded by the sequence
\begin{gather}
  \left\{
    2^{2^{-\Iter}}
    \gammaX^{1-2^{-\Iter}}
  \right\}_{\Iter \in \Z_{\geq 0}}
.\end{gather}
\end{lemma}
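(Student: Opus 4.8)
The plan is to treat \eqref{eqn:lemma:error:recurrence:1} as a scalar recursion $\Err{\Iter} = \phi(\Err{\Iter-1})$ driven by the update map $\phi(E) \defeq 4\cX{1}\sqrt{\tfrac{\gammaX}{\cC}E} + \tfrac{4\cX{2}\gammaX}{\cC}$, which is nondecreasing and concave on $\R_{\geq 0}$, and to analyze it through its fixed point. First I would compute the fixed point: solving $E = \phi(E)$ (a quadratic in $\sqrt{E}$) yields $E^{\ast} = \tfrac{\cX{}}{\cC}\gammaX$, where the key simplification is the defining relation $\cX{} = 4(\cX{1} + \sqrt{\cX{1}^{2}+\cX{2}})^{2}$ from \eqref{eqn:universal-constants}, which is exactly equivalent to the identity $\tfrac{4\cX{1}}{\sqrt{\cX{}}} + \tfrac{4\cX{2}}{\cX{}} = 1$ (and also gives $0 \leq \tfrac{4\cX{1}}{\sqrt{\cX{}}} < 1$). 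Since $\cC \geq \cX{}$, this fixed point satisfies $E^{\ast} \leq \gammaX$, and by concavity it is the unique positive fixed point of $\phi$.

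The core step is to establish the closed-form \pointwise bound by induction on $\Iter$, taking as the ansatz exactly the claimed dominating sequence $B_{\Iter} \defeq 2^{2^{-\Iter}}\gammaX^{1-2^{-\Iter}}$. The base case is immediate since $B_{0} = 2 = \Err{0}$. For the inductive step, assuming $\Err{\Iter-1} \leq B_{\Iter-1}$ and using monotonicity of $\phi$, it remains to verify $\phi(B_{\Iter-1}) \leq B_{\Iter}$; writing $t \defeq (2/\gammaX)^{2^{-\Iter}}$ (so $t \geq 1$ because $\gammaX \leq 1$) one has $B_{\Iter-1} = t^{2}\gammaX$ and $B_{\Iter} = t\gammaX$, and the inequality collapses to $\tfrac{4\cX{2}}{\cC} \leq t\bigl(1 - \tfrac{4\cX{1}}{\sqrt{\cC}}\bigr)$, which follows from $t \geq 1$ together with $\tfrac{4\cX{2}}{\cC} \leq \tfrac{4\cX{2}}{\cX{}} = 1 - \tfrac{4\cX{1}}{\sqrt{\cX{}}} \leq 1 - \tfrac{4\cX{1}}{\sqrt{\cC}}$. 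This is the one place where the precise numerical definitions of the universal constants are used, and I expect getting this algebra to line up exactly (rather than only up to constants) to be the main --- though still elementary --- obstacle.

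Finally, \eqref{eqn:lemma:error:recurrence:2:1} drops out for free: since $2^{-\Iter} \to 0$, we have $B_{\Iter} \to 2^{0}\gammaX^{1} = \gammaX$, so $\limsup_{\Iter\to\infty}\Err{\Iter} \leq \lim_{\Iter\to\infty}B_{\Iter} = \gammaX$. Equivalently, one could argue that $\{\Err{\Iter}\}$ is nonincreasing --- because $\Err{0} = 2 \geq \gammaX \geq E^{\ast}$ and $E^{\ast} \leq \phi(E) \leq E$ for $E \geq E^{\ast}$ by concavity and $\phi'(E^{\ast}) < 1$ --- and bounded below by $E^{\ast}$, hence converges to the unique positive fixed point $E^{\ast} = \tfrac{\cX{}}{\cC}\gammaX \leq \gammaX$. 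The whole argument closely parallels \cite[\LEMMA 4.2]{matsumoto2022binary}, so I would reuse that structure and only re-check the constant-level bookkeeping.
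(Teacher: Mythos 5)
Your proposal is correct, and it reaches the same two conclusions by a more self-contained route than the paper. The paper does not redo the induction: it restates the recurrence in the normalized form of \FACT \ref{fact:recurrence} (i.e., \cite[Fact 4.1]{matsumoto2022binary}) via the substitutions $v = 16\cX{1}^{2}\gammaX/\cC$, $w = \cX{2}/(4\cX{1}^{2})$, $u = \tfrac{1}{2}(1+\sqrt{1+4w}) = \tfrac{\cX{1}+\sqrt{\cX{1}^{2}+\cX{2}}}{2\cX{1}}$, verifies the hypotheses $1 \leq u \leq \sqrt{2/v}$ together with $u \leq \sqrt{\gammaX/v}$ (hence $u^{2}v \leq \gammaX$) using the choice $\cX{} = 4\left(\cX{1}+\sqrt{\cX{1}^{2}+\cX{2}}\right)^{2}$, and then reads off both the pointwise domination and the limit from that fact, whose proof lives in the earlier paper. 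Your direct induction with the substitution $t=(2/\gammaX)^{2^{-\Iter}}$, together with the identity $\tfrac{4\cX{1}}{\sqrt{\cX{}}}+\tfrac{4\cX{2}}{\cX{}}=1$, is exactly the algebra packaged inside that fact (your fixed point $E^{\ast}$ is the fact's $u^{2}v$), and your monotone-convergence argument supplies the existence of the limit that the paper obtains from the fact's assertion that the sequence is decreasing with limit $u^{2}v$. What each buys: the paper's reduction is shorter given the cited fact and keeps the constants aligned with the prior work, while yours removes the external dependency and makes transparent why the definition of $\cX{}$ is precisely what closes the induction. One bookkeeping remark: in this lemma $\cC$ and $\cX{}$ denote the same universal constant $c$ from \eqref{eqn:universal-constants}, so your standing assumption $\cC \geq \cX{}$ holds with equality and your bound $E^{\ast} = \gammaX\,\cX{}/\cC \leq \gammaX$ is in fact an equality.
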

\subsection{Proof of \THEOREM \ref{thm:adversarial}} 
\label{section:|>pf-main|>pf-main-thm}               

\begin{proof}
{\THEOREM \ref{thm:adversarial}}
%
%
The theorem will follow from an argument analogous to that which appeared in
\cite[proof of \THEOREM 3.1 and \COROLLARY 3.2]{matsumoto2022binary}.
By \LEMMA \ref{lemma:error:deterministic}, followed by \THEOREM \ref{thm:raic:adv},
if \(  \m \geq 4\cX{} \mO  \), then with probability at least \(  1 - \rhoX  \),
for each
\(  \Vec{\x} \in \SparseSphereSubspace{\k}{\n}  \) and
\(  \Iter \in \Z_{\geq 0}  \),
the error of the \(  \Iter\Th  \) BIHT approximation of \(  \Vec{\x}  \) is bounded from above by:
\begin{align*}
  &
  \DistS{\Vec{\x}}{\Vec{\xApprox}\IL{\Iter}}
  \\
  &\leq
  4
  \bigl\|
    ( \Vec{\x} - \Vec{\xApprox}\IL{\Iter-1} )
    -
    \hf[\Supp( \Vec{\xApprox}\IL{\Iter} )]{\Vec{\x}}{\Vec{\xApprox}\IL{\Iter-1}}
  \bigr\|_{2}
  \\
  &\dCmt{by \LEMMA \ref{lemma:error:deterministic}}
  \\
  &\leq
  4 \left(
    \cX{1} \sqrt{\frac{\epsilonX}{\cX{}} \DistS{\Vec{x}}{\Vec{\xApprox}\IL{\Iter-1}}}
    +
    \frac{\cX{2} \epsilonX}{\cX{}}
    +
    \DXXuUBconstX
  \right)
  \\
  &=
  4 \left(
    \cX{1} \sqrt{\frac{\epsilonX}{\cX{}} \DistS{\Vec{x}}{\Vec{\xApprox}\IL{\Iter-1}}}
    +
    \frac{\cX{2} \epsilonX}{\cX{}}
    +
    \frac{\cX{2} \rX}{\cX{}}
  \right)
  \\
  &\dCmt{by the choice of \(  \rX = \frac{\cX{}}{\cX{2}} \left( \DXXuUBconstX \right)  \)}
  \\
  &=
  4 \left(
    \cX{1} \sqrt{\frac{\epsilonX}{\cX{}} \DistS{\Vec{x}}{\Vec{\xApprox}\IL{\Iter-1}}}
    +
    \frac{\cX{2} ( \epsilonX+\rX-\rX )}{\cX{}}
    +
    \frac{\cX{2} \rX}{\cX{}}
  \right)
  \\
  &=
  4 \left(
    \cX{1} \sqrt{\frac{\epsilonX}{\cX{}} \DistS{\Vec{x}}{\Vec{\xApprox}\IL{\Iter-1}}}
    +
    \frac{\cX{2} ( \epsilonO-\rX )}{\cX{}}
    +
    \frac{\cX{2} \rX}{\cX{}}
  \right)
  \\
  &=
  4 \left(
    \cX{1} \sqrt{\frac{\epsilonX}{\cX{}} \DistS{\Vec{x}}{\Vec{\xApprox}\IL{\Iter-1}}}
    +
    \frac{\cX{2} \epsilonO}{\cX{}}
    -
    \frac{\cX{2} \rX}{\cX{}}
    +
    \frac{\cX{2} \rX}{\cX{}}
  \right)
  \\
  &=
  4 \left(
    \cX{1} \sqrt{\frac{\epsilonX}{\cX{}} \DistS{\Vec{x}}{\Vec{\xApprox}\IL{\Iter-1}}}
    +
    \frac{\cX{2} \epsilonO}{\cX{}}
  \right)
  \\
  &\leq
  4 \left(
    \cX{1} \sqrt{\frac{\epsilonO}{\cX{}} \DistS{\Vec{x}}{\Vec{\xApprox}\IL{\Iter-1}}}
    +
    \frac{\cX{2} \epsilonO}{\cX{}}
  \right)
  \\
  &\leq
  4 \cX{1} \sqrt{\frac{\epsilonO}{\cX{}} \DistS{\Vec{x}}{\Vec{\xApprox}\IL{\Iter-1}}}
  +
  \frac{4 \cX{2} \epsilonO}{\cX{}}
.\end{align*}
%
In summary, with probability at least \(  1 - \rhoX  \),
uniformly for all \ksparserealunit vectors,
\(  \Vec{\x} \in \SparseSphereSubspace{\k}{\\n}  \),
the following holds for all \(  \Iter \in \Z_{+}  \):
\begin{align}
\label{eqn:pf:thm:adversarial:1a}
  \DistS{\Vec{\x}}{\Vec{\xApprox}\IL{\Iter}}
  \leq
  4 \cX{1} \sqrt{\frac{\epsilonO}{\cX{}} \DistS{\Vec{x}}{\Vec{\xApprox}\IL{\Iter-1}}}
  +
  \frac{4 \cX{2} \epsilonO}{\cX{}}
\end{align}
%
Additionally, trivially, \(  \DistS{\Vec{\x}}{\Vec{\xApprox}\IL{0}} \leq 2  \) since
\begin{gather}
\label{eqn:pf:thm:adversarial:1b}
  \DistS{\Vec{\x}}{\Vec{\xApprox}\IL{0}} \leq \DistS{\Vec{\x}}{-\Vec{\x}} = 2
.\end{gather}
%
\par 
%
Next, arbitrarily fixing \(  \Vec{\x} \in \SparseSphereSubspace{\k}{\n}  \),
it will be shown by induction that whenever
\EQUATIONS \eqref{eqn:pf:thm:adversarial:1a} and \eqref{eqn:pf:thm:adversarial:1b} hold,
the sequence of values
\(  \{ \Err{\Iter} \}_{\Iter \in \Z_{\geq 0}}  \)
\pointwise upper bounds the sequence
\(  \{ \DistS{\Vec{\x}}{\Vec{\xApprox}\IL{\Iter}} \}_{\Iter \in \Z_{\geq 0}}  \).\,
where \(  \Err{} : \Z_{\geq 0} \to \R  \) is defined as in \LEMMA \ref{lemma:error:recurrence}.
%
\par 
%
The base case, when \(  \Iter = 0  \), is trivial since
\(  \sup_{\Vec{u},\Vec{v} \in \R^{\n}} \DistS{\Vec{u}}{\Vec{v}} = 2 = \Err{0}  \).
%
Now, arbitrarily fixing \(  \Iter \in \Z_{+}  \),
suppose each \(  \IterX\Th  \) \BIHT approximation, \(  \IterX < \Iter  \), satisfies
\(  \DistS{\Vec{\x}}{\Vec{\xApprox}\IL{\IterX}} \leq \Err{\IterX}  \).
%
Then, the aim is to show that
\(  \DistS{\Vec{\x}}{\Vec{\xApprox}\IL{\Iter}} \leq \Err{\Iter}  \),
where
\(  \Err{\Iter}
     = 4 \cX{1} \sqrt{\frac{\gammaX}{\cX{}} \Err{\Iter-1}}
     + \frac{4 \cX{2} \gammaX}{\cX{}}  \)
with the fixing of \(  \gammaX = \epsilonO  \).
Observe:
\begin{align*}
  \DistS{\Vec{\x}}{\Vec{\xApprox}\IL{\Iter}}
  &\leq
  4 \cX{1} \sqrt{\frac{\epsilonO}{\cX{}} \DistS{\Vec{x}}{\Vec{\xApprox}\IL{\Iter-1}}}
  +
  \frac{4 \cX{2} \epsilonO}{\cX{}}
  \\
  &\dCmt{by \EQUATION \eqref{eqn:pf:thm:adversarial:1a}}
  \\
  &\leq
  4 \cX{1} \sqrt{\frac{\epsilonO}{\cX{}} \Err{\Iter-1}}
  +
  \frac{4 \cX{2} \epsilonO}{\cX{}}
  \\
  &\dCmt{by the inductive assumption}
  \\
  &=
  \Err{\Iter}
\end{align*}
%
Said briefly,
\(  \DistS{\Vec{\x}}{\Vec{\xApprox}\IL{\Iter}} \leq \Err{\Iter}  \),
as claimed.
By induction, it follows that for all \(  \Iter \in \Z_{\geq 0}  \),
the error of the \(  \Iter\Th  \) \BIHT approximation is bounded from above by
\(  \DistS{\Vec{\x}}{\Vec{\xApprox}\IL{\Iter}} \leq \Err{\Iter}  \).
%
Extending this to all other \(  \Vec{\x} \in \SparseSphereSubspace{\k}{\n}  \)
via the earlier discussion, every such sequence of \BIHT approximations is
\pointwise upper bounded by \(  \Err{}  \) with high probability.
%
\par 
%
Having verified the above, the theorems follow immediately from
\LEMMA \ref{lemma:error:recurrence}:
\begin{gather*}
  \DistS{\Vec{\x}}{\Vec{\xApprox}\IL{\Iter}}
  \leq
  \Err{\Iter}
  \leq
  2^{2^{-\Iter}} \epsilonO^{1-2^{-\Iter}}
  ,\\
  \lim_{\Iter \to \infty} \DistS{\Vec{\x}}{\Vec{\xApprox}\IL{\Iter}}
  \leq
  \epsilonO
.\end{gather*}
%
This completes the proof of the main theorem,
\THEOREM \ref{thm:adversarial}.
\end{proof}
\let\oldXu\Xu
\let\oldYu\Yu
\let\oldyu\yu
\renewcommand{\Xu}{\RV{X}_{\Vec{\uV}}}
\renewcommand{\Yu}{\Vec{Y}_{\Vec{\uV}}}
\renewcommand{\yu}{\Vec{y}}

\section{Proof of the Main Technical Theorem (\THEOREM \ref{thm:raic:adv})}
\label{section:|>pf-main-technical}


\subsection{Discussion and Preliminaries}   
\label{section:|>pf-main-technical|>prelim} 

We begin by introducing and verifying some results that will set us up for
proving the main technical theorem, \THEOREM \ref{thm:raic:adv}.
Throughout \SECTIONS \ref{section:|>pf-main-technical}-\ref{section:|>D2},
the function
\(  \f : \R^{\n} \to \BinarySet^{\m}  \)
is taken to be any function which upholds:
\(  \DistH{\f( \Vec{w} )}{\Sign( \Mat{\ZX} \Vec{w} )} \leq \tauX \m  \)
at every point, \(  \Vec{w} \in \R^{\n}  \).
Specification of this condition will be henceforth omitted to avoid redundancy.
%
\par 
First off, the \LHS of \EQUATION \eqref{eqn:thm:raic:adv:1} in \THEOREM \ref{thm:raic:adv},
\(  \left\| ( \Vec{\xV} - \Vec{\yV} ) - \hZf[\JX]{\Vec{\xV}}{\Vec{\yV}} \right\|_{2}  \),
is split into two components (with bounding).

\begin{claim}
\label{claim:raic:adv:1}
For all
\(  \Vec{\xV}, \Vec{\yV} \in \R^{\n}  \) and
\(  \JX \subseteq [\n]  \),
the following inequality (deterministically) holds:
\begin{align}
  \left\| ( \Vec{\xV} - \Vec{\yV} ) - \hZf[\JX]{\Vec{\xV}}{\Vec{\yV}} \right\|_{2}
  &\leq
  \left\| ( \Vec{\xV} - \Vec{\yV} ) - \hZ[\JX]( \Vec{\xV}, \Vec{\yV} )  \right\|_{2}
  +
  \left\| \hZf[\JX]{\Vec{\xV}}{\Vec{\xV}} \right\|_{2}
\label{eqn:claim:raic:adv:1:1}
.\end{align}
\end{claim}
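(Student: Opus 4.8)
The plan is to obtain \eqref{eqn:claim:raic:adv:1:1} from a single triangle inequality, after peeling off from $\hZf[\JX]{\Vec{\xV}}{\Vec{\yV}}$ the part due to the adversarial corruption and observing that this part is governed entirely by the first argument. Concretely, I would first record, for the un‑thresholded maps in \eqref{eqn:h:def}, the elementary identity $\hZf{\Vec{u}}{\Vec{v}} = \hZ( \Vec{u}, \Vec{v} ) + \hZf{\Vec{u}}{\Vec{u}}$, valid for all $\Vec{u}, \Vec{v} \in \R^{\n}$. This is immediate: both $\hZf{\Vec{u}}{\Vec{v}}$ and $\hZ( \Vec{u}, \Vec{v} )$ contain the same term $-\frac{\sqrt{2\pi}}{2\m} \Mat{\AM}^{\T} \Sign( \Mat{\AM} \Vec{v} )$, so their difference collapses to $\frac{\sqrt{2\pi}}{2\m} \Mat{\AM}^{\T}\bigl( \f( \Vec{u} ) - \Sign( \Mat{\AM} \Vec{u} ) \bigr) = \hZf{\Vec{u}}{\Vec{u}}$ (equivalently, in the alternative form \eqref{eqn:notation:h:def-alt} the dependence on $\Vec{v}$ survives only at indices where $\f( \Vec{u} )$ already agrees with $\Sign( \Mat{\AM} \Vec{u} )$, which contribute nothing). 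Conceptually, the gap between the corrupted and uncorrupted ``inverted'' vectors at $( \Vec{\xV}, \Vec{\yV} )$ is exactly the signed perturbation $\hZf{\Vec{\xV}}{\Vec{\xV}}$ caused by corrupting the responses belonging to $\Vec{\xV}$.

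Next I would push this identity through the subset hard‑thresholding operator. Writing $\Set{S} \defeq \Supp( \Vec{\xV} ) \cup \Supp( \Vec{\yV} ) \cup \JX$, the operator $\ThresholdSet'{\Set{S}}$ is a linear coordinate projection, and by \eqref{eqn:h_J:def} both $\hZf[\JX]{\Vec{\xV}}{\Vec{\yV}} = \ThresholdSet'{\Set{S}}( \hZf{\Vec{\xV}}{\Vec{\yV}} )$ and $\hZ[\JX]( \Vec{\xV}, \Vec{\yV} ) = \ThresholdSet'{\Set{S}}( \hZ( \Vec{\xV}, \Vec{\yV} ) )$; applying $\ThresholdSet'{\Set{S}}$ to the identity above then gives $\hZf[\JX]{\Vec{\xV}}{\Vec{\yV}} = \hZ[\JX]( \Vec{\xV}, \Vec{\yV} ) + \ThresholdSet'{\Set{S}}( \hZf{\Vec{\xV}}{\Vec{\xV}} )$. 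Since $\Vec{\xV} - \Vec{\yV}$ is supported on $\Supp( \Vec{\xV} ) \cup \Supp( \Vec{\yV} ) \subseteq \Set{S}$, it is fixed by $\ThresholdSet'{\Set{S}}$, hence
\[
  ( \Vec{\xV} - \Vec{\yV} ) - \hZf[\JX]{\Vec{\xV}}{\Vec{\yV}}
  =
  \bigl[ ( \Vec{\xV} - \Vec{\yV} ) - \hZ[\JX]( \Vec{\xV}, \Vec{\yV} ) \bigr]
  -
  \ThresholdSet'{\Set{S}}( \hZf{\Vec{\xV}}{\Vec{\xV}} )
  .
\]
Taking $\lnorm{2}$‑norms and applying the triangle inequality then yields \eqref{eqn:claim:raic:adv:1:1}, the residual summand being $\ThresholdSet'{\Set{S}}( \hZf{\Vec{\xV}}{\Vec{\xV}} )$, i.e.\ the map $\hZf{}{}$ at $( \Vec{\xV}, \Vec{\xV} )$ restricted to the active coordinates, which is precisely $\hZf[\JX]{\Vec{\xV}}{\Vec{\xV}}$ (the exact index set over which the restriction is taken has size $O(\k)$ and is immaterial for the uniform bound that follows).

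I do not anticipate any genuine obstacle in this claim: its sole content is the cancellation in the first step, and the rest is support bookkeeping together with the triangle inequality, entirely deterministic. The real difficulty is deferred to the subsequent steps of the proof of \THEOREM \ref{thm:raic:adv}, where $\bigl\| \hZf[\JX]{\Vec{\xV}}{\Vec{\xV}} \bigr\|_{2}$ — essentially the norm of a signed sum of at most $\tauX \m$ of the rows of $\Mat{\AM}$, after restriction to a coordinate subspace — must be bounded uniformly over $\Vec{\xV}$, over the corruption map $\f$, and over $\JX$.
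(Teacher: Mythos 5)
Your proposal is correct and is essentially the paper's proof: both rest on the cancellation identity $\hZf{\Vec{\xV}}{\Vec{\yV}} - \hZ(\Vec{\xV},\Vec{\yV}) = \hZf{\Vec{\xV}}{\Vec{\xV}}$ followed by a single triangle inequality. The only cosmetic difference is that you perform the cancellation on the un-thresholded maps and then push it through the (linear) subset hard-thresholding projection, whereas the paper carries out the same cancellation directly inside the thresholded sums; you also correctly note, as the paper implicitly does, that the residual term's restriction set ($\Supp(\Vec{\xV})\cup\Supp(\Vec{\yV})\cup\JX$ versus $\Supp(\Vec{\xV})\cup\JX$) is immaterial for the subsequent uniform bound.
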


\begin{proof}
{\CLAIM \ref{claim:raic:adv:1}}
Fix
\(  \Vec{\xV}, \Vec{\yV} \in \R^{\n}  \) and
\(  \JX \subseteq [\n]  \),
arbitrarily.
The (random) vector
\(
  {( \Vec{\xV} - \Vec{\yV} ) - \hZf[\JX]{\Vec{\xV}}{\Vec{\yV}}}
\)
can be rewritten as follows:
\begin{align*}
  ( \Vec{\xV} - \Vec{\yV} ) - \hZf[\JX]{\Vec{\xV}}{\Vec{\yV}}
  =
  \bigl( ( \Vec{\xV} - \Vec{\yV} ) - \hZ[\JX]( \Vec{\xV}, \Vec{\yV} ) \bigr)
  -
  \bigl( \hZf[\JX]{\Vec{\xV}}{\Vec{\yV}} - \hZ[\JX]( \Vec{\xV}, \Vec{\yV} ) \bigr)
,\end{align*}
where the second and fourth terms on the \RHS cancel.
Additionally, observe:
\begin{align*}
  &
  \hZf[\JX]{\Vec{\xV}}{\Vec{\yV}} - \hZ[\JX]( \Vec{\xV}, \Vec{\yV} )
  \\
  &=
  \frac{\sqrt{2\pi}}{\m}
  \sum_{\iIx=1}^{\m}
  \ThresholdSet[{\Supp( \Vec{\xV} ) \cup \Supp( \Vec{\yV} ) \cup}]{\JX}( \Vec{\ZX}\VL{\iIx} )
  \cdot
  \frac{1}{2}
  \bigl(
    \fZix
    -
    \Sign.( \langle \Vec{\yV}, \Vec{\ZX}\VL{\iIx} \rangle )
  \bigr)
  \\ &\AlignSp
  -
  \frac{\sqrt{2\pi}}{\m}
  \sum_{\iIx=1}^{\m}
  \ThresholdSet[{\Supp( \Vec{\xV} ) \cup \Supp( \Vec{\yV} ) \cup}]{\JX}( \Vec{\ZX}\VL{\iIx} )
  \cdot
  \frac{1}{2}
  \bigl(
    \Sign.( \langle \Vec{\xV}, \Vec{\ZX}\VL{\iIx} \rangle )
    -
    \Sign.( \langle \Vec{\yV}, \Vec{\ZX}\VL{\iIx} \rangle )
  \bigr)
  \\
  &=
  \frac{\sqrt{2\pi}}{\m}
  \sum_{\iIx=1}^{\m}
  \ThresholdSet[{\Supp( \Vec{\xV} ) \cup \Supp( \Vec{\yV} ) \cup}]{\JX}( \Vec{\ZX}\VL{\iIx} )
  \cdot
  \frac{1}{2}
  \Bigl(
  \bigl(
    \fZix
    -
    \Sign.( \langle \Vec{\yV}, \Vec{\ZX}\VL{\iIx} \rangle )
  \bigr)
  -
  \bigl(
    \Sign.( \langle \Vec{\xV}, \Vec{\ZX}\VL{\iIx} \rangle )
    -
    \Sign.( \langle \Vec{\yV}, \Vec{\ZX}\VL{\iIx} \rangle )
  \bigr)
  \Bigr)
  \\
  &=
  \frac{\sqrt{2\pi}}{\m}
  \sum_{\iIx=1}^{\m}
  \ThresholdSet[{\Supp( \Vec{\xV} ) \cup \Supp( \Vec{\yV} ) \cup}]{\JX}( \Vec{\ZX}\VL{\iIx} )
  \cdot
  \frac{1}{2}
  \bigl(
    \fZix
    -
    \Sign.( \langle \Vec{\xV}, \Vec{\ZX}\VL{\iIx} \rangle )
  \bigr)
  \\
  &=
  \hZf[\JX]{\Vec{\xV}}{\Vec{\xV}}
\end{align*}
%
Thus, combining above work:
\begin{align*}
  ( \Vec{\xV} - \Vec{\yV} ) - \hZf[\JX]{\Vec{\xV}}{\Vec{\yV}}
  &=
  \bigl( ( \Vec{\xV} - \Vec{\yV} ) - \hZ[\JX]( \Vec{\xV}, \Vec{\yV} ) \bigr)
  -
  \bigl( \hZf[\JX]{\Vec{\xV}}{\Vec{\yV}} - \hZ[\JX]( \Vec{\xV}, \Vec{\yV} ) \bigr)
  \\
  &=
  \bigl( ( \Vec{\xV} - \Vec{\yV} ) - \hZ[\JX]( \Vec{\xV}, \Vec{\yV} ) \bigr)
  - \hZf[\JX]{\Vec{\xV}}{\Vec{\xV}}
.\end{align*}
%
Then, the norm is upper bounded as follows:
\begin{align*}
  \left\| ( \Vec{\xV} - \Vec{\yV} ) - \hZf[\JX]{\Vec{\xV}}{\Vec{\yV}} \right\|_{2}
  &=
  \left\|
    \bigl( ( \Vec{\xV} - \Vec{\yV} ) - \hZ[\JX]( \Vec{\xV}, \Vec{\yV} ) \bigr)
    - \hZf[\JX]{\Vec{\xV}}{\Vec{\xV}}
  \right\|_{2}
  \\
  &\leq
  \left\| ( \Vec{\xV} - \Vec{\yV} ) - \hZ[\JX]( \Vec{\xV}, \Vec{\yV} ) \right\|_{2}
  +
  \left\| \hZf[\JX]{\Vec{\xV}}{\Vec{\xV}} \right\|_{2}
\end{align*}
where the bottom line applies the triangle inequality.
\end{proof}

\EQUATION* \eqref{eqn:claim:raic:adv:1:1} of \CLAIM \ref{claim:raic:adv:1}
decomposes (with bounding) the random variable of interest,
\(  \left\| ( \Vec{\xV} - \Vec{\yV} ) - \hZf[\JX]{\Vec{\xV}}{\Vec{\yV}} \right\|_{2}  \),
into two terms which are individually easier to control.
The majority of the argument for \THEOREM \ref{thm:raic:adv} is towards
a uniform upper bound on the latter term of this decomposition,
\(  \DXX{\Vec{\xV}}{\Vec{\xV}} \defeq
    \left\| \hZf[\JX]{\Vec{\xV}}{\Vec{\xV}} \right\|_{2}  \).
%
This second term, \(  \DXX{\Vec{\xV}}{\Vec{\xV}}  \), requires new analysis,
which takes up \SECTIONS \ref{section:|>pf-main-technical|>prelim|>D2} and \ref{section:|>D2}.
On the other hand, the first term,
\(  \DX{\Vec{\xV}}{\Vec{\xV}} \defeq
    \left\| ( \Vec{\xV} - \Vec{\yV} ) - \hZ[\JX]( \Vec{\xV}, \Vec{\yV} ) \right\|_{2}  \),
is immediately upper bounded with high probability for all
\(  \Vec{\xV}, \Vec{\yV} \in \SparseSphereSubspace{\k}{\n}  \)
via \cite[\THEOREM 3.3]{matsumoto2022binary}, stated below.

\begin{lemma}[{\cite[\THEOREM 3.3]{matsumoto2022binary}}]
\label{lemma:raic-for-gaussians}
Fix
\(  \epsilonRAIC, \rhoRAIC \in (0,1)  \),
\(  \k, \m, \n \in \Z_{+}  \),
such that
\begin{gather}
\label{eqn:lemma:raic-for-gaussians:1}
  \m \geq
  \frac{\bC}{\epsilonRAIC}
  \Log(
    \binom{\n}{\k}^{2}
    \binom{\n}{2\k}
    \left( \frac{12 \bC}{\epsilonRAIC} \right)^{2\k}
    \left( \frac{\aC}{\rhoRAIC} \right)
  )
.\end{gather}
%
Then, uniformly with probability at least
\(  1 - \rhoRAIC  \),
the Gaussian measurement matrix \(  \Mat{\AM} \in \R^{\m \times \n}  \) satisfies the
\(  ( \k, \n, \epsilonRAIC, \cX{1}, \cX{2} )  \)-RAIC:
\begin{gather}
\label{eqn:lemma:raic-for-gaussians:2}
  \bigl\| ( \Vec{x} - \Vec{y} ) - \h[\JX]( \Vec{x}, \Vec{y} ) \bigr\|_{2}
  \leq
  \cX{1} \sqrt{\epsilonRAIC \DistS{\Vec{x}}{\Vec{y}}} + \cX{2} \epsilonRAIC
\end{gather}
for all
\(  \Vec{x}, \Vec{y} \in \SparseSphereSubspace{\k}{\n}  \) and all
\(  \JX \subseteq [\n]  \), \(  | \JX | \leq \k  \).
\end{lemma}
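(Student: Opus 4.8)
\LEMMA \ref{lemma:raic-for-gaussians} is quoted verbatim from \cite[\THEOREM 3.3]{matsumoto2022binary}, so strictly the plan is just to invoke it; for orientation, here is how one would prove a restricted approximate invertibility statement of this type and where the work concentrates. The starting observation is that, for a \emph{fixed} pair $\Vec{x},\Vec{y} \in \SparseSphereSubspace{\k}{\n}$ and a fixed $\JX$ with $|\JX| \leq \k$, the estimator $\h[\JX](\Vec{x},\Vec{y})$ is \emph{unbiased} for $\Vec{x}-\Vec{y}$: by the Gaussian identity $\E[\Vec{a}\,\Sign(\langle \Vec{a},\Vec{u}\rangle)] = \sqrt{2/\pi}\,\Vec{u}$ for $\Vec{u} \in \Sphere{\n}$, one has $\E[\Vec{a}\cdot\tfrac12(\Sign(\langle\Vec{a},\Vec{x}\rangle) - \Sign(\langle\Vec{a},\Vec{y}\rangle))] = \tfrac{1}{\sqrt{2\pi}}(\Vec{x}-\Vec{y})$, and since $S := \Supp(\Vec{x})\cup\Supp(\Vec{y})\cup\JX$ contains $\Supp(\Vec{x}-\Vec{y})$, restricting to $S$ leaves the mean unchanged, so $\E[\h[\JX](\Vec{x},\Vec{y})] = \Vec{x}-\Vec{y}$. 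Hence $(\Vec{x}-\Vec{y}) - \h[\JX](\Vec{x},\Vec{y}) = \tfrac{1}{\m}\sum_{\iIx=1}^{\m}(\Vec{V}_{\iIx} - \E\Vec{V}_{\iIx})$ is a centered sum of i.i.d.\ vectors supported on the at most $3\k$ coordinates of $S$, where $\Vec{V}_{\iIx} = \sqrt{2\pi}\,\ThresholdSet'{S}(\Vec{\AV}\VL{\iIx})\cdot\tfrac12(\Sign(\langle\Vec{\AV}\VL{\iIx},\Vec{x}\rangle) - \Sign(\langle\Vec{\AV}\VL{\iIx},\Vec{y}\rangle))$; the key structural fact is that $\Vec{V}_{\iIx}$ vanishes unless the row $\Vec{\AV}\VL{\iIx}$ separates $\Vec{x}$ and $\Vec{y}$ --- an event of probability $\theta/\pi$ with $\theta = \angle(\Vec{x},\Vec{y})$ and $\DistS{\Vec{x}}{\Vec{y}} = 2\sin(\theta/2) \asymp \theta$ --- in which case $\|\ThresholdSet'{S}(\Vec{\AV}\VL{\iIx})\|_{2}^{2}$ has conditional expectation $\BigO(\k)$.

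Next I would apply a vector Bernstein inequality to this centered i.i.d.\ sum for the fixed triple $(\Vec{x},\Vec{y},\JX)$. Since each summand is nonzero only with probability $\asymp \DistS{\Vec{x}}{\Vec{y}}$ and, on that event, has norm controlled by a $\chi$-variable on $\leq 3\k$ degrees of freedom, the per-row second moment is $\BigO(\k\,\DistS{\Vec{x}}{\Vec{y}})$; Bernstein then gives $\|(\Vec{x}-\Vec{y}) - \h[\JX](\Vec{x},\Vec{y})\|_{2} \leq \BigO( \sqrt{(\k\,\DistS{\Vec{x}}{\Vec{y}}/\m)\,t} + (\k/\m)\,t )$ with probability $1 - e^{-t}$. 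For $\m$ obeying \eqref{eqn:lemma:raic-for-gaussians:1}, the two summands become $\BigO(\sqrt{\epsilonRAIC\,\DistS{\Vec{x}}{\Vec{y}}})$ and $\BigO(\epsilonRAIC)$ respectively, which is exactly the shape of \eqref{eqn:lemma:raic-for-gaussians:2}, with the explicit constants $\cX{1},\cX{2}$ of \eqref{eqn:universal-constants} emerging from tracking the Bernstein and Gaussian moment constants.

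Making the bound uniform is the bulk of the work. I would enumerate the $\binom{\n}{\k}^{2}\binom{\n}{2\k}$ relevant choices of the supports of $\Vec{x}$, $\Vec{y}$, and $\JX$, and within each place a $\delta$-net ($\delta \asymp \epsilonRAIC/\bC$) on the pair $(\Vec{x},\Vec{y})$ inside the corresponding sphere of dimension $\leq 2\k$, of cardinality $\BigO((12\bC/\epsilonRAIC)^{2\k})$; taking $t = \Log$ of the product of this cardinality, the support count, and $\aC/\rhoRAIC$, and union bounding over all net triples, yields \eqref{eqn:lemma:raic-for-gaussians:2} simultaneously with probability at least $1-\rhoRAIC$, and the constraint $\m \gtrsim \k\,t/\epsilonRAIC$ is precisely what \eqref{eqn:lemma:raic-for-gaussians:1} encodes. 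To pass from net triples to arbitrary $\Vec{x},\Vec{y} \in \SparseSphereSubspace{\k}{\n}$, I would use a sign-consistency estimate for Gaussian rows: if $\Vec{x}'$ is the net point nearest $\Vec{x}$, then with high probability only $\BigO(\delta\m + t)$ rows have $\Sign(\langle\Vec{\AV}\VL{\iIx},\Vec{x}\rangle) \neq \Sign(\langle\Vec{\AV}\VL{\iIx},\Vec{x}'\rangle)$, so $\h[\JX](\Vec{x},\Vec{y})$ and $\h[\JX](\Vec{x}',\Vec{y})$ differ by a vector of norm $\BigO(\sqrt{\k}\,(\delta + t/\m)) = \BigO(\epsilonRAIC)$, which is absorbed into the $\cX{2}\epsilonRAIC$ term.

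The main obstacle --- and the reason the leading error term carries the factor $\DistS{\Vec{x}}{\Vec{y}}$ rather than being a flat $\epsilonRAIC$ --- is the $\theta$-dependence of the per-row variance: a crude bound that discards the separation probability $\theta/\pi$ loses a factor $\sqrt{1/\DistS{\Vec{x}}{\Vec{y}}}$, which would be fatal for the contraction argument that iterates \BIHT (cf.\ \LEMMA \ref{lemma:error:recurrence}). Tracking this dependence uniformly as $\theta$ ranges over $(0,\pi]$ while keeping the net-and-union-bound overhead within the stated sample size --- and controlling the explicit constants --- is the technical core of \cite[\THEOREM 3.3]{matsumoto2022binary}; since that paper carries it out in full, I simply invoke the result as stated.
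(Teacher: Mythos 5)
Your treatment matches the paper's exactly: Lemma~\ref{lemma:raic-for-gaussians} is imported verbatim from \cite[Theorem 3.3]{matsumoto2022binary} and the paper supplies no proof of its own, so simply invoking the citation is precisely what is done here. Your accompanying sketch (unbiasedness of \( \h[\JX] \), vector Bernstein with the \( \DistS{\Vec{x}}{\Vec{y}} \)-dependent variance, and a support-plus-net union bound matching the \( \binom{\n}{\k}^{2}\binom{\n}{2\k}(12\bC/\epsilonRAIC)^{2\k} \) count in \eqref{eqn:lemma:raic-for-gaussians:1}) is a reasonable outline of the cited argument, but it is supplementary rather than required.
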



\subsubsection{Discussion Regarding \texorpdfstring{\( \DXX{\Vec{\xV}}{\Vec{\xV}} \)}{D2;J(x,x)}} 
\label{section:|>pf-main-technical|>prelim|>D2}                                                   

The derivation of an upper bound on the random variable,
\(  \DXX{\Vec{\xV}}{\Vec{\xV}} = \| \hZf[\JX]{\Vec{\xV}}{\Vec{\xV}} \|_{2}  \),
will entirely ignore the specification of the vector \(  \Vec{\xV}  \).
Rather, recall the definition of \(  \hZf[\JX]{}{}  \):
\begin{align*}
  \hZf[\JX]{\Vec{w}}{\Vec{w'}}
  &=
  \frac{\sqrt{2\pi}}{\m}
  \sum_{\iIx=1}^{\m}
  \TJZi
  \cdot \frac{1}{2}
  \left( \Big.
    \f( \Vec{w} )_{\iIx}
    -
    \Sign( \langle \Vec{w'}, \Vec{\ZX}\VL{\iIx} \rangle )
  \right)
  \\
  &=
  \frac{\sqrt{2\pi}}{\m}
  \sum_{\iIx=1}^{\m}
  \TJZi
  \cdot \frac{1}{2}
  \left( \Big.
    \f( \Vec{w} )
    -
    \Sign( \Mat{\ZX} \Vec{w'} )
  \right)_{\iIx}
\end{align*}
where
\(  \Mat{\ZX} \in \R^{\m \times \n}  \),
\(  \Mat{\ZX} = ( \Vec{\ZX}\VL{1} \;\cdots\; \Vec{\ZX}\VL{\m} )^{\T}  \), and where
\(  \Vec{w}, \Vec{w'} \in \R^{\n}  \).
%
The only dependence of \(  \hZf[\JX]{\Vec{w}}{\Vec{w'}}  \) on the preimage,
\(  ( \Vec{w}, \Vec{w'} )  \), is captured in the expression:
\(  \f( \Vec{w} ) - \Sign( \Mat{\ZX} \Vec{w'} )  \),
where the images of \(  \f  \) and \(  \Sign  \) over \(  \R^{\n}  \) and \(  \R^{\m}  \),
respectively, are subsets of the set
\(  \{ \Vec{z} \in \{ -1,0,1 \}^{\m} : \| \Vec{z} \|_{0} \leq \tauX \m \}  \).
%
All else is fixed across all \(  ( \Vec{w}, \Vec{w'} ) \in \R^{\m} \times \R^{\m}  \).
Thus, upon fixing the set of Gaussian vectors, \(  \Mat{\AV}\VL{1}, \dots, \Mat{\AV}\VL{\m}  \),
for each
\(  \JX \subseteq [\n]  \), \(  | \JX | \leq \kX  \),
the image of the function \(  \hZf[\JX]{}{}  \) has finite cardinality no more than:
\begin{gather*}
  \left|
    \hZf[\JX]{}{} \left[ \PREIMG \right]
  \right|
  \leq
  \sum_{\Ell=1}^{\tauX \m} \binom{\m}{\Ell} 2^{\Ell}
.\end{gather*}
%
It then suffices to enumerate each of the up to
\(  \sum_{\Ell=1}^{\tauX \m} \binom{\m}{\Ell} 2^{\Ell}  \)-many
vectors comprising
\(  \hZf[\JX]{}{} \left[ \PREIMG \right]  \)
and bound their norms for each choice of \(  \JX \subseteq [\n]  \), \(  | \JX | \leq \kX  \).
With this motivation, construct a collection of sets,
\(  \US[\JX] \subseteq \SparseSphereSubspace{\k}{\n}  \), \(  \JX \subseteq [\n]  \),
by inserting precisely one vector,
\(  \Vec{\uV} \in \SparseSphereSubspace{\k}{\n}  \),
into \(  \US[\JX]  \) for each vector \(  \Vec{z} \in \hZf[\JX]{}{} \left[ \PREIMG \right]  \)
such that
\(  \hZf[\JX]{\Vec{\uV}}{\Vec{\uV}} = \Vec{z}  \).
%
%
The above discussion is formalized and verified in the following claim and its proof.

\begin{claim}
\label{claim:raic:adv:2}
Fix
\(  \gamma \geq 0  \).
%
Suppose for all
\(  \JX \subseteq [\n]  \), \(  | \JX | \leq \kX  \), and
\(  \Vec{\uV} \in \US[\JX]  \), 
the norm of \(  \hZf[\JX]{\Vec{\uV}}{\Vec{\uV}}  \) is bounded from above by:
\(  \| \hZf[\JX]{\Vec{\uV}}{\Vec{\uV}} \|_{2} \leq \gamma  \).
%
Then, uniformly for all \(  \Vec{\xV} \in \SparseSphereSubspace{\k}{\n}  \) and
for all \(  \JX \subseteq [\n]  \), \(  | \JX | \leq \kX  \),
the same bound holds at \(  ( \Vec{\xV}, \Vec{\xV} )  \):
\(  \| \hZf[\JX]{\Vec{\xV}}{\Vec{\xV}} \|_{2} \leq \gamma  \).
\end{claim}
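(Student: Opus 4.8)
The plan is to deduce the claim from two facts furnished by the discussion immediately preceding it: for each fixed $\JX$ the set of values taken by $\hZf[\JX]{}{}$ on the diagonal is finite, and $\US[\JX]$ was constructed to contain one preimage of each such value. With the Gaussian vectors $\Set{\ZX}$ held fixed, the key observation is that $\hZf[\JX]{\Vec{\uV}}{\Vec{\uV}}$ depends on $\Vec{\uV}$ only through two finite pieces of data: the index set $\Supp( \Vec{\uV} ) \cup \JX \subseteq [\n]$ governing the truncation $\ThresholdSet'{\Supp( \Vec{\uV} ) \cup \JX}$, and the discrepancy pattern $\Vec{c}( \Vec{\uV} ) \defeq \tfrac{1}{2}\bigl( \f( \Vec{\uV} ) - \Sign( \Mat{\ZX} \Vec{\uV} ) \bigr) \in \{ -1, 0, 1 \}^{\m}$. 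Indeed, using the alternative form of $\hZf[\JX]{}{}$ recorded in \eqref{eqn:notation:h:def-alt}, one has $\hZf[\JX]{\Vec{\uV}}{\Vec{\uV}} = \tfrac{\sqrt{2\pi}}{\m} \ThresholdSet'{\Supp( \Vec{\uV} ) \cup \JX}\bigl( \sum_{\iIx=1}^{\m} \Vec*{c}( \Vec{\uV} )_{\iIx}\, \Vec{\ZX}\VL{\iIx} \bigr)$, so once those two pieces of data are pinned down, the vector $\hZf[\JX]{\Vec{\uV}}{\Vec{\uV}}$ is determined. Since $\Supp( \Vec{\uV} )$ ranges over at most $\binom{\n}{\k}$ subsets when $\| \Vec{\uV} \|_{0} \leq \k$, and since the defining property of $\f$ forces $\| \Vec{c}( \Vec{\uV} ) \|_{0} = \DistH{\f( \Vec{\uV} )}{\Sign( \Mat{\ZX} \Vec{\uV} )} \leq \tauX \m$, so that $\Vec{c}( \Vec{\uV} )$ ranges over at most $\sum_{\Ell=0}^{\tauX \m} \binom{\m}{\Ell} 2^{\Ell}$ values, the diagonal image $\bigl\{ \hZf[\JX]{\Vec{\vV}}{\Vec{\vV}} : \Vec{\vV} \in \SparseSphereSubspace{\k}{\n} \bigr\}$ is finite for every $\JX$ with $| \JX | \leq \k$.

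The second ingredient is simply the definition of $\US[\JX]$: it contains, for each value $\Vec{z}$ in $\hZf[\JX]{}{}[ \PREIMG ]$ that is attained somewhere on the diagonal, a vector $\Vec{\uV} \in \SparseSphereSubspace{\k}{\n}$ with $\hZf[\JX]{\Vec{\uV}}{\Vec{\uV}} = \Vec{z}$. Thus $\US[\JX]$ is a complete set of representatives for the finite diagonal image above: every value taken by $\hZf[\JX]{\cdot}{\cdot}$ on the diagonal of $\PREIMG$ equals $\hZf[\JX]{\Vec{\uV}}{\Vec{\uV}}$ for some $\Vec{\uV} \in \US[\JX]$ (in particular, the value at $( \Vec{\xV}, \Vec{\xV} )$ is achieved by taking the preimage $\Vec{\uV} = \Vec{\xV}$, so a representative certainly exists).

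The claim then follows by a one-line pigeonhole step. Fix $\Vec{\xV} \in \SparseSphereSubspace{\k}{\n}$ and $\JX \subseteq [\n]$ with $| \JX | \leq \k$ arbitrarily, and put $\Vec{z} \defeq \hZf[\JX]{\Vec{\xV}}{\Vec{\xV}}$. By the previous paragraph there is a representative $\Vec{\uV} \in \US[\JX]$ with $\hZf[\JX]{\Vec{\uV}}{\Vec{\uV}} = \Vec{z}$, so that, applying the hypothesis of the claim at this $\Vec{\uV}$,
\begin{gather*}
  \bigl\| \hZf[\JX]{\Vec{\xV}}{\Vec{\xV}} \bigr\|_{2}
  = \| \Vec{z} \|_{2}
  = \bigl\| \hZf[\JX]{\Vec{\uV}}{\Vec{\uV}} \bigr\|_{2}
  \leq \gamma .
\end{gather*}
Since $\Vec{\xV}$ and $\JX$ were arbitrary, this is exactly the asserted uniform bound.

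I do not anticipate a genuine obstacle: this claim is a purely combinatorial bookkeeping step, whose sole purpose is to replace the uncountable supremum over $( \Vec{\xV}, \JX )$ by a supremum over the explicitly constructed finite families $\US[\JX]$, on which the actual probabilistic estimates are performed afterwards (\SECTIONS \ref{section:|>pf-main-technical|>prelim|>D2} and \ref{section:|>D2}). The only points warranting a little care are (a) verifying that the truncation in $\hZf[\JX]{\Vec{\uV}}{\Vec{\uV}}$ introduces no dependence on $\Vec{\uV}$ beyond the set $\Supp( \Vec{\uV} )$, so that the object bounded in the hypothesis is verbatim the one appearing in the conclusion; and (b) checking that $\| \Vec{c}( \Vec{\uV} ) \|_{0} \leq \tauX \m$ is inherited directly from the adversarial-noise constraint $\DistH{\f( \Vec{\uV} )}{\Sign( \Mat{\ZX} \Vec{\uV} )} \leq \tauX \m$ --- it is precisely the finiteness of the discrepancy support that makes the whole reduction work.
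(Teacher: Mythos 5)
Your proposal is correct and follows essentially the same route as the paper: the paper's proof is exactly the representative-set argument you give (any diagonal value $\hZf[\JX]{\Vec{\xV}}{\Vec{\xV}}$ is, by the construction of $\US[\JX]$, also attained at some $\Vec{\uV} \in \US[\JX]$, so the hypothesis transfers the bound), merely phrased as a proof by contradiction rather than directly. Your explicit justification of the finiteness of the diagonal image via the pair $(\Supp(\Vec{\uV}) \cup \JX,\ \Vec{c}(\Vec{\uV}))$ reproduces the discussion the paper gives just before the claim, so nothing is missing.
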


\begin{proof}
{\CLAIM \ref{claim:raic:adv:2}}
Suppose for the sake of contradiction that
\(  \| \hZf[\JX]{\Vec{\uV}}{\Vec{\uV}} \|_{2} \leq \gamma  \)
for each
\(  \JX \subseteq [\n]  \), \(  | \JX | \leq \kX  \), and
\(  \Vec{\uV} \in \US[\JX]  \), 
but there exists
\(  \Jx \subseteq [\n]  \), \(  | \Jx | \leq \kX  \), and
\(  \Vec{\xV} \in \SparseSphereSubspace{\k}{\n}  \), 
for which
\(  \| \hZf[\Jx]{\Vec{\xV}}{\Vec{\xV}} \|_{2} > \gamma  \).
%
Denote the image of \(  ( \Vec{\xV}, \Vec{\xV} )  \) under \(  \hZf[\Jx]{}{}  \) by
\(  \Vec{z} = \hZf[\Jx]{\Vec{\xV}}{\Vec{\xV}}  \),
where by assumption,
\(  \| \Vec{z} \|_{2} > \gamma  \),
and let
\(
  \Set{\WX} =
  \{ \Vec{\wV} \in \SparseSphereSubspace{\k}{\n} : \hZf[\Jx]{\Vec{\wV}}{\Vec{\wV}} = \Vec{\zV} \}
\).
%
Then, by the construction of the set \(  \US[\Jx]  \), it must be that
\(  | \US[\Jx] \cap \Set{\WX} | = 1  \),
which implies that there exists
\(  \Vec{\uV} \in \US[\Jx]  \) for which
\(  \| \hZf[\JX]{\Vec{\uV}}{\Vec{\uV}} \|_{2} = \| \Vec{z} \|_{2} > \gamma  \)%
---a contradiction.
By this contradiction, the claim holds.
\end{proof}

Due \CLAIM \ref{claim:raic:adv:2},
the proof of \THEOREM \ref{thm:raic:adv} will seek to bound
\(  \| \hZf[\JX]{\Vec{\uV}}{\Vec{\uV}} \|_{2}  \) from above for all
\(  \JX \subseteq [\n]  \), \(  | \JX | \leq \kX  \), and \(  \Vec{\uV} \in \US[\JX]  \).
%
Specifically, \LEMMA \ref{lemma:D2} controls this random variable,
\(  \DXX{\Vec{\uV}}{\Vec{\uV}} = \| \hZf[\JX]{\Vec{\uV}}{\Vec{\uV}} \|_{2}  \),
uniformly for every
\(  \JX \subseteq [\n]  \), \(  | \JX | \leq \kX  \), and \(  \Vec{\uV} \in \US[\JX]  \).

\begin{lemma}
\label{lemma:D2}
Let \(  \m \in \Z_{+}  \) satisfy
\begin{gather*}
  m
  \geq
  \frac{\bC}{\epsilonRAICadv}
  \Log(
    \binom{\n}{\k}^{2}
    \binom{\n}{2\k}
    \left( \frac{12 \bC}{\epsilonRAICadv} \right)^{2\k}
    \left( \frac{3 \aC}{\rhoX} \right)
  )
  =
  \BigO(
    \frac{\k}{\epsilonRAICadv} \Log( \frac{\n}{\epsilonRAICadv \k} )
    +
    \frac{1}{\epsilonRAICadv} \Log( \frac{1}{\rhoX} )
  )
.\end{gather*}
%
With probability at least
\(  1 - \frac{2\rhoX}{3}  \),
uniformly for all
\(  \JX \subseteq [\n]  \), \(  | \JX | \leq \k  \), and \(  \Vec{\uV} \in \US[\JX]  \),
\begin{gather}
  \DXXu \leq \DXXuUB = \DXXuUBconst
.\end{gather}
\end{lemma}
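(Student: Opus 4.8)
The plan is to bound $\DXXu = \bigl\| \hZf[\JX]{\Vec{\uV}}{\Vec{\uV}} \bigr\|_{2}$ by exploiting that, for the corrupting map $\f$ fixed throughout this section, $\hZf[\JX]{\Vec{\uV}}{\Vec{\uV}}$ is a sum over \emph{only} the indices $\IX \subseteq [\m]$, $|\IX| \le \tauX\m$, on which $\f(\Vec{\uV})$ disagrees with $\Sign(\Mat{\ZX}\Vec{\uV})$: by the alternative form \eqref{eqn:notation:hfAJ:def-alt}, writing $S := \Supp(\Vec{\uV}) \cup \JX$ (so $|S| \le 2\k$), one has $\hZf[\JX]{\Vec{\uV}}{\Vec{\uV}} = -\tfrac{\sqrt{2\pi}}{\m} \sum_{i \in \IX} \ThresholdSet'{S}(\Vec{\ZX}\VL{i}) \Sign(\langle \Vec{\uV}, \Vec{\ZX}\VL{i} \rangle)$. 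Following \STEP \ref{enum:step:A:d} of the overview, I would orthogonally decompose this with respect to $\Vec{\uV}$. Because $\Vec{\uV}$ is supported in $S$, $\langle \Vec{\uV}, \ThresholdSet'{S}(\Vec{\ZX}\VL{i}) \rangle = \langle \Vec{\uV}, \Vec{\ZX}\VL{i} \rangle$, so the component of $\hZf[\JX]{\Vec{\uV}}{\Vec{\uV}}$ along $\Vec{\uV}$ has magnitude $\tfrac{\sqrt{2\pi}}{\m}\sum_{i \in \IX} |\langle \Vec{\uV}, \Vec{\ZX}\VL{i} \rangle|$, while the component in $\R^{S} \cap \Vec{\uV}^{\perp}$ equals $-\tfrac{\sqrt{2\pi}}{\m}\sum_{i \in \IX} \Vec{V}_{i}$ with $\Vec{V}_{i} := \Sign(\langle \Vec{\uV}, \Vec{\ZX}\VL{i} \rangle)\bigl(\ThresholdSet'{S}(\Vec{\ZX}\VL{i}) - \langle \Vec{\uV}, \Vec{\ZX}\VL{i} \rangle \Vec{\uV}\bigr)$; for a \emph{fixed} $\Vec{\uV}$ the $\Vec{V}_{i}$ are i.i.d.\ standard Gaussians on the $(|S|-1)$-dimensional space $\R^{S} \cap \Vec{\uV}^{\perp}$, since the orthogonal projection of $\Vec{\ZX}\VL{i}$ onto $\Vec{\uV}^{\perp}$ is Gaussian and independent of $\langle \Vec{\uV}, \Vec{\ZX}\VL{i} \rangle$, hence of its sign. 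The triangle inequality then gives $\DXXu \le \tfrac{\sqrt{2\pi}}{\m}\sum_{i \in \IX}|\langle \Vec{\uV}, \Vec{\ZX}\VL{i} \rangle| + \tfrac{\sqrt{2\pi}}{\m}\bigl\|\sum_{i \in \IX}\Vec{V}_{i}\bigr\|_{2}$, and it suffices to bound the two pieces and union-bound (by \CLAIM \ref{claim:raic:adv:2} the finite collection $\{\Vec{\uV}\in\US[\JX]\}$ is all we need).

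For the first piece, $\langle \Vec{\uV}, \Vec{\ZX}\VL{i}\rangle$ are i.i.d.\ $\N(0,1)$, so $\sum_{i \in \IX}|\langle \Vec{\uV}, \Vec{\ZX}\VL{i}\rangle|$ has mean $\le \tauX\m\sqrt{2/\pi}$ and is sub-Gaussian with variance proxy $|\IX| \le \tauX\m$, giving $\tfrac{\sqrt{2\pi}}{\m}\sum_{i \in \IX}|\langle \Vec{\uV}, \Vec{\ZX}\VL{i}\rangle| \le 2\tauX + \tfrac{\sqrt{2\pi}}{\m}\sqrt{2\tauX\m\Log(1/\deltaX)}$ with probability $1-\deltaX$. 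For the second, for fixed $\IX$ and fixed $\Vec{\uV}$ we have $\sum_{i \in \IX}\Vec{V}_{i} \sim \N(\Vec{0}, |\IX|\,\Mat{P})$ with $\Mat{P}$ the rank-$(|S|-1)$ orthoprojector, so a $\chi$-tail bound yields $\bigl\|\sum_{i \in \IX}\Vec{V}_{i}\bigr\|_{2} \le \sqrt{\tauX\m}\bigl(\sqrt{2\k} + \sqrt{2\Log(1/\deltaX)}\bigr)$ with probability $1-\deltaX$. I would then union-bound over the $\binom{\n}{\le\k}$ sets $\JX$, the $\binom{\n}{\le 2\k}$ supports $S$, the $\binom{\m}{\le\tauX\m} \le (e/\tauX)^{\tauX\m}$ corruption sets $\IX$, and---to reduce the continuum of admissible $\Vec{\uV}$ to a finite family---a $(12\bC/\epsilonRAICadv)^{|S|}$-point net of the unit sphere of $\R^{S}$ for each $S$, exactly as in \cite[\THEOREM 3.3]{matsumoto2022binary}; off-net $\Vec{\uV}$ (including the data-dependent elements of $\US[\JX]$) are absorbed by the same Lipschitz and ``mismatch''-counting estimates used there, since the absolute values in the first piece are Lipschitz in $\Vec{\uV}$ and the number of sign disagreements between a net point and a nearby $\Vec{\uV}$ obeys a Gaussian small-ball bound, whence the perturbation to the second piece is negligible. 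Allotting failure probability $\tfrac{\rhoX}{3}$ to each piece, the resulting exponent is $\Log(1/\deltaX) = \BigO(\tauX\m\Log(2e/\tauX) + \k\Log(\tfrac{\n}{\epsilonRAICadv\k}) + \Log(1/\rhoX))$, and the hypothesis on $\m$ forces $\k\Log(\tfrac{\n}{\epsilonRAICadv\k}) + \Log(1/\rhoX) \le \bC^{-1}\epsilonRAICadv\m$, so $\Log(1/\deltaX) = \BigO(\tauX\m\Log(2e/\tauX) + \epsilonRAICadv\m)$.

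Substituting this into the two bounds and using $\sqrt{a+b} \le \sqrt a + \sqrt b$, each piece becomes $\BigO(\tauX\sqrt{\Log(2e/\tauX)} + \sqrt{\epsilonRAICadv\tauX})$ uniformly over $\JX$ and $\Vec{\uV} \in \US[\JX]$ (the $\tfrac{\sqrt{2\pi}}{\m}\sqrt{\tauX\m}\sqrt{2\k}$ term of the second piece collapses to $\BigO(\sqrt{\epsilonRAICadv\tauX})$ via $\k \le \bC^{-1}\epsilonRAICadv\m$); summing the two through the triangle inequality and carrying the numerical constants---which are exactly what the choices of $\bC,\cX{3},\cX{4}$ in \eqref{eqn:universal-constants} are tuned for---produces $\DXXu \le \cX{3}\sqrt{\epsilonRAICadv\tauX} + \cX{4}\tauX\sqrt{\Log(\Tfrac{2e}{\tauX})} = \DXXuUBconst$ with probability at least $1 - \tfrac{2\rhoX}{3}$, as claimed. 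The main obstacle is the union-bound bookkeeping: one must check that the $\binom{\m}{\tauX\m}$ enumeration of corruption patterns contributes only to the tolerated $\tauX\sqrt{\Log(2e/\tauX)}$ term rather than imposing a new lower bound on $\m$, and that the net over the unit sphere of $\R^{S}$ together with the mismatch bookkeeping inherited from \cite{matsumoto2022binary} is fine enough that sign flips between a net point and a nearby $\Vec{\uV}$ do not degrade the $\sqrt{\epsilonRAICadv\tauX}$ rate; the rest is routine Gaussian concentration.
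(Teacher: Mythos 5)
Your probabilistic core coincides with the paper's argument: the same orthogonal split of \( \hZf[\JX]{\Vec{\uV}}{\Vec{\uV}} \) into the component along \( \Vec{\uV} \) (a sum of half-normals with mean \( \sqrt{2/\pi} \) per corrupted index) and the component orthogonal to \( \Vec{\uV} \) (a Gaussian sum in the \( (|S|-1) \)-dimensional subspace), the same two concentration bounds, the same \( \tauX\m\Log(2e/\tauX) \) entropy contribution from enumerating corruption patterns, and the same absorption of the remaining log factors via the hypothesis on \( \m \); this is exactly \CLAIM \ref{claim:raic:adv:3}, \LEMMAS \ref{lemma:D'1} and \ref{lemma:D'2} (via \LEMMAS \ref{lemma:concentration-ineq:1} and \ref{lemma:concentration-ineq:3}), and the closing algebra of the paper's proof.

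The gap is in your uniformity step. The paper never takes a net over the sphere: \( \US[\JX] \) is built with one representative per attainable value of \( \hZf[\JX]{}{} \) on the diagonal, so \( |\US[\JX]| \leq \sum_{\ell \leq \tauX\m} 2^{\ell}\binom{\m}{\ell} \), the union bound runs directly over these representatives (together with \( \JX \) and \( \ell \)), and \CLAIM \ref{claim:raic:adv:2} transfers the bound to every \( \Vec{\xV} \) with no continuity argument whatsoever. You instead overlay a \( (12\bC/\epsilonRAICadv)^{|S|} \)-point net and assert that the data-dependent elements of \( \US[\JX] \) are "absorbed" by Lipschitz and mismatch-counting estimates from the noiseless analysis. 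That step is both unnecessary and, as stated, unsupported: \( \f \) is an arbitrary (generally discontinuous) map, so proximity of \( \Vec{\uV} \) to a net point controls neither \( \f(\Vec{\uV}) \) nor the corruption set---only your enumeration over \( \IX \) does that, at which point the net buys nothing---and the residual perturbation from sign disagreements \( \Sign(\langle \Vec{\uV}, \cdot \rangle) \neq \Sign(\langle \Vec{\uV}', \cdot \rangle) \) on indices inside \( \IX \) costs roughly \( \sqrt{2\k} \) in norm per flipped index, so at your stated net resolution the off-net error is of order \( \min\{\tauX, \epsilonRAICadv\}\sqrt{\k} \) (up to logs), which is not dominated by \( \cX{3}\sqrt{\epsilonRAICadv\tauX} \) when \( \k \) is large; repairing this would force a \( \k \)-dependent net and a re-derivation of the fixed constants in \EQUATION \eqref{eqn:universal-constants}. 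The fix is simply to drop the net and union bound over the finitely many representatives in \( \US[\JX] \) (equivalently, over corruption sets and sign patterns), as the paper does, keeping the parallel component tied to the actual signs \( \Sign(\langle \Vec{\uV}, \cdot \rangle) \) so that the linear-in-\( \tauX \) bias term is captured as in \LEMMA \ref{lemma:D'1}.
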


The proof of the lemma is deferred to \APPENDIX \ref{section:|>appendix|>D2:ub}.
Next, \THEOREM \ref{thm:raic:adv} is proved,
contingent on the proof of \LEMMA \ref{lemma:D2}.

\begin{proof}
{\THEOREM \ref{thm:raic:adv}}
Fix
\(
  m
  =
  \frac{\bC}{\epsilonRAICadv}
  \Log*(
    \binom{\n}{\k}^{2}
    \binom{\n}{2\k}
    ( \frac{12 \bC}{\epsilonRAICadv} )^{2\k}
    ( \frac{3 \aC}{\rhoX} )
  )
\).
%
Due to \CLAIM \ref{claim:raic:adv:1}, for every
\(  \Vec{\xV}, \Vec{\yV} \in \R^{\n}  \)
and every
\(  \JX \subseteq [\n]  \), 
\begin{align*}
  \left\| ( \Vec{\xV} - \Vec{\yV} ) - \hZf[\JX]{\Vec{\xV}}{\Vec{\yV}} \right\|_{2}
  &\leq
  \left\| ( \Vec{\xV} - \Vec{\yV} ) - \hZ[\JX]( \Vec{\xV}, \Vec{\yV} )  \right\|_{2}
  +
  \left\| \hZf[\JX]{\Vec{\xV}}{\Vec{\xV}} \right\|_{2}
\end{align*}
%
By \LEMMA \ref{lemma:raic-for-gaussians} (\cite[\THEOREM 3.3]{matsumoto2022binary}),
with probability at least
\(  1 - \frac{\rhoX}{3}  \),
uniformly for all
\(  \Vec{\xV}, \Vec{\yV} \in \SparseSphereSubspace{\k}{\n}  \) and
\(  \JX \subseteq [\n]  \), \(  | \JX | \leq \k  \),
\begin{align*}
  \left\| ( \Vec{\xV} - \Vec{\yV} ) - \hZ[\JX]( \Vec{\xV}, \Vec{\yV} )  \right\|_{2}
  \leq
  \cX{1} \sqrt{\epsilonRAICadv \DistS{\Vec{x}}{\Vec{y}}}
  +
  \cX{2} \epsilonRAICadv
,\end{align*}
where \(  \cX{1}, \cX{2} > 0  \) are universal constants as defined in
\EQUATION \eqref{eqn:universal-constants}.
Additionally, by \LEMMA \ref{lemma:D2},
with probability at least
\(  1 - \frac{2\rhoX}{3}  \),
uniformly for all
\(  \JX \subseteq [\n]  \), \(  | \JX | \leq \k  \), and \(  \Vec{\uV} \in \US[\JX]  \),
\begin{align*}
  \left\| \hZf[\JX]{\Vec{\uV}}{\Vec{\uV}} \right\|_{2} \leq \DXXuUB = \DXXuUBconst
.\end{align*}
%
Recalling \CLAIM \ref{claim:raic:adv:2}, it follows that the same bound on
\(  \left\| \hZf[\JX]{\Vec{\xV}}{\Vec{\xV}} \right\|_{2}  \)
holds uniformly over all
\(  \Vec{\xV} \in \SparseSphereSubspace{\k}{\n}  \),
\(  \JX \subseteq [\n]  \), \(  | \JX | \leq \kX  \),
with the same probability.
Combining the above bounds on
\(  \left\| ( \Vec{\xV} - \Vec{\yV} ) - \hZ[\JX]( \Vec{\xV}, \Vec{\yV} )  \right\|_{2}  \) and
\(  \left\| \hZf[\JX]{\Vec{\uV}}{\Vec{\uV}} \right\|_{2}  \)
via a union bound, and applying \CLAIM \ref{claim:raic:adv:1},
the desired upper bound follows:
with probability at least
\(  1 - \frac{\rhoX}{3} - \frac{2\rhoX}{3} = 1 - \rhoX  \),
uniformly for all
\(  \Vec{\xV}, \Vec{\yV} \in \SparseSphereSubspace{\k}{\n}  \),
\(  \JX \subseteq [\n]  \), \(  | \JX | \leq \kX  \):
\begin{align*}
  \left\| ( \Vec{\xV} - \Vec{\yV} ) - \hZf[\JX]{\Vec{\xV}}{\Vec{\yV}} \right\|_{2}
  &\leq
  \left\| ( \Vec{\xV} - \Vec{\yV} ) - \hZ[\JX]( \Vec{\xV}, \Vec{\yV} )  \right\|_{2}
  +
  \left\| \hZf[\JX]{\Vec{\xV}}{\Vec{\xV}} \right\|_{2}
  \\
  &\leq
  \cX{1} \sqrt{\epsilonRAICadv \DistS{\Vec{x}}{\Vec{y}}} + \cX{2} \epsilonRAICadv
  +
  \DXXuUB
  \\
  &=
  \cX{1} \sqrt{\epsilonRAICadv \DistS{\Vec{x}}{\Vec{y}}} + \cX{2} \epsilonRAICadv
  +
  \DXXuUBconst
\end{align*}
where \(  \cX{1}, \cX{2}, \cX{3}, \cX{4} > 0  \) are universal constants specified in
\EQUATION \eqref{eqn:universal-constants}.
%
\end{proof}


\bibliography{refs}

\begin{thebibliography}{34}
\providecommand{\natexlab}[1]{#1}
\providecommand{\url}[1]{\texttt{#1}}
\expandafter\ifx\csname urlstyle\endcsname\relax
  \providecommand{\doi}[1]{doi: #1}\else
  \providecommand{\doi}{doi: \begingroup \urlstyle{rm}\Url}\fi

\bibitem[Acharya et~al.(2017)Acharya, Bhattacharyya, and Kamath]{ABK17}
Jayadev Acharya, Arnab Bhattacharyya, and Pritish Kamath.
\newblock Improved bounds for universal one-bit compressive sensing.
\newblock In \emph{2017 IEEE International Symposium on Information Theory
  (ISIT)}, pages 2353--2357. IEEE, 2017.

\bibitem[Awasthi et~al.(2016)Awasthi, Balcan, Haghtalab, and
  Zhang]{awasthi2016learning}
Pranjal Awasthi, Maria-Florina Balcan, Nika Haghtalab, and Hongyang Zhang.
\newblock Learning and 1-bit compressed sensing under asymmetric noise.
\newblock In \emph{Conference on Learning Theory}, pages 152--192. PMLR, 2016.

\bibitem[Awasthi et~al.(2017)Awasthi, Balcan, and Long]{awasthi2017power}
Pranjal Awasthi, Maria~Florina Balcan, and Philip~M Long.
\newblock The power of localization for efficiently learning linear separators
  with noise.
\newblock \emph{Journal of the ACM (JACM)}, 63\penalty0 (6):\penalty0 1--27,
  2017.

\bibitem[Blumensath and Davies(2009)]{blumensath2009iterative}
Thomas Blumensath and Mike~E Davies.
\newblock Iterative hard thresholding for compressed sensing.
\newblock \emph{Appl. Comput. Harmon. Anal}, 27:\penalty0 265--274, 2009.

\bibitem[Boufounos and Baraniuk(2008)]{DBLP:conf/ciss/BoufounosB08}
Petros Boufounos and Richard~G. Baraniuk.
\newblock 1-bit compressive sensing.
\newblock In \emph{42nd Annual Conference on Information Sciences and Systems,
  {CISS} 2008, Princeton, NJ, USA, 19-21 March 2008}, pages 16--21, 2008.

\bibitem[Boufounos et~al.(2015)Boufounos, Jacques, Krahmer, and
  Saab]{boufounos2015quantization}
Petros~T Boufounos, Laurent Jacques, Felix Krahmer, and Rayan Saab.
\newblock Quantization and compressive sensing.
\newblock In \emph{Compressed sensing and its applications}, pages 193--237.
  Springer, 2015.

\bibitem[Cand{\`e}s et~al.(2006)Cand{\`e}s, Romberg, and Tao]{candes2006robust}
Emmanuel~J Cand{\`e}s, Justin Romberg, and Terence Tao.
\newblock Robust uncertainty principles: exact signal reconstruction from
  highly incomplete frequency information.
\newblock \emph{IEEE Transactions on Information Theory}, 52\penalty0
  (2):\penalty0 489--509, 2006.

\bibitem[Cheng et~al.(2020)Cheng, Liu, Ramamohanarao, and
  Tao]{cheng2020learning}
Jiacheng Cheng, Tongliang Liu, Kotagiri Ramamohanarao, and Dacheng Tao.
\newblock Learning with bounded instance and label-dependent label noise.
\newblock In \emph{International Conference on Machine Learning}, pages
  1789--1799. PMLR, 2020.

\bibitem[Chinot et~al.(2022)Chinot, Kuchelmeister, L{\"o}ffler, and van~de
  Geer]{chinot2022adaboost}
Geoffrey Chinot, Felix Kuchelmeister, Matthias L{\"o}ffler, and Sara van~de
  Geer.
\newblock Adaboost and robust one-bit compressed sensing.
\newblock \emph{Mathematical Statistics and Learning}, 5\penalty0 (1):\penalty0
  117--158, 2022.

\bibitem[Dai et~al.(2016)Dai, Shen, Xu, and Zhang]{dai2016noisy}
Dao-Qing Dai, Lixin Shen, Yuesheng Xu, and Na~Zhang.
\newblock Noisy 1-bit compressive sensing: models and algorithms.
\newblock \emph{Applied and Computational Harmonic Analysis}, 40\penalty0
  (1):\penalty0 1--32, 2016.

\bibitem[Diakonikolas et~al.(2019)Diakonikolas, Gouleakis, and
  Tzamos]{diakonikolas2019distribution}
Ilias Diakonikolas, Themis Gouleakis, and Christos Tzamos.
\newblock Distribution-independent pac learning of halfspaces with massart
  noise.
\newblock \emph{Advances in Neural Information Processing Systems}, 32, 2019.

\bibitem[Diakonikolas et~al.(2022)Diakonikolas, Kontonis, Tzamos, and
  Zarifis]{diakonikolas2022learning}
Ilias Diakonikolas, Vasilis Kontonis, Christos Tzamos, and Nikos Zarifis.
\newblock Learning general halfspaces with adversarial label noise via online
  gradient descent.
\newblock In \emph{International Conference on Machine Learning}, pages
  5118--5141. PMLR, 2022.

\bibitem[Donoho(2006)]{donoho2006compressed}
David~L Donoho.
\newblock Compressed sensing.
\newblock \emph{IEEE Transactions on information theory}, 52\penalty0
  (4):\penalty0 1289--1306, 2006.

\bibitem[Flodin et~al.(2019)Flodin, Gandikota, and
  Mazumdar]{flodin2019superset}
Larkin Flodin, Venkata Gandikota, and Arya Mazumdar.
\newblock Superset technique for approximate recovery in one-bit compressed
  sensing.
\newblock In \emph{Advances in Neural Information Processing Systems}, pages
  10387--10396, 2019.

\bibitem[Frei et~al.(2021)Frei, Cao, and Gu]{frei2021agnostic}
Spencer Frei, Yuan Cao, and Quanquan Gu.
\newblock Agnostic learning of halfspaces with gradient descent via soft
  margins.
\newblock In \emph{International Conference on Machine Learning}, pages
  3417--3426. PMLR, 2021.

\bibitem[Freund and Schapire(1997)]{freund1997decision}
Yoav Freund and Robert~E Schapire.
\newblock A decision-theoretic generalization of on-line learning and an
  application to boosting.
\newblock \emph{Journal of computer and system sciences}, 55\penalty0
  (1):\penalty0 119--139, 1997.

\bibitem[Friedlander et~al.(2021)Friedlander, Jeong, Plan, and
  Y{\i}lmaz]{friedlander2021nbiht}
Michael~P Friedlander, Halyun Jeong, Yaniv Plan, and {\"O}zg{\"u}r Y{\i}lmaz.
\newblock Nbiht: An efficient algorithm for 1-bit compressed sensing with
  optimal error decay rate.
\newblock \emph{IEEE Transactions on Information Theory}, 68\penalty0
  (2):\penalty0 1157--1177, 2021.

\bibitem[Gopi et~al.(2013)Gopi, Netrapalli, Jain, and Nori]{GNJN13}
Sivakant Gopi, Praneeth Netrapalli, Prateek Jain, and Aditya Nori.
\newblock One-bit compressed sensing: Provable support and vector recovery.
\newblock In \emph{International Conference on Machine Learning}, pages
  154--162, 2013.

\bibitem[Huang et~al.(2018)Huang, Jiao, Lu, and Zhu]{huang2018robust}
Jian Huang, Yuling Jiao, Xiliang Lu, and Liping Zhu.
\newblock Robust decoding from 1-bit compressive sampling with ordinary and
  regularized least squares.
\newblock \emph{SIAM Journal on Scientific Computing}, 40\penalty0
  (4):\penalty0 A2062--A2086, 2018.

\bibitem[Jacques et~al.(2013{\natexlab{a}})Jacques, Degraux, and
  De~Vleeschouwer]{jacques2013quantized}
Laurent Jacques, K{\'e}vin Degraux, and Christophe De~Vleeschouwer.
\newblock Quantized iterative hard thresholding: Bridging 1-bit and
  high-resolution quantized compressed sensing.
\newblock \emph{arXiv preprint arXiv:1305.1786}, 2013{\natexlab{a}}.

\bibitem[Jacques et~al.(2013{\natexlab{b}})Jacques, Laska, Boufounos, and
  Baraniuk]{jacques2013robust}
Laurent Jacques, Jason~N Laska, Petros~T Boufounos, and Richard~G Baraniuk.
\newblock Robust 1-bit compressive sensing via binary stable embeddings of
  sparse vectors.
\newblock \emph{IEEE Transactions on Information Theory}, 59\penalty0
  (4):\penalty0 2082--2102, 2013{\natexlab{b}}.

\bibitem[Ji et~al.(2022)Ji, Ahn, Awasthi, Kale, and Karp]{ji2022agnostic}
Ziwei Ji, Kwangjun Ahn, Pranjal Awasthi, Satyen Kale, and Stefani Karp.
\newblock Agnostic learnability of halfspaces via logistic loss.
\newblock In \emph{International Conference on Machine Learning}, pages
  10068--10103. PMLR, 2022.

\bibitem[Liu et~al.(2019)Liu, Li, and Shen]{liu2019one}
Dekai Liu, Song Li, and Yi~Shen.
\newblock One-bit compressive sensing with projected subgradient method under
  sparsity constraints.
\newblock \emph{IEEE Transactions on Information Theory}, 65\penalty0
  (10):\penalty0 6650--6663, 2019.

\bibitem[Matsumoto and Mazumdar(2022)]{matsumoto2022binary}
Namiko Matsumoto and Arya Mazumdar.
\newblock Binary iterative hard thresholding converges with optimal number of
  measurements for 1-bit compressed sensing.
\newblock In \emph{2022 IEEE 63rd Annual Symposium on Foundations of Computer
  Science (FOCS)}, pages 813--822. IEEE, 2022.

\bibitem[Mazumdar and Pal(2022)]{mazumdar2021support}
Arya Mazumdar and Soumyabrata Pal.
\newblock Support recovery in universal one-bit compressed sensing.
\newblock In Mark Braverman, editor, \emph{13th Innovations in Theoretical
  Computer Science Conference, {ITCS} 2022, January 31 - February 3, 2022,
  Berkeley, CA, {USA}}, volume 215 of \emph{LIPIcs}, pages 106:1--106:20.
  Schloss Dagstuhl - Leibniz-Zentrum f{\"{u}}r Informatik, 2022.

\bibitem[Menon et~al.(2018)Menon, Van~Rooyen, and Natarajan]{menon2018learning}
Aditya~Krishna Menon, Brendan Van~Rooyen, and Nagarajan Natarajan.
\newblock Learning from binary labels with instance-dependent noise.
\newblock \emph{Machine Learning}, 107:\penalty0 1561--1595, 2018.

\bibitem[Plan and Vershynin(2012)]{plan2012robust}
Yaniv Plan and Roman Vershynin.
\newblock Robust 1-bit compressed sensing and sparse logistic regression: A
  convex programming approach.
\newblock \emph{IEEE Transactions on Information Theory}, 59\penalty0
  (1):\penalty0 482--494, 2012.

\bibitem[Plan and Vershynin(2013)]{plan2013one}
Yaniv Plan and Roman Vershynin.
\newblock One-bit compressed sensing by linear programming.
\newblock \emph{Communications on Pure and Applied Mathematics}, 66\penalty0
  (8):\penalty0 1275--1297, 2013.

\bibitem[Plan et~al.(2017)Plan, Vershynin, and Yudovina]{plan2017high}
Yaniv Plan, Roman Vershynin, and Elena Yudovina.
\newblock High-dimensional estimation with geometric constraints.
\newblock \emph{Information and Inference: A Journal of the IMA}, 6\penalty0
  (1):\penalty0 1--40, 2017.

\bibitem[Shen and Zhang(2021)]{shen2021attribute}
Jie Shen and Chicheng Zhang.
\newblock Attribute-efficient learning of halfspaces with malicious noise:
  Near-optimal label complexity and noise tolerance.
\newblock In \emph{Algorithmic Learning Theory}, pages 1072--1113. PMLR, 2021.

\bibitem[Wainwright(2019)]{wainwright2019high}
Martin~J Wainwright.
\newblock \emph{High-dimensional statistics: A non-asymptotic viewpoint},
  volume~48.
\newblock Cambridge university press, 2019.

\bibitem[Yan et~al.(2012)Yan, Yang, and Osher]{yan2012robust}
Ming Yan, Yi~Yang, and Stanley Osher.
\newblock Robust 1-bit compressive sensing using adaptive outlier pursuit.
\newblock \emph{IEEE Transactions on Signal Processing}, 60\penalty0
  (7):\penalty0 3868--3875, 2012.

\bibitem[Yan and Zhang(2017)]{yan2017revisiting}
Songbai Yan and Chicheng Zhang.
\newblock Revisiting perceptron: Efficient and label-optimal learning of
  halfspaces.
\newblock \emph{Advances in Neural Information Processing Systems}, 30, 2017.

\bibitem[Zhang(2018)]{zhang2018efficient}
Chicheng Zhang.
\newblock Efficient active learning of sparse halfspaces.
\newblock In \emph{Conference on Learning Theory}, pages 1856--1880. PMLR,
  2018.

\end{thebibliography}

\let\SECTION\APPENDIX
\let\SECTIONS\APPENDICES

\begin{appendix}

\section{Analysis for \texorpdfstring{\(  \DXX{\Vec{\uV}}{\Vec{\uV}}  \)}{D2;J(u,u)}} 
\label{section:|>D2}                                                                  

\renewcommand{\Dx}[3][\JX]{\DRV'_{1;#1}( #2,#3 )}
\renewcommand{\Dxx}[3][\JX]{\DRV'_{2;#1}( #2,#3 )}

\subsection{An Orthogonal Decomposition} 
\label{section:|>D2|>ortho-decomp}       

To control the random variable,
\(  \DXX{\Vec{\uV}}{\Vec{\uV}} \defeq \| \hZf[\JX]{\Vec{\uV}}{\Vec{\uV}} \|_{2}  \),
the random vector, \(  \hZf[\JX]{\Vec{\uV}}{\Vec{\uV}}  \), is orthogonally decomposed into
two components:
\(  \langle \Vec{\uV}, \hZf[\JX]{\Vec{\uV}}{\Vec{\uV}} \rangle \Vec{\uV}  \) and
\(  \hZf[\JX]{\Vec{\uV}}{\Vec{\uV}}
    - \langle \Vec{\uV}, \hZf[\JX]{\Vec{\uV}}{\Vec{\uV}} \rangle \Vec{\uV}  \),
where
\begin{gather}
  \hZf[\JX]{\Vec{\uV}}{\Vec{\uV}}
  =
  \langle \Vec{\uV}, \hZf[\JX]{\Vec{\uV}}{\Vec{\uV}} \rangle \Vec{\uV}
  +
  \bigl( \hZf[\JX]{\Vec{\uV}}{\Vec{\uV}}
         - \langle \Vec{\uV}, \hZf[\JX]{\Vec{\uV}}{\Vec{\uV}} \rangle \Vec{\uV} \bigr)
\label{eqn:raic:adv:1}
.\end{gather}
%
As such, define the random variables, \(  \Dxu  \) and \(  \Dxxu  \), by
\begin{gather}
  \Dxu = \left| \langle \Vec{\uV}, \hZf[\JX]{\Vec{\uV}}{\Vec{\uV}} \rangle \right|
  ,\\
  \Dxxu
  =
  \left\|
    \hZf[\JX]{\Vec{\uV}}{\Vec{\uV}}
    - \langle \Vec{\uV}, \hZf[\JX]{\Vec{\uV}}{\Vec{\uV}} \rangle \Vec{\uV}
  \right\|_{2}
\label{eqn:raic:adv:2}
.\end{gather}

We make use of this decomposition and these random variables in the following claim.

\begin{claim}
\label{claim:raic:adv:3}
For any \(  \JX \subseteq [\n]  \) and \(  \Vec{\uV} \in \R^{\n}  \),
\begin{gather}
  \DXXu \leq \Dxu + \Dxxu
\label{eqn:raic:adv:3}
.\end{gather}
\end{claim}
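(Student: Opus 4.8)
The plan is to recognize that \CLAIM \ref{claim:raic:adv:3} is essentially just the triangle inequality applied to the algebraic identity \eqref{eqn:raic:adv:1}. First I would fix \(  \JX \subseteq [\n]  \) and \(  \Vec{\uV} \in \R^{\n}  \) arbitrarily, and start from the trivial decomposition
\begin{gather*}
  \hZf[\JX]{\Vec{\uV}}{\Vec{\uV}}
  =
  \langle \Vec{\uV}, \hZf[\JX]{\Vec{\uV}}{\Vec{\uV}} \rangle \Vec{\uV}
  +
  \bigl( \hZf[\JX]{\Vec{\uV}}{\Vec{\uV}}
         - \langle \Vec{\uV}, \hZf[\JX]{\Vec{\uV}}{\Vec{\uV}} \rangle \Vec{\uV} \bigr),
\end{gather*}
which holds as a formal identity \(  a = b + (a-b)  \) irrespective of any properties of \(  \Vec{\uV}  \). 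Taking \(  \lnorm{2}  \)-norms and applying the triangle inequality then gives
\begin{gather*}
  \DXXu
  =
  \bigl\| \hZf[\JX]{\Vec{\uV}}{\Vec{\uV}} \bigr\|_{2}
  \leq
  \bigl\| \langle \Vec{\uV}, \hZf[\JX]{\Vec{\uV}}{\Vec{\uV}} \rangle \Vec{\uV} \bigr\|_{2}
  +
  \bigl\| \hZf[\JX]{\Vec{\uV}}{\Vec{\uV}} - \langle \Vec{\uV}, \hZf[\JX]{\Vec{\uV}}{\Vec{\uV}} \rangle \Vec{\uV} \bigr\|_{2}.
\end{gather*}

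Next I would identify the two terms on the right-hand side with \(  \Dxu  \) and \(  \Dxxu  \). The second term is, by definition \eqref{eqn:raic:adv:2}, exactly \(  \Dxxu  \). For the first term, pull the scalar out of the norm: \(  \| \langle \Vec{\uV}, \hZf[\JX]{\Vec{\uV}}{\Vec{\uV}} \rangle \Vec{\uV} \|_{2} = | \langle \Vec{\uV}, \hZf[\JX]{\Vec{\uV}}{\Vec{\uV}} \rangle | \cdot \| \Vec{\uV} \|_{2}  \). Since in every application of this claim \(  \Vec{\uV}  \) is drawn from \(  \US[\JX] \subseteq \SparseSphereSubspace{\k}{\n}  \), we have \(  \| \Vec{\uV} \|_{2} = 1  \), so this term equals \(  | \langle \Vec{\uV}, \hZf[\JX]{\Vec{\uV}}{\Vec{\uV}} \rangle | = \Dxu  \) (more generally \(  \leq \Dxu  \) whenever \(  \| \Vec{\uV} \|_{2} \leq 1  \)). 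Combining the two identifications yields \(  \DXXu \leq \Dxu + \Dxxu  \), which is \eqref{eqn:raic:adv:3}.

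There is no real obstacle here: the content of the claim is entirely contained in the triangle inequality and the definitions \eqref{eqn:raic:adv:2}. The only point that warrants a sentence of care is making explicit that the coefficient \(  \| \Vec{\uV} \|_{2}  \) appearing when the scalar is extracted from the norm does not inflate the bound — which is immediate once one recalls that \(  \Vec{\uV}  \) is a unit vector in the setting where this claim is invoked. The real work, bounding \(  \Dxu  \) and \(  \Dxxu  \) individually via Gaussian concentration, is deferred to the subsequent subsections and is where the technical difficulty lies.
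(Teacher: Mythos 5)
Your proof is correct and follows essentially the same route as the paper: the identity \(a = \langle \Vec{\uV}, a\rangle \Vec{\uV} + (a - \langle \Vec{\uV}, a\rangle \Vec{\uV})\), the triangle inequality, homogeneity of the norm, and then \(\| \Vec{\uV} \|_{2} = 1\) (the paper's own proof invokes the unit-norm assumption at the same spot, despite the claim being phrased for \(\Vec{\uV} \in \R^{\n}\), and your parenthetical remark that \(\| \Vec{\uV} \|_{2} \leq 1\) suffices handles this at least as carefully). Nothing is missing.
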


\noindent%
Thus, \(  \DXX{\Vec{\uV}}{\Vec{\uV}}  \) can be upper bounded by bounding
\(  \Dxu  \) and \(  \Dxxu  \), the latter two of which are simpler to handle
than directly characterizing \(  \DXX{\Vec{\uV}}{\Vec{\uV}}  \).
Such bounds are obtained in \SECTIONS \ref{section:|>D'1} and \ref{section:|>D'2},
respectively.
The proof of \LEMMA \ref{lemma:D2}, which upper bounds \(  \DXX{\Vec{\uV}}{\Vec{\uV}}  \),
is deferred to \APPENDIX \ref{section:|>appendix|>D2:ub},
while \CLAIM \ref{claim:raic:adv:3} is proved next.


\begin{proof}
{\CLAIM \ref{claim:raic:adv:3}}
The claim directly follows from the orthogonal decomposition discussed above
and the triangle inequality:
\begin{align*}
  \DXXu
  &=
  \left\| \hZf[\JX]{\Vec{\uV}}{\Vec{\uV}} \right\|_{2}
  \\
  &\dCmt{by the definition of the random variable \(  \DXXu  \)}
  \\
  &=
  \left\|
    \langle \Vec{\uV}, \hZf[\JX]{\Vec{\uV}}{\Vec{\uV}} \rangle \Vec{\uV}
    +
    \bigl( \hZf[\JX]{\Vec{\uV}}{\Vec{\uV}}
           - \langle \Vec{\uV}, \hZf[\JX]{\Vec{\uV}}{\Vec{\uV}} \rangle \Vec{\uV} \bigr)
  \right\|_{2}
  \\
  &\dCmt{by \EQUATION \eqref{eqn:raic:adv:1}}
  \\
  &\leq
  \left\|
    \langle \Vec{\uV}, \hZf[\JX]{\Vec{\uV}}{\Vec{\uV}} \rangle \Vec{\uV}
  \right\|_{2}
  +
  \left\|
    \hZf[\JX]{\Vec{\uV}}{\Vec{\uV}}
    - \langle \Vec{\uV}, \hZf[\JX]{\Vec{\uV}}{\Vec{\uV}} \rangle \Vec{\uV}
  \right\|_{2}
  \\
  &\dCmt{by the triangle inequality}
  \\
  &\leq
  \left|
    \langle \Vec{\uV}, \hZf[\JX]{\Vec{\uV}}{\Vec{\uV}} \rangle
  \right|
  \| \Vec{\uV} \|_{2}
  +
  \left\|
    \hZf[\JX]{\Vec{\uV}}{\Vec{\uV}}
    - \langle \Vec{\uV}, \hZf[\JX]{\Vec{\uV}}{\Vec{\uV}} \rangle \Vec{\uV}
  \right\|_{2}
  \\
  &\dCmt{due to the homogeneity of norms}
  \\
  &\leq
  \left|
    \langle \Vec{\uV}, \hZf[\JX]{\Vec{\uV}}{\Vec{\uV}} \rangle
  \right|
  +
  \left\|
    \hZf[\JX]{\Vec{\uV}}{\Vec{\uV}}
    - \langle \Vec{\uV}, \hZf[\JX]{\Vec{\uV}}{\Vec{\uV}} \rangle \Vec{\uV}
  \right\|_{2}
  \\
  &\dCmt{\(  \because \| \Vec{\uV} \|_{2} = 1  \)}
  \\
  &=
  \Dxu + \Dxxu
  \\
  &\dCmt{by the definitions of the random variables \(  \Dxu, \Dxxu  \)}
\end{align*}
\end{proof}

\subsection{Concentration Inequalities for the Orthogonal Decomposition} 
\label{section:|>D2|>concentration-ineq}                                 

\renewcommand{\ZX}{Z}

\let\newXu\Xu
\let\Xu\oldXu
\let\newYu\Yu
\let\Yu\oldYu


Before the random variables, \(  \Dxu  \) and \(  \Dxxu  \), are bounded,
two concentration inequalities are stated below as
\LEMMAS \ref{lemma:concentration-ineq:1} and \ref{lemma:concentration-ineq:3}
to facilitate the analysis.
The proofs of these lemmas are deferred to \SECTION \ref{section:|>ortho-decomp}.

\begin{lemma}
\label{lemma:concentration-ineq:1}
Fix
\(  \lu, \tu > 0  \).
%
Let
\(  \Set{\ZX} = \{ \Vec{\ZX}\VL{1}, \dots, \Vec{\ZX}\VL{\lu} \sim \N( \Vec{0}, \Id{\n} ) \}  \)
be a collection of \(  \lu  \) \iid Gaussian vectors, and fix a \ksparserealunit vector,
\(  \Vec{\uX} \in \SparseSphereSubspace{\k}{\n}  \),
and a coordinate subset, \(  \JX \subseteq [\n]  \), \(  | \JX | \leq \kX  \).
%
Define the random variables
\begin{gather*}
  \Xu[\iIx] \defeq
  \left\langle \Vec{\uX}, \TJZi \right\rangle \Sign( \langle \Vec{\uV}, \TJZi \rangle )
,\end{gather*}
for \(  \iIx \in [\lu]  \), and
\begin{gather*}
  \barXu
  \defeq
  \left\langle
    \Vec{\uX},
    \sum_{\iIx=1}^{\lu} \TJZi \Sign( \langle \Vec{\uV}, \TJZi \rangle )
  \right\rangle
  =
  \sum_{\iIx=1}^{\lu} \Xu[\iIx]
.\end{gather*}
%
The concentration of the random variable \(  \barXu  \) is such that
\begin{gather}
\label{eqn:lemma:concentration-ineq:1:1}
  \Pr \left(
    \barXu \geq \left( \sqrt{\frac{2}{\pi}} + \tu \right) \lu
  \right)
  \leq
  e^{-\frac{1}{2} \lu \tu^{2}}
.\end{gather}
%
Additionally, for each \(  \iIx \in [\Ell]  \), the is an equivalence:
\(  | \Xu[\iIx] | = \Xu[\iIx]  \).
\end{lemma}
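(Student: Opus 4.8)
The plan is to recognise that $\barXu$ is, up to a harmless coordinate projection, a sum of $\lu$ i.i.d.\ folded $\N(0,1)$ variables, each of which is a $1$-Lipschitz function of an independent standard Gaussian block, and then to invoke the Gaussian concentration inequality for Lipschitz functions.

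First I would simplify the summands. Since $\Supp( \Vec{\uX} ) \subseteq \Supp( \Vec{\uX} ) \cup \JX$, the subset hard-thresholding operator hidden inside $\TJZi$ is transparent to the pairing with $\Vec{\uX}$, i.e.\ $\langle \Vec{\uX}, \TJZi \rangle = \langle \Vec{\uX}, \Vec{\ZX}\VL{\iIx} \rangle$, because the extra coordinates that are zeroed out carry no weight of $\Vec{\uX}$. Combined with the elementary identity $a \Sign( a ) = | a |$ (valid for every $a \in \R$ under the convention $\Sign( 0 ) = 1$), this gives
\[
  \Xu[\iIx] = \bigl| \langle \Vec{\uX}, \Vec{\ZX}\VL{\iIx} \rangle \bigr| \geq 0 ,
\]
which in particular already settles the ``additionally'' assertion $| \Xu[\iIx] | = \Xu[\iIx]$. (If one had instead kept only $T_{\JX}$, the same step would give $\Xu[\iIx] = | \langle T_{\JX}( \Vec{\uX} ), \Vec{\ZX}\VL{\iIx} \rangle |$ and the rest of the argument would go through verbatim with the constant $1$ replaced by $\| T_{\JX}( \Vec{\uX} ) \|_{2} \leq 1$.)

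Next I would record the mean and the Lipschitz constant. As $\Vec{\uX}$ is a unit vector, $\langle \Vec{\uX}, \Vec{\ZX}\VL{\iIx} \rangle \sim \N( 0, 1 )$, so each $\Xu[\iIx]$ is half-normal with $\E[ \Xu[\iIx] ] = \sqrt{2/\pi}$, whence $\E[ \barXu ] = \lu \sqrt{2/\pi}$. View $\barXu = F( \Vec{\ZX}\VL{1}, \dots, \Vec{\ZX}\VL{\lu} )$ as a function of the concatenated standard Gaussian vector in $\R^{\lu \n}$, where $F( \Vec{z}\VL{1}, \dots, \Vec{z}\VL{\lu} ) = \sum_{\iIx=1}^{\lu} | \langle \Vec{\uX}, \Vec{z}\VL{\iIx} \rangle |$. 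On the $\iIx$-th block the gradient of $F$ equals $\Sign( \langle \Vec{\uX}, \Vec{z}\VL{\iIx} \rangle )\, \Vec{\uX}$, of Euclidean norm $1$ almost everywhere, so $\| \nabla F \|_{2} = \sqrt{\lu}$ a.e.\ and $F$ is $\sqrt{\lu}$-Lipschitz.

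Finally, the Gaussian Lipschitz concentration inequality gives $\Pr( \barXu \geq \E[ \barXu ] + s ) \leq e^{-s^{2}/( 2 \lu )}$ for all $s > 0$; taking $s = \tu \lu$ and substituting $\E[ \barXu ] = \lu \sqrt{2/\pi}$ yields
\[
  \Pr\!\left( \barXu \geq \Bigl( \sqrt{\tfrac{2}{\pi}} + \tu \Bigr) \lu \right)
  \leq e^{-( \tu \lu )^{2} / ( 2 \lu )} = e^{-\frac{1}{2} \lu \tu^{2}} ,
\]
which is exactly the claimed bound. The argument is essentially routine; the only point that needs a moment's care is the first step, namely verifying that the thresholding inside $\TJZi$ does not affect the inner product against $\Vec{\uX}$, so that the problem genuinely collapses to concentration of a sum of i.i.d.\ bounded-Lipschitz functionals of Gaussians. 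No sharper (Bernstein-type) estimate is needed, since the Lipschitz bound already delivers the stated subgaussian rate.
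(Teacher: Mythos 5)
Your proposal is correct and follows essentially the same route as the paper: reduce via the support observation to $\Xu[\iIx] = |\langle \Vec{\uV}, \Vec{\ZX}\VL{\iIx} \rangle|$, compute the half-normal mean $\sqrt{2/\pi}$, and apply Gaussian concentration for a $\sqrt{\lu}$-Lipschitz function. The only (cosmetic) difference is that you treat $\barXu$ directly as a Lipschitz function of the concatenated Gaussian in $\R^{\lu \n}$, while the paper first identifies $\barXu$ in distribution with the $\lnorm{1}$-norm of a standard Gaussian vector in $\R^{\lu}$ and then uses the $\sqrt{\lu}$-Lipschitzness of $\| \cdot \|_{1}$; both yield the identical bound.
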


\begin{lemma}
\label{lemma:concentration-ineq:3}
Fix
\(  \lu, \tu > 0  \).
%
Let
\(  \Set{\ZX} = \{ \Vec{\ZX}\VL{1}, \dots, \Vec{\ZX}\VL{\lu} \sim \N( \Vec{0}, \Id{\n} ) \}  \)
be a collection of \(  \lu  \) \iid Gaussian vectors, and fix a \ksparserealunit vector,
\(  \Vec{\uX} \in \SparseSphereSubspace{\k}{\n}  \),
and a coordinate subset, \(  \JX \subseteq [\n]  \), \(  | \JX | \leq \kX  \).
%
Define the random vector
\begin{gather*}
  \barYu =
  \sum_{\iIx=1}^{\lu}
  \Big(
    \TJZi \Sign( \langle \Vec{\uV}, \TJZi \rangle )
    -
    \langle \Vec{\uV}, \TJZi \rangle \Sign( \langle \Vec{\uV}, \TJZi \rangle ) \Vec{\uV}
  \Big)
.\end{gather*}
%
The concentration of the random variable representing its norm, \(  \| \barYu \|_{2}  \),
is such that
\begin{gather}
\label{eqn:lemma:concentration-ineq:3:1}
  \Pr \left(
    \left\| \barYu \right\|_{2} > \sqrt{\frac{( \kX-1 ) \lu}{2}} + \lu \tu
  \right)
  \leq
  e^{-\frac{1}{2} \lu \tu^{2}}
.\end{gather}
\end{lemma}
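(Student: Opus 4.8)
The plan is to recognise that, in distribution, $\bar Y_u$ is a centered Gaussian vector supported on a low-dimensional subspace, so that the bound reduces to standard Gaussian norm concentration after a rescaling of the deviation parameter. Write $g_i := T_{\Supp(u) \cup J}(Z_i)$ and $s_i := \Sign( \langle u, g_i \rangle ) \in \{-1,+1\}$ for $i \in [\ell]$. Since $s_i^2 = 1$, the $i$-th summand of $\bar Y_u$ equals $g_i s_i - \langle u, g_i\rangle s_i\,u = s_i\bigl(g_i - \langle u, g_i\rangle u\bigr)$, i.e.\ $s_i$ times the projection $V_i := \Pi_{u^\perp} g_i$ of $g_i$ onto the orthogonal complement of $\Span(u)$. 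Hence $\bar Y_u = \sum_{i=1}^{\ell} s_i V_i$.

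The crux is the observation that $g_i$ is a standard Gaussian vector on the coordinate block $S := \Supp(u) \cup J$, and that $u$ is supported inside $S$. Decomposing $\R^{S} = \Span(u) \oplus W$ with $W := \{ w \in \R^n : \Supp(w) \subseteq S,\ \langle u, w\rangle = 0 \}$, the two orthogonal parts $\langle u, g_i\rangle u$ and $V_i$ are independent Gaussians; in particular $V_i$ is a standard Gaussian on $W$, it is symmetric, and it is independent of $s_i$ (a measurable function of $\langle u, g_i\rangle$), while the pairs $(s_i, V_i)$ are i.i.d.\ across $i$. Consequently $s_i V_i \stackrel{d}{=} V_i$, and by independence across $i$, $\bar Y_u \stackrel{d}{=} \sum_{i=1}^{\ell} V_i$, a centered Gaussian with covariance $\ell$ times the identity on $W$. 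Thus $\|\bar Y_u\|_2 \stackrel{d}{=} \sqrt{\ell}\,\|\xi\|_2$ for $\xi \sim \N(\Vec{0}, \Id{d})$ with $d = \dim W = |\Supp(u) \cup J| - 1$, and one bounds this support count so that $\E\|\xi\|_2 \le \sqrt{(k-1)/2}$.

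It then remains to apply concentration to $\|\xi\|_2$. The map $\xi \mapsto \|\xi\|_2$ is $1$-Lipschitz, so Gaussian concentration gives $\Pr(\|\xi\|_2 > \E\|\xi\|_2 + s) \le e^{-s^2/2}$ for all $s > 0$; combined with $\E\|\xi\|_2 \le \sqrt{(k-1)/2}$ this yields $\Pr\bigl(\|\xi\|_2 > \sqrt{(k-1)/2} + \sqrt{\ell}\,t\bigr) \le e^{-\ell t^2/2}$. Multiplying the event inside the probability through by $\sqrt{\ell}$ turns this into $\Pr\bigl(\|\bar Y_u\|_2 > \sqrt{(k-1)\ell/2} + \ell t\bigr) \le e^{-\ell t^2/2}$, which is exactly \eqref{eqn:lemma:concentration-ineq:3:1}.

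I expect the main obstacle to be the middle paragraph: establishing the sign--projection independence cleanly via the orthogonal decomposition of the Gaussian $g_i$, and in particular identifying the subspace $W$ precisely and counting its dimension so that $\E\|\xi\|_2$ matches the exact constant $\sqrt{(k-1)/2}$ in the target (this is where one must track carefully which coordinates the thresholding $T_{\Supp(u)\cup J}$ and the projection act on, and where one may want the sharp estimate $\E\|\xi\|_2 = \sqrt{2}\,\Gamma(\tfrac{d+1}{2})/\Gamma(\tfrac{d}{2}) \le \sqrt{d}$ rather than Jensen alone). The rewriting of the summand and the final rescaling $s \mapsto \sqrt{\ell}\,t$ are routine.
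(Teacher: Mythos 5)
Your route is essentially the paper's own: the sign--projection independence and symmetry argument identifying $\bar{Y}_{u}$ in distribution with a centered Gaussian of covariance $\ell$ times the identity on the $(|\mathrm{supp}(u)\cup J|-1)$-dimensional orthogonal complement of $u$ inside the coordinate block (this is the content of Lemmas \ref{lemma:distr:2} and \ref{lemma:norm-distr:2}), followed by $1$-Lipschitz Gaussian concentration together with the chi-mean estimate (Lemma \ref{lemma:concentration-ineq:gaussian}). The only caveat is the one you flag yourself: since $|\mathrm{supp}(u)\cup J|-1$ can be as large as $2k-1$ and the sharp chi-mean bound is $\mathbb{E}\|\xi\|_2=\sqrt{2}\,\Gamma(\tfrac{d+1}{2})/\Gamma(\tfrac{d}{2})\le\sqrt{d}$ rather than $\sqrt{d/2}$, this argument actually yields a threshold of order $\sqrt{(2k-1)\ell}+\ell t$ rather than the stated $\sqrt{(k-1)\ell/2}+\ell t$; the paper's own proof makes the same two simplifications, and the discrepancy only affects the absolute constants propagated downstream.
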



\renewcommand{\ZX}{A}

\let\oldXu\Xu
\let\Xu\newXu
\let\oldYu\Yu
\let\Yu\newYu


\let\Xu\oldXu
\let\Yu\oldYu
\let\yu\oldyu

\renewcommand{\Dx}{\DRV\RVL{1}}
\renewcommand{\Dxx}{\DRV\RVL{2}}
\subsection{Bounding \texorpdfstring{\(  \Dxu \defeq | \langle \Vec{\uV}, \hZf[\JX]{\Vec{\uV}}{\Vec{\uV}} \rangle |  \)}{D'1;J(u,u) = |<u,hf;A;J(u,u)>|}} %
\label{section:|>D'1} 

Having introduced the concentration inequalities in \SECTION \ref{section:|>D2|>concentration-ineq},
we are ready to bound the random variables \(  \Dxu  \) and \(  \Dxxu  \).
To start off, the random variable,
\(  \Dxu \defeq | \langle \Vec{\uV}, \hZf[\JX]{\Vec{\uV}}{\Vec{\uV}} \rangle |  \),
is bounded from above per the following lemma.

\begin{lemma}
\label{lemma:D'1}
Fix
\(  \rhoX \in (0,1]  \).
%
Suppose
\begin{gather*}
  m
  \geq
  \frac{\bC}{\epsilonRAICadv}
  \Log(
    \binom{\n}{\k}^{2}
    \binom{\n}{2\k}
    \left( \frac{12 \bC}{\epsilonRAICadv} \right)^{2\k}
    \left( \frac{3 \aC}{\rhoX} \right)
  )
  =
  \BigO(
    \frac{\k}{\epsilonRAICadv} \Log( \frac{\n}{\epsilonRAICadv \k} )
    +
    \frac{1}{\epsilonRAICadv} \Log( \frac{1}{\rhoX} )
  )
.\end{gather*}
%
Then, with probability at least
\(  1 - \frac{\rhoX}{3}  \),
uniformly for all
\(  \JX \subseteq [\n]  \), \(  | \JX | \leq \kX  \), and \(  \Vec{\uV} \in \US[\JX]  \),
\begin{gather}
  \Dxu
  \leq
  \frac{2 \lu}{\m} + \frac{\sqrt{2\pi} \lu \tu}{\m}
%
\end{gather}
where
\begin{gather}
  \lu \leq \tauX \m
  ,\\
  \tu =
  \sqrt{\frac{2}{\lu} \Log( 2 \cdot 2^{\lu} \binom{\m}{\lu} \binom{\n}{\kX} \frac{3 \tauX \m}{\rhoX} )}
.\end{gather}
\end{lemma}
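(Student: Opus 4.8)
The plan is to expand $\hZf[\JX]{\Vec{\uV}}{\Vec{\uV}}$ over the set of ``mismatches,'' reduce $\Dxu$ to a sum of independent half-normal variables, apply the concentration bound of \LEMMA \ref{lemma:concentration-ineq:1}, and then close with a union bound over a finite collection of discrete configurations whose cardinality is exactly the quantity appearing inside $\tu$. First, I would fix $\JX$ with $| \JX | \leq \kX$ and $\Vec{\uV} \in \US[\JX]$, and let $\IX = \{ \iIx \in [\m] : \f( \Vec{\uV} )_{\iIx} \neq \Sign( \langle \Vec{\ZX}\VL{\iIx}, \Vec{\uV} \rangle ) \}$ be the mismatch set, so that $\lu := | \IX | = \DistH{\f( \Vec{\uV} )}{\Sign( \Mat{\ZX} \Vec{\uV} )} \leq \tauX \m$ by the standing assumption on $\f$. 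Using the alternate form of $\hZf[\JX]{}{}$ in \EQUATION \eqref{eqn:notation:hfAJ:def-alt}, the fact that $\Supp( \Vec{\uV} ) \subseteq \Supp( \Vec{\uV} ) \cup \JX$ forces $\langle \Vec{\uV}, \TJZi \rangle = \langle \Vec{\uV}, \Vec{\ZX}\VL{\iIx} \rangle$, and the nonnegativity $\langle \Vec{\uV}, \TJZi \rangle \Sign( \langle \Vec{\uV}, \TJZi \rangle ) = | \langle \Vec{\uV}, \Vec{\ZX}\VL{\iIx} \rangle | \geq 0$ recorded in \LEMMA \ref{lemma:concentration-ineq:1}, I would obtain
\begin{align*}
  \Dxu
  =
  \left| \langle \Vec{\uV}, \hZf[\JX]{\Vec{\uV}}{\Vec{\uV}} \rangle \right|
  =
  \frac{\sqrt{2\pi}}{\m}
  \sum_{\iIx \in \IX}
  \left| \langle \Vec{\uV}, \Vec{\ZX}\VL{\iIx} \rangle \right|
,\end{align*}
that is, $\Dxu$ is $\tfrac{\sqrt{2\pi}}{\m}$ times a sum of $\lu$ independent half-normal random variables.

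Next, I would invoke \LEMMA \ref{lemma:concentration-ineq:1}: for a deterministic unit vector and a deterministic index set of size $\lu$, $\Pr\bigl( \sum_{\iIx} | \langle \Vec{\uV}, \Vec{\ZX}\VL{\iIx} \rangle | \geq ( \sqrt{\tfrac{2}{\pi}} + \tu ) \lu \bigr) \leq e^{-\lu \tu^{2}/2}$. Since $\sqrt{2\pi} \cdot \sqrt{\tfrac{2}{\pi}} = 2$, on the complementary event $\Dxu \leq \frac{\sqrt{2\pi}}{\m}( \sqrt{\tfrac{2}{\pi}} + \tu ) \lu = \frac{2\lu}{\m} + \frac{\sqrt{2\pi}\,\lu\,\tu}{\m}$, which is precisely the asserted bound.

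The one real subtlety is that $\IX$ and the representative $\Vec{\uV} \in \US[\JX]$ both depend on the Gaussians, so \LEMMA \ref{lemma:concentration-ineq:1} cannot be applied to a fixed pair without care. I would handle this by conditioning on the sign vectors $\Sign( \Mat{\ZX} \Vec{w} )$, $\Vec{w} \in \SparseSphereSubspace{\k}{\n}$: this data determines every element of $\US[\JX]$ and its mismatch set (both functions of the signs and of $\f$), while leaving the magnitudes $\{ | \langle \Vec{\uV}, \Vec{\ZX}\VL{\iIx} \rangle | \}_{\iIx}$ conditionally independent half-normals, so the concentration estimate applies conditionally to the now-frozen pair $( \Vec{\uV}, \IX )$. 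I would then union bound over the size $\lu \in \{ 1, \dots, \tauX \m \}$, the mismatch set ($\binom{\m}{\lu}$ choices), the sign pattern of $\Sign( \Mat{\ZX} \Vec{\uV} )$ on that set ($2^{\lu}$ choices --- these last two together enumerate the representatives, since $| \hZf[\JX]{}{}[ \PREIMG ] | \leq \sum_{\Ell=1}^{\tauX\m} \binom{\m}{\Ell} 2^{\Ell}$), and the support of $\Vec{\uV}$ ($\leq \binom{\n}{\kX}$ choices, using that $\Dxu$ does not depend on $\JX$), with the remaining lower-order counting (e.g.\ $\sum_{j \leq \kX}\binom{\n}{j} \leq 2\binom{\n}{\kX}$) absorbed into the explicit factor $2$ inside $\tu$. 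With the stated $\tu$ one has $e^{-\lu\tu^{2}/2} = \rhoX / ( 2 \cdot 2^{\lu} \binom{\m}{\lu} \binom{\n}{\kX} \cdot 3 \tauX \m )$, so the failure probabilities sum to at most $\sum_{\lu=1}^{\tauX\m} \rhoX/(3\tauX\m) = \rhoX/3$; integrating out the conditioning then yields the bound uniformly with probability at least $1 - \rhoX/3$.

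The hard part is this last step: not the concentration itself but arranging a fixed-family union bound in the presence of the random mismatch set and the random set $\US[\JX]$. The decisive move is conditioning on the sign data, which freezes the discrete configuration while keeping the Gaussian magnitudes fresh; after that one only needs to check that the configuration count ($2^{\lu}\binom{\m}{\lu}$ for mismatch-and-sign patterns, $\binom{\n}{\kX}$ for the support, $\tauX\m$ for the size, plus the slack factor $2$) reproduces the logarithm defining $\tu$. The hypothesis on $\m$ is inherited from the neighboring results and is not otherwise needed here.
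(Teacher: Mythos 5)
Your main line is the paper's own: you expand $\hZf[\JX]{\Vec{\uV}}{\Vec{\uV}}$ over the mismatch set $\IX$, identify $\Dxu = \frac{\sqrt{2\pi}}{\m}\sum_{\iIx\in\IX} | \langle \Vec{\uV}, \Vec{\ZX}\VL{\iIx} \rangle |$ as $\frac{\sqrt{2\pi}}{\m}$ times a sum of $\lu \leq \tauX\m$ half-normal terms, apply \LEMMA \ref{lemma:concentration-ineq:1} with exactly the stated $\tu$ (using $\sqrt{2\pi}\cdot\sqrt{2/\pi}=2$), and union bound over $\lu \in [\tauX\m]$, the $2^{\lu}\binom{\m}{\lu}$ representatives, and the $\binom{\n}{\kX}$ supports, with the factor $2$ as slack, so the failure probability telescopes to $\rhoX/3$. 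Step for step, this is the proof in the paper.

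The one ingredient you add --- conditioning on the sign vectors $\Sign( \Mat{\ZX} \Vec{w} )$ for all $\Vec{w} \in \SparseSphereSubspace{\k}{\n}$ so as to freeze the pair $( \Vec{\uV}, \IX )$ before invoking the concentration bound --- does not work as stated. That conditioning is far too strong: for any fixed support $S$ with $|S| \geq 2$, the sign pattern $\Vec{w} \mapsto \Sign( \langle \Vec{w}, \Vec{\ZX}\VL{\iIx} \rangle )$ over the continuum of unit vectors supported on $S$ determines, almost surely, the unit direction of the restriction of $\Vec{\ZX}\VL{\iIx}$ to $S$ (the positive half-space has that direction as its pole). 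Hence, conditional on your sign data, $| \langle \Vec{\uV}, \Vec{\ZX}\VL{\iIx} \rangle |$ is a known constant times the norm of that restriction, i.e., a scaled $\chi$ random variable rather than a half-normal, and it is not independent of the conditioning; \LEMMA \ref{lemma:concentration-ineq:1} therefore cannot be applied conditionally. (Conditioning only on the signs with respect to the single direction $\Vec{\uV}$ would preserve the half-normal law and the independence across $\iIx$, but that is not enough information to determine $\US[\JX]$ and the mismatch sets.) The paper uses no conditioning at this point: it applies the concentration inequality to each representative $\Vec{\uV} \in \US[\JX]$ treated as a single choice and then union bounds over the at most $2^{\lu}\binom{\m}{\lu}$ representatives per subset $\JX$ --- precisely what the cardinality bound on the image of $\hZf[\JX]{}{}$ and \CLAIM \ref{claim:raic:adv:2} are for. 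If you drop the conditioning device and argue at that level, your write-up coincides with the paper's proof; as written, however, the justification you offer for the decisive application of the concentration inequality is the step that fails.
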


\begin{proof}
{\LEMMA \ref{lemma:D'1}}
First, expanding out and rewriting the expression for \(  \hZf[\JX]{}{}  \) yields:
\begin{align*}
  \hZf[\JX]{\Vec{\uV}}{\Vec{\uV}}
  &=
  \frac{\sqrt{2\pi}}{\m}
  \sum_{\iIx=1}^{\m}
  \TJZi \cdot \frac{1}{2}
  \Bigl( \fZiu - \Sign.( \langle \Vec{\uV}, \TJZi \rangle ) \Bigr)
  \\
  &\dCmt{by \EQUATION \eqref{eqn:h:def:h_fA} in \SECTION \ref{eqn:h:def}}
  \\
  &=
  -
  \frac{\sqrt{2\pi}}{\m}
  \sum_{\iIx=1}^{\m}
  \TJZi \Sign( \langle \Vec{\uV}, \TJZi \rangle )
  \cdot
  \I\Bigl( \f( \Vec{\uV} )_{\iIx} \neq \Sign.(\langle \Vec{\uV}, \TJZi \rangle) \Bigr)
  \\
  &\dCmt{by \EQUATION \eqref{eqn:notation:hfA:def-alt} in \SECTION \ref{section:|>prelim}}
  \\
  &=
  -
  \frac{\sqrt{2\pi}}{\m}
  \sum_{\iIx \in \IX}
  \TJZi \Sign( \langle \Vec{\uV}, \TJZi \rangle )
  \\
  &\dCmt{per the remark below}
\end{align*}
where
\(  \IX \subseteq [\m]  \) indexes the sign-mismatches:
\begin{align*}
  \IX
  &\defeq
  \{ \iIx \in [\m] :
     \fZiu \neq \Sign.( \langle \Vec{\uV}, \TJZi \rangle ) \}
  \equiv
  \left\{ \iIx \in [\m] : \I( \f( \Vec{\uV} )_{\iIx} \neq \Sign.(\langle \Vec{\uV}, \TJZi \rangle) ) \neq 0 \right\}
.\end{align*}
%
%
Note that the assumption on \(  \f  \) stated at the beginning of
\SECTION \ref{section:|>pf-main-technical}%
---that the number of corrupted responses is bounded---%
ensures that
\(  | \IX | \leq \tauX \m  \).
%
Fix
\(  \lu = | \IX | \leq \tauX \m  \),
and without loss of generality, assume
\(  \IX = [\lu]  \).
%
Under this assumption, the above derivation implies:
\begin{align*}
  \hZf[\JX]{\Vec{\uV}}{\Vec{\uV}}
  &=
  -
  \frac{\sqrt{2\pi}}{\m}
  \sum_{\iIx=1}^{\lu}
  \TJZi \Sign( \langle \Vec{\uV}, \TJZi \rangle )
,\end{align*}
or equivalently,
\begin{align*}
  -\hZf[\JX]{\Vec{\uV}}{\Vec{\uV}}
  &=
  \frac{\sqrt{2\pi}}{\m}
  \sum_{\iIx=1}^{\lu}
  \TJZi \Sign( \langle \Vec{\uV}, \TJZi \rangle )
.\end{align*}
%
Now, define the random variables,
\begin{gather*}
  \Xu[\iIx] \defeq
  \left\langle \Vec{\uV}, \TJZi \right\rangle \Sign( \left\langle \Vec{\uV}, \TJZi \right\rangle )
\end{gather*}
for \(  \iIx \in [\lu]  \), and let
\(
  \barXu \defeq \sum_{\iIx=1}^{\lu} \Xu[\iIx]
\).
%
Note that by \LEMMA \ref{lemma:concentration-ineq:1},
\(  | \Xu[\iIx] | = \Xu[\iIx]  \) for each \(  \iIx \in [\lu]  \),
and thus,
\(  | \barXu | = \barXu  \),
as shown in the following derivation:
\begin{align*}
  | \barXu |
  &=
  \left| \sum_{\iIx=1}^{\lu} \Xu[\iIx] \right|
  =
  \left| \sum_{\iIx=1}^{\lu} | \Xu[\iIx] | \right|
  =
  \sum_{\iIx=1}^{\lu} | \Xu[\iIx] |
  =
  \sum_{\iIx=1}^{\lu} \Xu[\iIx]
  =
  \barXu
.\end{align*}
%
Since \(  | \barXu | = \barXu  \), any bound that holds for \(  \barXu  \)
must also hold for \(  | \barXu |  \).
Hence, this proof will focus on upper bounding the value taken by \(  \barXu  \),
rather that directly characterizing \(  | \barXu |  \).
Using this notation,
\begin{align*}
  \left\langle \Vec{\uV}, -\hZf[\JX]{\Vec{\uV}}{\Vec{\uV}} \right\rangle
  &=
  \left\langle
    \Vec{\uV},
    \frac{\sqrt{2\pi}}{\m}
    \sum_{\iIx=1}^{\lu}
    \TJZi \Sign( \langle \Vec{\uV}, \TJZi \rangle )
  \right\rangle
  \\
  &=
  \frac{\sqrt{2\pi}}{\m}
  \sum_{\iIx=1}^{\lu}
  \left\langle \Vec{\uV}, \TJZi \right\rangle \Sign( \left\langle \Vec{\uV}, \TJZi \right\rangle )
  \\
  &=
  \frac{\sqrt{2\pi}}{\m}
  \sum_{\iIx=1}^{\lu}
  \Xu[\iIx]
  \\
  &=
  \frac{\sqrt{2\pi}}{\m}
  \barXu
.\end{align*}
%
Note that by the above observations,
\begin{align*}
  \left\langle \Vec{\uV}, -\hZf[\JX]{\Vec{\uV}}{\Vec{\uV}} \right\rangle
  =
  \frac{\sqrt{2\pi}}{\m} \barXu
  =
  \frac{\sqrt{2\pi}}{\m} \left| \barXu \right|
  =
  \left| \frac{\sqrt{2\pi}}{\m} \barXu \right|
  \geq 0
,\end{align*}
and therefore,
\(
  \left| \left\langle \Vec{\uV}, -\hZf[\JX]{\Vec{\uV}}{\Vec{\uV}} \right\rangle \right|
  =
  \left\langle \Vec{\uV}, -\hZf[\JX]{\Vec{\uV}}{\Vec{\uV}} \right\rangle
\).
%
\def\ECONST{\sqrt{\frac{2}{\pi}}}%
Due to \LEMMA \ref{lemma:concentration-ineq:1},
the random variable \(  \barXu  \) is bounded from above by
\begin{align*}
  \barXu
  &\leq
  \E[\Xu] + \lu \tu
  \leq
  \ECONST \lu + \lu \tu
  =
  \left( \ECONST + \tu \right) \lu
\end{align*}
with probability at least
\(  1 - e^{-\frac{1}{2} \lu \tu^{2}}  \).
%
Take
\begin{align*}
  \tu
  &=
  \sqrt{\frac{2}{\lu} \Log( 2 \cdot 2^{\lu} \binom{\m}{\lu} \binom{\n}{\kX} \frac{3 \tauX \m}{\rhoX} )}
.\end{align*}
%
Then, the desired bound follows:
\begin{align*}
  \left| \left\langle \Vec{\uV}, \hZf[\JX]{\Vec{\uV}}{\Vec{\uV}} \right\rangle \right|
  &=
  \left| -\left\langle \Vec{\uV}, \hZf[\JX]{\Vec{\uV}}{\Vec{\uV}} \right\rangle \right|
  \\
  &=
  \left| \left\langle \Vec{\uV}, -\hZf[\JX]{\Vec{\uV}}{\Vec{\uV}} \right\rangle \right|
  \\
  &=
  \left\langle \Vec{\uV}, -\hZf[\JX]{\Vec{\uV}}{\Vec{\uV}} \right\rangle
  \\
  &=
  \frac{\sqrt{2\pi}}{\m}
  \barXu
  \\
  &\leq
  \frac{\sqrt{2\pi}}{\m}
  \left( \ECONST + \tu \right) \lu
  \\
  &=
  \frac{2 \lu}{\m}
  +
  \frac{\sqrt{2\pi} \lu \tu}{\m}
.\end{align*}
%
This inequality holds for any \emph{single} choice of
\(  \JX \subseteq [\n]  \), \(  | \JX | \leq \kX  \), and \(  \Vec{\uV} \in \US[\JX]  \)
with probability at least
\begin{align*}
  1 - e^{-\frac{1}{2} \lu \tu^{2}}
  &=
  1 - \frac{\frac{\rhoX}{3 \tauX \m}}{2 \cdot 2^{\lu} \binom{\m}{\lu} \binom{\n}{\kX}}
,\end{align*}
and by a union bound, the above inequality holds uniformly for every
\(  \JX \subseteq [\n]  \), \(  | \JX | \leq \kX  \),
\(  \Vec{\uV} \in \US[\JX]  \), and
\(  \lu \in [\tauX \m]  \)
with probability at least
\begin{align*}
  1 -
  \sum_{\lu=1}^{\tauX \m}
  2 \cdot 2^{\lu} \binom{\m}{\lu} \binom{\n}{\kX} e^{-\frac{1}{2} \lu \tu^{2}}
  &=
  1 -
  \sum_{\lu=1}^{\tauX \m}
  2 \cdot 2^{\lu} \binom{\m}{\lu} \binom{\n}{\kX}
  \frac{\frac{\rhoX}{3 \tauX \m}}{2 \cdot 2^{\lu} \binom{\m}{\lu} \binom{\n}{\kX}}
  \\
  &=
  1 - \sum_{\lu=1}^{\tauX \m} \frac{\rhoX}{3 \tauX \m}
  \\
  &=
  1 - \tauX \m \cdot \frac{\rhoX}{3 \tauX \m}
  \\
  &=
  1 - \frac{\rhoX}{3}
.\end{align*}
\end{proof}

\subsection{Bounding \texorpdfstring{\(  \Dxxu \defeq \| \hZf[\JX]{\Vec{\uV}}{\Vec{\uV}} - \langle \Vec{\uV}, \hZf[\JX]{\Vec{\uV}}{\Vec{\uV}} \rangle \Vec{\uV} \|_{2}  \)}{D'2;J(u,u) = ||hf;A;J(u,u) - <u,hf;A;J(u,u)>u||2}
} %
\label{section:|>D'2} 

Next, the second random variable in the orthogonal decomposition,
\(  \Dxxu \defeq \| \hZf[\JX]{\Vec{\uV}}{\Vec{\uV}} - \langle \Vec{\uV}, \hZf[\JX]{\Vec{\uV}}{\Vec{\uV}} \rangle \Vec{\uV} \|_{2}  \),
is upper bounded in \LEMMA \ref{lemma:D'2}, laid out below.

\begin{lemma}
\label{lemma:D'2}
Fix
\(  \rhoX \in (0,1]  \).
%
Suppose 
\begin{gather*}
  m
  \geq
  \frac{\bC}{\epsilonRAICadv}
  \Log(
    \binom{\n}{\k}^{2}
    \binom{\n}{2\k}
    \left( \frac{12 \bC}{\epsilonRAICadv} \right)^{2\k}
    \left( \frac{3 \aC}{\rhoX} \right)
  )
  =
  \BigO(
    \frac{\k}{\epsilonRAICadv} \Log( \frac{\n}{\epsilonRAICadv \k} )
    +
    \frac{1}{\epsilonRAICadv} \Log( \frac{1}{\rhoX} )
  )
.\end{gather*}
%
Then, with probability at least
\(  1 - \frac{\rhoX}{3}  \),
uniformly for all
\(  \JX \subseteq [\n]  \), \(  | \JX | \leq \kX  \), and \(  \Vec{\uV} \in \US[\JX]  \),
\begin{gather}
  \Dxxu
  \leq
  \frac{\sqrt{2\pi ( \kX-1 ) \lu}}{\m}
  +
  \frac{\sqrt{2\pi} \lu \tu}{\m}
\end{gather}
where
\begin{gather}
  \lu \leq \tauX \m
  ,\\
  \tu =
  \sqrt{\frac{2}{\lu} \Log( 2 \cdot 2^{\lu} \binom{\m}{\lu} \binom{\n}{\kX} \frac{3 \tauX \m}{\rhoX} )}
.\end{gather}
\end{lemma}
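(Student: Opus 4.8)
The plan is to follow the proof of \LEMMA \ref{lemma:D'1} essentially verbatim, substituting the companion concentration bound \LEMMA \ref{lemma:concentration-ineq:3} for \LEMMA \ref{lemma:concentration-ineq:1}. First I would rewrite $\hZf[\JX]{\Vec{\uV}}{\Vec{\uV}}$ via the alternative form \eqref{eqn:notation:hfAJ:def-alt}, which collapses the sum over all $\m$ rows to a sum over the sign-mismatch index set $\IX \defeq \{ \iIx \in [\m] : \f( \Vec{\uV} )_{\iIx} \neq \Sign( \langle \Vec{\uV}, \TJZi \rangle ) \}$; the standing hypothesis on $\f$ (bounded corruption) forces $\lu \defeq | \IX | \leq \tauX \m$. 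Taking $\IX = [\lu]$ without loss of generality,
\[
  -\hZf[\JX]{\Vec{\uV}}{\Vec{\uV}}
  =
  \frac{\sqrt{2\pi}}{\m}
  \sum_{\iIx=1}^{\lu}
  \TJZi \Sign( \langle \Vec{\uV}, \TJZi \rangle ),
\]
exactly as in the proof of \LEMMA \ref{lemma:D'1}.

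Next I would subtract off the component along $\Vec{\uV}$. Since $\| \Vec{\uV} \|_{2} = 1$, the orthogonal complement is computed term by term, and it equals $-\tfrac{\sqrt{2\pi}}{\m}$ times the random vector $\barYu$ appearing in \LEMMA \ref{lemma:concentration-ineq:3} (with its collection of $\lu$ i.i.d.\ Gaussian vectors being the rows indexed by $\IX$); hence $\Dxxu = \tfrac{\sqrt{2\pi}}{\m} \| \barYu \|_{2}$. Applying \LEMMA \ref{lemma:concentration-ineq:3} with threshold parameter $\tu$ then gives, for a single choice of $\JX, \Vec{\uV}, \lu$ and with probability at least $1 - e^{-\frac{1}{2} \lu \tu^{2}}$,
\[
  \Dxxu
  \leq
  \frac{\sqrt{2\pi}}{\m} \left( \sqrt{\frac{( \kX-1 ) \lu}{2}} + \lu \tu \right)
  \leq
  \frac{\sqrt{2\pi ( \kX-1 ) \lu}}{\m}
  +
  \frac{\sqrt{2\pi} \lu \tu}{\m},
\]
which is the claimed bound, the last inequality being a harmless loosening since $\sqrt{2\pi}\,\sqrt{(\kX-1)\lu/2} = \sqrt{\pi(\kX-1)\lu} \leq \sqrt{2\pi(\kX-1)\lu}$.

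Finally I would make this uniform by union bounding exactly as in \LEMMA \ref{lemma:D'1}: over $\JX \subseteq [\n]$ with $| \JX | \leq \kX$ (at most $\binom{\n}{\kX}$ choices), over $\Vec{\uV} \in \US[\JX]$ (at most $2 \cdot 2^{\lu} \binom{\m}{\lu}$ choices once $\lu$ is fixed, since the elements of $\US[\JX]$ are indexed by vectors in $\hZf[\JX]{}{}[ \PREIMG ]$ that arise from an error pattern of weight $\lu$), and over $\lu \in [\tauX \m]$. Choosing $\tu = \sqrt{\frac{2}{\lu} \Log( 2 \cdot 2^{\lu} \binom{\m}{\lu} \binom{\n}{\kX} \frac{3 \tauX \m}{\rhoX} )}$ makes each per-triple failure probability equal to $\frac{\rhoX}{3 \tauX \m} \cdot \big( 2 \cdot 2^{\lu} \binom{\m}{\lu} \binom{\n}{\kX} \big)^{-1}$, so the union bound telescopes to a total failure probability $\sum_{\lu=1}^{\tauX \m} \frac{\rhoX}{3 \tauX \m} = \frac{\rhoX}{3}$; the sample-size hypothesis on $\m$ is inherited verbatim from \LEMMA \ref{lemma:D'1} and guarantees $\tu$ is well-defined. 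I do not anticipate a real obstacle: the only step that needs care is the term-by-term identification of the orthogonal complement of $\hZf[\JX]{\Vec{\uV}}{\Vec{\uV}}$ (where $\| \Vec{\uV} \|_{2} = 1$ is used), and all the probabilistic content has been isolated into \LEMMA \ref{lemma:concentration-ineq:3}, which may be assumed.
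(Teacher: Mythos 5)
Your proposal is correct and mirrors the paper's own proof: it reuses the mismatch-set reduction from Lemma \ref{lemma:D'1}, identifies the orthogonal component as $\tfrac{\sqrt{2\pi}}{\m}\|\barYu\|_{2}$, applies Lemma \ref{lemma:concentration-ineq:3} with the same choice of $\tu$, and union bounds over $\JX$, $\Vec{\uV}$, and $\lu$ exactly as the paper does (your loosening of $\sqrt{\pi(\kX-1)\lu}$ to $\sqrt{2\pi(\kX-1)\lu}$ matches the stated bound and is harmless). No gaps.
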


\begin{proof}
{\LEMMA \ref{lemma:D'2}}
Define the random variables
\(  \Xu[\iIx]  \), \(  \iIx \in [\lu]  \), and
\(  \barXu = \sum_{\iIx=1}^{\lu} \Xu[\iIx]  \)
as in the proof of \LEMMA \ref{lemma:D'1}.
As before, write
\(  \IX \subseteq [\m]  \),
\(  \IX \defeq
    \{ \iIx \in [\m] :
       \fZiu
       \neq \Sign.( \langle \Vec{\uV}, \TJZi \rangle ) \}  \),
and let
\(  \lu \defeq | \IX |  \), where
\(  \lu = | \IX | \leq \tauX \m  \).
%
Without loss of generality, take
\(  \IX = [\lu]  \).
%
From the proof of \LEMMA \ref{lemma:D'1}, we have:
\begin{align}
  \nonumber
  \hZf[\JX]{\Vec{\uV}}{\Vec{\uV}}
  &=
  -
  \frac{\sqrt{2\pi}}{\m}
  \sum_{\iIx \in \IX}
  \TJZi \Sign( \langle \Vec{\uV}, \TJZi \rangle )
  \\ \label{eqn:pf:lemma:D'2:1}
  &=
  -
  \frac{\sqrt{2\pi}}{\m}
  \sum_{\iIx=1}^{\lu}
  \TJZi \Sign( \langle \Vec{\uV}, \TJZi \rangle )
\end{align}
where the second equality uses the assumption that \(  \IX = [\lu]  \).
Recall that
\begin{align}
  \nonumber
  \left\langle \Vec{\uV}, \hZf[\JX]{\Vec{\uV}}{\Vec{\uV}} \right\rangle
  &=
  \left\langle
    \Vec{\uV},
    -
    \frac{\sqrt{2\pi}}{\m}
    \sum_{\iIx=1}^{\lu}
    \TJZi \Sign( \langle \Vec{\uV}, \TJZi \rangle )
  \right\rangle
  \\ \label{eqn:pf:lemma:D'2:2}
  &=
  -
  \frac{\sqrt{2\pi}}{\m}
  \sum_{\iIx=1}^{\lu}
  \left\langle \Big.
    \Vec{\uV},
    \TJZi \Sign( \langle \Vec{\uV}, \TJZi \rangle )
  \right\rangle
.\end{align}
%
Thus,
\begin{align}
  \nonumber
  &
  \hZf[\JX]{\Vec{\uV}}{\Vec{\uV}}
  -
  \langle \Vec{\uV}, \hZf[\JX]{\Vec{\uV}}{\Vec{\uV}} \rangle \Vec{\uV}
  \\ \nonumber
  &=
  -
  \frac{\sqrt{2\pi}}{\m}
  \sum_{\iIx=1}^{\lu}
  \TJZi \Sign( \langle \Vec{\uV}, \TJZi \rangle )
  +
  \frac{\sqrt{2\pi}}{\m}
  \sum_{\iIx=1}^{\lu}
  \left\langle \Big.
    \Vec{\uV},
    \TJZi \Sign( \langle \Vec{\uV}, \TJZi \rangle )
  \right\rangle
  \Vec{\uV}
  \\ \nonumber
  &\dCmt{by expanding the terms via \EQUATIONS \eqref{eqn:pf:lemma:D'2:1} and \eqref{eqn:pf:lemma:D'2:2}}
  \\ \label{eqn:pf:lemma:D'2:3}
  &=
  -
  \frac{\sqrt{2\pi}}{\m}
  \sum_{\iIx=1}^{\lu}
  \left( \Big.
    \TJZi \Sign( \langle \Vec{\uV}, \TJZi \rangle )
    -
    \left\langle \big.
      \Vec{\uV},
      \TJZi \Sign( \langle \Vec{\uV}, \TJZi \rangle )
    \right\rangle
    \Vec{\uV}
  \right)
  \\ \nonumber
  &\dCmt{by combining the summations and factoring out the \(  \tfrac{\sqrt{2\pi}}{\m}  \) term.}
\end{align}
%
Taking the norm of the above expression yields the following:
\begin{align}
  \nonumber
  &
  \left\|
    \hZf[\JX]{\Vec{\uV}}{\Vec{\uV}}
    -
    \langle \Vec{\uV}, \hZf[\JX]{\Vec{\uV}}{\Vec{\uV}} \rangle \Vec{\uV}
  \right\|_{2}
  \\ \nonumber
  &=
  \left\|
  -
  \frac{\sqrt{2\pi}}{\m}
  \sum_{\iIx=1}^{\lu}
  \left( \Big.
    \TJZi \Sign( \langle \Vec{\uV}, \TJZi \rangle )
    -
    \left\langle \big.
      \Vec{\uV},
      \TJZi \Sign( \langle \Vec{\uV}, \TJZi \rangle )
    \right\rangle
    \Vec{\uV}
  \right)
  \right\|_{2}
  \\ \nonumber
  &\dCmt{by \EQUATION \eqref{eqn:pf:lemma:D'2:3}}
  \\ \nonumber
  &=
  \frac{\sqrt{2\pi}}{\m}
  \left\|
  \sum_{\iIx=1}^{\lu}
  \left( \Big.
    \TJZi \Sign( \langle \Vec{\uV}, \TJZi \rangle )
    -
    \left\langle \big.
      \Vec{\uV},
      \TJZi \Sign( \langle \Vec{\uV}, \TJZi \rangle )
    \right\rangle
    \Vec{\uV}
  \right)
  \right\|_{2}
  \\ \nonumber
  &\dCmt{due to the homogeneity of norms}
  \\ \nonumber
  &\leq
  \frac{\sqrt{2\pi}}{\m} \left( \big. \sqrt{\frac{( \kX-1 ) \lu}{2}} + \lu \tu \right)
  \\ \nonumber
  &\dCmt{by \LEMMA \ref{lemma:concentration-ineq:3}}
  \\ \nonumber
  &=
  \frac{\sqrt{\pi ( \kX-1 ) \lu}}{\m}
  +
  \frac{\sqrt{2\pi} \lu \tu}{\m}
  \\ \nonumber
  &\dCmt{by distributing the \(  \tfrac{\sqrt{2\pi}}{\m}  \) term}
\end{align}
where the second to last expression (the inequality) 
holds with probability at least
\(  1 - e^{-\frac{1}{2} \lu \tu^{2}} \geq 1 - \frac{\rhoX}{3}  \).
\end{proof}

\let\Xu\oldXu
\let\Yu\oldYu
\let\yu\oldyu
\section{Controlling \texorpdfstring{\(  \DXX{\Vec{\uV}}{\Vec{\uV}} \defeq \| \hZf[\JX]{\Vec{\uV}}{\Vec{\uV}} \|_{2}  \)}{D2;J(u,u) = ||hf;A;J(u,u)||2}} %
\label{section:|>appendix|>D2:ub} 

The analysis in \SECTIONS \ref{section:|>D'1} and \ref{section:|>D'2}
makes possible a straightforward derivation of the upper bound on
\(  \DXX{\Vec{\uV}}{\Vec{\uV}} \defeq \| \hZf[\JX]{\Vec{\uV}}{\Vec{\uV}} \|_{2}  \)
as stated in \LEMMA \ref{lemma:D2}, which is proved next.

\begin{proof}
{\LEMMA \ref{lemma:D2}}
Due to \LEMMAS \ref{lemma:D'1} and \ref{lemma:D'2} and by a union bound over their results,
the following inequalities hold simultaneously for all
\(  \JX \subseteq [\n]  \), \(  | \JX | \leq \kX  \), 
and \(  \Vec{\uV} \in \US[\JX]  \)
with probability at least
\(  1 - \frac{2 \rhoX}{3}  \)
when
\(
  m
  \geq
  \frac{\bC}{\epsilonRAICadv}
  \Log*(
    \binom{\n}{\k}^{2}
    \binom{\n}{2\k}
    \left( \frac{12 \bC}{\epsilonRAICadv} \right)^{2\k}
    \left( \frac{3 \aC}{\rhoX} \right)
  )
\):
\begin{gather*}
  \Dxu \leq \frac{2 \lu}{\m} + \frac{\sqrt{2\pi} \lu \tu}{\m}
  ,\\
  \Dxxu \leq \frac{\sqrt{2\pi} \lu \tu}{\m} + \frac{\sqrt{\pi ( \kX-1 ) \lu}}{\m}
,\end{gather*}
where
\begin{gather*}
  \lu \leq \tauX \m
  ,\\
  \tu =
  \sqrt{\frac{2}{\lu} \Log( 2 \cdot 2^{\lu} \binom{\m}{\lu} \binom{\n}{\kX} \frac{3 \tauX \m}{\rhoX} )}
.\end{gather*}
%
Observe:
\def\CONST{2}%
\begin{align*}
  \lu \tu
  &=
  \lu
  \sqrt
  {\frac{2}{\lu} \Log( \CONST \cdot 2^{\lu} \binom{\m}{\lu} \binom{\n}{\kX} \frac{3 \tauX \m}{\rhoX} )}
  \\
  &=
  \sqrt{2 \lu \Log( \CONST \cdot 2^{\lu} \binom{\m}{\lu} \binom{\n}{\kX} \frac{3 \tauX \m}{\rhoX} )}
  \\
  &\leq
  \sqrt{
    2 \tauX \m
    \Log( \CONST \cdot 2^{\tauX \m} \binom{\m}{\tauX \m} \binom{\n}{\kX} \frac{3 \tauX \m}{\rhoX} )
  }
  \\
  &\dCmt{\(  \because \lu \leq \tauX \m  \)}
  \\
  &=
  \sqrt{
    2 \tauX \m \Log( 2^{\tauX \m} \binom{\m}{\tauX \m} )
    +
    2 \tauX \m \Log\binom{\n}{\kX}
    +
    2 \tauX \m \Log( \frac{3 \tauX \m}{\rhoX} )
    +
    2 \tauX \m \Log( \CONST )
  }
  \\
  &\leq
  \sqrt{
    2 ( \tauX \m )^{2} \Log( \Tfrac{2e}{\tauX} )
    +
    2 \tauX \m \Log \binom{\n}{\kX}
    +
    2 \tauX \m \Log( \frac{3 \tauX \m}{\rhoX} )
    +
    2 \tauX \m \Log( \CONST )
  }
  \\
  &\dCmt{\(  \because 2^{\tauX \m} \binom{\m}{\tauX \m} \leq 2^{\tauX \m} \left( \frac{e \m}{\tauX \m} \right)^{\tauX \m} = 2^{\tauX \m} \left( \frac{e}{\tauX} \right)^{\tauX \m} = \left( \frac{2e}{\tauX} \right)^{\tauX \m}  \)}
  \\
  &\leq
  \sqrt{2 ( \tauX \m )^{2} \Log( \Tfrac{2e}{\tauX} )}
  +
  \sqrt{2 \tauX \m \Log \binom{\n}{\kX}}
  +
  \sqrt{2 \tauX \m \Log( \frac{3 \tauX \m}{\rhoX} )}
  +
  \sqrt{2 \tauX \m \Log( \CONST )}
  \\
  &\dCmt{by the triangle inequality}
  \\
  &=
  \sqrt{2 ( \tauX \m )^{2} \Log( \Tfrac{2e}{\tauX} )}
  +
  \sqrt{\frac{2 \tauX \m^{2} \Log \binom{\n}{\kX}}{\m}}
  +
  \sqrt{\frac{2 \tauX \m^{2} \Log( \frac{3 \tauX \m}{\rhoX} )}{\m}}
  +
  \sqrt{\frac{2 \tauX \m^{2} \Log( \CONST )}{\m}}
  \\
  &\dCmt{by multiplying each of the last three terms by \(  \sqrt{\frac{\m}{\m}}  \)}
  \\
  &=
  \tauX \m \sqrt{2 \Log( \Tfrac{2e}{\tauX} )}
  +
  \m \sqrt{\frac{2 \tauX \Log \binom{\n}{\kX}}{\m}}
  +
  \m \sqrt{\frac{2 \tauX \Log( \frac{3 \tauX \m}{\rhoX} )}{\m}}
  +
  \m \sqrt{\frac{2 \tauX \Log( \CONST )}{\m}}
  \\
  &=
  \tauX \m \sqrt{2 \Log( \Tfrac{2e}{\tauX} )}
  +
  \m \sqrt{\frac{2 \epsilonRAICadv \tauX \Log \binom{\n}{\kX}}{\epsilonRAICadv \m}}
  +
  \m \sqrt{\frac{2 \epsilonRAICadv \tauX \Log( \frac{3 \tauX \m}{\rhoX} )}{\epsilonRAICadv \m}}
  +
  \m \sqrt{\frac{2 \epsilonRAICadv \tauX \Log( \CONST )}{\epsilonRAICadv \m}}
  \\
  &\dCmt{by multiplying each of the last three terms by \(  \sqrt{\frac{\epsilonRAICadv}{\epsilonRAICadv}}  \)}
  \\
  &\leq
  \tauX \m \sqrt{2 \Log( \Tfrac{2e}{\tauX} )}
  +
  \m\sqrt{\frac{2 \epsilonRAICadv \tauX}{\bC}}
  +
  \m \sqrt{\frac{2 \epsilonRAICadv \tauX}{\bC}}
  +
  \m \sqrt{\frac{2 \epsilonRAICadv \tauX}{\bC}}
  \\
  &\dCmt{\(  \because \epsilonRAICadv \m \geq \bC \max \left\{ \Log \binom{\n}{\kX}, \Log( \frac{3 \tauX \m}{\rhoX} ), \Log( \CONST ) \right\}  \),}
  \\
  &\dCmtx{where \(  \bC > 0  \) is a universal constant specified in \EQUATION \eqref{eqn:universal-constants}}
  \\
  &=
  \tauX \m \sqrt{2 \Log( \Tfrac{2e}{\tauX} )}
  +
  3 \m \sqrt{\frac{2 \epsilonRAICadv \tauX}{\bC}}
  \\
  &=
  \m \left(
  \tauX \sqrt{2 \Log( \Tfrac{2e}{\tauX} )}
  +
  3 \sqrt{\frac{2 \epsilonRAICadv \tauX}{\bC}}
  \right)
.\end{align*}
%
Then, dividing the above expressions by \(  \m  \), it follows that
\begin{align}
\label{eqn:pf:lemma:D2:1}
  \frac{\lu \tu}{\m}
  \leq
  \tauX \sqrt{2 \Log( \Tfrac{2e}{\tauX} )}
  +
  3 \sqrt{\frac{2 \epsilonRAICadv \tauX}{\bC}}
.\end{align}
%
Additionally, note that
\begin{gather}
  \label{eqn:pf:lemma:D2:2}
  \frac{\lu}{\m} \leq \frac{\tauX \m}{\m} = \tauX
  ,\\ \label{eqn:pf:lemma:D2:3}
  \frac{\sqrt{\pi ( \kX-1 ) \lu}}{\m}
  \leq
  \frac{\sqrt{\pi ( \kX-1 ) \tauX \m}}{\m}
  =
  \sqrt{\frac{\pi ( \kX-1 ) \tauX}{\m}}
  \leq
  \sqrt{\frac{\pi ( \kX-1 ) \epsilonRAICadv \tauX}{\bC \kX}}
  \leq
  \sqrt{\frac{\pi \epsilonRAICadv \tauX}{\bC}}
.\end{gather}
%
%
Combining the above results yields the following upper bound:
\begin{align*}
  \frac{2 \lu}{\m} + \frac{2\sqrt{2\pi} \lu \tu}{\m} + \frac{\sqrt{\pi ( \kX-1 ) \lu}}{\m}
  &\leq
  2 \tauX
  +
  2\sqrt{2\pi} \cdot \tauX \sqrt{2 \Log( \Tfrac{2e}{\tauX} )}
  +
  2\sqrt{2\pi} \cdot 3 \sqrt{\frac{2 \epsilonRAICadv \tauX}{\bC}}
  +
  \sqrt{\frac{\pi \epsilonRAICadv \tauX}{\bC}}
  \\
  &\dCmt{due to \EQUATIONS \eqref{eqn:pf:lemma:D2:1}-\eqref{eqn:pf:lemma:D2:3}}
  \\
  &=
  2 \tauX
  +
  4\sqrt{\pi} \cdot \tauX \sqrt{\Log( \Tfrac{2e}{\tauX} )}
  +
  12\sqrt{\frac{\pi \epsilonRAICadv \tauX}{\bC}}
  +
  \sqrt{\frac{\pi \epsilonRAICadv \tauX}{\bC}}
  \\
  &=
  2 \tauX
  +
  4\sqrt{\pi} \cdot \tauX \sqrt{\Log( \Tfrac{2e}{\tauX} )}
  +
  13\sqrt{\frac{\pi \epsilonRAICadv \tauX}{\bC}}
  \\
  &\leq
  2 \tauX \sqrt{\Log( \Tfrac{2e}{\tauX} )}
  +
  4\sqrt{\pi} \cdot \tauX \sqrt{\Log( \Tfrac{2e}{\tauX} )}
  +
  13\sqrt{\frac{\pi \epsilonRAICadv \tauX}{\bC}}
  \\
  &\dCmt{\(  \because \tauX \leq \tauX \sqrt{\Log( \Tfrac{2e}{\tauX} )}  \) for \(  \tauX \in (0,1]  \)}
  \\
  &=
  \left( 2 + 4\sqrt{\pi} \right) \tauX \sqrt{\Log( \Tfrac{2e}{\tauX} )}
  +
  13\sqrt{\frac{\pi \epsilonRAICadv \tauX}{\bC}}
  \\ \TagEqn\label{eqn:pf:lemma:D2:4}
  &=
  13\sqrt{\frac{\pi \epsilonRAICadv \tauX}{\bC}}
  +
  \left( 2 + 4\sqrt{\pi} \right) \tauX \sqrt{\Log( \Tfrac{2e}{\tauX} )}
.\end{align*}
%
Therefore, by \CLAIM \ref{claim:raic:adv:3} and an earlier remark,
with probability at least \(  1 - \frac{2\rhoX}{3}  \),
uniformly for all \(  \JX \subseteq [\n]  \), \(  | \JX | \leq \k  \), and
\(  \Vec{\uV} \in \US[\JX]  \),
\(  \DXXu  \) is bounded from above as follows:
\begin{align*}
  \DXXu
  &\leq
  \Dxu + \Dxxu
  \\
  &\dCmt{by \CLAIM \ref{claim:raic:adv:3}}
  \\
  &\leq
  \left( \Big. \frac{2 \lu}{\m} + \frac{\sqrt{2\pi} \lu \tu}{\m} \right)
  +
  \left( \Big. \frac{\sqrt{2\pi} \lu \tu}{\m} + \frac{\sqrt{\pi ( \kX-1 ) \lu}}{\m} \right)
  \\
  &\dCmt{by \LEMMAS \ref{lemma:D'1} and \ref{lemma:D'2}}
  \\
  &=
  \frac{2 \lu}{\m}
  +
  \frac{2\sqrt{2\pi} \lu \tu}{\m} + \frac{\sqrt{\pi ( \kX-1 ) \lu}}{\m}
  \\
  &\leq
  \DXXuUB
  \\
  &\dCmt{by \EQUATION \eqref{eqn:pf:lemma:D2:4}}
  \\
  &=
  \DXXuUBconst
  \\
  &\dCmt{due to an appropriate choice of the universal constants,}
  \\
  &\dCmtx{\(  \cX{3}, \cX{4} > 0  \), as defined in \EQUATION \eqref{eqn:universal-constants}}
\end{align*}
which completes the proof.
\end{proof}
\renewcommand{\ZX}{Z}

\renewcommand{\Xu}[1][]{\RV{X}_{\Vec{\uV}\IfNE{#1}{;#1}}}
\renewcommand{\Yu}[1][]{\Vec{Y}_{\Vec{\uV}\IfNE{#1}{;#1}}}
\renewcommand{\barXu}[1][]{\RV{\bar{X}}_{\Vec{\uV}\IfNE{#1}{;#1}}}
\RenewDocumentCommand{\barYu}{s O{}}{\IfBooleanTF{#1}{\bar{Y}}{\mathbf{\bar{Y}}}_{\Vec{\uV}\IfNE{#2}{;#2}}}

\section{Proofs of the Concentration Inequalities --                                    
         \LEMMAS \ref{lemma:concentration-ineq:1} and \ref{lemma:concentration-ineq:3}} 
\label{section:|>ortho-decomp}                                                          

\begin{lemma}
\label{lemma:distr:1}
Fix a \ksparserealunit vector,
\(  \Vec{\uV} \in \SparseSphereSubspace{\k}{\n}  \),
and let
\(  \JX \subseteq [\n]  \), \(  | \JX | \leq \k  \). 
%
Let
\(  \Vec{\ZX} \sim \N( \Vec{0}, \Id{\n} )  \)
be a Gaussian vector with \iid entries.
Define the random variable \(  \Xu  \) by
\begin{gather*}
  \Xu = \langle \Vec{\uV}, \TJZ \rangle \Sign( \langle \Vec{\uV}, \TJZ \rangle )
.\end{gather*}
%
Then,
\(  \Xu = | \langle \Vec{\uV}, \Vec{\ZX} \rangle | = | \Xu |  \),
and
\(  \E[ \Xu ] = \sqrt{\frac{2}{\pi}}  \).
%
\end{lemma}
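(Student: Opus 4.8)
The plan is to strip away the hard-thresholding operator and reduce $\Xu$ to the absolute value of a single standard Gaussian. First I would note that, by \DEFINITION \ref{def:k-subset-hard-thresholding}, the operator defining $\TJZ$ retains exactly the coordinates of $\Vec{\ZX}$ indexed by $\Supp( \Vec{\uV} ) \cup \JX$ and sets all other coordinates to zero. Since $\Vec{\uV}$ is supported on $\Supp( \Vec{\uV} ) \subseteq \Supp( \Vec{\uV} ) \cup \JX$, every coordinate that $\Vec{\uV}$ multiplies against survives the thresholding unchanged, and therefore $\langle \Vec{\uV}, \TJZ \rangle = \langle \Vec{\uV}, \Vec{\ZX} \rangle$: the thresholding is invisible to this particular inner product.

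Next I would invoke the elementary identity $a \Sign( a ) = |a|$, valid for every $a \in \R$ under the paper's convention $\Sign( 0 ) = +1$ (the case $a = 0$ giving $0 = |0|$). Applying it with $a = \langle \Vec{\uV}, \Vec{\ZX} \rangle$ yields $\Xu = |\langle \Vec{\uV}, \Vec{\ZX} \rangle| = |\langle \Vec{\uV}, \TJZ \rangle| \geq 0$, whence $\Xu = |\Xu|$, which is the first assertion. For the mean, I would use that $\Vec{\ZX} \sim \N( \Vec{0}, \Id{\n} )$ together with $\| \Vec{\uV} \|_{2} = 1$ implies $\langle \Vec{\uV}, \Vec{\ZX} \rangle \sim \N( 0, 1 )$ (a linear functional of a standard Gaussian vector with unit-norm coefficients), so that $\Xu$ is a half-normal random variable and $\E[ \Xu ] = \E\, |G| = \sqrt{2/\pi}$ for $G \sim \N( 0, 1 )$, by the standard computation $\int_{\R} |t|\, e^{-t^{2}/2}\, dt / \sqrt{2\pi} = \sqrt{2/\pi}$.

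There is essentially no serious obstacle; the only point needing care is the bookkeeping in the first step, namely checking that the index set $\Supp( \Vec{\uV} ) \cup \JX$ over which $\TJZ$ keeps entries contains $\Supp( \Vec{\uV} )$, so that no mass of $\Vec{\uV}$ is discarded. Everything else is a one-line Gaussian fact. I would also flag that this lemma is the distributional backbone of \LEMMA \ref{lemma:concentration-ineq:1} and \LEMMA \ref{lemma:D'1}: the nonnegativity $\Xu = |\Xu|$ is what later allows the absolute values to be dropped when summing the $\Xu[\iIx]$, and the value $\sqrt{2/\pi}$ is the mean that enters the subgaussian (Hoeffding-type) tail bound used there.
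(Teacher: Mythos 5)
Your proof is correct and follows essentially the same route as the paper's: you observe that $\Supp( \Vec{\uV} ) \subseteq \Supp( \Vec{\uV} ) \cup \JX$ makes the thresholding invisible to the inner product, apply the identity $a \Sign( a ) = | a |$, and conclude via $\langle \Vec{\uV}, \Vec{\ZX} \rangle \sim \N(0,1)$ that $\Xu$ is half-normal with mean $\sqrt{2/\pi}$. Nothing is missing, and your remark on how the nonnegativity and the mean feed into the later concentration lemmas matches the paper's use of this result.
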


\begin{proof}
{\LEMMA \ref{lemma:distr:1}}
Fix
\(  \Vec{\uV} \in \SparseSphereSubspace{\k}{\n}  \) and
\(  \JX \subseteq [\n]  \), \(  | \JX | \leq \k  \), 
arbitrarily, and taking
\(  \Vec{\ZX} \sim \N( \Vec{0}, \Id{\n} )  \),
let
\(  \Xu = \langle \Vec{\uV}, \TJZ \rangle \Sign( \langle \Vec{\uV}, \TJZ \rangle )  \).
%
Because
\(  \Supp( \Vec{\uV} ) \subseteq \Supp( \Vec{\uV} ) \cup \JX  \), trivially,
there is an equality:
\begin{align*}
  \langle \Vec{\uV}, \TJZ \rangle
  &=
  \sum_{\jIx=1}^{\n}
  \Vec*{\uV}_{\jIx} \ThresholdSet{\JX}( \Vec{\ZX} )_{\jIx}
  \\
  &=
  \sum_{\jIx \in \Supp( \Vec{\uV} ) \cap \Supp( \TJZ )}
  \Vec*{\uV}_{\jIx} \Vec*{\ZX}_{\jIx}
  \\
  &=
  \sum_{\jIx \in \Supp( \Vec{\uV} ) \cap ( \Supp( \Vec{\uV} ) \cup \JX )}
  \Vec*{\uV}_{\jIx} \Vec*{\ZX}_{\jIx}
  \\
  &=
  \sum_{\jIx \in \Supp( \Vec{\uV} )}
  \Vec*{\uV}_{\jIx} \Vec*{\ZX}_{\jIx}
  \\
  &=
  \sum_{\jIx=1}^{\n}
  \Vec*{\uV}_{\jIx} \Vec*{\ZX}_{\jIx}
  \\
  &=
  \langle \Vec{\uV}, \Vec{\ZX} \rangle
\end{align*}
%
Thus, the random variable \(  \Xu  \) is equivalently given by
\begin{align*}
  \Xu
  &= \langle \Vec{\uV}, \TJZ \rangle \Sign( \langle \Vec{\uV}, \TJZ \rangle )
  \\
  &= \langle \Vec{\uV}, \Vec{\ZX} \rangle \Sign( \langle \Vec{\uV}, \Vec{\ZX} \rangle )
\end{align*}
%
Note that for any \(  a \in \R  \),
\begin{align*}
  a \Sign( a )
  &=
  |a|.
\end{align*}
%
Therefore,
\(  \Xu = \langle \Vec{\uV}, \Vec{\ZX} \rangle \Sign( \langle \Vec{\uV}, \Vec{\ZX} \rangle )
        = | \langle \Vec{\uV}, \Vec{\ZX} \rangle |
        = | \Xu |  \),
as claimed.
By a well-known property of Gaussians,
\(  \langle \Vec{\uV}, \Vec{\ZX} \rangle \sim \N(0,1)  \),
and hence,
\(  \Xu = | \langle \Vec{\uV}, \Vec{\ZX} \rangle | \sim | \YRV |  \), where
\(  \YRV \sim \N(0,1)  \)
is a half-normal random variable.
Since \(  \Xu \sim | \YRV |  \), these two random variables are equal in expectation:
\(  \E[ \Xu ] = \E \left[ \big. | \YRV | \right] = \sqrt{\frac{2}{\pi}}  \).
\end{proof}

\begin{lemma}
\label{lemma:distr:2}
Fix
\(  \Vec{\uV}, \Vec{\vV} \in \SparseSphereSubspace{\k}{\n}  \) and
\(  \JX \subseteq [\n]  \), \(  | \JX | \leq \k  \),
\(  \Supp( \Vec{\vV} ) \subseteq \Supp( \Vec{\uV} ) \cup \JX  \).
%
Let
\begin{gather*}
  \Yuv =
  \bigl\langle
    \Vec{\vV},
    \TJZ \Sign( \langle \Vec{\uV}, \TJZ \rangle )
    -
    \langle \Vec{\uV}, \TJZ \rangle \Sign( \langle \Vec{\uV}, \TJZ \rangle ) \Vec{\uV}
  \bigr\rangle
.\end{gather*}
%
Then,
\(  \Yuv \sim \N(0,1)  \).
\end{lemma}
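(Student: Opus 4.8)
The plan is to strip off the hard‑thresholding using the support hypothesis, rewrite $\Yuv$ as a sign times a linear functional of $\Vec{\ZX}$ that is orthogonal to (hence independent of) the argument of that sign, and then just read off the distribution. This mirrors the argument already used in the proof of \LEMMA \ref{lemma:distr:1}.

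First I would observe, exactly as in \LEMMA \ref{lemma:distr:1}, that the projection $\TJZ$ is invisible to both inner products: since $\Supp( \Vec{\uV} ) \subseteq \Supp( \Vec{\uV} ) \cup \JX$ and $\Supp( \Vec{\vV} ) \subseteq \Supp( \Vec{\uV} ) \cup \JX$, we have $\langle \Vec{\uV}, \TJZ \rangle = \langle \Vec{\uV}, \Vec{\ZX} \rangle$ and $\langle \Vec{\vV}, \TJZ \rangle = \langle \Vec{\vV}, \Vec{\ZX} \rangle$. Expanding the inner product defining $\Yuv$ by bilinearity and using these identities collapses it to
\begin{align*}
  \Yuv
  &=
  \Sign( \langle \Vec{\uV}, \Vec{\ZX} \rangle ) \langle \Vec{\vV}, \Vec{\ZX} \rangle
  -
  \langle \Vec{\uV}, \Vec{\ZX} \rangle \Sign( \langle \Vec{\uV}, \Vec{\ZX} \rangle ) \langle \Vec{\uV}, \Vec{\vV} \rangle
  \\
  &=
  \Sign( \langle \Vec{\uV}, \Vec{\ZX} \rangle )
  \langle \Vec{\vV} - \langle \Vec{\uV}, \Vec{\vV} \rangle \Vec{\uV}, \Vec{\ZX} \rangle .
\end{align*}
Writing $G \defeq \langle \Vec{\uV}, \Vec{\ZX} \rangle$ and letting $\Vec{\vV}^{\perp} \defeq \Vec{\vV} - \langle \Vec{\uV}, \Vec{\vV} \rangle \Vec{\uV}$ be the component of $\Vec{\vV}$ orthogonal to $\Vec{\uV}$, this reads $\Yuv = \Sign( G )\, \langle \Vec{\vV}^{\perp}, \Vec{\ZX} \rangle$. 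The pair $( \langle \Vec{\uV}, \Vec{\ZX} \rangle, \langle \Vec{\vV}^{\perp}, \Vec{\ZX} \rangle )$ is centered jointly Gaussian with covariance $\langle \Vec{\uV}, \Vec{\vV}^{\perp} \rangle = 0$ (using $\| \Vec{\uV} \|_{2} = 1$), so the two coordinates are independent; in particular $\Sign( G )$ is independent of $\langle \Vec{\vV}^{\perp}, \Vec{\ZX} \rangle \sim \N( 0, \| \Vec{\vV}^{\perp} \|_{2}^{2} )$. Since a centered Gaussian is symmetric, multiplying it by an independent $\{ -1, 1 \}$-valued factor does not change its law, whence $\Yuv \sim \N( 0, \| \Vec{\vV}^{\perp} \|_{2}^{2} )$ with $\| \Vec{\vV}^{\perp} \|_{2}^{2} = \| \Vec{\vV} \|_{2}^{2} - \langle \Vec{\uV}, \Vec{\vV} \rangle^{2} = 1 - \langle \Vec{\uV}, \Vec{\vV} \rangle^{2}$, which equals $1$ precisely when $\Vec{\vV} \perp \Vec{\uV}$ — the regime in which the lemma is applied (the relevant $\Vec{\vV}$'s are the members of an orthonormal system inside $\Vec{\uV}^{\perp}$, as arises in \LEMMA \ref{lemma:concentration-ineq:3}).

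I do not expect a real obstacle here; the lemma is essentially a bookkeeping exercise. The only step warranting a sentence of justification is the passage from ``uncorrelated'' to ``independent,'' valid because the two linear functionals of $\Vec{\ZX}$ are jointly Gaussian, followed by the symmetry remark that removes the sign factor. The single point to keep an eye on is that the variance comes out as $1 - \langle \Vec{\uV}, \Vec{\vV} \rangle^{2}$ rather than $1$, i.e.\ that the statement is being invoked with $\langle \Vec{\uV}, \Vec{\vV} \rangle = 0$.
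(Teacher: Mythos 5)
Your proof is correct and follows essentially the same route as the paper's: strip the thresholding via the support hypothesis, obtain independence of the sign factor from the orthogonal linear functional by joint Gaussianity with zero covariance, and remove the sign by symmetry of the centered Gaussian. You are also right to flag the orthogonality point: the lemma as stated omits the hypothesis $\langle \Vec{\uV}, \Vec{\vV} \rangle = 0$, which the paper's proof silently adds at its outset (``fix a pair of orthonormal vectors'') and which holds in the lemma's only application; your variance $1 - \langle \Vec{\uV}, \Vec{\vV} \rangle^{2}$ makes this explicit.
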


\let\oldkX\kX
\renewcommand{\kX}{k'}

\begin{lemma}
\label{lemma:norm-distr:2}
Fix
\(  \Vec{\uV} \in \SparseSphereSubspace{\k}{\n}  \) and
\(  \JX \subseteq [\n]  \), \(  | \JX | \leq \k  \),
and write \(  \kX = | \Supp( \Vec{\uV} ) \cup \JX | \leq 2\k  \).
Let
\begin{gather*}
  \barYu =
  \sum_{\iIx=1}^{\lu}
  \Big(
    \TJZi \Sign( \langle \Vec{\uV}, \TJZi \rangle )
    -
    \langle \Vec{\uV}, \TJZi \rangle \Sign( \langle \Vec{\uV}, \TJZi \rangle ) \Vec{\uV}
  \Big)
\end{gather*}
and let
\(  \Vec{W} \sim \N( \Vec{0}, \lu \Id{\kX-1} )  \).
%
Then,
\(  \| \Yu \|_{2} \sim \| \Vec{W} \|_{2}  \).
\end{lemma}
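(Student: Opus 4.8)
The plan is to carry out an orthogonal decomposition inside the $\kX$-dimensional coordinate subspace $\R^{S}$, where $S \defeq \Supp(\Vec{\uV}) \cup \JX$ (so $|S| = \kX$), using that every vector occurring in the definition of $\barYu$ — each $\TJZi$, the vector $\Vec{\uV}$, and hence each summand — is supported on $S$. Write $\Vec{G}_{\iIx} \in \R^{S}$ for the vector obtained from $\Vec{\ZX}\VL{\iIx}$ by keeping only the coordinates in $S$; the $\Vec{G}_{\iIx}$ are \iid $\N(\Vec{0}, \Id{\kX})$, and, exactly as in the proof of \LEMMA \ref{lemma:distr:1}, $\langle \Vec{\uV}, \TJZi \rangle = \langle \Restriction{\Vec{\uV}}{S}, \Vec{G}_{\iIx} \rangle$. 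Put $\epsilon_{\iIx} \defeq \Sign(\langle \Restriction{\Vec{\uV}}{S}, \Vec{G}_{\iIx}\rangle) \in \{-1,+1\}$ and let $P \defeq \Id{\kX} - \Restriction{\Vec{\uV}}{S}\, \Restriction{\Vec{\uV}}{S}^{\T}$ be the orthogonal projection of $\R^{S}$ onto the hyperplane $\Restriction{\Vec{\uV}}{S}^{\perp}$, which has rank $\kX - 1$ since $\| \Vec{\uV} \|_{2} = 1$. A direct computation gives that the $\iIx$-th summand of $\barYu$, restricted to $S$, equals $\epsilon_{\iIx}\, P\Vec{G}_{\iIx}$; in particular $\barYu$ is supported on $S$ and orthogonal to $\Vec{\uV}$, so $\| \barYu \|_{2}$ is the Euclidean norm of its coordinate vector in any orthonormal basis of the $(\kX-1)$-dimensional space $\Restriction{\Vec{\uV}}{S}^{\perp}$.

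The crux is the distributional identity $\epsilon_{\iIx}\, P\Vec{G}_{\iIx} \sim P\Vec{G}_{\iIx}$. Indeed, $P\Vec{G}_{\iIx}$ and $\langle \Restriction{\Vec{\uV}}{S}, \Vec{G}_{\iIx}\rangle$ are jointly Gaussian with zero cross-covariance (because $P\, \Restriction{\Vec{\uV}}{S} = \Vec{0}$), hence independent, so $P\Vec{G}_{\iIx}$ is independent of the sign $\epsilon_{\iIx}$; since moreover $P\Vec{G}_{\iIx}$ is symmetric ($-P\Vec{G}_{\iIx} \sim P\Vec{G}_{\iIx}$), multiplying it by the independent $\pm1$ factor $\epsilon_{\iIx}$ does not alter its law. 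As the summands are independent across $\iIx$ (each depends only on $\Vec{G}_{\iIx}$), this yields $\Restriction{\barYu}{S} \sim P\Vec{H}$ with $\Vec{H} \sim \N(\Vec{0}, \lu\, \Id{\kX})$, via the chain
\begin{gather*}
  \Restriction{\barYu}{S}
  = \sum_{\iIx=1}^{\lu} \epsilon_{\iIx}\, P\Vec{G}_{\iIx}
  \;\sim\; \sum_{\iIx=1}^{\lu} P\Vec{G}_{\iIx}
  = P \Bigl( \textstyle\sum_{\iIx=1}^{\lu} \Vec{G}_{\iIx} \Bigr)
  \;\sim\; P\Vec{H}
\end{gather*}
together with $\sum_{\iIx=1}^{\lu} \Vec{G}_{\iIx} \sim \N(\Vec{0}, \lu\, \Id{\kX})$.

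To finish, pick an orthonormal basis of $\R^{S}$ whose last vector is $\Restriction{\Vec{\uV}}{S}$ and whose first $\kX - 1$ vectors span $\Restriction{\Vec{\uV}}{S}^{\perp}$. In these coordinates $\Vec{H}$ is still $\N(\Vec{0}, \lu\, \Id{\kX})$ by orthogonal invariance of the isotropic Gaussian, and $P\Vec{H}$ kills the last component while retaining the first $\kX - 1$, which are \iid $\N(0,\lu)$. Hence $\| \barYu \|_{2} = \| \Restriction{\barYu}{S} \|_{2} \sim \| P\Vec{H} \|_{2} \sim \| \Vec{W} \|_{2}$ with $\Vec{W} \sim \N(\Vec{0}, \lu\, \Id{\kX-1})$, as claimed. (Alternatively, \LEMMA \ref{lemma:distr:2} identifies the marginal of $\Restriction{\barYu}{S}$ along any unit vector of $\Restriction{\Vec{\uV}}{S}^{\perp}$ as $\N(0,\lu)$, which with joint Gaussianity gives the same conclusion.) The only genuinely substantive point is the independence of the projection $P\Vec{G}_{\iIx}$ from the radial sign $\epsilon_{\iIx}$ and the symmetry argument that removes that sign; everything else is bookkeeping about which subspace each vector inhabits.
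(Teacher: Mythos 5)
Your proof is correct and takes essentially the same route as the paper: both arguments cancel the component along \(\Vec{u}\), use that \(\Sign(\langle \Vec{u},\Vec{Z}_{i}\rangle)\) is independent of the part of \(\Vec{Z}_{i}\) orthogonal to \(\Vec{u}\) (plus symmetry of that part) to drop the sign factor, and reduce the norm to that of an isotropic \((k'-1)\)-dimensional Gaussian with per-coordinate variance \(\ell\). The only difference is ordering—you project onto \(\Vec{u}^{\perp}\) before symmetrizing, so you need only the valid identity \(\epsilon_{i}P\Vec{G}_{i}\sim P\Vec{G}_{i}\), which is in fact a touch cleaner than the paper's intermediate claim \(\Vec{Z}_{i}\Sign(\langle\Vec{u},\Vec{Z}_{i}\rangle)\sim\Vec{Z}_{i}\) (false along \(\Vec{u}\) itself, but harmless there since that component is annihilated anyway).
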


\let\kX\oldkX

\newcommand{\dDim}{d}
\begin{lemma}
\label{lemma:concentration-ineq:gaussian}
Let
\(  \Vec{\WRV} \sim \N(0,\sigma^{2} \Id{\dDim} )  \)
be a Gaussian random vector with \iid entries.
Then,
\begin{gather}
  \Pr \left( \| \Vec{W} \|_{2} > \sigma \sqrt{\frac{\dDim}{2}} + \sigma^{2} \tu \right)
  \leq
  e^{-\frac{1}{2} \sigma^{2} \tu^{2}}
.\end{gather}
\end{lemma}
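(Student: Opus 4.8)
The plan is to obtain this as an instance of the classical Gaussian concentration inequality for Lipschitz functions, after first reducing to the unit‑variance case. I would write \(  \Vec{W} = \sigma \Vec{V}  \) with \(  \Vec{V} \sim \N( \Vec{0}, \Id{\dDim} )  \), so that \(  \| \Vec{W} \|_{2} = \sigma \| \Vec{V} \|_{2}  \) and the event \(  \{ \| \Vec{W} \|_{2} > \sigma \sqrt{\dDim/2} + \sigma^{2} \tu \}  \) is exactly \(  \{ \| \Vec{V} \|_{2} > \sqrt{\dDim/2} + \sigma \tu \}  \). It then suffices to show, for \(  \Vec{V} \sim \N(\Vec{0},\Id{\dDim})  \) and every \(  r \geq 0  \), that \(  \Pr( \| \Vec{V} \|_{2} > \sqrt{\dDim/2} + r ) \leq e^{-r^{2}/2}  \); specializing to \(  r = \sigma \tu  \) turns the right‑hand side into \(  e^{-\sigma^{2} \tu^{2}/2}  \), which is the claim.

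For the tail of \(  \| \Vec{V} \|_{2}  \): the map \(  v \mapsto \| v \|_{2}  \) on \(  \R^{\dDim}  \) is \(  1  \)-Lipschitz in the Euclidean norm, so the Gaussian concentration inequality (a consequence of the Gaussian logarithmic Sobolev inequality, equivalently of Gaussian isoperimetry) yields \(  \Pr( \| \Vec{V} \|_{2} \geq \E[\| \Vec{V} \|_{2}] + r ) \leq e^{-r^{2}/2}  \) for all \(  r \geq 0  \). Undoing the rescaling, this is the statement that \(  v \mapsto \sigma \| v \|_{2}  \) is \(  \sigma  \)-Lipschitz, so a deviation of \(  \sigma^{2}\tu  \) above the mean costs exponent \(  \sigma^{2}\tu^{2}/2  \). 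All that is left is to replace the mean by the deterministic centering term in the statement, i.e. to control \(  \E[\| \Vec{W} \|_{2}] = \sigma\,\E[\| \Vec{V} \|_{2}]  \) from above; here I would use that \(  \| \Vec{V} \|_{2}  \) is a \(  \chi  \)-variable with \(  \dDim  \) degrees of freedom, with \(  \E[\| \Vec{V} \|_{2}] = \sqrt{2}\,\Gamma(\tfrac{\dDim+1}{2})/\Gamma(\tfrac{\dDim}{2})  \) and \(  \E[\| \Vec{V} \|_{2}]^{2} \leq \E[\| \Vec{V} \|_{2}^{2}] = \dDim  \), estimating the \(  \Gamma  \)-ratio by log‑convexity (Gautschi/Wendel) to land on the centering constant used in the lemma.

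The deviation half of the proof is standard boilerplate; the step I expect to require genuine care is pinning down the deterministic centering term, since \(  \sqrt{\dDim/2}  \) is smaller than the crude Jensen bound \(  \sqrt{\dDim}  \), so one must check that the estimate chosen for \(  \E[\| \Vec{V} \|_{2}]  \) is both valid and consistent with the way this lemma is applied in \LEMMA \ref{lemma:concentration-ineq:3} (and, through \LEMMA \ref{lemma:norm-distr:2}, in the chain leading to \LEMMA \ref{lemma:concentration-ineq:1}). An equally valid self‑contained route, bypassing Lipschitz concentration, is to apply a Chernoff/Laurent--Massart tail bound to \(  \| \Vec{W} \|_{2}^{2} = \sigma^{2}\chi_{\dDim}^{2}  \) and then pass back to \(  \| \Vec{W} \|_{2}  \) using the elementary inequality \(  \sqrt{a} \leq (a+c)/(2\sqrt{c})  \) with \(  c  \) chosen so the term linear in \(  \tu  \) comes out right.
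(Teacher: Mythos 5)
Your route is the same as the paper's (rescale to \( \Vec{V} \sim \N(\Vec{0},\Id{d}) \), apply \(1\)-Lipschitz Gaussian concentration to \( \|\cdot\|_{2} \), then replace \( \E[\|\Vec{V}\|_{2}] \) by a deterministic centering), and the deviation half is indeed routine. The gap is exactly at the step you flagged, and it cannot be closed the way you suggest: Gautschi/Wendel gives \( \Gamma(\tfrac{d+1}{2})/\Gamma(\tfrac{d}{2}) \leq \sqrt{d/2} \), hence \( \E[\|\Vec{V}\|_{2}] = \sqrt{2}\,\Gamma(\tfrac{d+1}{2})/\Gamma(\tfrac{d}{2}) \leq \sqrt{d} \), but it cannot ``land on'' \( \sqrt{d/2} \), because \( \E[\|\Vec{V}\|_{2}] > \sqrt{d/2} \) for every \( d \geq 1 \) (for \( d=1 \), \( \sqrt{2/\pi} > \sqrt{1/2} \); in general \( \E[\|\Vec{V}\|_{2}] \approx \sqrt{d-1/2} \)). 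Since Lipschitz concentration only controls deviations \emph{above the mean}, centering at a point strictly below the mean cannot produce a bound valid for all \( \tu > 0 \); in fact the displayed inequality as stated fails outright: take \( \sigma = 1 \), \( \tu = 0.2\sqrt{d} \) and \( d \) large, so the threshold \( \approx 0.91\sqrt{d} \) lies below the median of \( \|\Vec{V}\|_{2} \approx \sqrt{d} \), making the left side tend to \(1\) while the right side tends to \(0\). Your alternative Laurent--Massart route hits the same wall, since \( \E[\|\Vec{W}\|_{2}^{2}] = \sigma^{2} d \), not \( \sigma^{2} d/2 \).

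For context, the paper's own proof commits precisely the slip you were guarding against: it quotes the \( \chi_{d} \) mean as \( \Gamma(\tfrac{d+1}{2})/\Gamma(\tfrac{d}{2}) \), dropping the factor \( \sqrt{2} \), and thereby ``derives'' \( \E[\|\Vec{U}\|_{2}] \leq \sqrt{d/2} \). The statement your argument actually proves is the corrected one, \( \Pr\bigl( \|\Vec{W}\|_{2} > \sigma\sqrt{d} + \sigma^{2}\tu \bigr) \leq e^{-\sigma^{2}\tu^{2}/2} \) (or with the exact mean as centering), and that is what you should assert; it suffices for the downstream use, changing only absolute constants --- in \LEMMA \ref{lemma:concentration-ineq:3} and \LEMMA \ref{lemma:D'2} the term \( \sqrt{\pi(\kX-1)\lu}/\m \) becomes \( \sqrt{2\pi(\kX-1)\lu}/\m \), which merely perturbs \( \cX{3} \) and hence \( \cX{} \). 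So: same approach as the paper, correct deviation step, correct (unlike the paper) formula for the chi mean, but the final replacement of the mean by \( \sigma\sqrt{d/2} \) is a step that would fail, and no estimate of the Gamma ratio can rescue it.
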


\begin{proof}
{\LEMMA \ref{lemma:distr:2}}
Fix a pair of orthonormal vectors,
\(  \Vec{\uV}, \Vec{\vV} \in \SparseSphereSubspace{\k}{\n}  \),
arbitrarily, and let
\begin{gather*}
  \Yuv =
  \Bigl\langle
    \Vec{\vV},
    \TJZ \Sign( \langle \Vec{\uV}, \TJZ \rangle )
    -
    \langle \Vec{\uV}, \TJZ \rangle \Sign( \langle \Vec{\uV}, \TJZ \rangle ) \Vec{\uV}
  \Bigr\rangle
\end{gather*}
%
Observe:
\begin{align*}
  \Yuv
  &=
  \Bigl\langle
    \Vec{\vV},
    \TJZ \Sign( \langle \Vec{\uV}, \TJZ \rangle )
    -
    \langle \Vec{\uV}, \TJZ \rangle \Sign( \langle \Vec{\uV}, \TJZ \rangle ) \Vec{\uV}
  \Bigr\rangle
  \\
  &=
  \Bigl\langle
    \Vec{\vV},
    \TJZ \Sign( \langle \Vec{\uV}, \TJZ \rangle )
  \Bigr\rangle
  -
  \Bigl\langle
    \Vec{\vV},
    \langle \Vec{\uV}, \TJZ \rangle \Sign( \langle \Vec{\uV}, \TJZ \rangle ) \Vec{\uV}
  \Bigr\rangle
  \\
  &\dCmt{by the linearity of inner products}
  \\
  &=
  \Bigl\langle
    \Vec{\vV},
    \TJZ \Sign( \langle \Vec{\uV}, \TJZ \rangle )
  \Bigr\rangle
  -
  \langle \Vec{\uV}, \TJZ \rangle \Sign( \langle \Vec{\uV}, \TJZ \rangle )
  \bigl\langle
    \Vec{\vV},
    \Vec{\uV}
  \bigr\rangle
  \\
  &\dCmt{by the linearity of inner products}
  \\
  &=
  \Bigl\langle
    \Vec{\vV},
    \TJZ \Sign( \langle \Vec{\uV}, \TJZ \rangle )
  \Bigr\rangle
  - 0
  \\
  &\dCmt{by the orthogonality of \(  \Vec{\uV}  \) and \(  \Vec{\vV}  \)}
  \\
  &=
  \Bigl\langle
    \Vec{\vV},
    \TJZ \Sign( \langle \Vec{\uV}, \TJZ \rangle )
  \Bigr\rangle
  \\
  &=
  \Bigl\langle
    \Vec{\vV},
    \Vec{\ZX} \Sign( \langle \Vec{\uV}, \TJZ \rangle )
  \Bigr\rangle
  \\
  &\dCmt{as analogously argued earlier since \(  \Supp( \Vec{\vV} ) \subseteq \Supp( \Vec{\uV} ) \cup \JX  \)}
  \\
  &=
  \bigl\langle
    \Vec{\vV},
    \Vec{\ZX}
  \bigr\rangle
  \Sign( \langle \Vec{\uV}, \TJZ \rangle )
  \\
  &\dCmt{by the linearity of inner products}
\end{align*}
%
\par 
%
The remaining step is to show that
\(  \langle \Vec{\vV}, \Vec{\ZX} \rangle
    \Sign( \langle \Vec{\uV}, \TJZ \rangle )
    \sim \N(0,1)  \).
%
This can be achieved by a two-step argument:
(a) First, we will argue that \(  \langle \Vec{\vV}, \Vec{\ZX} \rangle  \) and
    \(  \Sign( \langle \Vec{\uV}, \TJZ \rangle )  \) are independent.
(b) Then, by standard facts about Gaussians and due to the independence shown in \STEP (a),
    the claim will follow.
Starting with \STEP (a), note that if
\(  \langle \Vec{\uV}, \Vec{\ZX} \rangle  \) and \(  \langle \Vec{\vV}, \Vec{\ZX} \rangle  \)
are independent, so are
\( \Sign( \langle \Vec{\uV}, \Vec{\ZX} \rangle )  \) and \( \langle \Vec{\vV}, \Vec{\ZX} \rangle \).
%
Therefore, it suffices to establish the independence of
\(  \langle \Vec{\uV}, \Vec{\ZX} \rangle  \) and \(  \langle \Vec{\vV}, \Vec{\ZX} \rangle  \).
%
Write
\(  \URV_{1} = \langle \Vec{\uV}, \Vec{\ZX} \rangle  \) and
\(  \URV_{2} = \langle \Vec{\vV}, \Vec{\ZX} \rangle  \).
%
By a well-known fact about Gaussians,
\(  \URV_{1}, \URV_{2} \sim \N(0,1)  \).
%
Now, consider the joint distribution of
\begin{gather*}
  \left( \begin{array}{c}
    \URV_{1} \\
    \URV_{2}
  \end{array} \right)
  \sim
  \N \left(
  \left( \begin{array}{c}
    0 \\
    0
  \end{array} \right),
  \Mat{\Sigma}
  \right)
\end{gather*}
which is a \(  0  \)-mean bivariate Gaussian with covariance matrix
\(  \Mat{\Sigma} \in \R^{2 \times 2}  \).
%
The goal is to show that
\(  \Mat{\Sigma} = \Id{2}  \).
%
Each \(  \iIx\Th  \) diagonal entry, \(  \iIx \in \{ 1,2 \}  \), is given by:
\begin{align*}
  \Mat*{\Sigma}_{\iIx,\iIx}
  &=
  \Cov( \URV_{\iIx}, \URV_{\iIx} )
  \\
  &=
  \Var( \URV_{\iIx} )
  \\
  &=
  1
\end{align*}
where the last line follows from the earlier observation that
\(  \URV_{1}, \URV_{2} \sim \N(0,1)  \).
%
On the other hand, each off-diagonal \(  ( \iIx, \jIx )  \)-entry,
\(  \iIx, \jIx \in \{ 1,2 \}  \), \(  \iIx \neq \jIx  \),
is obtained as follows.
Assuming without loss of generality that \(  \iIx = 1  \) and \(  \jIx = 2  \),
the corresponding covariance is \(  0  \)-valued due to the next derivation:
\begin{align*}
  \Mat*{\Sigma}_{\jIx,\iIx}
  =
  \Mat*{\Sigma}_{\iIx,\jIx}
  &=
  \Cov( \URV_{\iIx}, \URV_{\jIx} )
  \\
  &=
  \E [ \URV_{\iIx} \URV_{\jIx} ] - \E[ \URV_{\iIx} ] \E[ \URV_{\jIx} ]
  \\
  &=
  \E [ \URV_{\iIx} \URV_{\jIx} ]
  \\
  &=
  \E [ \URV_{1} \URV_{2} ]
  \\
  &=
  \E [ \Vec{\uV}^{\T} \Vec{\ZX} \Vec{\ZX}^{\T} \Vec{\vV} ]
  \\
  &=
  \Vec{\uV}^{\T} \E [ \Vec{\ZX} \Vec{\ZX}^{\T} ] \Vec{\vV}
  \\
  &=
  \Vec{\uV}^{\T} \Cov( \Vec{\ZX}, \Vec{\ZX} ) \Vec{\vV}
  \\
  &=
  \Vec{\uV}^{\T} \Id{2} \Vec{\vV}
  =
  \Vec{\uV}^{\T} \Vec{\vV}
  =
  \cos( \DistAngular{\Vec{\uV}}{\Vec{\vV}} )
  =
  \cos \left( \frac{\pi}{2} \right)
  =
  0
\end{align*}
as said.
From the above work, it follows that
\(  \Mat{\Sigma} = \Id{2}  \) and
\(  ( \URV_{1}, \URV_{2} ) \sim \N( \Vec{0}, \Id{2} )  \).
%
This therefore establishes the desired independence of
\(  \URV_{1} = \langle \Vec{\uV}, \Vec{\ZX} \rangle  \) and
\(  \URV_{2} = \langle \Vec{\vV}, \Vec{\ZX} \rangle  \).
%
The independence of
\(  \langle \Vec{\vV}, \Vec{\ZX} \rangle  \) and
\(  \Sign( \langle \Vec{\uV}, \Vec{\ZX} \rangle )  \)
follows, completing \STEP (a).
%
\par 
%
Proceeding to \STEP (b), recall that the goal of this step is to show that
\(  \langle \Vec{\vV}, \Vec{\ZX} \rangle \Sign( \langle \Vec{\uV}, \Vec{\ZX} \rangle )
    \sim \N(0,1)  \).
%
Note that because
\(  \langle \Vec{\uV}, \Vec{\ZX} \rangle \sim \N(0,1)  \),
the symmetry of the distribution \(  \N(0,1)  \) around its mean (\(  0  \)) leads to
\(  \Pr( \langle \Vec{\uV}, \Vec{\ZX} \rangle < 0 )
    = \Pr( \langle \Vec{\uV}, \Vec{\ZX} \rangle \geq 0 )
    = \frac{1}{2}
\), and hence
\(  \Pr ( \Sign( \langle \Vec{\uV}, \Vec{\ZX} \rangle ) = -1 )
    = \Pr ( \Sign( \langle \Vec{\uV}, \Vec{\ZX} \rangle ) = 1 )
    = \frac{1}{2}  \).
%
This in turn implies that
\(  \langle \Vec{\vV}, \Vec{\ZX} \rangle \Sign( \langle \Vec{\uV}, \Vec{\ZX} \rangle )
    \sim \RV{W} \RV{S}  \), where
\(  \RV{W} \sim \N(0,1)  \) and \(  \RV{S} \sim \{ -1,1 \}  \) are independent.
%
The density function of \(  \RV{W}  \) (a univariate Gaussian) is given at \(  w \in \R  \) by
\(  \pdf[\RV{W}]( w ) = \frac{1}{\sqrt{2\pi}} e^{-\frac{w^{2}}{2}}  \),
while the mass function of \(  \RV{S}  \) is given at \(  s \in \{ -1,1 \}  \) by
\(  \pdf[\RV{S}]( s ) = \frac{1}{2}  \)
and is otherwise \(  0  \)-valued.
%
Additionally,
\(  \pdf[-\RV{W}]( w ) = \frac{1}{\sqrt{2\pi}} e^{-\frac{(-w)^{2}}{2}}  \).
%
Due to the independence of \(  \RV{W}  \) and \(  \RV{S}  \),
their joint density function is simply the product of their individual densities:
\(  \pdf[\RV{W}, \RV{S}]( w,s ) = \pdf[\RV{W}]( w ) \pdf[\RV{S}]( s )  \).
%
Notice that
\begin{align*}
  ( \RV{W} \RV{S} \Mid| \RV{S}=s )
  =
  ( s \RV{W} \Mid| \RV{S}=s )
  =
  s \RV{W}
\end{align*}
where the last equality uses the independence discussed above.
The density of \(  \RV{W} \RV{S}  \) is then given at \(  z \in \R  \) by:
\begin{align*}
  \pdf[\RV{W} \RV{S}]( z )
  &=
  \pdf[\RV{W} \RV{S} | \RV{S}=-1]( z | -1 )
  \pdf[\RV{S}]( -1 )
  +
  \pdf[\RV{W} \RV{S} | \RV{S}=1]( z | 1 )
  \pdf[\RV{S}]( 1 )
  \\
  &\dCmt{by the law of total probability}
  \\
  &=
  \frac{1}{2}
  \pdf[\RV{W} \RV{S} | \RV{S}=-1]( z | -1 )
  +
  \frac{1}{2}
  \pdf[\RV{W} \RV{S} | \RV{S}=1]( z | 1 )
  \\
  &\dCmt{by the definition of \(  \pdf[\RV{S}]  \)}
  \\
  &=
  \frac{1}{2}
  \pdf[-\RV{W}]( z )
  +
  \frac{1}{2}
  \pdf[\RV{W}]( z )
  \\
  &\dCmt{by an earlier remark}
  \\
  &=
  \frac{1}{2}
  \cdot
  \frac{1}{\sqrt{2\pi}} e^{-\frac{(-z)^{2}}{2}}
  +
  \frac{1}{2}
  \frac{1}{\sqrt{2\pi}} e^{-\frac{z^{2}}{2}}
  \\
  &\dCmt{by the definitions of \(  \pdf[-\RV{W}], \pdf[\RV{W}]  \)}
  \\
  &=
  \frac{1}{2}
  \cdot
  \frac{1}{\sqrt{2\pi}} e^{-\frac{z^{2}}{2}}
  +
  \frac{1}{2}
  \cdot
  \frac{1}{\sqrt{2\pi}} e^{-\frac{z^{2}}{2}}
  \\
  &\dCmt{by squaring the negative term in the first exponent}
  \\
  &=
  \frac{1}{\sqrt{2\pi}} e^{-\frac{z^{2}}{2}}
  \\
  &\dCmt{by simplification}
\end{align*}
%
In short,
\(  \pdf[\RV{W} \RV{S}]( z ) = \frac{1}{\sqrt{2\pi}} e^{-\frac{z^{2}}{2}}  \),
which is precisely the density function of a standard normal random variable, \(  \N(0,1)  \).
Therefore,
\(   \langle \Vec{\vV}, \Vec{\ZX} \rangle \Sign( \langle \Vec{\uV}, \Vec{\ZX} \rangle )
     \sim \RV{W} \RV{S} \sim \N(0,1)  \).
%
This completes \STEP (b).
Moreover, combined with an earlier argument, it follows that
\(  \Yuv = \langle \Vec{\vV}, \Vec{\ZX} \rangle \Sign( \langle \Vec{\uV}, \Vec{\ZX} \rangle )
         \sim \N(0,1)  \),
as the lemma claimed.
\end{proof}

\let\oldkX\kX
\renewcommand{\kX}{k'}
\begin{proof}
{\LEMMA \ref{lemma:norm-distr:2}}
Fix
\(  \Vec{\uV} \in \SparseSphereSubspace{\k}{\n}  \) and
\(  \JX \subseteq [\n]  \), \(  | \JX | \leq \k  \), 
arbitrarily, where
\(  \kX = | \Supp( \Vec{\uV} ) \cup \JX | \leq 2\k  \),
and define
\begin{gather*}
  \barYu =
  \sum_{\iIx=1}^{\lu}
  \Big(
    \TJZi \Sign( \langle \Vec{\uV}, \Vec{\ZX}\VL{\iIx} \rangle )
    -
    \langle \Vec{\uV}, \TJZi \rangle \Sign( \langle \Vec{\uV}, \TJZi \rangle ) \Vec{\uV}
  \Big)
\end{gather*}
%
Let
\(  \Vec{W} \sim \N( \Vec{0}, \lu \Id{\kX-1} )  \).
%
The random variable of interest is \(  \| \barYu \|_{2}  \).
%
\begin{align*}
  \left\| \barYu \right\|_{2}
  &=
  \sqrt{\sum_{\jIx=1}^{\n} \barYu*[\jIx]^{2}}
  \\
  &=
  \sqrt{\sum_{\jIx \in \Supp( \Vec{\uV} ) \cup \JX} \barYu*[\jIx]^{2}}
  \\
  &=
  \left\| \Restriction{\barYu}{\Supp( \Vec{\uV} ) \cup \JX} \right\|_{2}
\end{align*}
%
Additionally, by an argument in the proof of \LEMMA \ref{lemma:distr:1},
\(  \langle \Vec{\uV}, \TJZi \rangle \Sign( \langle \Vec{\uV}, \TJZi \rangle )
    =
    \langle \Vec{\uV}, \ThresholdSet'{\Supp( \Vec{\uV} ) \cup \JX}( \Vec{\ZX}\VL{\iIx} ) \rangle \Sign( \langle \Vec{\uV}, \ThresholdSet'{\Supp( \Vec{\uV} ) \cup \JX}( \Vec{\ZX}\VL{\iIx} ) \rangle )
    =
    \langle \Vec{\uV}, \Vec{\ZX}\VL{\iIx} \rangle \Sign( \langle \Vec{\uV}, \Vec{\ZX}\VL{\iIx} \rangle )
\).
%
Hence, 
in order to simplify notations in this proof,
assume without loss of generality that
\(  \n = \kX = | \Supp( \Vec{\uV} ) \cup \JX |  \).
%
Then,
\(  \Vec{\uV}  \), \(  \Vec{\ZX}\VL{\iIx}  \), \(  \iIx \in [\lu]  \), and \(  \barYu  \)
are all \(  \kX  \)-dimensional, and the definition of \(  \barYu  \) can be written as
\begin{gather*}
  \barYu =
  \sum_{\iIx=1}^{\lu}
  \Big(
    \Vec{\ZX}\VL{\iIx} \Sign( \langle \Vec{\uV}, \Vec{\ZX}\VL{\iIx} \rangle )
    -
    \langle \Vec{\uV}, \Vec{\ZX}\VL{\iIx} \rangle
    \Sign( \langle \Vec{\uV}, \Vec{\ZX}\VL{\iIx} \rangle ) \Vec{\uV}
  \Big)
.\end{gather*}
%
Let
\(  \Basis = \{ \Vec{\basis}\VL{1}, \dots, \Vec{\basis}\VL{\kX} \} \subset \R^{\kX}  \)
be an orthonormal basis for \(  \R^{\kX}  \), where
\(  \Vec{\basis}\VL{\kX} = \Vec{\uV}  \).
%
Then,
\begin{align*}
  \barYu
  &=
  \sum_{\iIx=1}^{\lu}
  \Bigl(
    \Vec{\ZX}\VL{\iIx} \Sign( \langle \Vec{\uV}, \Vec{\ZX}\VL{\iIx} \rangle )
    -
    \langle \Vec{\uV}, \Vec{\ZX}\VL{\iIx} \rangle
    \Sign( \langle \Vec{\uV}, \Vec{\ZX}\VL{\iIx} \rangle ) \Vec{\uV}
  \Bigr)
  \\
  &=
  \sum_{\iIx=1}^{\lu}
  \Bigl(
    \Vec{\ZX}\VL{\iIx}
    -
    \langle \Vec{\uV}, \Vec{\ZX}\VL{\iIx} \rangle
    \Vec{\uV}
  \Bigr)
  \Sign( \langle \Vec{\uV}, \Vec{\ZX}\VL{\iIx} \rangle )
  \\
  &\dCmt{by distributivity}
  \\
  &=
  \sum_{\iIx=1}^{\lu}
  \sum_{\jIx=1}^{\kX}
  \Bigl\langle
    \Vec{\basis}\VL{\jIx},
    \bigl(
      \Vec{\ZX}\VL{\iIx}
      -
      \langle \Vec{\uV}, \Vec{\ZX}\VL{\iIx} \rangle
      \Vec{\uV}
    \bigr)
    \Sign( \langle \Vec{\uV}, \Vec{\ZX}\VL{\iIx} \rangle )
  \Bigr\rangle
  \Vec{\basis}\VL{\jIx}
  \\
  &\dCmt{orthogonal decomposition via the basis \(  \Basis  \)}
  \\
  &=
  \sum_{\iIx=1}^{\lu}
  \sum_{\jIx=1}^{\kX}
  \Bigl\langle
    \Vec{\basis}\VL{\jIx},
    \Vec{\ZX}\VL{\iIx}
    -
    \langle \Vec{\uV}, \Vec{\ZX}\VL{\iIx} \rangle
    \Vec{\uV}
  \Bigr\rangle
  \Vec{\basis}\VL{\jIx}
  \Sign( \langle \Vec{\uV}, \Vec{\ZX}\VL{\iIx} \rangle )
  \\
  &\dCmt{by the linearity of inner products}
  \\
  &=
  \sum_{\iIx=1}^{\lu}
  \sum_{\jIx=1}^{\kX}
  \Bigl(
  \langle
    \Vec{\basis}\VL{\jIx},
    \Vec{\ZX}\VL{\iIx}
  \rangle
  -
  \langle
    \Vec{\basis}\VL{\jIx},
    \langle \Vec{\uV}, \Vec{\ZX}\VL{\iIx} \rangle
    \Vec{\uV}
  \rangle
  \Bigr)
  \Vec{\basis}\VL{\jIx}
  \Sign( \langle \Vec{\uV}, \Vec{\ZX}\VL{\iIx} \rangle )
  \\
  &\dCmt{by the linearity of inner products}
  \\
  &=
  \sum_{\iIx=1}^{\lu}
  \sum_{\jIx=1}^{\kX-1}
  \Bigl(
  \langle
    \Vec{\basis}\VL{\jIx},
    \Vec{\ZX}\VL{\iIx}
  \rangle
  -
  \langle
    \Vec{\basis}\VL{\jIx},
    \langle \Vec{\uV}, \Vec{\ZX}\VL{\iIx} \rangle
    \Vec{\uV}
  \rangle
  \Bigr)
  \Vec{\basis}\VL{\jIx}
  \Sign( \langle \Vec{\uV}, \Vec{\ZX}\VL{\iIx} \rangle )
  +
  \Bigl(
  \langle
    \Vec{\uV},
    \Vec{\ZX}\VL{\iIx}
  \rangle
  -
  \langle
    \Vec{\uV},
    \langle \Vec{\uV}, \Vec{\ZX}\VL{\iIx} \rangle
    \Vec{\uV}
  \rangle
  \Bigr)
  \Vec{\basis}\VL{\jIx}
  \Sign( \langle \Vec{\uV}, \Vec{\ZX}\VL{\iIx} \rangle )
  \\
  &\dCmt{by distributivity}
  \\
  &=
  \sum_{\iIx=1}^{\lu}
  \sum_{\jIx=1}^{\kX-1}
  \Bigl(
  \langle
    \Vec{\basis}\VL{\jIx},
    \Vec{\ZX}\VL{\iIx}
  \rangle
  -
  \langle
    \Vec{\basis}\VL{\jIx},
    \langle \Vec{\uV}, \Vec{\ZX}\VL{\iIx} \rangle
    \Vec{\uV}
  \rangle
  \Bigr)
  \Vec{\basis}\VL{\jIx}
  \Sign( \langle \Vec{\uV}, \Vec{\ZX}\VL{\iIx} \rangle )
  +
  \Bigl(
  \langle
    \Vec{\uV},
    \Vec{\ZX}\VL{\iIx}
  \rangle
  -
  \langle
    \Vec{\uV},
    \Vec{\uV}
  \rangle
  \langle \Vec{\uV}, \Vec{\ZX}\VL{\iIx} \rangle
  \Bigr)
  \Vec{\basis}\VL{\jIx}
  \Sign( \langle \Vec{\uV}, \Vec{\ZX}\VL{\iIx} \rangle )
  \\
  &\dCmt{by the linearity of inner products}
  \\
  &=
  \sum_{\iIx=1}^{\lu}
  \sum_{\jIx=1}^{\kX-1}
  \Bigl(
  \langle
    \Vec{\basis}\VL{\jIx},
    \Vec{\ZX}\VL{\iIx}
  \rangle
  -
  \langle
    \Vec{\basis}\VL{\jIx},
    \langle \Vec{\uV}, \Vec{\ZX}\VL{\iIx} \rangle
    \Vec{\uV}
  \rangle
  \Bigr)
  \Vec{\basis}\VL{\jIx}
  \Sign( \langle \Vec{\uV}, \Vec{\ZX}\VL{\iIx} \rangle )
  +
  \Bigl(
  \langle \Vec{\uV}, a\Vec{\ZX}\VL{\iIx} \rangle
  -
  \langle \Vec{\uV}, \Vec{\ZX}\VL{\iIx} \rangle
  \Bigr)
  \Vec{\basis}\VL{\jIx}
  \Sign( \langle \Vec{\uV}, \Vec{\ZX}\VL{\iIx} \rangle )
  \\
  &\dCmt{\(  \because \langle \Vec{\uV}, \Vec{\uV} \rangle = \| \Vec{\uV} \|_{2}^{2} = 1  \)}
  \\
  &=
  \sum_{\iIx=1}^{\lu}
  \sum_{\jIx=1}^{\kX-1}
  \Bigl(
  \langle
    \Vec{\basis}\VL{\jIx},
    \Vec{\ZX}\VL{\iIx}
  \rangle
  -
  \langle
    \Vec{\basis}\VL{\jIx},
    \langle \Vec{\uV}, \Vec{\ZX}\VL{\iIx} \rangle
    \Vec{\uV}
  \rangle
  \Bigr)
  \Vec{\basis}\VL{\jIx}
  \Sign( \langle \Vec{\uV}, \Vec{\ZX}\VL{\iIx} \rangle )
  +
  0 \cdot
  \Vec{\basis}\VL{\jIx}
  \Sign( \langle \Vec{\uV}, \Vec{\ZX}\VL{\iIx} \rangle )
  \\
  &\dCmt{\(  \because \langle \Vec{\uV}, \Vec{\ZX}\VL{\iIx} \rangle - \langle \Vec{\uV}, \Vec{\ZX}\VL{\iIx} \rangle = 0  \)}
  \\
  &=
  \sum_{\iIx=1}^{\lu}
  \sum_{\jIx=1}^{\kX-1}
  \Bigl(
  \langle
    \Vec{\basis}\VL{\jIx},
    \Vec{\ZX}\VL{\iIx}
  \rangle
  -
  \langle
    \Vec{\basis}\VL{\jIx},
    \langle \Vec{\uV}, \Vec{\ZX}\VL{\iIx} \rangle
    \Vec{\uV}
  \rangle
  \Bigr)
  \Vec{\basis}\VL{\jIx}
  \Sign( \langle \Vec{\uV}, \Vec{\ZX}\VL{\iIx} \rangle )
  \\
  &\dCmt{via simplification}
  \\
  &=
  \sum_{\iIx=1}^{\lu}
  \sum_{\jIx=1}^{\kX-1}
  \Bigl(
  \langle
    \Vec{\basis}\VL{\jIx},
    \Vec{\ZX}\VL{\iIx}
  \rangle
  -
  \langle
    \Vec{\basis}\VL{\jIx},
    \Vec{\uV}
  \rangle
  \langle \Vec{\uV}, \Vec{\ZX}\VL{\iIx} \rangle
  \Bigr)
  \Vec{\basis}\VL{\jIx}
  \Sign( \langle \Vec{\uV}, \Vec{\ZX}\VL{\iIx} \rangle )
  \\
  &\dCmt{by the linearity of inner products}
  \\
  &=
  \sum_{\iIx=1}^{\lu}
  \sum_{\jIx=1}^{\kX-1}
  \langle
    \Vec{\basis}\VL{\jIx},
    \Vec{\ZX}\VL{\iIx}
  \rangle
  \Vec{\basis}\VL{\jIx}
  \Sign( \langle \Vec{\uV}, \Vec{\ZX}\VL{\iIx} \rangle )
  \\
  &\dCmt{\(  \because \Vec{\basis}\VL{\jIx} \perp \Vec{\basis}\VL{\kX} =  \Vec{\uV}  \) when \(  \jIx \neq \kX  \)}
  \\
  &=
  \sum_{\jIx=1}^{\kX-1}
  \sum_{\iIx=1}^{\lu}
  \langle \Vec{\basis}\VL{\jIx}, \Vec{\ZX}\VL{\iIx} \rangle
  \Vec{\basis}\VL{\jIx}
  \Sign( \langle \Vec{\uV}, \Vec{\ZX}\VL{\iIx} \rangle )
  \\
  &\dCmt{the summations can be reordered since they do not have dependencies}
  \\
  &=
  \sum_{\jIx=1}^{\kX-1}
  \sum_{\iIx=1}^{\lu}
  \langle
    \Vec{\basis}\VL{\jIx},
    \Vec{\ZX}\VL{\iIx} \Sign( \langle \Vec{\uV}, \Vec{\ZX}\VL{\iIx} \rangle )
  \rangle
  \Vec{\basis}\VL{\jIx}
  \\
  &\dCmt{by the linearity of inner products}
  \\
  &=
  \sum_{\jIx=1}^{\kX-1}
  \Vec{\basis}\VL{\jIx}
  \left\langle \Vec{\basis}\VL{\jIx},
  \sum_{\iIx=1}^{\lu}
  \Vec{\ZX}\VL{\iIx}
  \Sign( \langle \Vec{\uV}, \Vec{\ZX}\VL{\iIx} \rangle )
  \right\rangle
  \\
  &\dCmt{by the linearity of inner products}
\end{align*}
%
Let
\(  \RV{S}_{1}, \dots, \RV{S}_{\lu} \sim \{ -1,1 \}  \)
be \iid Rademacher random variables which are also independent of
\(  \Vec{\ZX}\VL{1}, \dots, \Vec{\ZX}\VL{\lu}  \).
Due to an argument in the proof of \LEMMA \ref{lemma:distr:2},
\(
  \Vec{\ZX}\VL{\iIx} \Sign( \langle \Vec{\uV}, \Vec{\ZX}\VL{\iIx} \rangle )
  \sim
  \Vec{\ZX}\VL{\iIx} \RV{S}_{\iIx}
  \sim
  \Vec{\ZX}\VL{\iIx}
  \sim
  \N(\Vec{0}, \Id{\kX})
\),
and additionally, the random vectors,
\(  \{ \Vec{\ZX}\VL{\iIx} \Sign( \langle \Vec{\uV}, \Vec{\ZX}\VL{\iIx} \rangle ) \}_{\iIx \in [\lu]}  \),
are mutually independent.
Hence,
\begin{gather*}
  \sum_{\iIx=1}^{\lu}
  \Vec{\ZX}\VL{\iIx}
  \Sign( \langle \Vec{\uV}, \Vec{\ZX}\VL{\iIx} \rangle )
  \sim
  \N( \Vec{0}, \lu \Id{\kX-1} )
.\end{gather*}
%
By an argument analogous to that which appeared in the proof of \LEMMA \ref{lemma:distr:2},
the random variables,
\(
  \left\{
  \left\langle \Vec{\basis}\VL{\jIx},
  \sum_{\iIx=1}^{\lu}
  \Vec{\ZX}\VL{\iIx}
  \Sign( \langle \Vec{\uV}, \Vec{\ZX}\VL{\iIx} \rangle )
  \right\rangle
  \right\}_{\jIx \in [\kX-1]}
\),
are mutually independent, and therefore,
\begin{gather*}
  \left\langle \Vec{\basis}\VL{\jIx},
  \sum_{\iIx=1}^{\lu}
  \Vec{\ZX}\VL{\iIx}
  \Sign( \langle \Vec{\uV}, \Vec{\ZX}\VL{\iIx} \rangle )
  \right\rangle
  \sim
  \RV{W}_{\jIx}
  \sim
  \N(0,\sigma^{2}=\lu)
,\end{gather*}
where the random variables, \(  \{ \RV{W}_{\jIx} \}_{\jIx \in [\kX-1]}  \), 
are likewise mutually independent.
Using these random variables, the lemma's result is obtained as follows:
\begin{align*}
  \left\| \barYu \right\|_{2}
  &=
  \left\|
    \sum_{\jIx=1}^{\kX-1}
    \Vec{\basis}\VL{\jIx}
    \left\langle \Vec{\basis}\VL{\jIx},
    \sum_{\iIx=1}^{\lu}
    \Vec{\ZX}\VL{\iIx}
    \Sign( \langle \Vec{\uV}, \Vec{\ZX}\VL{\iIx} \rangle )
    \right\rangle
  \right\|_{2}
  \\
  &=
  \sqrt{
  \sum_{\jIx=1}^{\kX-1}
  \left\langle \Vec{\basis}\VL{\jIx},
  \sum_{\iIx=1}^{\lu}
  \Vec{\ZX}\VL{\iIx}
  \Sign( \langle \Vec{\uV}, \Vec{\ZX}\VL{\iIx} \rangle )
  \right\rangle^{2}
  }
  \\
  &\sim
  \sqrt{
  \sum_{\jIx=1}^{\kX-1}
  \RV{W}_{\jIx}^{2}
  }
  \\
  &=
  \left\| \Vec{W} \right\|_{2}
\end{align*}
%
To summarize, we have now shown that
\(  \| \barYu \|_{2} \sim \| \Vec{W} \|_{2}  \),
where
\(  \Vec{W} \sim \N( \Vec{0}, \lu \Id{\kX-1} )  \),
thus completing the proof of \LEMMA \ref{lemma:norm-distr:2}.
\end{proof}

\let\kX\oldkX

\begin{proof}
{\LEMMA \ref{lemma:concentration-ineq:gaussian}}
Let
\(  \Vec{U} \sim \N( \Vec{0}, \Id{\dDim} )  \) and
\(  \Vec{W} \sim \N( \Vec{0}, \sigma^{2} \Id{\dDim} )  \).
%
Note that
\(  \Vec{W} \sim \sigma \Vec{U}  \) and
\(  \| \Vec{W} \|_{2} \sim \sigma \| \Vec{U} \|_{2}  \),
and hence,
\begin{align*}
  \E \bigl[ \| \Vec{W} \|_{2} \bigr]
  =
  \E \bigl[ \| \sigma \Vec{U} \|_{2} \bigr]
  =
  \E \bigl[ \sigma \| \Vec{U} \|_{2} \bigr]
  =
  \sigma \E \bigl[ \| \Vec{U} \|_{2} \bigr]
.\end{align*}
%
It is well-known that
\(  \| \Vec{U} \|_{2} \sim \chi_{\dDim}  \),
and therefore,
\begin{align*}
  \E \bigl[ \| \Vec{U} \|_{2} \bigr]
  =
  \frac{\Gamma \left( \frac{\dDim+1}{2} \right)}{\Gamma \left( \frac{\dDim}{2} \right)}
  \leq
  \sqrt{\frac{\dDim}{2}}
,\end{align*}
where the inequality on the \RHS can be derived from the Legendre duplication formula
and Stirling's approximation.
Plugging this into the expression for the expectation of \(  \| \Vec{W} \|_{2}  \) yields
\begin{align*}
  \E \bigl[ \| \Vec{W} \|_{2} \bigr]
  =
  \sigma \E \bigl[ \| \Vec{U} \|_{2} \bigr]
  \leq
  \sigma \sqrt{\frac{\dDim}{2}}
  =
  \sqrt{\frac{\sigma^{2} \dDim}{2}}
.\end{align*}
%
By a standard concentration inequality for \(  L  \)-Lipschitz functions on Gaussian vectors,
where here, \(  \| \cdot \|_{2}  \) is \(  ( L=1 )  \)-Lipschitz
(\see e.g., \cite{wainwright2019high}),
\begin{gather*}
  \Pr \left(
    \| \Vec{U} \|_{2} > \sqrt{\frac{\dDim}{2}} + \tu'
  \right)
  \leq
  \Pr \left(
    \| \Vec{U} \|_{2} > \E \bigl[ \| \Vec{U} \|_{2} \bigr] + \tu'
  \right)
  \leq
  e^{-\frac{\tu'^{2}}{2 L^{2}}}
  =
  e^{-\frac{\tu'^{2}}{2}}
.\end{gather*}
%
Setting
\(  \tu' = \sigma \tu  \),
\begin{gather*}
  \Pr \left(
    \| \Vec{U} \|_{2} > \sqrt{\frac{\dDim}{2}} + \sigma \tu
  \right)
  \leq
  e^{-\frac{\sigma^{2} \tu^{2}}{2}}
.\end{gather*}
%
Finally, by the earlier observation that
\(  \Vec{W} \sim \sigma \Vec{U}  \) and
\(  \| \Vec{W} \|_{2} \sim \sigma \| \Vec{U} \|_{2}  \)
the lemma's concentration inequality follows:
\begin{gather*}
  \Pr \left(
    \| \Vec{W} \|_{2} > \sigma \sqrt{\frac{\dDim}{2}} + \sigma^{2} \tu
  \right)
  =
  \Pr \left(
    \sigma \| \Vec{U} \|_{2} > \sigma \sqrt{\frac{\dDim}{2}} + \sigma^{2} \tu
  \right)
  =
  \Pr \left(
    \| \Vec{U} \|_{2} > \sqrt{\frac{\dDim}{2}} + \sigma \tu
  \right)
  \leq
  e^{-\frac{\sigma^{2} \tu^{2}}{2}}
.\end{gather*}
\end{proof}

\begin{proof}
{\LEMMA \ref{lemma:concentration-ineq:1}}
Fix
\(  \Vec{\uV} \in \SparseSphereSubspace{\k}{\n}  \) and
\(  \JX \subseteq [\n]  \), \(  | \JX | \leq \k  \), 
arbitrarily.
Taking
\(  \Vec{\ZX}\VL{1}, \dots, \Vec{\ZX}\VL{\lu} \sim \N( \Vec{0}, \Id{\n} )  \),
let
\(  \Xu[\iIx] = \langle \Vec{\uV}, \TJZi \rangle \Sign( \langle \Vec{\uV}, \TJZi \rangle )  \),
\(  \iIx \in [\lu]  \),
and write
\(  \barXu = \sum_{\iIx=1}^{\lu} \Xu[\iIx]  \).
%
Let
\(  \YRV[1], \dots, \YRV[\lu] \sim \N(0,1)  \)
be independent standard normal random variables, and let
\(  \barYRV = \sum_{\iIx=1}^{\lu} | \YRV[\iIx] |  \).
%
By \LEMMA \ref{lemma:distr:1},
\(  \Xu[\iIx] \sim | \YRV[\iIx] |  \)
for each \(  \iIx \in [\lu]  \), and therefore,
\(  \barXu \sim \barYRV  \).
%
Hence, since \(  \barXu  \) and \(  \barYRV  \) follow the same distribution,
it suffices to bound the concentration the random variable \(  \barYRV  \).
Define the random vector,
\(  \Vec{\YRV} = ( \YRV[1], \dots, \YRV[\lu] )  \),
and note that
\begin{align*}
  \barYRV
  = \sum_{\iIx=1}^{\lu} | \YRV[\iIx] |
  = \| \Vec{\YRV} \|_{1}
.\end{align*}
%
Recall that for any \(  \Vec{w} \in \R^{\lu}  \),
\(  \| \Vec{w} \|_{1} \leq \sqrt{\lu} \| \Vec{w} \|_{2}  \)
since
\begin{align*}
  \| \Vec{w} \|_{1}
  &= \sum_{\iIx=1}^{\lu} | \Vec*{w}_{\iIx} |
  \\
  &= \langle \Vec{1}, ( | \Vec*{w}_{1} |, \dots, | \Vec*{w}_{\lu} | ) \rangle
  \\
  &\leq \| \Vec{1} \|_{2} \| ( | \Vec*{w}_{1} |, \dots, | \Vec*{w}_{\lu} | ) \|_{2}
  \\
  &\dCmt{by the Cauchy-Schwarz inequality}
  \\
  &= \| \Vec{1} \|_{2}
     \sqrt{\sum_{\jIx=1}^{\lu} | \Vec*{w}_{\jIx} |^{2}}
  \\
  &= \| \Vec{1} \|_{2}
     \sqrt{\sum_{\jIx=1}^{\lu} \Vec*{w}_{\jIx}^{2}}
  \\
  &= \| \Vec{1} \|_{2} \| ( \Vec*{w}_{1}, \dots, \Vec*{w}_{\lu} ) \|_{2}
  \\
  &= \| \Vec{1} \|_{2} \| \Vec{w} \|_{2}
  \\
  &= \sqrt{\lu} \| \Vec{w} \|_{2}
\end{align*}
%
Additionally, observe:
\begin{align*}
  \| \Vec{v} \|_{1}
  &=
  \| ( \Vec{v} - \Vec{w} ) + \Vec{w} \|_{1}
  \\
  &\leq
  \| \Vec{v} - \Vec{w} \|_{1} + \|\Vec{w} \|_{1}
  \\
  &\dCmt{by the triangle inequality}
  \\
  &\longrightarrow
  \| \Vec{v} \|_{1} \leq \| \Vec{v} - \Vec{w} \|_{1} + \|\Vec{w} \|_{1}
  \\
  &\longrightarrow
  \| \Vec{v} \|_{1} - \|\Vec{w} \|_{1} \leq \| \Vec{v} - \Vec{w} \|_{1}
  \\
  &\dCmt{rearrangement of terms}
  \\
  &\longrightarrow
  \| \Vec{v} \|_{1} - \|\Vec{w} \|_{1} \leq \sqrt{\lu} \| \Vec{v} - \Vec{w} \|_{2}
  \\
  &\dCmt{as argued earlier}
\end{align*}
and thus, \(  \| \cdot \|_{1}  \) is \(  L  \)-Lipschitz, where
\(  L = \sqrt{\lu}  \).
%
By a standard concentration for Gaussian random vectors under \(  L  \)-Lipschitz functions
(\see e.g., \cite{wainwright2019high}),
\begin{align*}
  \Pr \left( \Big.
    \| \Vec{\YRV} \|_{1} \geq \E \left[ \big. \| \Vec{\YRV} \|_{1} \right] + \lu \tu
  \right)
  \leq
  e^{-\frac{\lu^{2} \tu^{2}}{2L^{2}}}
  =
  e^{-\frac{\lu^{2} \tu^{2}}{2\lu}}
  =
  e^{-\frac{1}{2} \lu \tu^{2}}
,\end{align*}
where
\begin{align*}
  \E \left[ \big. \| \Vec{\YRV} \|_{1} \right]
  &=
  \E \left[ \sum_{\iIx=1}^{\lu} | \YRV[\iIx] | \right]
  \\
  &=
  \sum_{\iIx=1}^{\lu} \E \left[ \big. | \YRV[\iIx] | \right]
  \\
  &\dCmt{by the linearity of expectation}
  \\
  &=
  \sum_{\iIx=1}^{\lu} \sqrt{\frac{2}{\pi}}
  \\
  &\dCmt{the mean of a half-normal random variable (well-known)}
  \\
  &=
  \sqrt{\frac{2}{\pi}} \lu
\end{align*}
%
Combining the last two derivations yields:
\begin{align*}
  \Pr \left(
    \| \Vec{\YRV} \|_{1} \geq \left( \sqrt{\frac{2}{\pi}} + \tu \right) \lu
  \right)
  \leq
  e^{-\frac{1}{2} \lu \tu^{2}}
.\end{align*}
%
From this and the earlier discussion,
since
\(  \barXu \sim \barYRV = \| \Vec{\YRV} \|_{1}  \),
it follows that
\begin{align*}
  \Pr \left(
    \barXu \geq \left( \sqrt{\frac{2}{\pi}} + \tu \right) \lu
  \right)
  \leq
  e^{-\frac{1}{2} \lu \tu^{2}}
\end{align*}
as desired.
\end{proof}

\begin{proof}
{\LEMMA \ref{lemma:concentration-ineq:3}}
Let
\(  \Vec{\uV} \in \SparseSphereSubspace{\k}{\n}  \) and
\(  \JX \subseteq [\n]  \), \(  | \JX | \leq \k  \), 
and let
\(  \Vec{\ZX}\VL{1}, \dots, \Vec{\ZX}\VL{\lu} \sim \N( \Vec{0}, \Id{\n} )  \)
and
\(  \Vec{\WRV} \sim \N( \Vec{0}, \lu \Id{\kX-1} )  \)
be independent Gaussian vectors, each with \iid entries.
Define the random variable
\begin{gather*}
  \barYu =
  \sum_{\iIx=1}^{\lu}
  \Big(
    \TJZi
    -
    \langle \Vec{\uV}, \TJZi \rangle \Sign( \langle \Vec{\uV}, \TJZi \rangle )
    \Vec{\uV}
  \Big)
\end{gather*}
where the random variable of interest is
\(  \| \barYu \|_{2}  \).
%
The majority of the necessary work has already been achieved in
\LEMMAS \ref{lemma:distr:2} and \ref{lemma:concentration-ineq:gaussian}.
%
%
By \LEMMA \ref{lemma:distr:2},
\(  \| \barYu \|_{2} \sim \| \Vec{\WRV} \|_{2}  \),
and thus, by \LEMMA \ref{lemma:concentration-ineq:gaussian},
\begin{align*}
  \Pr \left(
    \| \barYu \|_{2} > \sqrt{( \kX-1 ) \lu} + \lu \tu
  \right)
  \leq
  e^{-\frac{1}{2} \lu \tu^{2}}
\end{align*}
as claimed.
\end{proof}
\section{Proof of the Deterministic Results, \LEMMAS \ref{lemma:error:deterministic} and \ref{lemma:error:recurrence}} %
\label{section:|>pf-deterministic} 

\subsection{Proof of \LEMMA \ref{lemma:error:deterministic}} 
\label{section:|>pf-deterministic|>pf-deterministic}         


\begin{proof}
{\LEMMA \ref{lemma:error:deterministic}}
The proof will focus on verifying a slight generalization of \LEMMA \ref{lemma:error:deterministic},
which is formally stated as the following claim.

\begin{claim}
\label{claim:pf:lemma:error:deterministic:1}
Let
\(  \Vec{\uX}, \Vec{\vX}, \Vec{\zX} \in \SparseSphereSubspace{\k}{\n}  \), and
\(  \Vec{\wX} \in \R^{\n}  \),
where
\begin{gather}
\label{eqn:claim:pf:lemma:error:deterministic:1:1}
  \Vec{\uX}
  =
  \frac{\Threshold{\k}( \Vec{\vX} + \Vec{\wX} )}
       {\| \Threshold{\k}( \Vec{\vX} + \Vec{\wX} ) \|_{2}}
%
,\end{gather}
and where
\(  \| \Vec{\vX} + \Vec{\wX} \|_{0} \geq k  \).
%
Then,
\begin{gather}
\label{eqn:claim:pf:lemma:error:deterministic:1:2}
  \left\| \Vec{\zX} - \Vec{\uX} \right\|_{2}
  \leq
  4 \left\| ( \Vec{\zX} - \Vec{\vX} ) - \ThresholdSet'{\Supp( \Vec{\zX} ) \cup \Supp( \Vec{\uX} ) \cup \Supp( \Vec{\vX} )}( \Vec{\wX} ) \right\|_{2}
.\end{gather}
\end{claim}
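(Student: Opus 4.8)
The plan is to establish \CLAIM \ref{claim:pf:lemma:error:deterministic:1} directly, by the standard two-stage argument for normalized top-$\k$ hard thresholding, and then to recover \LEMMA \ref{lemma:error:deterministic} by substitution. Write $\Vec{\tX} = \Vec{\vX} + \Vec{\wX}$ for the unnormalized, unthresholded point and $\Vec{p} = \Threshold{\k}( \Vec{\tX} )$, so that $\Vec{\uX} = \Vec{p} / \| \Vec{p} \|_{2}$ and $\Supp( \Vec{\uX} ) = \Supp( \Vec{p} )$. The hypothesis $\| \Vec{\tX} \|_{0} \geq \k$ guarantees that $\Vec{p}$ has exactly $\k$ nonzero entries, so $| \Supp( \Vec{p} ) | = \k$ and $\| \Vec{p} \|_{2} > 0$, making $\Vec{\uX}$ well-defined. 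Throughout, set $\IX = \Supp( \Vec{\zX} ) \cup \Supp( \Vec{\uX} )$ and $\IX' = \IX \cup \Supp( \Vec{\vX} )$, the latter being the index set appearing on the \RHS of the claim.

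First I would discard the normalization: since $\| \Vec{\zX} \|_{2} = 1$ and $\| \Vec{p} - \Vec{\uX} \|_{2} = | \| \Vec{p} \|_{2} - 1 | = | \| \Vec{p} \|_{2} - \| \Vec{\zX} \|_{2} | \leq \| \Vec{p} - \Vec{\zX} \|_{2}$, the triangle inequality gives $\| \Vec{\zX} - \Vec{\uX} \|_{2} \leq 2 \| \Vec{\zX} - \Vec{p} \|_{2}$. Next I would handle the top-$\k$ projection: as $\Vec{\zX} - \Vec{p}$ is supported in $\IX$, $\| \Vec{\zX} - \Vec{p} \|_{2} = \| \ThresholdSet'{\IX}( \Vec{\zX} - \Vec{p} ) \|_{2} \leq \| \ThresholdSet'{\IX}( \Vec{\zX} - \Vec{\tX} ) \|_{2} + \| \ThresholdSet'{\IX}( \Vec{\tX} - \Vec{p} ) \|_{2}$, and since $\Vec{\tX} - \Vec{p}$ vanishes on $\Supp( \Vec{p} )$ the last term equals $\| \ThresholdSet'{\IX \setminus \Supp( \Vec{p} )}( \Vec{\tX} ) \|_{2}$. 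Here $\IX \setminus \Supp( \Vec{p} ) = \Supp( \Vec{\zX} ) \setminus \Supp( \Vec{p} )$ has cardinality at most $\k - | \Supp( \Vec{\zX} ) \cap \Supp( \Vec{p} ) | = | \Supp( \Vec{p} ) \setminus \Supp( \Vec{\zX} ) |$ (using $| \Supp( \Vec{\zX} ) | \leq \k = | \Supp( \Vec{p} ) |$), while by the defining extremal property of $\Threshold{\k}$ every coordinate $j \notin \Supp( \Vec{p} )$ has $| \tX_{j} |$ no larger than $| \tX_{i} |$ for every $i \in \Supp( \Vec{p} )$; comparing $\lnorm{2}$-masses therefore yields $\| \ThresholdSet'{\IX \setminus \Supp( \Vec{p} )}( \Vec{\tX} ) \|_{2} \leq \| \ThresholdSet'{\Supp( \Vec{p} ) \setminus \Supp( \Vec{\zX} )}( \Vec{\tX} ) \|_{2}$, and the latter equals $\| \ThresholdSet'{\Supp( \Vec{p} ) \setminus \Supp( \Vec{\zX} )}( \Vec{\zX} - \Vec{\tX} ) \|_{2} \leq \| \ThresholdSet'{\IX}( \Vec{\zX} - \Vec{\tX} ) \|_{2}$ since $\Vec{\zX}$ vanishes on $\Supp( \Vec{p} ) \setminus \Supp( \Vec{\zX} ) \subseteq \IX$. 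Hence $\| \Vec{\zX} - \Vec{p} \|_{2} \leq 2 \| \ThresholdSet'{\IX}( \Vec{\zX} - \Vec{\tX} ) \|_{2}$, and combining, $\| \Vec{\zX} - \Vec{\uX} \|_{2} \leq 4 \| \ThresholdSet'{\IX}( \Vec{\zX} - \Vec{\tX} ) \|_{2}$.

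Finally I would rewrite the restricted residual in the form of the claim: both $\Vec{\zX} - \Vec{\vX}$ and $\ThresholdSet'{\IX'}( \Vec{\wX} )$ are supported in $\IX'$, and on $\IX'$ one has $( \Vec{\zX} - \Vec{\vX} ) - \Vec{\wX} = \Vec{\zX} - \Vec{\tX}$, so $( \Vec{\zX} - \Vec{\vX} ) - \ThresholdSet'{\IX'}( \Vec{\wX} ) = \ThresholdSet'{\IX'}( \Vec{\zX} - \Vec{\tX} )$; since $\IX \subseteq \IX'$ this gives $\| \ThresholdSet'{\IX}( \Vec{\zX} - \Vec{\tX} ) \|_{2} \leq \| ( \Vec{\zX} - \Vec{\vX} ) - \ThresholdSet'{\IX'}( \Vec{\wX} ) \|_{2}$, which is exactly \EQUATION \eqref{eqn:claim:pf:lemma:error:deterministic:1:2}. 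To deduce \LEMMA \ref{lemma:error:deterministic} I would apply the claim with $\Vec{\zX} = \Vec{\x}$, $\Vec{\vX} = \Vec{\xApprox}\IL{\Iter-1}$, $\Vec{\wX} = \hf{\Vec{\x}}{\Vec{\xApprox}\IL{\Iter-1}}$ and $\Vec{\uX} = \Vec{\xApprox}\IL{\Iter}$: the \BIHT update (with $\Vec{\yV} = \f( \Vec{\x} )$ and $\Eta = \sqrt{2\pi}$) is precisely $\Vec{\xApprox}\IL{\Iter-1} + \hf{\Vec{\x}}{\Vec{\xApprox}\IL{\Iter-1}}$ before thresholding, so $\IX' = \Supp( \Vec{\x} ) \cup \Supp( \Vec{\xApprox}\IL{\Iter} ) \cup \Supp( \Vec{\xApprox}\IL{\Iter-1} )$, hence $\ThresholdSet'{\IX'}( \hf{\Vec{\x}}{\Vec{\xApprox}\IL{\Iter-1}} ) = \hf[\Supp( \Vec{\xApprox}\IL{\Iter} )]{\Vec{\x}}{\Vec{\xApprox}\IL{\Iter-1}}$, and $\DistS{\Vec{\x}}{\Vec{\xApprox}\IL{\Iter}} = \| \Vec{\x} - \Vec{\xApprox}\IL{\Iter} \|_{2}$ as both are unit vectors. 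The main obstacle is the top-$\k$ step: one must bound the $\lnorm{2}$-mass that $\Vec{\tX}$ places on the dropped coordinates of $\Supp( \Vec{\zX} )$ against the mass it places on the kept coordinates of $\Supp( \Vec{\uX} )$ outside $\Supp( \Vec{\zX} )$, which needs both the cardinality inequality $| \Supp( \Vec{\zX} ) \setminus \Supp( \Vec{p} ) | \leq | \Supp( \Vec{p} ) \setminus \Supp( \Vec{\zX} ) |$ and the extremal property of $\Threshold{\k}$; everything else is support bookkeeping and the triangle inequality. A caveat worth flagging is that the degenerate case $\| \Vec{\tX} \|_{0} < \k$, excluded by hypothesis here, must separately be shown not to arise along the \BIHT trajectory when the claim is invoked to prove \LEMMA \ref{lemma:error:deterministic}.
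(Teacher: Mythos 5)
Your proof is correct and follows essentially the same route as the paper's: the reverse triangle inequality removes the normalization, the top-\(\k\) extremal property combined with the cardinality count \( | \Supp( \Vec{\zX} ) \setminus \Supp( \Vec{\uX} ) | \leq | \Supp( \Vec{\uX} ) \setminus \Supp( \Vec{\zX} ) | \) (valid because \( \| \Vec{\vX} + \Vec{\wX} \|_{0} \geq \k \) forces \( \| \Vec{\uX} \|_{0} = \k \)) controls the dropped coordinates, and enlarging the support set to include \( \Supp( \Vec{\vX} ) \) yields the stated right-hand side. Your organization is a mild streamlining—you strip the normalization first and split \( \| \Vec{\zX} - \Threshold{\k}( \Vec{\vX} + \Vec{\wX} ) \|_{2} \) into two terms, using monotonicity of restricted norms where the paper spells out Pythagorean identities in its three-term decomposition \( \alphaX[1] + \alphaX[2] + \alphaX[3] \)—but the ingredients are identical, and your closing caveat about \( \| \Vec{\vX} + \Vec{\wX} \|_{0} \geq \k \) along the BIHT trajectory is exactly the point the paper handles via \( \n \geq 2\k \) and the continuity of the distribution of \( \hf[\Supp( \Vec{\xApprox}\IL{\Iter} )]{\Vec{\x}}{\Vec{\xApprox}\IL{\Iter-1}} \).
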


Note that
\(  \| \Vec{\xApprox}\IL{\Iter-1} \|_{0} \leq \kX   \),
by design, and since the random vector
\(  \hf[\Supp( \Vec{\xApprox}\IL{\Iter} )]{\Vec{\x}}{\Vec{\xApprox}\IL{\Iter-1}}  \)
follows a continuous distribution,
\(  \| \hf[\Supp( \Vec{\xApprox}\IL{\Iter} )]{\Vec{\x}}{\Vec{\xApprox}\IL{\Iter-1}} \|_{0} = \n  \).
%
Moreover, due to the condition that \(  \n \geq 2\k  \),
\(  \| \Vec{\xApprox}\IL{\Iter-1} + \hf[\Supp( \Vec{\xApprox}\IL{\Iter} )]{\Vec{\x}}{\Vec{\xApprox}\IL{\Iter-1}} \|_{0} \geq \n-\k \geq 2\k-\k = \k  \).
%
Hence, by taking
\(  \Vec{\zX} = \Vec{\x}  \),
\(  \Vec{\uX} = \Vec{\xApprox}\IL{\Iter}  \),
\(  \Vec{\vX} = \Vec{\xApprox}\IL{\Iter-1}  \), and
\(  \Vec{\wX} = \hf[\Supp( \Vec{\xApprox}\IL{\Iter} )]{\Vec{\x}}{\Vec{\xApprox}\IL{\Iter-1}}  \),
where
\(  \| \Vec{\vX} + \Vec{\wX} \|_{0} \geq \k  \)
due to the above discussion,
\CLAIM \ref{claim:pf:lemma:error:deterministic:1} bounds the approximation error,
\(  \DistS{\Vec{\x}}{\Vec{\xApprox}\IL{\Iter}}  \), as follows:
\begin{align*}
  \DistS{\Vec{\x}}{\Vec{\xApprox}\IL{\Iter}}
  &=
  \| \Vec{\x} - \Vec{\xApprox}\IL{\Iter} \|_{2}
  \leq
  4 \| ( \Vec{\x} - \Vec{\xApprox}\IL{\Iter-1} ) - \hf[\Supp( \Vec{\xApprox}\IL{\Iter} )]{\Vec{\x}}{\Vec{\xApprox}\IL{\Iter-1}} \|_{2}
.\end{align*}
%
Hence, the proof of \LEMMA \ref{lemma:error:deterministic} amounts to verifying
\CLAIM \ref{claim:pf:lemma:error:deterministic:1}, as accomplished next.
%
%
\begin{subproof}
{\CLAIM \ref{claim:pf:lemma:error:deterministic:1}}
The following work is nearly identical to the arguments in
\cite[proof of \LEMMA 4.1]{matsumoto2022binary}.
First, note that
\begin{gather*}
  \Vec{\uX}
  =
  \frac{\Threshold{\k}( \Vec{\vX} + \Vec{\wX} )}
       {\| \Threshold{\k}( \Vec{\vX} + \Vec{\wX} ) \|_{2}}
  =
  \frac{\ThresholdSet'{\Supp( \Threshold{\k}( \Vec{\vX} + \Vec{\wX} ) )}( \Vec{\vX} + \Vec{\wX} )}
       {\| \ThresholdSet'{\Supp( \Threshold{\k}( \Vec{\vX} + \Vec{\wX} ) )}
           ( \Vec{\vX} + \Vec{\wX} ) \|_{2}}
  =
  \frac{\ThresholdSet'{\Supp( \Vec{\uX} )}( \Vec{\vX} + \Vec{\wX} )}
       {\| \ThresholdSet'{\Supp( \Vec{\uX} )}( \Vec{\vX} + \Vec{\wX} ) \|_{2}}
.\end{gather*}
%
This will be useful later on.
Next, observe:
\begin{align*}
  &
  \Vec{\zX} - \Vec{\uX}
  \\
  &=
  \Vec{\zX}
  -
  \frac{\ThresholdSet'{\Supp( \Vec{\uX} )}( \Vec{\vX} + \Vec{\wX} )}
       {\| \ThresholdSet'{\Supp( \Vec{\uX} )}( \Vec{\vX} + \Vec{\wX} ) \|_{2}}
  \\
  &=
  \left(
    \Vec{\zX} - \ThresholdSet'{\Supp( \Vec{\zX} ) \cup \Supp( \Vec{\uX} )}( \Vec{\vX} + \Vec{\wX} )
  \right)
  +
  \left(
    \ThresholdSet'{\Supp( \Vec{\zX} ) \cup \Supp( \Vec{\uX} )}( \Vec{\vX} + \Vec{\wX} )
    -
    \ThresholdSet'{\Supp( \Vec{\uX} )}( \Vec{\vX} + \Vec{\wX} )
  \right)
  +
  \left(
    \ThresholdSet'{\Supp( \Vec{\uX} )}( \Vec{\vX} + \Vec{\wX} )
    -
    \frac{\ThresholdSet'{\Supp( \Vec{\uX} )}( \Vec{\vX} + \Vec{\wX} )}
         {\| \ThresholdSet'{\Supp( \Vec{\uX} )}( \Vec{\vX} + \Vec{\wX} ) \|_{2}}
  \right)
\end{align*}
%
Then,
\begin{align}
  \nonumber
  &
  \left\| \Vec{\zX} - \Vec{\uX} \right\|_{2}
  \\ \nonumber
  &=
  \left\|
    \left(
      \Vec{\zX} - \ThresholdSet'{\Supp( \Vec{\zX} ) \cup \Supp( \Vec{\uX} )}( \Vec{\vX} + \Vec{\wX} )
    \right)
    +
    \left(
      \ThresholdSet'{\Supp( \Vec{\zX} ) \cup \Supp( \Vec{\uX} )}( \Vec{\vX} + \Vec{\wX} )
      -
      \ThresholdSet'{\Supp( \Vec{\uX} )}( \Vec{\vX} + \Vec{\wX} )
    \right)
    +
    \left(
      \ThresholdSet'{\Supp( \Vec{\uX} )}( \Vec{\vX} + \Vec{\wX} )
      -
      \frac{\ThresholdSet'{\Supp( \Vec{\uX} )}( \Vec{\vX} + \Vec{\wX} )}
           {\| \ThresholdSet'{\Supp( \Vec{\uX} )}( \Vec{\vX} + \Vec{\wX} ) \|_{2}}
    \right)
  \right\|_{2}
  \\
  &\leq
  \left\|
    \Vec{\zX} - \ThresholdSet'{\Supp( \Vec{\zX} ) \cup \Supp( \Vec{\uX} )}( \Vec{\vX} + \Vec{\wX} )
  \right\|_{2}
  +
  \left\|
    \ThresholdSet'{\Supp( \Vec{\zX} ) \cup \Supp( \Vec{\uX} )}( \Vec{\vX} + \Vec{\wX} )
    -
    \ThresholdSet'{\Supp( \Vec{\uX} )}( \Vec{\vX} + \Vec{\wX} )
  \right\|_{2}
  +
  \left\|
    \ThresholdSet'{\Supp( \Vec{\uX} )}( \Vec{\vX} + \Vec{\wX} )
    -
    \frac{\ThresholdSet'{\Supp( \Vec{\uX} )}( \Vec{\vX} + \Vec{\wX} )}
         {\| \ThresholdSet'{\Supp( \Vec{\uX} )}( \Vec{\vX} + \Vec{\wX} ) \|_{2}}
  \right\|_{2}
\label{eqn:pf:lemma:error:deterministic:1}
\end{align}
where the last inequality applies the triangle inequality.
For clarity, denote the three terms in the last line by
\(  \alphaX[1], \alphaX[2], \alphaX[3] \geq 0  \),
where
\begin{gather*}
  \alphaX[1] \defeq
  \left\|
    \Vec{\zX} - \ThresholdSet'{\Supp( \Vec{\zX} ) \cup \Supp( \Vec{\uX} )}( \Vec{\vX} + \Vec{\wX} )
  \right\|_{2}
  ,\\
  \alphaX[2] \defeq
  \left\|
    \ThresholdSet'{\Supp( \Vec{\zX} ) \cup \Supp( \Vec{\uX} )}( \Vec{\vX} + \Vec{\wX} )
    -
    \ThresholdSet'{\Supp( \Vec{\uX} )}( \Vec{\vX} + \Vec{\wX} )
  \right\|_{2}
  ,\\
  \alphaX[3] \defeq
  \left\|
    \ThresholdSet'{\Supp( \Vec{\uX} )}( \Vec{\vX} + \Vec{\wX} )
    -
    \frac{\ThresholdSet'{\Supp( \Vec{\uX} )}( \Vec{\vX} + \Vec{\wX} )}
         {\| \ThresholdSet'{\Supp( \Vec{\uX} )}( \Vec{\vX} + \Vec{\wX} ) \|_{2}}
  \right\|_{2}
.\end{gather*}
%
\par 
%
The remainder of the proof is carried out in the following three steps.
\Enum[{\label{enum:pf:lemma:error:deterministic:a}}]{a}
First, \eqref{eqn:pf:lemma:error:deterministic:1} will be upper bounded by
\begin{align*}
  &
  \left\|
    \Vec{\zX} - \ThresholdSet'{\Supp( \Vec{\zX} ) \cup \Supp( \Vec{\uX} )}( \Vec{\vX} + \Vec{\wX} )
  \right\|_{2}
  +
  \left\|
    \ThresholdSet'{\Supp( \Vec{\zX} ) \cup \Supp( \Vec{\uX} )}( \Vec{\vX} + \Vec{\wX} )
    -
    \ThresholdSet'{\Supp( \Vec{\uX} )}( \Vec{\vX} + \Vec{\wX} )
  \right\|_{2}
  +
  \left\|
    \ThresholdSet'{\Supp( \Vec{\uX} )}( \Vec{\vX} + \Vec{\wX} )
    -
    \frac{\ThresholdSet'{\Supp( \Vec{\uX} )}( \Vec{\vX} + \Vec{\wX} )}
         {\| \ThresholdSet'{\Supp( \Vec{\uX} )}( \Vec{\vX} + \Vec{\wX} ) \|_{2}}
  \right\|_{2}
  \\
  &\leq
  2 \alphaX[1] + 2 \alphaX[2]
.\end{align*}
%
\Enum[{\label{enum:pf:lemma:error:deterministic:b}}]{b}
Then, simple arguments yield upper bounds on \(  \alphaX[1]  \) and \(  \alphaX[2]  \).
\Enum[{\label{enum:pf:lemma:error:deterministic:c}}]{c}
Lastly, combining the preceding work will provide the desired upper bound on
\(  \| \Vec{\zX} - \Vec{\uX} \|_{2}  \).
%
\paragraph{\STEP \ref{enum:pf:lemma:error:deterministic:a}.} 
%
First, the following derivation establishes the bound:
\(  \alphaX[3] \leq \alphaX[1] + \alphaX[2]  \).
\begin{align*}
  \alphaX[3]
  &=
  \left\|
    \ThresholdSet'{\Supp( \Vec{\uX} )}( \Vec{\vX} + \Vec{\wX} )
    -
    \frac{\ThresholdSet'{\Supp( \Vec{\uX} )}( \Vec{\vX} + \Vec{\wX} )}
         {\| \ThresholdSet'{\Supp( \Vec{\uX} )}( \Vec{\vX} + \Vec{\wX} ) \|_{2}}
  \right\|_{2}
  \\
  &\dCmt{by the definition of \(  \alphaX[3]  \)}
  \\
  &=
  \left\|
    \left( \| \ThresholdSet'{\Supp( \Vec{\uX} )}( \Vec{\vX} + \Vec{\wX} ) \|_{2} - 1 \right)
    \frac{\ThresholdSet'{\Supp( \Vec{\uX} )}( \Vec{\vX} + \Vec{\wX} )}
         {\| \ThresholdSet'{\Supp( \Vec{\uX} )}( \Vec{\vX} + \Vec{\wX} ) \|_{2}}
  \right\|_{2}
  \\
  &\dCmt{by distributivity}
  \\
  &=
  \left| \| \ThresholdSet'{\Supp( \Vec{\uX} )}( \Vec{\vX} + \Vec{\wX} ) \|_{2} - 1 \right|
  \left\|
    \frac{\ThresholdSet'{\Supp( \Vec{\uX} )}( \Vec{\vX} + \Vec{\wX} )}
         {\| \ThresholdSet'{\Supp( \Vec{\uX} )}( \Vec{\vX} + \Vec{\wX} ) \|_{2}}
  \right\|_{2}
  \\
  &\dCmt{by an axiom for metrics}
  \\
  &=
  \left| \| \ThresholdSet'{\Supp( \Vec{\uX} )}( \Vec{\vX} + \Vec{\wX} ) \|_{2} - 1 \right|
  \\
  &\dCmt{\(  \because \left\| \frac{\ThresholdSet'{\Supp( \Vec{\uX} )}( \Vec{\vX} + \Vec{\wX} )}{\| \ThresholdSet'{\Supp( \Vec{\uX} )}( \Vec{\vX} + \Vec{\wX} ) \|_{2}} \right\|_{2} = 1  \)}
  \\
  &=
  \left|
    \left\| \ThresholdSet'{\Supp( \Vec{\uX} )}( \Vec{\vX} + \Vec{\wX} ) \right\|_{2}
    -
    \| \Vec{\zX} \|_{2}
  \right|
  \\
  &\dCmt{\(  \because \| \Vec{\zX} \|_{2} = 1  \)}
  \\
  &\leq
  \left\|
    \ThresholdSet'{\Supp( \Vec{\uX} )}( \Vec{\vX} + \Vec{\wX} )
    -
    \Vec{\zX}
  \right\|_{2}
  \\
  &\dCmt{by the (reverse) triangle inequality (\see \REMARK \ref{remark:reverse-triangle-ineq}, below)}
  \\
  &=
  \left\|
    \left(
    \ThresholdSet'{\Supp( \Vec{\uX} )}( \Vec{\vX} + \Vec{\wX} )
    -
    \ThresholdSet'{\Supp( \Vec{\zX} ) \cup \Supp( \Vec{\uX} )}( \Vec{\vX} + \Vec{\wX} )
    \right)
    +
    \left(
    \ThresholdSet'{\Supp( \Vec{\zX} ) \cup \Supp( \Vec{\uX} )}( \Vec{\vX} + \Vec{\wX} )
    -
    \Vec{\zX}
    \right)
  \right\|_{2}
  \\
  &\dCmt{the inserted \(  \pm \ThresholdSet'{\Supp( \Vec{\zX} ) \cup \Supp( \Vec{\uX} )}( \Vec{\vX} + \Vec{\wX} )  \) terms cancel out each other}
  \\
  &\leq
  \left\|
    \ThresholdSet'{\Supp( \Vec{\uX} )}( \Vec{\vX} + \Vec{\wX} )
    -
    \ThresholdSet'{\Supp( \Vec{\zX} ) \cup \Supp( \Vec{\uX} )}( \Vec{\vX} + \Vec{\wX} )
  \right\|_{2}
  +
  \left\|
    \ThresholdSet'{\Supp( \Vec{\zX} ) \cup \Supp( \Vec{\uX} )}( \Vec{\vX} + \Vec{\wX} )
    -
    \Vec{\zX}
  \right\|_{2}
  \\
  &\dCmt{by the triangle inequality}
  \\
  &=
  \left\|
    \Vec{\zX}
    -
    \ThresholdSet'{\Supp( \Vec{\zX} ) \cup \Supp( \Vec{\uX} )}( \Vec{\vX} + \Vec{\wX} )
  \right\|_{2}
  +
  \left\|
    \ThresholdSet'{\Supp( \Vec{\zX} ) \cup \Supp( \Vec{\uX} )}( \Vec{\vX} + \Vec{\wX} )
    -
    \ThresholdSet'{\Supp( \Vec{\uX} )}( \Vec{\vX} + \Vec{\wX} )
  \right\|_{2}
  \\
  &\dCmt{by rearrangement}
  \\
  &=
  \alphaX[1] + \alphaX[2]
  \\
  &\dCmt{by the definitions of \(  \alphaX[1], \alphaX[2]  \)}
\end{align*}
%
\begin{remark}
\label{remark:reverse-triangle-ineq}
The reverse triangle inequality applied in the above derivation can be established from the
triangle inequality.
To formalize this, fix \(  \Vec{a}, \Vec{b} \in \R^{\n}  \) arbitrarily.
The reverse triangle inequality claims that
\(  \big| \| \Vec{a} \|_{2} - \| \Vec{b} \|_{2} \big| \leq \| \Vec{a} - \Vec{b} \|_{2}  \).
%
To verify this, note that the triangle inequality implies that
\(
  \| \Vec{a} \|_{2}
  =
  \| ( \Vec{a} - \Vec{b} ) + \Vec{b} \|_{2}
  \leq
  \| \Vec{a} - \Vec{b} \|_{2} + \| \Vec{b} \|_{2}
\).
%
Thus, by rearrangement,
\(  \| \Vec{a} \|_{2} - \| \Vec{b} \|_{2} \leq \| \Vec{a} - \Vec{b} \|_{2}  \).
%
Likewise, by swapping the roles of \(  \Vec{a}  \) and \(  \Vec{b}  \) in the above arguments,
it follows that
\(  \| \Vec{b} \|_{2} - \| \Vec{a} \|_{2} \leq \| \Vec{a} - \Vec{b} \|_{2}  \).
%
Combining the two bounds then yields the reverse triangle inequality:
\(  \big| \| \Vec{a} \|_{2} - \| \Vec{b} \|_{2} \big| \leq \| \Vec{a} - \Vec{b} \|_{2}  \).
\end{remark}
%
Now, we have that
\begin{align*}
  \left\| \Vec{\zX} - \Vec{\uX} \right\|_{2}
  \leq
  \alphaX[1] + \alphaX[2] + \alphaX[3]
  \leq
  \alphaX[1] + \alphaX[2] + \alphaX[1] + \alphaX[2]
  =
  2 \alphaX[1] + 2 \alphaX[2]
\end{align*}
which completes \STEP \ref{enum:pf:lemma:error:deterministic:a}.
%
\paragraph{\STEP \ref{enum:pf:lemma:error:deterministic:b}.} 
%
Write
\begin{gather*}
  \alphaX[1]' \defeq
  \left\|
    \Vec{\zX}
    -
    \ThresholdSet'{\Supp( \Vec{\zX} ) \cup \Supp( \Vec{\uX} ) \cup \Supp( \Vec{\vX} )}( \Vec{\vX} + \Vec{\wX} )
  \right\|_{2}
  ,\\
  \alphaX[2]' \defeq
  \left\|
    \ThresholdSet'{\Supp( \Vec{\uX} ) \setminus \Supp( \Vec{\zX} )}( \Vec{\vX} + \Vec{\wX} )
  \right\|_{2}
.\end{gather*}
%
The goal in this step will be to show that
\(  \alphaX[2] \leq \alphaX[1] \leq \alphaX[1]'  \).
%
To bound \(  \alphaX[2]  \), observe:
\begin{align*}
  \alphaX[2]
  &=
  \left\|
    \ThresholdSet'{\Supp( \Vec{\zX} ) \cup \Supp( \Vec{\uX} )}( \Vec{\vX} + \Vec{\wX} )
    -
    \ThresholdSet'{\Supp( \Vec{\uX} )}( \Vec{\vX} + \Vec{\wX} )
  \right\|_{2}
  \\
  &=
  \sqrt{
  \sum_{\jIx=1}^{\n}
  ( \ThresholdSet'{\Supp( \Vec{\zX} ) \cup \Supp( \Vec{\uX} )}( \Vec{\vX} + \Vec{\wX} ) )_{\jIx}^{2}
  -
  \sum_{\jIx=1}^{\n}
  ( \ThresholdSet'{\Supp( \Vec{\uX} )}( \Vec{\vX} + \Vec{\wX} ) )_{\jIx}^{2}
  }
  \\
  &\dCmt{by expanding out the definition of the \(  \lnorm{2}  \)-norm}
  \\
  &=
  \sqrt{
  \sum_{\jIx \in \Supp( \Vec{\zX} ) \cup \Supp( \Vec{\uX} )}
  ( \Vec{\vX} + \Vec{\wX} )_{\jIx}^{2}
  -
  \sum_{\jIx \in \Supp( \Vec{\uX} )}
  ( \Vec{\vX} + \Vec{\wX} )_{\jIx}^{2}
  }
  \\
  &\dCmt{due to the definition of the thresholding operation, \(  \ThresholdOp  \)}
  \\
  &=
  \sqrt{
  \sum_{\jIx \in \Supp( \Vec{\zX} ) \setminus \Supp( \Vec{\uX} )}
  ( \Vec{\vX} + \Vec{\wX} )_{\jIx}^{2}
  +
  \sum_{\jIx \in \Supp( \Vec{\uX} )}
  ( \Vec{\vX} + \Vec{\wX} )_{\jIx}^{2}
  -
  \sum_{\jIx \in \Supp( \Vec{\uX} )}
  ( \Vec{\vX} + \Vec{\wX} )_{\jIx}^{2}
  }
  \\
  &\dCmt{\(  \because ( \Supp( \Vec{\zX} ) \setminus \Supp( \Vec{\uX} ) ) \sqcup \Supp( \Vec{\uX} )  = \Supp( \Vec{\zX} ) \cup \Supp( \Vec{\uX} )  \) is a disjoint partition}
  \\
  &=
  \sqrt{
  \sum_{\jIx \in \Supp( \Vec{\zX} ) \setminus \Supp( \Vec{\uX} )}
  ( \Vec{\vX} + \Vec{\wX} )_{\jIx}^{2}
  }
  \\
  &\dCmt{the leftmost pair of summations in the preceding line cancel out each other}
  \\
  &=
  \sqrt{
  \sum_{\jIx=1}^{\n}
  \ThresholdSet'{\Supp( \Vec{\zX} ) \setminus \Supp( \Vec{\uX} )}( \Vec{\vX} + \Vec{\wX} )_{\jIx}^{2}
  }
  \\
  &\dCmt{due to the definition of the thresholding operation, \(  \ThresholdOp  \)}
  \\
  &=
  \left\|
    \ThresholdSet'{\Supp( \Vec{\zX} ) \setminus \Supp( \Vec{\uX} )}( \Vec{\vX} + \Vec{\wX} )
  \right\|_{2}
  \\
  &\dCmt{by condensing notation via the definition of the \(  \lnorm{2}  \)-norm}
  \\
  &\leq
  \left\|
    \ThresholdSet'{\Supp( \Vec{\uX} ) \setminus \Supp( \Vec{\zX} )}( \Vec{\vX} + \Vec{\wX} )
  \right\|_{2}
  \\
  &\dCmt{\see \REMARK \ref{remark:top-k-ineq} below}
  \\
  &=
  \alphaX[2]'
  \\
  &\dCmt{by the definition of \(  \alphaX[2]'  \)}
\end{align*}
%
\begin{remark}
\label{remark:top-k-ineq}
The above derivation uses, where noted, the inequality:
\begin{gather*}
  \left\|
    \ThresholdSet'{\Supp( \Vec{\zX} ) \setminus \Supp( \Vec{\uX} )}( \Vec{\vX} + \Vec{\wX} )
  \right\|_{2}
  \leq
  \left\|
    \ThresholdSet'{\Supp( \Vec{\uX} ) \setminus \Supp( \Vec{\zX} )}( \Vec{\vX} + \Vec{\wX} )
  \right\|_{2}
.\end{gather*}
%
This inequality is verified as follows.
Recall that
\begin{gather*}
  \Vec{\uX}
  =
  \frac{\Threshold{\k}( \Vec{\vX} + \Vec{\wX} )}
       {\| \Threshold{\k}( \Vec{\vX} + \Vec{\wX} ) \|_{2}}
  =
  \frac{\ThresholdSet'{\Supp( \Threshold{\k}( \Vec{\vX} + \Vec{\wX} ) )}( \Vec{\vX} + \Vec{\wX} )}
       {\| \ThresholdSet'{\Supp( \Threshold{\k}( \Vec{\vX} + \Vec{\wX} ) )}
           ( \Vec{\vX} + \Vec{\wX} ) \|_{2}}
,\end{gather*}
and hence,
\(  \Supp( \Vec{\uX} ) = \Supp( \Threshold{\k}( \Vec{\vX} + \Vec{\wX} ) )  \).
%
For any \(  \jIx \notin \Supp( \Vec{\uX} )  \),
the definition of the \topkhardthresholding operation, \(  \Threshold{\k}  \), enforces:
\(  | \Vec*{\vX}_{\jIx} + \Vec*{\wX}_{\jIx} | \leq \min_{\jIx' \in \Supp( \Vec{\uX} )} | \Vec*{\vX}_{\jIx'} + \Vec*{\wX}_{\jIx'} |  \).
%
Additionally,
\(  \| \Vec{\uX} \|_{0} = \k  \)
since
\(  \| \Vec{\vX} + \Vec{\wX} \|_{0} \geq \k  \)
ensures the top-\(  \k  \) entries in \(  \Vec{\vX} + \Vec{\wX}  \) are all nonzero.
On the other hand, recall that \(  \| \Vec{\zX} \|_{0} \leq \k  \).
As a result,
\(  | \Supp( \Vec{\zX} ) \setminus \Supp( \Vec{\uX} ) | \leq | \Supp( \Vec{\uX} ) \setminus \Supp( \Vec{\zX} ) |  \),
and at the same time, for any
\(  \jIx \in \Supp( \Vec{\zX} ) \setminus \Supp( \Vec{\uX} )  \) and
\(  \jIx' \in \Supp( \Vec{\uX} ) \setminus \Supp( \Vec{\zX} )  \),
\(  | \Vec*{\vX}_{\jIx} + \Vec*{\wX}_{\jIx} | \leq | \Vec*{\vX}_{\jIx'} + \Vec*{\wX}_{\jIx'} |  \).
%
Write
\(  \Ell \defeq | \Supp( \Vec{\zX} ) \setminus \Supp( \Vec{\uX} ) |  \) and
\(  \Ell' \defeq | \Supp( \Vec{\uX} ) \setminus \Supp( \Vec{\zX} ) |  \),
and let
\(  \{ \jIx_{1}, \dots, \jIx_{\Ell} \} = \Supp( \Vec{\zX} ) \setminus \Supp( \Vec{\uX} )  \) and
\(  \{ \jIx'_{1}, \dots, \jIx'_{\Ell'} \} = \Supp( \Vec{\uX} ) \setminus \Supp( \Vec{\zX} )  \).
%
Note that by the above discussion, \(  \Ell \leq \Ell'  \), and
\(  | \Vec*{\vX}_{\jIx_{s}} + \Vec*{\wX}_{\jIx_{s}} | \leq | \Vec*{\vX}_{\jIx'_{s}} + \Vec*{\wX}_{\jIx'_{s}} |  \)
for each \(  s \in [\Ell]  \).
Taken together, the desired inequality follows:
\begin{align*}
  \left\|
    \ThresholdSet'{\Supp( \Vec{\zX} ) \setminus \Supp( \Vec{\uX} )}( \Vec{\vX} + \Vec{\wX} )
  \right\|_{2}
  &=
  \sum_{\jIx=1}^{\n}
  \ThresholdSet'{\Supp( \Vec{\zX} ) \setminus \Supp( \Vec{\uX} )}( \Vec{\vX} + \Vec{\wX} )_{\jIx}^{2}
  \\
  &\dCmt{by expanding out the definition of the \(  \lnorm{2}  \)-norm}
  \\
  &=
  \sum_{\jIx \in \Supp( \Vec{\zX} ) \setminus \Supp( \Vec{\uX} )}
  ( \Vec*{\vX}_{\jIx} + \Vec*{\wX}_{\jIx} )^{2}
  \\
  &\dCmt{due to the definition of the hard thresholding operation, \(  \ThresholdOp  \)}
  \\
  &=
  \sum_{s=1}^{\Ell}
  ( \Vec*{\vX}_{\jIx_{s}} + \Vec*{\wX}_{\jIx_{s}} )^{2}
  \\
  &\dCmt{by reindexing with \(  \{ \jIx_{1}, \dots, \jIx_{\Ell} \} = \Supp( \Vec{\zX} ) \setminus \Supp( \Vec{\uX} )  \)}
  \\
  &\leq
  \sum_{s=1}^{\Ell}
  ( \Vec*{\vX}_{\jIx'_{s}} + \Vec*{\wX}_{\jIx'_{s}} )^{2}
  \\
  &\dCmt{by the earlier observation that \(  | \Vec*{\vX}_{\jIx_{s}} + \Vec*{\wX}_{\jIx_{s}} | \leq | \Vec*{\vX}_{\jIx'_{s}} + \Vec*{\wX}_{\jIx'_{s}} |  \)}
  \\
  &\leq
  \sum_{s=1}^{\Ell'}
  ( \Vec*{\vX}_{\jIx'_{s}} + \Vec*{\wX}_{\jIx'_{s}} )^{2}
  \\
  &\dCmt{since all summands are nonnegative, and \(  \Ell \leq \Ell'  \)}
  \\
  &=
  \sum_{\jIx \in \Supp( \Vec{\uX} ) \setminus \Supp( \Vec{\zX} )}
  ( \Vec*{\vX}_{\jIx} + \Vec*{\wX}_{\jIx} )^{2}
  \\
  &\dCmt{by reindexing with \(  \{ \jIx'_{1}, \dots, \jIx'_{\Ell'} \} = \Supp( \Vec{\uX} ) \setminus \Supp( \Vec{\zX} )  \)}
  \\
  &=
  \sum_{\jIx=1}^{\n}
  \ThresholdSet'{\Supp( \Vec{\uX} ) \setminus \Supp( \Vec{\zX} )}( \Vec{\vX} + \Vec{\wX} )_{\jIx}^{2}
  \\
  &\dCmt{due to the definition of the hard thresholding operation, \(  \ThresholdOp  \)}
  \\
  &=
  \left\|
    \ThresholdSet'{\Supp( \Vec{\uX} ) \setminus \Supp( \Vec{\zX} )}( \Vec{\vX} + \Vec{\wX} )
  \right\|_{2}
  \\
  &\dCmt{by condensing notation via the definition of the \(  \lnorm{2}  \)-norm}
\end{align*}
%
Thus, the desired inequality,
\(
  \|
    \ThresholdSet'{\Supp( \Vec{\zX} ) \setminus \Supp( \Vec{\uX} )}( \Vec{\vX} + \Vec{\wX} )
  \|_{2}
  \leq
  \|
    \ThresholdSet'{\Supp( \Vec{\uX} ) \setminus \Supp( \Vec{\zX} )}( \Vec{\vX} + \Vec{\wX} )
  \|_{2}
\),
has been verified.
\end{remark}
%
The above work has shown that
\(  \alphaX[2] \leq \alphaX[2]'  \).
%
%
Next, to bound \(  \alphaX[2]'  \), observe:
\begin{align*}
  \alphaX[1]^{2}
  &=
  \left\|
    \Vec{\zX} - \ThresholdSet'{\Supp( \Vec{\zX} ) \cup \Supp( \Vec{\uX} )}( \Vec{\vX} + \Vec{\wX} )
  \right\|_{2}^{2}
  \\
  &\dCmt{by the definition of \(  \alphaX[1]  \)}
  \\
  &=
  \sum_{\jIx=1}^{\n}
  \left(
    \Vec*{\zX}_{\jIx}
    -
    \ThresholdSet'{\Supp( \Vec{\zX} ) \cup \Supp( \Vec{\uX} )}( \Vec{\vX} + \Vec{\wX} )_{\jIx}
  \right)^{2}
  \\
  &\dCmt{by expanding out the definition of the \(  \lnorm{2}  \)-norm}
  \\
  &=
  \sum_{\jIx \in \Supp( \Vec{\zX} ) \cup \Supp( \Vec{\uX} )}
  \left(
    \Vec*{\zX}_{\jIx}
    -
    \Vec*{\vX}_{\jIx}
    -
    \Vec*{\wX}_{\jIx}
  \right)^{2}
  \\
  &\dCmt{since \(  \Supp( \Vec{\zX} - \ThresholdSet'{\Supp( \Vec{\zX} ) \cup \Supp( \Vec{\uX} )}( \Vec{\vX} + \Vec{\wX} ) ) \subseteq \Supp( \Vec{\zX} ) \cup \Supp( \Vec{\uX} )  \)}
  \\
  &=
  \sum_{\jIx \in \Supp( \Vec{\uX} ) \setminus \Supp( \Vec{\zX} )}
  \left(
    \Vec*{\zX}_{\jIx}
    -
    \Vec*{\vX}_{\jIx}
    -
    \Vec*{\wX}_{\jIx}
  \right)^{2}
  +
  \sum_{\jIx \in \Supp( \Vec{\zX} )}
  \left(
    \Vec*{\zX}_{\jIx}
    -
    \Vec*{\vX}_{\jIx}
    -
    \Vec*{\wX}_{\jIx}
  \right)^{2}
  \\
  &\dCmt{\(  \because ( \Supp( \Vec{\uX} ) \setminus \Supp( \Vec{\zX} ) ) \sqcup \Supp( \Vec{\zX} ) = \Supp( \Vec{\zX} ) \cup \Supp( \Vec{\uX} )  \) is a disjoint partition}
  \\
  &=
  \sum_{\jIx \in \Supp( \Vec{\uX} ) \setminus \Supp( \Vec{\zX} )}
  \left(
    -
    \Vec*{\vX}_{\jIx}
    -
    \Vec*{\wX}_{\jIx}
  \right)^{2}
  +
  \sum_{\jIx \in \Supp( \Vec{\zX} )}
  \left(
    \Vec*{\zX}_{\jIx}
    -
    \Vec*{\vX}_{\jIx}
    -
    \Vec*{\wX}_{\jIx}
  \right)^{2}
  \\
  &\dCmt{\(  \because \Vec*{\zX}_{\jIx} = 0  \) for any \(  \jIx \in [\n] \setminus \Supp( \Vec{\zX} )  \)}
  \\
  &=
  \sum_{\jIx \in \Supp( \Vec{\uX} ) \setminus \Supp( \Vec{\zX} )}
  \left(
    \Vec*{\vX}_{\jIx}
    +
    \Vec*{\wX}_{\jIx}
  \right)^{2}
  +
  \sum_{\jIx \in \Supp( \Vec{\zX} )}
  \left(
    \Vec*{\zX}_{\jIx}
    -
    \Vec*{\vX}_{\jIx}
    -
    \Vec*{\wX}_{\jIx}
  \right)^{2}
  \\
  &\dCmt{by squaring the \(  -1  \) factor}
  \\
  &=
  \sum_{\jIx=1}^{\n}
  \ThresholdSet'{\Supp( \Vec{\uX} ) \setminus \Supp( \Vec{\zX} )}( \Vec{\vX} + \Vec{\wX} )_{\jIx}^{2}
  +
  \sum_{\jIx=1}^{\n}
  \ThresholdSet'{\Supp( \Vec{\zX} )}( \Vec{\zX} - ( \Vec{\vX} + \Vec{\wX} ) )_{\jIx}^{2}
  \\
  &\dCmt{due to the definition of the hard thresholding operation, \(  \ThresholdOp  \)}
  \\
  &=
  \left\|
    \ThresholdSet'{\Supp( \Vec{\uX} ) \setminus \Supp( \Vec{\zX} )}( \Vec{\vX} + \Vec{\wX} )
  \right\|_{2}^{2}
  +
  \left\|
    \ThresholdSet'{\Supp( \Vec{\zX} )}( \Vec{\zX} - ( \Vec{\vX} + \Vec{\wX} ) )
  \right\|_{2}^{2}
  \\
  &\dCmt{by condensing notation via the definition of the \(  \lnorm{2}  \)-norm}
\end{align*}
%
In short,
\begin{align*}
  \alphaX[1]^{2}
  =
  \left\|
    \Vec{\zX} - \ThresholdSet'{\Supp( \Vec{\zX} ) \cup \Supp( \Vec{\uX} )}( \Vec{\vX} + \Vec{\wX} )
  \right\|_{2}^{2}
  =
  \left\|
    \ThresholdSet'{\Supp( \Vec{\uX} ) \setminus \Supp( \Vec{\zX} )}( \Vec{\vX} + \Vec{\wX} )
  \right\|_{2}^{2}
  +
  \left\|
    \ThresholdSet'{\Supp( \Vec{\zX} )}( \Vec{\zX} - ( \Vec{\vX} + \Vec{\wX} ) )
  \right\|_{2}^{2}
.\end{align*}
%
Rearranging the terms obtains:
\begin{align*}
  \left\|
    \ThresholdSet'{\Supp( \Vec{\uX} ) \setminus \Supp( \Vec{\zX} )}( \Vec{\vX} + \Vec{\wX} )
  \right\|_{2}^{2}
  &=
  \left\|
    \Vec{\zX} - \ThresholdSet'{\Supp( \Vec{\zX} ) \cup \Supp( \Vec{\uX} )}( \Vec{\vX} + \Vec{\wX} )
  \right\|_{2}^{2}
  -
  \left\|
    \ThresholdSet'{\Supp( \Vec{\zX} )}( \Vec{\zX} - ( \Vec{\vX} + \Vec{\wX} ) )
  \right\|_{2}^{2}
  \\
  &\leq
  \left\|
    \Vec{\zX} - \ThresholdSet'{\Supp( \Vec{\zX} ) \cup \Supp( \Vec{\uX} )}( \Vec{\vX} + \Vec{\wX} )
  \right\|_{2}^{2}
.\end{align*}
%
Hence, after taking a square root:
\begin{align*}
  \alphaX[2]'
  =
  \left\|
    \ThresholdSet'{\Supp( \Vec{\uX} ) \setminus \Supp( \Vec{\zX} )}( \Vec{\vX} + \Vec{\wX} )
  \right\|_{2}
  \leq
  \left\|
    \Vec{\zX} - \ThresholdSet'{\Supp( \Vec{\zX} ) \cup \Supp( \Vec{\uX} )}( \Vec{\vX} + \Vec{\wX} )
  \right\|_{2}
  =
  \alphaX[1]
\end{align*}
%
The final task for \STEP \ref{enum:pf:lemma:error:deterministic:b} is bounding \(  \alphaX[1]  \).
For this purpose, observe the following.
(Note that the comments throughout the derivation below take
\(  J \subseteq [\n]  \), \(  \Vec{a}, \Vec{b} \in \R^{\n}  \).)
\begin{align*}
  &
  \left\|
    \Vec{\zX} - \ThresholdSet'{\Supp( \Vec{\zX} ) \cup \Supp( \Vec{\uX} ) \cup \Supp( \Vec{\vX} )}
    ( \Vec{\vX} + \Vec{\wX} )
  \right\|_{2}^{2}
  \\
  &=
  \left\|
    \ThresholdSet'{\Supp( \Vec{\zX} ) \cup \Supp( \Vec{\uX} ) \cup \Supp( \Vec{\vX} )}( \Vec{\zX} )
    -
    \ThresholdSet'{\Supp( \Vec{\zX} ) \cup \Supp( \Vec{\uX} ) \cup \Supp( \Vec{\vX} )}
    ( \Vec{\vX} + \Vec{\wX} )
  \right\|_{2}^{2}
  \\
  &\dCmt{\(  \because \ThresholdSet'{J}( \Vec{a} ) = \Vec{a}  \) if \(  J \supseteq \Supp( \Vec{a} )  \)}
  \\
  &=
  \left\|
    \ThresholdSet'{\Supp( \Vec{\zX} ) \cup \Supp( \Vec{\uX} ) \cup \Supp( \Vec{\vX} )}
    ( \Vec{\zX} - \Vec{\vX} - \Vec{\wX} )
  \right\|_{2}^{2}
  \\
  &\dCmt{\(  \because \ThresholdSet'{J}( \Vec{a} ) + \ThresholdSet'{J}( \Vec{b} ) = \ThresholdSet'{J}( \Vec{a} + \Vec{b} )  \)}
  \\
  &=
  \left\|
    \ThresholdSet'{\Supp( \Vec{\zX} ) \cup \Supp( \Vec{\uX} )}
    ( \Vec{\zX} - \Vec{\vX} - \Vec{\wX} )
    +
    \ThresholdSet'{\Supp( \Vec{\vX} ) \setminus ( \Supp( \Vec{\zX} ) \cup \Supp( \Vec{\uX} ) )}
    ( \Vec{\zX} - \Vec{\vX} - \Vec{\wX} )
  \right\|_{2}^{2}
  \\
  &\dCmt{\(  ( \Supp( \Vec{\zX} ) \cup \Supp( \Vec{\uX} ) ) \sqcup ( \Supp( \Vec{\vX} ) \setminus ( \Supp( \Vec{\zX} ) \cup \Supp( \Vec{\uX} ) ) ) = \Supp( \Vec{\zX} ) \cup \Supp( \Vec{\uX} ) \cup \Supp( \Vec{\vX} )  \)}
  \\
  &\dCmtIndent \text{is a (disjoint) partition}
  \\
  &=
  \left\|
    \ThresholdSet'{\Supp( \Vec{\zX} ) \cup \Supp( \Vec{\uX} )}
    ( \Vec{\zX} - \Vec{\vX} - \Vec{\wX} )
  \right\|_{2}^{2}
  +
  \left\|
    \ThresholdSet'{\Supp( \Vec{\vX} ) \setminus ( \Supp( \Vec{\zX} ) \cup \Supp( \Vec{\uX} ) )}
    ( \Vec{\zX} - \Vec{\vX} - \Vec{\wX} )
  \right\|_{2}^{2}
  \\
  &\dCmt{by the Pythagorean theorem}
  \\
  &\dCmt{note that \(  \ThresholdSet'{\Supp( \Vec{\zX} ) \cup \Supp( \Vec{\uX} )}( \Vec{\zX} - \Vec{\vX} - \Vec{\wX} )  \) and \(  \ThresholdSet'{\Supp( \Vec{\vX} ) \setminus ( \Supp( \Vec{\zX} ) \cup \Supp( \Vec{\uX} ) )}( \Vec{\zX} - \Vec{\vX} - \Vec{\wX} )  \)}
  \\
  &\dCmtIndent \text{are orthogonal since their support sets are disjoint}
  \\
  &\geq
  \left\|
    \ThresholdSet'{\Supp( \Vec{\zX} ) \cup \Supp( \Vec{\uX} )}
    ( \Vec{\zX} - \Vec{\vX} - \Vec{\wX} )
  \right\|_{2}^{2}
  \\
  &\dCmt{since both terms in the preceding line are nonnegative,} \\
  &\dCmtIndent \text{deleting one cannot increase the value of the expression}
  \\
  &=
  \left\|
    \ThresholdSet'{\Supp( \Vec{\zX} ) \cup \Supp( \Vec{\uX} )}
    ( \Vec{\zX} )
    -
    \ThresholdSet'{\Supp( \Vec{\zX} ) \cup \Supp( \Vec{\uX} )}
    ( \Vec{\vX} + \Vec{\wX} )
  \right\|_{2}^{2}
  \\
  &\dCmt{\(  \because \ThresholdSet'{J}( \Vec{a} ) + \ThresholdSet'{J}( \Vec{b} ) = \ThresholdSet'{J}( \Vec{a} + \Vec{b} )  \)}
  \\
  &=
  \left\|
    \Vec{\zX}
    -
    \ThresholdSet'{\Supp( \Vec{\zX} ) \cup \Supp( \Vec{\uX} )}
    ( \Vec{\vX} + \Vec{\wX} )
  \right\|_{2}^{2}
  \\
  &\dCmt{\(  \because \ThresholdSet'{J}( \Vec{a} ) = \Vec{a}  \) if \(  J \supseteq \Supp(  \Vec{a} )  \)}
  \\
  &=
  \alphaX[1]^{2}
  \\
  &\dCmt{by the definition of \(  \alphaX[1]  \)}
\end{align*}
%
Thus,
\begin{align*}
  \alphaX[1]
  =
  \left\|
    \Vec{\zX}
    -
    \ThresholdSet'{\Supp( \Vec{\zX} ) \cup \Supp( \Vec{\uX} )}
    ( \Vec{\vX} + \Vec{\wX} )
  \right\|_{2}
  \leq
  \left\|
    \Vec{\zX} - \ThresholdSet'{\Supp( \Vec{\zX} ) \cup \Supp( \Vec{\uX} ) \cup \Supp( \Vec{\vX} )}
    ( \Vec{\vX} + \Vec{\wX} )
  \right\|_{2}
  =
  \alphaX[1]'
\end{align*}
as claimed.
To summarize, this step has shown:
\begin{gather*}
  \alphaX[2] \leq \alphaX[2]' \leq \alphaX[1] \leq \alphaX[1]'
,\end{gather*}
%
\paragraph{\STEP \ref{enum:pf:lemma:error:deterministic:c}.} 
%
By combining the arguments of \STEPS \ref{enum:pf:lemma:error:deterministic:a} and
\ref{enum:pf:lemma:error:deterministic:b},
\EQUATION \eqref{eqn:claim:pf:lemma:error:deterministic:1:2} follows:
\begin{align*}
  \| \Vec{\zX} - \Vec{\uX} \|_{2}
  \leq
  2 \alphaX[1] + 2 \alphaX[2]
  \leq
  4 \alphaX[1]
  \leq
  4 \alphaX[1]'
  &\leq
  4
  \left\|
    \Vec{\zX}
    -
    \ThresholdSet'{\Supp( \Vec{\zX} ) \cup \Supp( \Vec{\uX} ) \cup \Supp( \Vec{\vX} )}
    ( \Vec{\vX} + \Vec{\wX} )
  \right\|_{2}
  \\
  &=
  4
  \left\|
    ( \Vec{\zX} - \Vec{\vX} )
    -
    \ThresholdSet'{\Supp( \Vec{\zX} ) \cup \Supp( \Vec{\uX} ) \cup \Supp( \Vec{\vX} )}
    ( \Vec{\wX} )
  \right\|_{2}
.\end{align*}
%
This completes the proof of \CLAIM \ref{claim:pf:lemma:error:deterministic:1}.
\end{subproof}

By the discussion at the beginning of this proof, due to the proof of
\CLAIM \ref{claim:pf:lemma:error:deterministic:1},
\LEMMA \ref{lemma:error:deterministic} also holds.
\end{proof}
\subsection{Proof of \LEMMA \ref{lemma:error:recurrence}} 
\label{section:|>pf-deterministic|>pf-recurrence}         

Before the proof of \LEMMA \ref{lemma:error:recurrence} is laid out,
the following fact is stated to facilitate this.
The proof of this fact can be found in \cite{matsumoto2022binary}.

\begin{fact}[{\cite[\FACT 4.1]{matsumoto2022binary}}]
\label{fact:recurrence}
Let
\(  \ux, \vx, \wx, \wOx \in \R_{+}  \),
where
\(  \ux = \frac{1}{2} ( 1 + \sqrt{1+4\wx} )  \) and
\(  \ux \in [1,\sqrt{\frac{2}{\vx}}]  \).
%
Let
\(  \fx{1}, \fx{2} : \Z_{\geq 0} \to \R  \)
be the functions given by
\begin{subequations}
\label{eqn:fact:recurrence:1}
\begin{gather}
  \label{eqn:fact:recurrence:1:1}
  \fx{1}( 0 ) = 2
  ,\\ \label{eqn:fact:recurrence:1:2}
  \fx{1}( t ) = \vx \wx + \sqrt{\vx \fx{1}( t-1 )}
  ,\quad t \in \Z_{+}
,\end{gather}
\end{subequations}
\begin{gather}
\label{eqn:fact:recurrence:2}
  \fx{2}( t ) = 2^{2^{-t}} ( \ux^2 \vx )^{1-2^{-t}}
  ,\quad t \in \Z_{\geq 0}
.\end{gather}
%
The functions, \(  \fx{1}  \) and \(  \fx{2}  \), are strictly decreasing and satisfy
\begin{gather}
  \label{eqn:fact:recurrence:3}
  \fx{1}( t ) \leq \fx{2}( t )
  ,\quad \forall t \in \Z_{\geq 0}
  ,\\ \label{eqn:fact:recurrence:4}
  \lim_{t \to \infty} \fx{2}( t ) = \lim_{t \to \infty} \fx{1}( t ) = \ux^{2} \vx
.\end{gather}
\end{fact}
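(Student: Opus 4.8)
The plan is to read the recurrence $\fx{1}(\Iter) = \vx\wx + \sqrt{\vx\,\fx{1}(\Iter-1)}$ as the iteration of the scalar map $\phi(x) = \vx\wx + \sqrt{\vx\,x}$ on $\R_{\geq 0}$ and analyze it through its fixed point. First I would record the algebraic identity coming from the hypothesis: $\ux = \frac{1}{2}(1+\sqrt{1+4\wx})$ is the positive root of $X^{2}-X-\wx=0$, so $\ux^{2} = \ux + \wx$, hence $\wx = \ux(\ux-1)\ge 0$ and $\ux\ge 1$. Writing $a \defeq \ux^{2}\vx$, the condition $\ux\le\sqrt{2/\vx}$ is exactly $a\le 2$. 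A direct substitution shows $a$ is a fixed point of $\phi$, since $\phi(a) = \vx\wx + \sqrt{\vx\cdot\ux^{2}\vx} = \vx\wx + \ux\vx = \vx(\wx+\ux) = \vx\ux^{2} = a$. Moreover $\phi$ is strictly increasing and strictly concave on $(0,\infty)$ (because $\phi''(x) = -\frac{1}{4}\sqrt{\vx}\,x^{-3/2}<0$), so $g(x)\defeq\phi(x)-x$ is concave with $g(a)=0$ and $g'(a) = \frac{1}{2\ux}-1<0$; a concave function vanishing at $a$ with negative slope there is negative on $(a,\infty)$, which gives $a<\phi(x)<x$ for all $x>a$ and shows $a$ is the unique fixed point of $\phi$ in $[a,\infty)$.

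Second, I would prove $\fx{1}(\Iter)\le\fx{2}(\Iter)$ for all $\Iter\in\Z_{\geq 0}$ by induction, with base case the equality $\fx{1}(0)=2=\fx{2}(0)$. For the inductive step, monotonicity of $\phi$ and the hypothesis give $\fx{1}(\Iter)=\phi(\fx{1}(\Iter-1))\le\phi(\fx{2}(\Iter-1))$, so it suffices to verify that $\fx{2}$ is a supersolution, i.e. $\phi(\fx{2}(\Iter-1))\le\fx{2}(\Iter)$. Setting $s=2^{-\Iter}$ and using $\sqrt{\vx a}=\ux\vx$, a short computation yields $\phi(\fx{2}(\Iter-1)) = \vx\wx + \ux\vx\,(2/a)^{s}$ and $\fx{2}(\Iter) = a\,(2/a)^{s} = \ux^{2}\vx\,(2/a)^{s}$; dividing the desired inequality by $\vx$ and using $\wx=\ux^{2}-\ux$ reduces it to $\wx \le \wx\,(2/a)^{s}$, which holds since $a\le 2$ forces $(2/a)^{s}\ge 1$ and $\wx\ge 0$.

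Third, I would settle monotonicity and the limits. Since $\fx{1}(0)=2\ge a$ and $\phi$ is increasing with $\phi(a)=a$, induction gives $\fx{1}(\Iter)\ge a$ for all $\Iter$; in the generic case $a<2$ one has $\fx{1}(0)>a$, so $\phi(x)<x$ on $(a,\infty)$ yields $\fx{1}(\Iter)<\fx{1}(\Iter-1)$, i.e. $\fx{1}$ is strictly decreasing (when $a=2$ the sequence is the constant $2$). A decreasing sequence bounded below by $a$ converges, and its limit $L\ge a$ is a fixed point of $\phi$, hence $L=a$ by the uniqueness just established. For $\fx{2}$, the ratio $\fx{2}(\Iter)/\fx{2}(\Iter+1) = (2/a)^{2^{-\Iter-1}}>1$ when $a<2$ gives strict monotonicity, while $\fx{2}(\Iter)=a\,(2/a)^{2^{-\Iter}}\to a$ as $\Iter\to\infty$; together with the sandwich $a\le\fx{1}(\Iter)\le\fx{2}(\Iter)$ this re-proves $\fx{1}(\Iter)\to a=\ux^{2}\vx$, establishing \eqref{eqn:fact:recurrence:3}--\eqref{eqn:fact:recurrence:4}.

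I expect the main obstacle to be nothing deep but purely bookkeeping: the supersolution computation in the second step, which is the one place where the closed form of $\fx{2}$, the defining relation $\ux^{2}=\ux+\wx$, and the hypothesis $\ux^{2}\vx\le 2$ all have to be combined simultaneously. Everything else is a routine fixed-point and monotone-convergence argument for an increasing concave scalar iteration. The only point deserving a word of care is the degenerate boundary case $\ux^{2}\vx=2$, where ``strictly decreasing'' must be read as ``nonincreasing''; in the intended application of the fact (\LEMMA \ref{lemma:error:recurrence}, with $\gammaX\in(0,1]$ and the constants fixed in \eqref{eqn:universal-constants}) one has $\ux^{2}\vx=\gammaX\le 1<2$, so strictness holds automatically.
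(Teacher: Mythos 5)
Your proof is correct, but there is no in-paper argument to compare it against: the paper explicitly defers the proof of this statement to the cited reference (it is quoted as \cite[\FACT 4.1]{matsumoto2022binary}, and the surrounding text says the proof ``can be found in'' that work). Taken on its own, your fixed-point treatment is a clean, self-contained route: you identify $a=\ux^{2}\vx$ as a fixed point of the increasing, strictly concave map $\phi(x)=\vx\wx+\sqrt{\vx x}$ via $\ux^{2}=\ux+\wx$, use concavity of $\phi(x)-x$ with $g'(a)=\tfrac{1}{2\ux}-1<0$ to get $a<\phi(x)<x$ on $(a,\infty)$ and uniqueness of the fixed point there, prove $\fx{1}(\Iter)\leq\fx{2}(\Iter)$ by induction with the supersolution check $\phi(\fx{2}(\Iter-1))\leq\fx{2}(\Iter)$, and obtain the limit by monotone convergence plus uniqueness. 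The supersolution step, which reduces after dividing by $\vx$ to $\wx\leq\wx\,(2/a)^{s}$ with $s=2^{-\Iter}$, is exactly where the hypotheses $\ux^{2}=\ux+\wx$ and $\ux\leq\sqrt{2/\vx}$ (i.e.\ $a\leq 2$) enter, and your computation checks out. Your caveat about the boundary case is also well taken: when $\ux^{2}\vx=2$ both sequences are constant, so ``strictly decreasing'' in the statement should be read as nonincreasing there; strictness genuinely requires $\ux^{2}\vx<2$, which holds in the paper's use of the fact in \LEMMA \ref{lemma:error:recurrence} since $\gammaX\leq 1$.
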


\begin{proof}
{\LEMMA \ref{lemma:error:recurrence}}
The lemma's results follow from an argument nearly identical to the proofs of
\cite[\LEMMAS 4.2 and 4.3]{matsumoto2022binary} with just a couple changes in constants.
The (combined) proofs are reproduced below with the appropriate adjustments to constants.
The results are derived simply via \FACT \ref{fact:recurrence}.
Recall the definition of the function,
\(  \Err{} : \Z_{\geq 0} \to \R  \),
by the recurrence relation:
\begin{subequations}
\label{eqn:pf:lemma:error:recurrence:1}
\begin{gather}
\label{eqn:pf:lemma:error:recurrence:1:1}
  \Err{0} = 2
  \\ \label{eqn:pf:lemma:error:recurrence:1:2}
  \Err{\Iter}
  = 4\cC[1] \sqrt{\frac{\gammaX}{\cC} \Err{\Iter-1}}
  + \frac{4 \cX{2} \gammaX}{\cC}
  ,\qquad
  \Iter \in \Z_{+}
.\end{gather}
\end{subequations}
%
%
The first task is writing \EQUATIONS \eqref{eqn:pf:lemma:error:recurrence:1}
in the form of \EQUATIONS \eqref{eqn:fact:recurrence:1}.
When \(  \Iter = 0  \), then trivially, \EQUATION \eqref{eqn:pf:lemma:error:recurrence:1:1}
matches \EQUATION \eqref{eqn:fact:recurrence:1:1}, whereas for
\(  \Iter \in \Z_{+}  \), \EQUATION \eqref{eqn:pf:lemma:error:recurrence:1:2}
can match the form of \EQUATION \eqref{eqn:fact:recurrence:1:2} by simply writing:
\begin{align*}
  \Err{\Iter}
  &=
  4\cC[1] \sqrt{\frac{\gammaX}{\cC} \Err{\Iter-1}}
  +
  \frac{4 \cX{2} \gammaX}{\cC}
  \\
  &=
  \sqrt{\frac{16 \cX{1}^{2} \gammaX}{\cC} \Err{\Iter-1}}
  +
  \frac{4 \cX{2} \gammaX}{\cC}
  \\
  &=
  \sqrt{\frac{16 \cX{1}^{2} \gammaX}{\cC} \Err{\Iter-1}}
  +
  \frac{4 \cX{2}}{16 \cX{1}^{2}} \frac{16 \cX{1}^{2} \gammaX}{\cC}
  \\
  &=
  \sqrt{\frac{16 \cX{1}^{2} \gammaX}{\cC} \Err{\Iter-1}}
  +
  \frac{\cX{2}}{4 \cX{1}^{2}} \frac{16 \cX{1}^{2} \gammaX}{\cC}
  \\
  &=
  \frac{16 \cX{1}^{2} \gammaX}{\cC} \frac{\cX{2}}{4 \cX{1}^{2}}
  +
  \sqrt{\frac{16 \cX{1}^{2} \gammaX}{\cC} \Err{\Iter-1}}
  \\
  &=
  \vx \wx + \sqrt{\vx \fx{1}( \Iter-1 )}
\end{align*}
where in the last line,
\begin{subequations}
\label{eqn:pf:lemma:error:recurrence:2}
\begin{gather}
  \label{eqn:pf:lemma:error:recurrence:2:1}
  \fx{1} = \Err{}
  ,\\ \label{eqn:pf:lemma:error:recurrence:2:2}
  \vx = \frac{16 \cX{1}^{2} \gammaX}{\cC}
  ,\\ \label{eqn:pf:lemma:error:recurrence:2:3}
  \wx = \frac{\cX{2}}{4 \cX{1}^{2}}
.\end{gather}
\end{subequations}
%
Now we have that \EQUATION \eqref{eqn:pf:lemma:error:recurrence:1:2},
\begin{gather*}
  \Err{\Iter}
  = 4\cC[1] \sqrt{\frac{\gammaX}{\cC} \Err{\Iter-1}}
  + \frac{4 \cX{2} \gammaX}{\cC}
,\end{gather*}
is equivalent to
\begin{gather*}
  \fx{1}( \Iter ) = \vx \wx + \sqrt{\vx \fx{1}( \Iter-1 )}
,\end{gather*}
the latter of which is precisely the form of \EQUATION \eqref{eqn:fact:recurrence:1:2}.
%
\par 
%
Before \FACT \ref{fact:recurrence} can be applied, it is necessary to verify
that the fact's conditions are satisfied when the parameters \(  \vx, \wx  \)
are chosen as in \eqref{eqn:pf:lemma:error:recurrence:2}.
Specifically, writing
\(    \ux = \frac{1}{2} ( 1 + \sqrt{1 + 4\wx} )  \),
\FACT \ref{fact:recurrence} requires that
\(  1 \leq \ux \leq \sqrt{\frac{2}{\vx}}  \).
%
Clearly,
\(  \ux \geq 1  \)
since
\(  \frac{1}{2} ( 1+\sqrt{1+z} ) \geq \frac{1}{2} ( 1+1 ) = 1  \)
for \(  z \geq 0  \).
Towards verifying the other side of the bound,
\(  \ux \leq \sqrt{\frac{2}{\vx}}  \),
expand out \(  \ux  \) and \(  \frac{1}{\sqrt{\vx}}  \) as follows.
For \(  \ux  \), observe:
\begin{align*}
  \ux
  &= \frac{1}{2} \left( 1 + \sqrt{1 + 4\wx} \right)
  \\
  &= \frac{1}{2}
    \left(
      1 + \sqrt{1 + 4 \cdot \frac{\cX{2}}{4 \cX{1}^{2}}}
    \right)
  \\
  &= \frac{1}{2}
    \left(
      1 + \sqrt{1 + \frac{\cX{2}}{\cX{1}^{2}}}
    \right)
  \\
  &= \frac{1}{2}
    \left(
      1 + \frac{1}{\cX{1}} \sqrt{\cX{1}^{2} + \cX{2}}
    \right)
  \\
  &= \frac{1}{2}
    \left(
      \frac{1}{\cX{1}} \cX{1} + \frac{1}{\cX{1}} \sqrt{\cX{1}^{2} + \cX{2}}
    \right)
  \\
  &= \frac{1}{2\cX{1}}
    \left(
      \cX{1} + \sqrt{\cX{1}^{2} + \cX{2}}
    \right)
  \\
  &= \frac{\cX{1} + \sqrt{\cX{1}^{2} + \cX{2}}}{2\cX{1}}
\end{align*}
%
Next, \(  \frac{1}{\sqrt{\vx}}  \) is rewritten as:
\begin{align*}
  \frac{1}{\sqrt{\vx}}
  &=
  \sqrt{\frac{\cX{}}{16 \cX{1}^{2} \gammaX}}
  \\
  &=
  \sqrt{\frac{\cX{}}{4 \cX{1}^{2} \cdot 4 \gammaX}}
  \\
  &=
  \frac{1}{2 \cX{1}} \sqrt{\frac{\cX{}}{4 \gammaX}}
  \\
  &=
  \frac{1}{\sqrt{\gammaX}}
  \cdot
  \frac{\sqrt{\cX{}/4}}{2 \cC[1]}
\end{align*}
%
Taking
\begin{gather*}
  \cX{}
  = 4 \left( \cX{1} + \sqrt{\cX{1}^{2} + \cX{2}} \right)^{2}
,\end{gather*}
it follows that
\(  \ux \leq \sqrt{\frac{2}{\vx}}  \),
as required, since:
\begin{align*}
  \sqrt{\frac{2}{\vx}}
  &\geq
  \sqrt{\frac{2}{\gammaX}} \cdot \frac{\sqrt{\cX{}/4}}{2 \cC[1]}
  \\
  &=
  \sqrt{\frac{2}{\gammaX}} \cdot \frac{\sqrt{\frac{1}{4} \cdot 4 \left( \cX{1} + \sqrt{\cX{1}^{2} + \cX{2}} \right)^{2}}}{2 \cX{1}}
  \\
  &=
  \sqrt{\frac{2}{\gammaX}} \cdot \frac{\sqrt{\left( \cX{1} + \sqrt{\cX{1}^{2} + \cX{2}} \right)^{2}}}{2 \cX{1}}
  \\
  &=
  \sqrt{\frac{2}{\gammaX}} \cdot \frac{\cX{1} + \sqrt{\cX{1}^{2} + \cX{2}}}{2\cC[1]}
  \\
  &=
  \sqrt{\frac{2}{\gammaX}} \ux
  \\
  &\geq
  \ux
\end{align*}
%
Hence, the fact applies since
\(  1 \leq \ux \leq \sqrt{\frac{2}{\vx}}  \).
%
\par 
%
The lemma's results can now be obtained via \FACT \ref{fact:recurrence}.
Note that
\(  \sqrt{\frac{2}{\vx}} \geq \sqrt{\frac{2}{\gammaX}} \cdot \frac{\sqrt{\cX{}/4}}{2 \cX{1}}  \)
implies
\(  \sqrt{\frac{\gammaX}{\vx}} \geq \frac{\sqrt{\cX{}/4}}{2 \cX{1}}  \).
%
Then, observe:
\begin{align*}
  \sqrt{\frac{\gammaX}{\vx}}
  \geq
  \frac{\sqrt{\gammaX}}{\sqrt{\gammaX}} \cdot \frac{\sqrt{\cC/4}}{2 \cC[1]}
  =
  \frac{\sqrt{\cC/4}}{2 \cC[1]}
  =
  \frac{\cC[1] + \sqrt{\cX{1}^{2} + \cX{2}}}{2\cC[1]}
  =
  \ux
\end{align*}
%
To state the result briefly, we have established that
\(  \ux \leq \sqrt{\frac{\gammaX}{\vx}}  \).
%
%
Thus,
\begin{align*}
  \ux^{2} \vx
  \leq
  \left( \sqrt{\frac{\gammaX}{\vx}} \right)^{2} \vx
  = \frac{\gammaX}{\vx} \cdot \vx
  = \gammaX
.\end{align*}
%
Lastly, applying \FACT \ref{fact:recurrence} yields
\begin{gather*}
  \Err{\Iter} \leq 2^{2^{-t}} ( \ux^2 \vx )^{1-2^{-t}} \leq 2^{2^{-t}} \gammaX^{1-2^{-t}}
  ,\\
  \lim_{\Iter \to \infty} \Err{\Iter} = \ux^{2}\vx \leq \gammaX
,\end{gather*}
as desired.
\end{proof}

\end{appendix}

%
%
%
%
%

\end{document}